\newif\ifanonym
\tikzstyle{Title}=[font={\bfseries \small}]
\tikzstyle{new style 0}=[font={\tiny}]
\tikzstyle{Thick Line}=[-, thick]
\tikzstyle{Thick n}=[-, thick, fill={rgb,255: red,0; green,163; blue,0}]
\tikzstyle{Thick n^2}=[-, thick, fill={rgb,255: red,0; green,208; blue,0}, draw=none]
\tikzstyle{Thick n^3}=[-, thick, fill={rgb,255: red,125; green,255; blue,125}]
\tikzstyle{Thick Dashed}=[-, thick, dash dot, fill=none]
\newcommand{\mintcut}{\mathrm{cut}}
\newcommand{\N}{\mathbb{N}}
\newcommand{\g}{\mathcal{g}}
\newcommand{\R}{\mathbb{R}}
\newcommand{\Exp}[2]{\mathbb{E}_{ #2}\left[ #1 \right]}
\newcommand{\Probability}[1]{\Pr\left[ #1 \right]}
\newcommand{\T}{\mathcal{T}}
\newcommand{\minhecut}[2]{g_{#1}^{#2}}
\newcommand{\tOmega}{\tilde{\Omega}}
\newcommand{\splitaon}{g_e^{\text{aon}}}
\newcommand{\splitdaon}{g_e^{\text{d-aon}}}
\newcommand{\splitsml}{g_e^{\text{sml}}}
\newcommand{\splitprd}{g_e^{\text{prd}}}
\newcommand{\tildeoeps}[1]{\tilde{O}_\epsilon( #1 )}
\newcommand{\tO}{\tilde{O}}
\DeclareMathOperator{\poly}{poly}
\newcommand{\indic}[1]{ \mathds{1}_{\{#1\}} } %
\providecommand{\set}[1]{{\{#1\}}}
\newcommand{\eqdef}{\coloneqq}
\newcommand{\Benczur}{Bencz{\'{u}}r\xspace}
\newtheorem{theorem}{Theorem}[section]
\newtheorem{question}[theorem]{Question}
\newtheorem{claim}[theorem]{Claim}
\newtheorem{lemma}[theorem]{Lemma}
\newtheorem{corollary}[theorem]{Corollary}
\newtheorem{conjecture}[theorem]{Conjecture}
\newtheorem{definition}[theorem]{Definition}
\newtheorem{observation}[theorem]{Observation}
\title{Cut Sparsification and Succinct Representation \\ of Submodular Hypergraphs%
  \ifanonym
  \else 
    \thanks{This research was partially supported 
      by the Israel Science Foundation grant \#1336/23, 
      by a Weizmann-UK Making Connections Grant,
      by a Minerva Foundation grant,
      by the Israeli Council for Higher Education (CHE) via the Weizmann Data Science Research Center,
      and by a research grant from the Estate of Harry Schutzman.
    }
  \fi
}
\author{Anonymous Authors}
\author{
  Yotam Kenneth 
  \qquad
  Robert Krauthgamer
  \\ Weizmann Institute of Science
  \\ \texttt{\{yotam.kenneth,robert.krauthgamer\}@weizmann.ac.il}
}
\begin{document}

\maketitle

\begin{abstract}
In cut sparsification, all cuts of a hypergraph $H=(V,E,w)$ are approximated
within $1\pm\epsilon$ factor by a small hypergraph $H'$.
This widely applied method was generalized recently 
to a setting where the cost of cutting each hyperedge $e$ is provided by a splitting function $g_e: 2^e\to\R_+$.
This generalization is called a submodular hypergraph
when the functions $\{g_e\}_{e\in E}$ are submodular,
and it arises in machine learning, combinatorial optimization, and algorithmic game theory.

Previous work studied the setting where $H'$ is a reweighted sub-hypergraph of $H$,
and measured the size of $H'$  by the number of hyperedges in it.
In this setting, we present two results:
(i) all submodular hypergraphs admit sparsifiers of size polynomial in $n=|V|$ and $\epsilon^{-1}$; 
(ii) we propose a new parameter, called spread, and use it to obtain smaller sparsifiers in some cases.

We also show that for a natural family of splitting functions, relaxing the requirement that $H'$ be a reweighted sub-hypergraph of $H$ yields a substantially smaller encoding of the cuts of $H$ (almost a factor $n$ in the number of bits).
This is in contrast to graphs, where
the most succinct representation is attained by reweighted subgraphs.
A new tool in our construction of succinct representation is the notion of deformation,
where a splitting function $g_e$ is decomposed into a sum of functions of small description,
and we provide upper and lower bounds for deformation of common splitting functions.
\end{abstract}

\newpage

\section{Introduction}
\label{sec:introduction}
A powerful tool for many graph problems is sparsification,
where an input graph is replaced by a small graph
that preserves (perhaps approximately) certain properties, 
for example all the input graph's cuts~\cite{benczur1996approximating} or its spectrum~\cite{ST11,BSS14,JRT24}.
Downstream applications can then be executed on the small graph,
which improves the overall running time,
and the small graph can also be stored (or sent to another site)
instead of the input graph,
which improves the memory (or communication) requirements. 
The extensive research on cut sparsification
has started with the seminal work of \Benczur and Karger on cuts in graphs~\cite{benczur1996approximating},
and was later extended to hypergraphs~\cite{KK15, BST19, CKN20}
and to directed hypergraphs~\cite{SY19,CPCS21,KKTY21,OST22}.
In recent months the study of sparsification has been extended to even more general objects such as semi-norms \cite{JLLS23}, matroid quotients \cite{quanrud2022quotient}, and linear codes \cite{KPS24}.
We focus on sparsifying a generalized form of hypergraphs,
as explained next.

In recent years, the notion of cuts in a weighted hypergraph $H=(V,E,w)$
has been generalized
to a setting where each hyperedge $e\in E$
has a \emph{splitting function} $g_e:2^e\to \R_+$, such that $g_e(\emptyset)=0$,
and the \emph{value} of a cut $S\subseteq V$ is defined as
\begin{equation} \label{eq:cut}
  \mintcut_H(S) \eqdef \sum_{e\in E} g_e(S\cap e).
\end{equation}
Associating every $e\in E$ with the \emph{all-or-nothing} splitting function,
given by $\splitaon: S \mapsto w_e\cdot \indic{S\neq \emptyset,e}$,
clearly models an ordinary hypergraph $H=(V,E,w)$,
where the value of a cut is the total weight of hyperedges that intersect both sides;
in fact, a simple extension can model a directed hypergraph.
Such a generalized hypergraph $H=(V,E,\g)$,
where $\g=\left\{ g_e \right\}_{e\in E}$,
is called a \emph{submodular hypergraph} 
if all its splitting functions $g_e$ are submodular. 
Recall that a set function $g:2^e\to \R_+$ is \emph{submodular} if 
\[
  \forall S,T\subseteq e,
  \qquad
  g(S\cup T) + g(S\cap T) \leq g(S) + g(T). 
\]
Submodular hypergraphs are useful in clustering data with higher-order relations that are not captured by ordinary hyperedges~\cite{LM17,LM18,VBK20,LVSLG21,VBK21,ZLS22}.
For example, the \emph{small-side splitting} function,
given by $\splitsml: S \mapsto \min(|S|,|e\setminus S|)$, 
is employed when unbalanced cuts are preferable. 
Cut functions of submodular hypergraphs were studied also
under a different name of \emph{decomposable submodular functions}. 
A submodular function $f:2^V \to \R_+$ is called \emph{decomposable}
if it can be written as $f = \sum_i f_i$,
where each $f_i:2^V \to \R_+$ is submodular.
This notion is widely applied in data summarization~\cite{GK10,LB11, TIWB14},
where each $f_i$ is a submodular similarity function,
and the task of summarizing the data under a given budget $k$
is modeled by maximizing $f(S)$ over all $S\subset V$ of size $|S|\leq k$.
Decomposable submodular functions arise also in welfare maximization,
where each agent has a submodular utility function,
for instance in approximation algorithms \cite{Feige09,FV06} and in truthful mechanisms~\cite{DobzinskiS06,AssadiS20}.

We study how to succinctly represent all the cuts of a submodular hypergraph $H$
up to $1\pm\epsilon$ factor.
We examine two complementary approaches:
(1) \emph{sparsification},
which reduces the number of hyperedges,
i.e., $H$ is represented using a sparse $H'$; and
(2) \emph{deformation},
which replaces large hyperedges or complicated splitting functions
by new ones of low space complexity,
i.e., $H$ is represented using $H'$ whose hyperedges can be stored succinctly. 
These approaches can yield (separately and/or together)
a sparsifier $H'$ that can be encoded using a small number of bits.
More generally, we may consider a general encoding
that \emph{need not} rely on a sparsifier $H'$,
e.g., an explicit list of all the $2^{|V|}$ cut values.

Let us introduce some basic notation to make the discussion more precise.
Throughout, let $n\eqdef |V|$;
we write $\tO(t)$ or $\tOmega(t)$ to suppress a polylogarithmic factor in $t$,
and $O_\alpha(t)$ or $\Omega_\alpha(t)$ to hide a factor that depends only on $\alpha$. 
\begin{definition}[Sparsifier]
  A \emph{cut sparsifier} of \emph{quality} $1+\epsilon$ for $H=(V,E,\g)$,
  or in short a \emph{$(1+\epsilon)$-sparsifier},
  is a submodular hypergraph $H'=(V,E',\g')$ such that 
  \begin{equation}
    \forall S \subseteq V,
    \qquad
    \mintcut_{H'}(S) \in (1\pm\epsilon)\cdot\mintcut_H(S) .
    \label{sparsifier-1-eps-condition}
  \end{equation}
  The \emph{size} of the sparsifier is $|E'|$.
  We call $H'$ a \emph{reweighted subgraph} of $H$
  if $E'\subseteq E$ and each function $g'_e$ for $e\in E'$ is a scaling of $g_e$ 
  (i.e., $g'_{e} \equiv s_e g_e$ for some $s_e>0$). 
\end{definition}
\begin{question}[Sparsification]
    \label{q:sparsification}
    Do all submodular hypergraphs 
    admit a reweighted-subgraph sparsifier with few hyperedges,
    say $\poly(\epsilon^{-1} n)$?
    And which families of splitting functions admit even smaller sparsifiers,
    like $\tO_\epsilon(n^2)$ or even $\tO_\epsilon(n)$?
\end{question}

The first question (about a polynomial bound) was previously answered for several families of splitting functions
(see \Cref{sec:SparsificationSubmodular} for a detailed account), but despite this significant progress,
the case of general submodular splitting was left open in~\cite{RY22},
where the bound on the sparsifier size depends on $\g$
and is exponential in $n$ in the worst case. 
We answer this first question in the affirmative, and also 
address the second question by showing families of splitting functions that admit even smaller sparsifiers.

We further ask about a more general notion, 
of encoding an approximation of all the cuts of $H$,
which can potentially be more succinct than a sparsifier. 

\begin{question}[Succinct Representation]
  \label{q:succinct}
  What is the smallest encoding (in bits of space) 
  that stores a submodular hypergraph $H$ 
  so as to report $(1+\epsilon)$-approximation to every cut value? 
  In particular, what is the smallest number of bits $s=s(\epsilon,n)$ 
  that suffices to store a sparsifier for $H$? 
\end{question}

For simplicity, we ask above only about the existence
of a sparsifier or an encoding, 
but we are of course interested also in fast algorithms to build them.
Fortunately, an algorithmic solution follows from the existential ones
because our proofs are constructive.
Furthermore, the running times are polynomial under the assumption that every $g_e$ takes integral values and $\max_{S\subseteq e} g_e(S) \le \poly(n)$.\footnote{The running times of \Cref{theorem:upper-bound-sparsification-all} and \Cref{theorem:additive-approximation} are polynomial in general. \Cref{theorem:upper-bound-sparsification-finite-spread} is polynomial under the stated assumption.}

\subsection{Sparsification: All Submodular Hypergraphs}
\label{sec:SparsificationSubmodular}

We start with addressing \Cref{q:sparsification}. 
Our first result (proved in Section~\ref{sec:polynomial-submodular-sparsifiers-proof})
provides the first polynomial (in $n$) bound for all submodular splitting functions;
the previous bound, due to \cite{RY22}, 
was $O_{\epsilon}(n^2 B_H)$, where $B_H\eqdef \max_{e\in E} |\mathcal{B}(g_e)|$
and $\mathcal{B}(g_e)$ is the set of extreme points in the polytope of $g_e$.%
\footnote{A recent manuscript~\cite{KZ23} claims that the proof in~\cite{RY22}
  has a flaw and holds only for monotone submodular hypergraphs.
}
In general, $B_H$ can be exponential in $n$,
for example small-side splitting $\splitsml$ has $|\mathcal{B}(\splitsml)|=2^{\Theta(|e|)}$.
\begin{theorem}
  \label{theorem:upper-bound-sparsification-all}
  Every submodular hypergraph admits a $(1+\epsilon)$-sparsifier of size $O( \epsilon^{-2} n^3 )$, which is in fact a reweighted sub-hypergraph.
\end{theorem}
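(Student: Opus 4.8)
The plan is to construct $H'$ by independent \emph{importance sampling} of hyperedges, in the spirit of \Benczur--Karger and the hypergraph cut sparsifiers of \cite{KK15,CKN20}, but with an importance measure tailored to splitting functions. Concretely, assign to each hyperedge $e$ a keeping probability $p_e\in(0,1]$, place $e$ into $E'$ independently with probability $p_e$, and set $g'_e\equiv g_e/p_e$; this is by construction a reweighted sub-hypergraph, and $\mintcut_{H'}(S)$ is an unbiased estimator of $\mintcut_H(S)$ for every $S\subseteq V$. Two ingredients then remain: (a)~a choice of the $p_e$ whose total is $\poly(n)\cdot\epsilon^{-2}$; and (b)~a concentration argument, amplified by a union bound over all cuts, showing that with this choice every cut value is simultaneously preserved within $1\pm\epsilon$ with positive probability.

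For ingredient (a), the natural importance is a submodular analogue of edge strength. After a geometric bucketing of the hyperedges by $w_e\eqdef\max_{S\subseteq e}g_e(S)$ -- so that, up to an easily handled truncation of negligible buckets, one may argue one bucket at a time with all $w_e$ equal up to a factor of $2$ -- define the \emph{strength} $\kappa_e$ of $e$ as the largest $k$ such that $e$ is contained in a vertex set $W$ for which every nontrivial cut of the induced sub-hypergraph $H[W]$ (with splitting functions restricted to $W$) has value at least $k w_e$, and put $p_e\eqdef\min\{1,\,c\,\epsilon^{-2}\log n/\kappa_e\}$ for a suitable constant $c$. The combinatorial core is a \emph{budget lemma} of the form $\sum_e w_e/\kappa_e=\poly(n)$, which I would prove by a submodular analogue of the Benczúr--Karger strength decomposition: repeatedly peel off a part of small relative cut value and charge it to its vertices, using that restricting a submodular splitting function to a subset of $V$ keeps it submodular and only decreases cut values, together with a sparse-cut existence lemma for submodular hypergraphs. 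I expect this to be the main obstacle: unlike a unit hyperedge, a splitting function cannot simply be ``discarded once its vertices are separated'' -- removing vertices alters $g_e$ rather than zeroing it -- so the peeling and charging must track partial restrictions of each $g_e$, and securing a $\poly(n)$ budget \emph{independent of the complexity of $\g$} (in contrast with the $O(n^2 B_H)$ bound of \cite{RY22}) is exactly what this theorem buys and where the bulk of the work lies.

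For ingredient (b), fix a cut $S$ and write $\mintcut_H(S)=\alpha\cdot\mincut(H)$. The estimator $\mintcut_{H'}(S)=\sum_{e\in E'}g_e(S\cap e)/p_e$ is a sum of independent nonnegative terms, and the definition of $\kappa_e$ forces each term to be at most $O(\epsilon^2/\log n)\cdot\mintcut_H(S)$: the sub-hypergraph witnessing $\kappa_e$ has \emph{all} of its cuts -- in particular the one induced by $S$ -- of value at least $\kappa_e w_e\ge\kappa_e\,g_e(S\cap e)$. A Bernstein/Chernoff bound then gives $\Probability{|\mintcut_{H'}(S)-\mintcut_H(S)|>\epsilon\,\mintcut_H(S)}\le n^{-\Omega(\alpha)}$. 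To finish the union bound I would group the cuts by the scale $\alpha$ and invoke a Karger-style counting bound for submodular hypergraphs -- the number of cuts of value at most $\alpha\cdot\mincut(H)$ is $n^{O(\alpha)}$ -- which should follow from a random-contraction argument that contracts a random hyperedge with probability proportional to a lower bound on its contribution to the current mincut. Summing the expected sample sizes over the $O(\log n)$ buckets relevant to any fixed cut and over the scales, together with the budget lemma, gives a total of $\poly(n)\cdot\epsilon^{-2}$, and a careful (deliberately lossy in low-order factors) accounting yields the stated $O(\epsilon^{-2}n^3)$.

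Two remarks. First, the factor $n^3$ -- rather than the $\tO(n/\epsilon^2)$ one might hope for by analogy with graphs -- appears to be inherent here, since a single submodular splitting function can be simultaneously ``important'' to many cuts across many scales, making the budget lemma and/or the counting bound genuinely weaker than in the graph case; this is consistent with the paper's later use of the \emph{spread} parameter to recover smaller sparsifiers in special cases. Second, one could instead try to route through \emph{deformation}, replacing each $g_e$ by a sum of simple (say directed-all-or-nothing) functions and then applying a directed-hypergraph sparsifier; but since a submodular function is generally not a nonnegative combination of such simple functions, this yields only an approximate $H'$ that is \emph{not} a reweighted sub-hypergraph, so for the present statement importance sampling as above is the right approach.
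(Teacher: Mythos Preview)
Your plan has a genuine gap: both of the load-bearing lemmas you defer to---the ``budget lemma'' $\sum_e w_e/\kappa_e=\poly(n)$ for a submodular notion of strength, and a Karger-style cut-counting bound for general submodular hypergraphs---are not established, and the paper itself flags exactly these as open obstacles (see the Concluding Remarks: cut-counting ``rel[ies] on the binary nature of the all-or-nothing splitting function, which seems challenging in the submodular hypergraph setting''). Your own peeling sketch already runs into the difficulty you name: removing vertices does not zero out $g_e$, it restricts it, so the usual charging of a hyperedge to the moment its endpoints separate breaks down. Without a replacement argument, neither the budget bound nor the $n^{O(\alpha)}$ cut count is available, and the proof does not close. (What you outline is essentially the machinery the paper does use for the \emph{finite-spread} case in Section~\ref{sec:finite-spread-sparsifiers-proof}; there the crucial step is Corollary~\ref{corollary:bound-cut-from-below}, which lower-bounds submodular cuts by all-or-nothing cuts, and that step genuinely needs finite spread.)

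The paper's proof is entirely different and much simpler. Define, for each ordered pair $(u,v)$, the directed minimum cut $g_e^{u\to v}\coloneqq\min_{S\ni u,\,S\not\ni v} g_e(S\cap e)$, and prove the sandwich (Claim~\ref{claim:cut-bound-by-pairwise-flow})
\[
\max_{u\in S,\,v\in\bar S} g_e^{u\to v}\ \le\ g_e(S\cap e)\ \le\ \sum_{u\in S,\,v\in\bar S} g_e^{u\to v},
\]
whose upper bound follows from submodularity by repeatedly applying $g_e(A)+g_e(B)\ge g_e(A\cap B)$ and $g_e(A)+g_e(B)\ge g_e(A\cup B)$. Then set
\[
\rho_e\ \coloneqq\ \sum_{(u,v)}\frac{g_e^{u\to v}}{\sum_{f}g_f^{u\to v}},
\]
which the sandwich immediately shows dominates the true importance $\sigma_e=\max_S g_e(S\cap e)/\mintcut_H(S)$, while trivially $\sum_e\rho_e\le n^2$ by swapping the order of summation. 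Sampling with $p_e=\min\{1,M\rho_e\}$ and $M=\Theta(\epsilon^{-2}n)$, one gets per-cut failure probability $\exp(-\Theta(\epsilon^2 M))$ and then takes a \emph{naive} union bound over all $2^n$ cuts---no cut counting at all. The $n^3$ is thus $n^2$ (from the pair sum bounding $\sum_e\rho_e$) times $n$ (from the union bound in $M$), not from any strength decomposition.
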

This bound is within factor $O_{\epsilon}(n)$ of the $\Omega(n^2/\epsilon)$ lower bound known for cut sparsification of directed hypergraphs \cite{OST22}.
We also show that if all the splitting functions are monotone (i.e., $g_e(S)\le g_e(T)$ for all $S\subseteq T$),
then the sparsifier size can be improved to $O_\epsilon(n^2)$.
Monotone submodular functions arise in many applications, however no sparsification bound was previously known for this family.%
\footnote{The running time of \cite{RY22} was improved in \cite{KZ23}, where a sparsifier of size $O(\epsilon^{-2}n^2B)$ for monotone functions
  with low curvature is constructed in polynomial time.
}
The formal statement and its proof appear in \Cref{sec:polynomial-submodular-sparsifiers-proof}.

\paragraph{Related Work.}
Previous work on sparsification focused mostly on specific splitting functions.
The study of this problem began with sparsifiers for undirected graph cut;
the current size bound is $O(\epsilon^{-2}n)$ edges~\cite{BSS14},
which improves over~\cite{benczur1996approximating}
and is known to be tight~\cite{ACKQWZ16,CKST19}.
Furthermore, sparsifiers of size $\tO_\epsilon(n)$ are known
for all-or-nothing splitting $\splitaon$ \cite{CKN20} (see also \cite{quanrud2022quotient}) and
for \emph{product splitting},
given by $\splitprd:S\mapsto |S|\cdot |e\setminus S|$ \cite{SHS16}.
In contrast, for the splitting that models cuts in a directed hypergraph,
the best construction known has size $\tO_\epsilon(n^2)$ \cite{OST22}, 
which is near-tight with an $\Omega(n^2/\epsilon)$ lower bound \cite{OST22};
this function, called \emph{directed all-or-nothing splitting}, is given by
$\splitdaon: S\mapsto \indic{e_T\cap S \ne \emptyset\ \wedge\ e_H \not\subseteq S}$,
where $e_H,e_T\subseteq e$ are the hyperedge's head and tail, respectively.
A recent result is more general and shows that
the entire family of symmetric splitting functions
admits sparsifiers of size $\tildeoeps{n}$ \cite{JLLS23}.

\Cref{fig:sparsifier-families} depicts several families of splitting functions 
and the sparsification bounds known for them, 
including our results from above and from \Cref{sec:SparsificationFiniteSpread}.

\paragraph{Techniques.}
Our sparsification method follows the importance-sampling approach,
which has been used extensively in the literature.
Every hyperedge $e\in E$ is assigned an importance $\sigma_e$,
and sampled with probability $p_e$ that is (at least) proportional to $\sigma_e$,
and the splitting function of every sampled $e$ is scaled by $1/p_e$.
The expected sparsifier size is clearly proportional to $\sum_{e\in E}\sigma_e$.

A standard method to set the importance of a hyperedge $e\in E$,
is to consider all its possible cuts,
namely, $\sigma_e \coloneqq \max_{S\subseteq V} g_e(S\cap e)/\mintcut_H(S)$,
and this method was indeed used in~\cite{RY22}. 
Bounding $\sum_{e\in E}\sigma_e$ naively by replacing the maximization over ${S\subseteq V}$ by summation yields an exponential size bound.
An improved bound was given in~\cite{RY22}
based on a quantity $B_H$ related to the polytopes of the splitting functions.
Unfortunately, this improved bound is still exponential for many families of splitting functions.

Our main contribution is to identify a set of "basic" quantities for each hyperedge $e$
that can serve as coarse approximations of its splitting function $g_e$.
These approximations allow us to define new sampling probabilities and achieve an improved size bound:
Given $e\in E$, define the minimum directed cut between $u,v\in V$
to be $g_{e}^{u\to v} \coloneqq \min_{S\subseteq V: u\in S, v\not \in S} g_e(S\cap e)$;%
\footnote{The most natural case is $u,v\in e$,
  but considering all $u,v\in V$ streamlines the presentation. 
}
then our main technical lemma bounds $g_e(\cdot)$ from below and from above by
\begin{equation}
  \label{eq:introduction-minimum-directed-cut}
  \forall S\subseteq V,
  \qquad
  \max_{u \in S,v \in V\setminus S} \minhecut{e}{u \to v}
  \le g_e(S\cap e)
  \le \sum_{u\in S,v \in V\setminus S} \minhecut{e}{u \to v}
  ;
\end{equation}
The lower bound holds by definition,
and the upper bound is analogous to bounding the value of a graph cut
by the sum of the maximum flows between all pairs of vertices across the cut.
It is well-known that importance sampling will produce a sparsifier
even if $\sigma_e$ is replaced with an over-estimate for it.
We replace $\sigma_e$ with 
$\rho_e \coloneqq \sum_{(u,v) \in V\times V} \minhecut{e}{u \to v}/\sum_{f\in E} \minhecut{f}{u\to v}$,
which we can easily see is an over-estimate, i.e., $\rho_e \ge \sigma_e$, 
by using the two bounds from~\eqref{eq:introduction-minimum-directed-cut}
to verify that
\begin{equation*}
  \forall S\subseteq V,
  \qquad
  \frac{g_e(S\cap e)}{\mintcut_H(S)}
  = \frac{g_e(S\cap e)}{\sum_{f\in E} g_f(S\cap f)}
  \le \sum_{u\in S, v\in V\setminus S} \frac{\minhecut{e}{u\to v}}{\sum_{f\in E} \minhecut{f}{u\to v}}
  \le \rho_e
  .
\end{equation*}
The expected number of hyperedges in the sparsifier $H'$
equals to $\sum_{e\in E} \rho_e$ times an amplification factor $M$,
where $M=O(\epsilon^{-2}n)$ is sufficient by standard arguments
(a concentration bound and a union bound). 
The crux here is that it is easy to bound $\sum_{e\in E} \rho_e \leq O(n^2)$,
basically swapping the order of a double summation.
Another advantage of $\rho_e$ is that it can be computed in polynomial time,
while computing $\sigma_e$ requires maximizing the ratio of two submodular functions,
which is NP-hard in general.

In the monotone case, we follow the same approach but employ a simpler over-estimate 
$\rho_e' \coloneqq \sum_{v\in e} g_e(\left\{ v \right\})/\mintcut_H(\left\{ v \right\})$.
The proof is similar to the general case,
except that instead of~\eqref{eq:introduction-minimum-directed-cut}
we use the straightforward bound 
\begin{equation*}
  \forall S\subseteq V,
  \quad
  \max_{v\in S} g_e(\{v\}\cap e)
  \le
  g_e(S\cap e)
  \le \sum_{v\in S}g_e(\{v\}\cap e)
  .
\end{equation*}
\begin{figure}
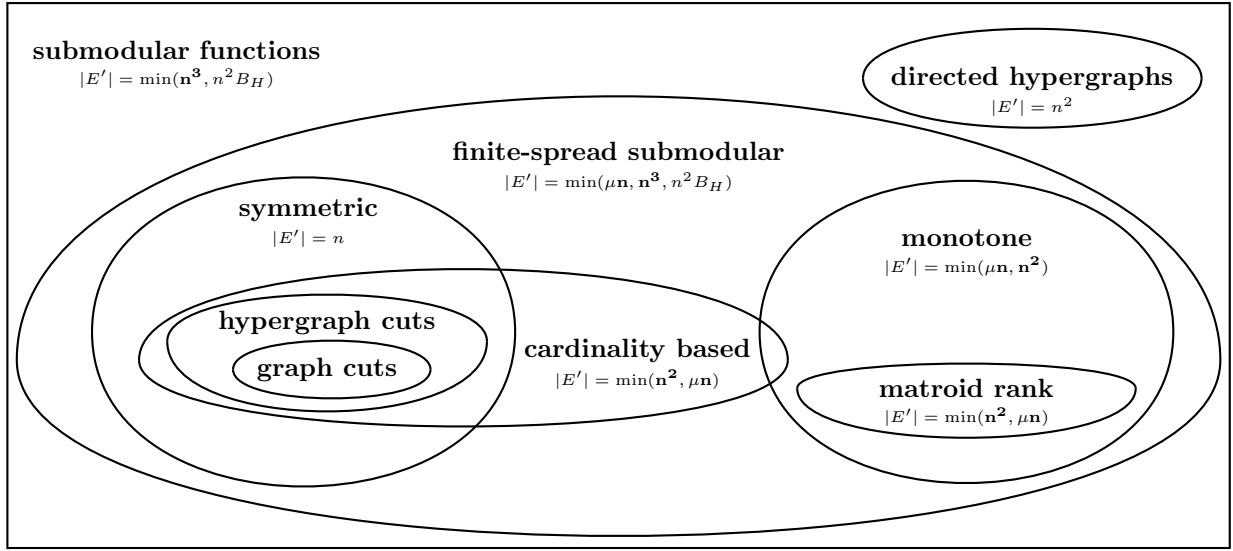

    \tikzfig{circles}
    \caption{Sparsification bounds for various families of submodular functions,
      omitting for simplicity $\poly(\epsilon^{-1}\log n)$ factors.}
    \label{fig:sparsifier-families}
\end{figure}

\subsection{Sparsification: Parameterized by Spread}
\label{sec:SparsificationFiniteSpread}

We already know that submodular splitting functions
can have very different optimal sparsification bounds,
see e.g.\ the bounds $\tilde{\Theta}_\epsilon(n)$ and $\tilde{\Theta}_\epsilon(n^2)$ mentioned above.
However, there are too many submodular functions to analyze each one separately,
and we thus seek a parameter that can control the sparsifier size. 
Our approach is inspired by the notion of imbalance in a directed graph $G=(V,E,w)$,
defined as the worst ratio between antiparallel edge weights,
i.e., $\beta_G \eqdef \max \{w(i,j)/w(j,i):\ i,j\in V\}$. 
This parameter can be used to show that every directed graph
admits a sparsifier of size $\tO_\epsilon(\beta_G n)$.%
\footnote{This condition can actually be relaxed significantly to
$\beta_G\eqdef \max \{\mintcut_G(S)/\mintcut_G(\bar S):\ S\subset V\}$,
and the same sparsification bound still holds \cite{CPCS21}.
}
For submodular hypergraphs, we propose an analogous parameter, 
which is basically the ratio between the maximum and minimum values of the splitting function, excluding certain trivial cuts. 

\begin{definition}[Spread]
  For hyperedge $e\in E$ with splitting function $g_e$,
  let $W_e\coloneqq\left\{ \emptyset \right\}$,
  unless $g_e(e)=0$ in which case $W_e\coloneqq \{\emptyset, e\}$.  
  The \emph{spread} of $e$ is 
  \begin{equation} \label{eq:spread}
    \mu_e \eqdef 
    \frac {\max_{T\subseteq e} g_e(T)} {\min_{ S\subseteq e: S\notin W_e} g_e(S)} .
  \end{equation}
\end{definition}

Our third result (proved in Section \ref{sec:finite-spread-sparsifiers-proof})
constructs a sparsifier 
whose size depends on the spread of the input $H$,
defined as $\mu_H \eqdef \max_{e\in E} \mu_e$.
By convention, the spread $\mu_e$ is called \emph{finite}
if it is well-defined (the denominator in \eqref{eq:spread} is non-zero),
and similarly $\mu_H$ is called finite
if it is well-defined (all the terms $\mu_e$ are finite).

\begin{theorem}[Sparsifier Parameterized by Spread]
  \label{theorem:upper-bound-sparsification-finite-spread}
  Every submodular hypergraph $H=(V,E,\g)$ with finite spread
  admits a $(1+\epsilon)$-sparsifier of size $\tilde{O}(\epsilon^{-2}\mu_H n)$, which is a sub reweighted-subgraph.
\end{theorem}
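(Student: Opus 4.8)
The plan is to mirror, for submodular hypergraphs, the argument that bounds the sparsifier size of a directed graph by its imbalance (\cite{CPCS21}): reduce $H$ to an associated \emph{ordinary} hypergraph, which plays the role of the undirected skeleton, and pay the spread as an extra factor in the sampling rate. Concretely, for $e\in E$ put $w_e\eqdef\min_{S\subseteq e:\,S\notin W_e}g_e(S)$; finiteness of $\mu_H$ gives $w_e>0$, and the definition of spread gives $\max_{T\subseteq e}g_e(T)=\mu_e w_e\le\mu_H w_e$. Combined with $g_e(\emptyset)=0$ and with $g_e(e)=0$ whenever $e\in W_e$, this yields the two-sided bound
\begin{equation*}
  \forall S\subseteq V,\qquad w_e\cdot\indic{S\cap e\notin W_e}\ \le\ g_e(S\cap e)\ \le\ \mu_H\,w_e\cdot\indic{S\cap e\notin W_e},
\end{equation*}
so that, crucially using finiteness, $g_e(S\cap e)=0$ exactly when $S\cap e\in W_e$. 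Let $\hat H$ be the ordinary hypergraph on $V\cup\{\star\}$ obtained by replacing each $e\in E$ by a hyperedge of weight $w_e$ equal to $e$ if $g_e(e)=0$ and to $e\cup\{\star\}$ otherwise. Since every cut $S\subseteq V$ leaves $\star$ on the $\overline S$ side, the image $\hat e$ is cut by $S$ precisely when $S\cap e\notin W_e$, hence $\mintcut_{\hat H}(S)=\sum_e w_e\indic{S\cap e\notin W_e}$, and summing the displayed bound over $e$ gives $\mintcut_{\hat H}(S)\le\mintcut_H(S)\le\mu_H\cdot\mintcut_{\hat H}(S)$ for all $S\subseteq V$: the two cut functions share the same zero set and agree within a factor $\mu_H$.

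Next, invoke the standard importance-sampling framework, using the strength decomposition of $\hat H$. Let $\hat\rho_e=w_e/\hat\lambda_e\in(0,1]$, where $\hat\lambda_e$ is the hypergraph strength of $\hat e$ in $\hat H$, so that $\hat\rho_e\ge w_e\indic{\hat e\text{ cut by }S}/\mintcut_{\hat H}(S)$ for every $S$ while $\sum_e\hat\rho_e=\tO(n)$ (\cite{CKN20}). By the previous paragraph, $\mu_H\hat\rho_e$ over-estimates the true importance of $e$ in $H$, since $g_e(S\cap e)/\mintcut_H(S)\le\mu_H w_e\indic{S\cap e\notin W_e}/\mintcut_{\hat H}(S)\le\mu_H\hat\rho_e$. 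Sample each $e$ independently with probability $p_e\eqdef\min(1,\,M\mu_H\hat\rho_e)$ for an amplification factor $M=\tO(\epsilon^{-2})$ chosen below, and keep every sampled $e$ with its splitting function scaled by $1/p_e$. The resulting $H'$ is, by construction, a reweighted sub-hypergraph; $\mintcut_{H'}(S)$ is an unbiased estimator of $\mintcut_H(S)$ for every $S$; every cut with $\mintcut_H(S)=0$ is preserved exactly; and the expected size is $\sum_e p_e\le M\mu_H\sum_e\hat\rho_e=\tO(\epsilon^{-2}\mu_H n)$.

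It remains to show that with high probability $\mintcut_{H'}(S)\in(1\pm\epsilon)\mintcut_H(S)$ for all $S\subseteq V$ simultaneously, and this is the main obstacle. A single Chernoff bound per cut with a union bound over all $2^n$ cuts would force $M=\Omega(\epsilon^{-2}\mu_H n)$ and yield only size $\tO(\epsilon^{-2}\mu_H n^2)$, losing a factor $n$. Instead one must run the refined analysis developed for ordinary graph and hypergraph cut sparsification (\cite{benczur1996approximating,CKN20}; cf.\ \cite{CPCS21}): group the hyperedges of $\hat H$ by strength level $2^j$, work inside the maximal $2^j$-strong components of $\hat H$, where only cuts of value $O(2^j)$ matter so that the hypergraph cut-counting bound leaves $n^{O(1)}$ relevant cuts per component and level, and apply a Chernoff bound per (level, component, cut) followed by a polynomial union bound. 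For a level-$j$ hyperedge with $p_e<1$ we have $p_e=M\mu_H w_e/\hat\lambda_e$, so every such summand of $\mintcut_{H'}(S)$ is at most $\mu_H w_e/p_e=\hat\lambda_e/M<2^{j+1}/M$ (the summands with $p_e=1$ are deterministic, hence irrelevant to concentration); and since by the first paragraph every cut value of $H$ entering the bookkeeping is within a factor $\mu_H$ of the corresponding cut value of $\hat H$, the per-(level,component) Chernoff exponent degrades by a factor $\mu_H$, to $\Omega(\epsilon^2 M/\mu_H)$. Choosing $M=\Theta(\epsilon^{-2}\mu_H\log n)$ makes every bad event have probability $n^{-\Omega(1)}$, and summing the $O(\log n)$ per-level deviations (each at most an $\epsilon$-fraction of that level's contribution to $\mintcut_H(S)$) gives total deviation at most $\epsilon\cdot\mintcut_H(S)$ after rescaling $\epsilon$. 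The delicate point, meriting the most care, is exactly that the $\mu_H$ loss from the sandwich must be shown to propagate as a \emph{single} multiplicative factor — absorbed into $M$ and hence into the size — and not to compound with the per-level cut-counting or with the cross-level error summation; everything else is a faithful transcription of the known ordinary-hypergraph argument. Algorithmically, $w_e$ is found by submodular-function minimization and the constant-factor estimate of $\mu_H$ that the choice of $M$ needs follows from a constant-factor approximation for unconstrained submodular maximization, so under the stated boundedness assumption the construction runs in polynomial time.
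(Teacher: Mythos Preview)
Your approach is essentially the paper's: sandwich each $g_e$ between $w_e$ and $\mu_H w_e$ times the all-or-nothing indicator, then run the \cite{CKN20} strength-based sampling with an extra $\mu_H$ factor in the rate and carry the cut-counting analysis through. The paper differs only cosmetically---it splits into the cases $g_e(e)=0$ and $g_e(e)>0$ and sparsifies each separately (where you instead add an auxiliary vertex $\star$), and it works with the \cite{CKN20} auxiliary \emph{graph} (the clique representation) and edge strengths rather than hypergraph strength directly---but the mechanics are the same.

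There is, however, a bookkeeping slip in your last paragraph. You correctly compute that for a level-$j$ hyperedge with $p_e=M\mu_H w_e/\hat\lambda_e$ the summand bound is $g_e(S\cap e)/p_e\le \mu_H w_e/p_e=\hat\lambda_e/M$: the $\mu_H$ in the range of $g_e$ is \emph{already} canceled by the $\mu_H$ oversampling in $p_e$. Combined with the lower sandwich bound $\mintcut_H(S)\ge\mintcut_{\hat H}(S)$ (which is the direction Chernoff needs), the per-(level, component, cut) exponent is $\Omega(\epsilon^2 M)$, not $\Omega(\epsilon^2 M/\mu_H)$; there is no further degradation. Hence $M=\Theta(\epsilon^{-2}\log n)$ suffices, matching your initial statement $M=\tO(\epsilon^{-2})$, and your later choice $M=\Theta(\epsilon^{-2}\mu_H\log n)$ would yield size $\tO(\epsilon^{-2}\mu_H^2 n)$, one $\mu_H$ too many. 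This is exactly the point you flag as delicate---that the $\mu_H$ loss must enter as a single factor absorbed into $p_e$ and not compound---but you then double-count it.
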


Many natural submodular functions have finite spread,
and in many common cases even $\mu_H \leq n$.
This can be seen, for example, in an easy application of \Cref{theorem:upper-bound-sparsification-finite-spread} to approximation of coverage functions, see \Cref{appendix:coverage} for details.
Another example is the sparsification of the capped version of small-side splitting, 
given by $g_e: S\mapsto \min(|S|, |e\setminus S|, K)$ for $K>0$,
which clearly has spread $\mu_e\leq K$.
This function is part of a much larger family, cardinality-based splitting functions, 
a notion formalized in \cite{VBK22} as follows: 
A submodular function $g_e:2^e\to \R_+$ is called \emph{cardinality-based}
if there exists a function $f_e:[|e|]\to \R_+$ such that $g_e: S\mapsto f_e(|S|)$.
Cardinality-based functions, which are commonly used in submodular hypergraph clustering,
all have spread $\mu_e\le n$, 
which is an easy consequence of the symmetry and subadditivity of $g_e$. 
By \Cref{theorem:upper-bound-sparsification-finite-spread},
these splitting function admit a $(1+\epsilon)$-sparsifier of size $\tilde{O}(\epsilon^{-2}n^2)$,
which is the first bound for this family.

It is easily verified that for monotone splitting functions,
the spread is approximately equal to the imbalance,
when we generalize the imbalance from above to hyperedges by
$\beta_e \eqdef \max\{g_e(S)/g_e(e\setminus S): S\subset V\}$.%
\footnote{For a monotone $g_e$,
  the spread is $\mu_e = g_e(V)/\min_{v\in V} g_e(\left\{ v \right\})$
  and the imbalance is $\beta_e = \max_{v\in  V} g_e(V\setminus\left\{ v \right\})/g_e(\left\{ v \right\})$,
  and they differ by at most a constant factor by the subadditivity of $g_e$. 
}
Hence, we immediately obtain the following.
\begin{corollary}
  Every finite-spread monotone splitting function admits a $(1+\epsilon)$-sparsisfier of size $\tilde{O}(\epsilon^{-2}\beta_H n)$.
\end{corollary}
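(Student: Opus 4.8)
The plan is to obtain the corollary as an immediate consequence of \Cref{theorem:upper-bound-sparsification-finite-spread}, by showing that for a monotone submodular $g_e$ the spread $\mu_e$ and the hyperedge imbalance $\beta_e$ coincide up to a factor of $2$. Then $\mu_H\le 2\beta_H$ (writing $\beta_H\eqdef\max_{e\in E}\beta_e$), the spread $\mu_H$ is finite iff $\beta_H$ is finite, and the size bound $\tilde{O}(\epsilon^{-2}\mu_H n)$ from that theorem becomes $\tilde{O}(\epsilon^{-2}\beta_H n)$.

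First I would rewrite both parameters in a simpler form for monotone $g_e$. If $g_e(e)=0$ then monotonicity forces $g_e\equiv 0$, so the hyperedge contributes to no cut (and a finite spread precludes this case anyway); hence we may assume $g_e(e)>0$, so $W_e=\{\emptyset\}$, and moreover finiteness of $\mu_e$ forces $g_e(\{v\})>0$ for every $v\in e$. Monotonicity gives $\max_{T\subseteq e}g_e(T)=g_e(e)$, and since $g_e(S)\ge g_e(\{v\})$ for every $v\in S$, the minimum over nonempty $S$ is attained on a singleton; thus $\mu_e=g_e(e)/\min_{v\in e}g_e(\{v\})$. Similarly, for any nonempty $S\subseteq e$ with $S\ne e$ and any $v\in e\setminus S$, monotonicity gives $g_e(S)\le g_e(e\setminus\{v\})$ and $g_e(e\setminus S)\ge g_e(\{v\})$, so the ratio defining $\beta_e$ is maximized on co-singletons: $\beta_e=\max_{v\in e}g_e(e\setminus\{v\})/g_e(\{v\})$.

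Next I would compare the two. The inequality $\beta_e\le\mu_e$ is immediate from $g_e(e\setminus\{v\})\le g_e(e)$ and $g_e(\{v\})\ge\min_{u\in e}g_e(\{u\})$. For the reverse, let $v^\star\in e$ minimize $g_e(\{v\})$; submodularity together with nonnegativity of $g_e$ yields subadditivity, so $g_e(e)\le g_e(e\setminus\{v^\star\})+g_e(\{v^\star\})$, and since any $u\in e\setminus\{v^\star\}$ satisfies $g_e(e\setminus\{v^\star\})\ge g_e(\{u\})\ge g_e(\{v^\star\})$ (valid when $|e|\ge 2$; the case $|e|\le 1$ is trivial, as the hyperedge then has a unique nonempty cut), we get $g_e(e)\le 2g_e(e\setminus\{v^\star\})$. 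Dividing by $g_e(\{v^\star\})$ gives $\mu_e\le 2\beta_e$.

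Finally, taking the maximum over $e\in E$ gives $\mu_H\le 2\beta_H$, with $\mu_H$ finite iff $\beta_H$ finite, and feeding this into \Cref{theorem:upper-bound-sparsification-finite-spread} yields a $(1+\epsilon)$-sparsifier of size $\tilde{O}(\epsilon^{-2}\mu_H n)=\tilde{O}(\epsilon^{-2}\beta_H n)$ that is a reweighted sub-hypergraph, as in that theorem. I do not anticipate a genuine obstacle here; the only care needed is in handling the degenerate cases ($g_e\equiv 0$ and $|e|\le 1$) and in applying subadditivity in the correct direction.
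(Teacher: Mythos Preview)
Your proposal is correct and follows exactly the approach the paper sketches: the paper states (in a footnote) that for monotone $g_e$ one has $\mu_e=g_e(e)/\min_{v}g_e(\{v\})$ and $\beta_e=\max_{v}g_e(e\setminus\{v\})/g_e(\{v\})$, that these differ by at most a constant factor via subadditivity, and then invokes \Cref{theorem:upper-bound-sparsification-finite-spread}. You have filled in precisely those details, including the constant $2$ and the handling of degenerate cases.
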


Two other examples of commonly used monotone functions with finite spread
are set-coverage functions (defined in \Cref{appendix:coverage})
and the \emph{matroid-rank functions},%
\footnote{For a matroid with ground set $e$ and independent sets $\mathcal{I}$,
  the rank function is given by
  $g_e: S\mapsto \max_{T\subseteq S: T\in \mathcal{I}} |T|$.
  This rank function is submodular and monotone.
}
which have $\mu_e=r$ where $r$ is the rank of the matroid.

We remark that spread does not fully characterize the sparsifier size.
Indeed, symmetric functions can have a large spread $\mu_e$
but still admit $\tO_\epsilon(n)$ sparsifier due to \cite{JLLS23},
consider e.g. product splitting $\splitprd$ which has $\mu_e = O(n)$. 
Furthermore, directed all-or-nothing splitting $\splitdaon$
does not have finite spread,
and nevertheless admits a sparsifier of size $\tO_\epsilon(n^2)$ \cite{OST22}.
\Cref{fig:sparsifier-families} depicts different families of splitting functions
including that of finite spread, 
and the sparsification bounds known for them. 

\paragraph{Techniques.}
Our technique is based on approximate $H$ as an undirected hypergraph and use the sampling probabilities of \cite{CKN20} but amplified by $\mu_e$ for each hyperedge.
This is a known technique in generalizing sampling mechanisms.
Our main contribution is to identify the spread as a relevant and useful parameter.
We remark that the generalization of balance, which is known to control the size of sparsifier in directed graphs, to submodular hypergraphs does not suffice for sparsification.
Furthermore, we prove that the spread also characterizes other traits of splitting function, such as the deformation lower bound.

\subsection{Succinct Representation}
\label{sec:SuccintRepresentation}

We provide the first example of submodular splitting functions
for which sparsifiers that are not subgraphs
are provably (much) more succinct
than sparsifiers that are reweighted subgraphs.%
\footnote{Previously, a non-subgraph sparsifier 
  was shown in \cite{ADKKP16} for small-side splitting,
  however it optimizes the number of hyperedges and not the encoding size. 
}
To be more precise, we exhibit a natural family of splitting functions, 
where the former $(1+\epsilon)$-sparsifiers take only $\tO_\epsilon(n)$ bits
(\Cref{corollary:additive-small-representation}), 
while the latter $(1+\epsilon)$-sparsifiers require $\tOmega_\epsilon(n^2)$ bits
(\Cref{theorem:bit-lower-bound-additive-reweighted-subgraph}).
It follows that a reweighted subgraph \emph{need not be} the smallest encoding
that stores a $(1+\epsilon)$-approximation of the cuts values,
and by a wide margin! 

Our plan for constructing a succinct representation has two stages.
The first stage creates a $(1+\epsilon)$-sparsifier $H'$,
by deforming each $e\in E$ into multiple small hyperedges.
The second stage computes for this $H'$ a $(1+\epsilon)$-sparsifier $H''$
that is a reweighted subgraph.
It then follows that $H''$ is a $(1+\epsilon)^2$-sparsifier,
and has a few hyperedges that are all small.

\begin{definition}
  A splitting function $g_e:2^e\to\R_+$ on hyperedge $e$
  is called \emph{$(1+\epsilon)$-approximable with support size $p$}
  if there are submodular functions $g_{e_i}:2^{e_i}\to\R_+$ for $i=1,\ldots,r$,
  each on a hyperedge $e_i\subseteq e$ of size $|e_i|\le p$,
  such that
  \[
    \forall S\subseteq e,
    \qquad
    \sum_{i=1}^r g_{e_i}(S\cap e_i) \in (1\pm \epsilon)g_e(S)
    .
  \]
\end{definition}

Our example is the family of \emph{additive} splitting functions,
defined as functions $g_e$ that can be written
as either $g_e: S\mapsto \min(|S|,K)$
or $g_e:S\mapsto \min(|S|,|e\setminus S|,K)$ for some $K>0$.
The next theorem (proved in Section \ref{sec:deformation-additive-splitting})
achieves the first stage in our plan above;
it shows that additive functions can be $(1+\epsilon)$-approximated
by creating several copies of $e$ and sampling the vertices. 

\begin{theorem}[Deformation of Additive Functions]
\label{theorem:additive-approximation}
Let $g_e$ be an additive splitting function on hyperedge $e$. 
Then $g_e$ can be $(1+\epsilon)$-approximated with support size $O( \epsilon^{-2}(|e|/K)\log |e| )$.
\end{theorem}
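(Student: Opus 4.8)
The plan is to treat both additive cases — capped cardinality $g_e\colon S\mapsto\min(|S|,K)$ and capped small-side $g_e\colon S\mapsto\min(|S|,|e\setminus S|,K)$ — with one randomized construction realizing the hint \emph{``several copies of $e$ and sampling the vertices''}. Create $r$ copies of $e$, where copy $i$ is a uniformly random subset $e_i\subseteq e$ of a fixed size $s$, and equip $e_i$ with the rescaled, re-capped version of the same function: $g_{e_i}\colon T\mapsto\alpha\min(|T|,\kappa)$ in the cardinality case, and $g_{e_i}\colon T\mapsto\alpha\min(|T|,|e_i\setminus T|,\kappa)$ in the small-side case. I would take $s\eqdef\min\{\,|e|,\ \Theta(\epsilon^{-2}(|e|/K)\log|e|)\,\}$, so the support size is $|e_i|=s=O(\epsilon^{-2}(|e|/K)\log|e|)$ as required; when the minimum is attained by $|e|$ the construction collapses to the single hyperedge $e$ with $K$ replaced by $(1+\epsilon/4)K$ and the claim is immediate, so assume $s=\Theta(\epsilon^{-2}(|e|/K)\log|e|)<|e|$. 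Set the per-copy inclusion rate $q\eqdef s/|e|$, chosen so that $qK=\Theta(\epsilon^{-2}\log|e|)$; the rescaled cap $\kappa\eqdef(1+\epsilon/4)\,qK$, with a deliberate $(1+\epsilon/4)$ slack; $\alpha\eqdef1/(rq)$, so that each vertex, which lies in $rq$ copies in expectation, receives total weight $\approx1$; and $r\eqdef\Theta(\epsilon^{-2}|e|\log|e|)$, which does not affect the support size. Each $g_{e_i}$ is a bona fide splitting function: nonnegative, zero on $\emptyset$, and submodular since $\min(|T|,\kappa)$ and $\min(|T|,|e_i\setminus T|,\kappa)$ are concave in $|T|$.

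I would then show that, with positive probability over the choice of $e_1,\dots,e_r$, the relation $\sum_i g_{e_i}(S\cap e_i)\in(1\pm\epsilon)\,g_e(S)$ holds \emph{simultaneously} for all $S\subseteq e$; the theorem follows by the probabilistic method. Fix $S$ with $|S|=m$; by the symmetry of $g_e$ and of the construction under $S\leftrightarrow e\setminus S$ we may assume $m\le|e|/2$, so $g_e(S)=\min(m,K)$ in both cases. Let $X_i\eqdef|S\cap e_i|$, a hypergeometric variable with $\mathbb{E}[X_i]=qm$, and $Y\eqdef\sum_{i=1}^r\min(X_i,\kappa)$, so that in the cardinality case $\sum_i g_{e_i}(S\cap e_i)=\alpha Y$; the small-side case reduces to this when $m$ is bounded away from $|e|/2$ (then $|S\cap e_i|<|e_i\setminus S|$ with overwhelming probability, so the extra term is inert) and, when $m\approx|e|/2$, one computes $\mathbb{E}[\min(X_i,\,s-X_i,\,\kappa)]$ directly and uses $s\ge\Omega(\epsilon^{-2})$ for concentration. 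First estimate $\mathbb{E}[Y]$: because $X_i$ concentrates tightly around $qm$ and $\kappa=\Theta(\epsilon^{-2}\log|e|)$ is large (so the rare overshoots $X_i>\kappa$ cost only a $|e|^{-\Omega(1)}$ factor), one obtains $\alpha\,\mathbb{E}[Y]\in(1\pm\epsilon/2)\min\!\big(m,(1+\epsilon/4)K\big)$; and $\min(m,(1+\epsilon/4)K)$ is in turn within $(1\pm\epsilon/2)$ of $\min(m,K)=g_e(S)$ — this is precisely where the $(1+\epsilon/4)$ slack in the cap is spent, to absorb the behaviour of the estimate near the kink $m=K$.

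For concentration one must \emph{not} take a union bound over all $2^{|e|}$ subsets directly: a single copy's count $X_i$ does not concentrate nearly well enough for $m\approx K$, since forcing $\mathbb{E}[X_i]=qm$ up to $\log\binom{|e|}{K}$ would require $s>|e|$. Instead the averaging is placed on the $r$ \emph{independent} copies. As $Y$ is a sum of $r$ independent variables each in $[0,\kappa]$, Bernstein's inequality yields $\Pr\big[\,|Y-\mathbb{E}[Y]|>\tfrac{\epsilon}{2}\mathbb{E}[Y]\,\big]\le 2\exp\big(-\Omega(\epsilon^2\,\mathbb{E}[Y]/\kappa)\big)$, and since $\mathbb{E}[Y]/\kappa=\Omega\big(r\min(m,K)/K\big)$, the choice $r=\Theta(\epsilon^{-2}|e|\log|e|)$ drives this below $\binom{|e|}{m}^{-1}|e|^{-2}$ for every $m$. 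Union-bounding over all $m\in\{0,\dots,|e|\}$ and all $\binom{|e|}{m}$ sets of size $m$, the total failure probability is at most $(|e|+1)|e|^{-2}<1$, so a good family $\{e_i\}_{i=1}^r$ exists, with each $e_i$ of size $|e_i|=s=O(\epsilon^{-2}(|e|/K)\log|e|)$, as claimed. I expect this concentration step to be the main obstacle: approximating \emph{all} cuts while keeping the support size $\tO(|e|/K)$ forces the averaging onto the copies rather than inside any one sampled set, after which one must still check that the expectation estimates — in particular the mildly delicate ones near $|S|=K$ and near $|S|=|e|/2$ — survive the $(1\pm\epsilon)$ bookkeeping.
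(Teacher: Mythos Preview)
Your proposal is correct and follows essentially the same approach as the paper: create many random sub-hyperedges by sampling vertices of $e$, equip each with a rescaled and re-capped additive function, estimate the per-copy expectation (Jensen for the upper bound, a tail-probability argument for the lower bound), and concentrate the sum over copies via a Chernoff-type inequality followed by a union bound over cuts. The only differences are cosmetic --- you use fixed-size subsets where the paper samples each vertex independently with probability $p=\Theta(\epsilon^{-2}K^{-1}\log|e|)$, you put a $(1+\epsilon/4)$ slack in the cap where the paper keeps cap $K$ and instead splits the expectation analysis into $|S|<K/2$ versus $|S|\ge K/2$, and you take $r=\Theta(\epsilon^{-2}|e|\log|e|)$ copies with a size-stratified union bound where the paper takes $N=\Theta(\epsilon^{-2}|e|^2)$ copies and union-bounds uniformly over all $2^{|e|}$ cuts --- and none of these affects the support-size bound.
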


Following our plan, suppose that given an input $H$,
we first apply \Cref{theorem:additive-approximation} 
to obtain a sparsifier $H'$ with small support size.
The construction of $H'$ also implies that it has small spread,
$\mu_{H'}  \le O(\epsilon^{-2} \log n)$.
Applying \Cref{theorem:upper-bound-sparsification-finite-spread}
on $H'$ we obtain a succinct representation $H''$.
A straightforward encoding of $H''$ then proves the following corollary (see \Cref{sec:deformation-additive-splitting}).

\begin{corollary}[Additive Functions admit Small Representation]
\label{corollary:additive-small-representation}
Let $H=(V,E,\set{g_e})$ be a submodular hypergraph such that every $g_e$ is additive with parameter $K_e>0$,
and let $\hat{K} \eqdef \min_{e\in E} K_e/|e|$ be a normalized bound on $K_e$ over all hyperedges.
Then $H$ admits a $(1+\epsilon)$-sparsifier 
with encoding size $O( \epsilon^{-6}\hat{K}^{-1} n\log^4 n )$ bits.
\end{corollary}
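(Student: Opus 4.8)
The plan is to follow the two-stage pipeline announced right before the corollary. First I would apply \Cref{theorem:additive-approximation} to every hyperedge $e\in E$ independently: since each $g_e$ is additive with parameter $K_e$, Theorem~\ref{theorem:additive-approximation} lets us replace $g_e$ by a collection of small hyperedges $\{e_i\}$, each of size $O(\epsilon^{-2}(|e|/K_e)\log|e|) = O(\epsilon^{-2}\hat K^{-1}\log n)$, whose splitting functions sum to within $(1\pm\epsilon)g_e(\cdot)$. Doing this for all $e$ yields a first intermediate hypergraph $H'$ on the same vertex set with $\mintcut_{H'}(S)\in(1\pm\epsilon)\mintcut_H(S)$ for all $S$. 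The key bookkeeping point here is to bound the spread of $H'$: each new hyperedge $e_i$ is (a scaled copy of) a truncated small-side or cardinality function on a set of size $p=O(\epsilon^{-2}\hat K^{-1}\log n)$, so its spread $\mu_{e_i}$ is at most the support size times a constant — one must check the construction in the proof of Theorem~\ref{theorem:additive-approximation} gives something like $\mu_{e_i}=O(\epsilon^{-2}\log n)$ (the min nonempty cut value is $\Omega(1)$ after the rescaling, and the max is $O(\epsilon^{-2}\log n)$). Hence $\mu_{H'}=O(\epsilon^{-2}\log n)$.

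Second, I would feed $H'$ into \Cref{theorem:upper-bound-sparsification-finite-spread}: since $H'$ has finite spread $\mu_{H'}=\tilde O(\epsilon^{-2})$, it admits a $(1+\epsilon)$-sparsifier $H''$ that is a reweighted sub-hypergraph of $H'$, of size $\tilde O(\epsilon^{-2}\mu_{H'}n)=\tilde O(\epsilon^{-4}n)$ hyperedges. Composing the two approximations, $\mintcut_{H''}(S)\in(1\pm\epsilon)^2\mintcut_H(S)$ for every $S$, so after reparametrizing $\epsilon\mapsto\Theta(\epsilon)$ we get a genuine $(1+\epsilon)$-sparsifier; this only changes the hidden constants and polynomial-in-$\epsilon^{-1}$ factors. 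Crucially, because $H''$ is a reweighted sub-hypergraph of $H'$, every hyperedge of $H''$ still has support size $p=O(\epsilon^{-2}\hat K^{-1}\log n)$ and is one of the two cardinality-based function types (with its truncation parameter and a scaling weight).

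Third comes the encoding. A single hyperedge of $H''$ is specified by: which of the two function types it is ($1$ bit), its vertex set (a subset of $V$ of size $\le p$, costing $O(p\log n)$ bits), its truncation parameter ($O(\log n)$ bits, as it is a bounded integer-like value), and its scaling weight to the required precision ($O(\log(n/\epsilon))$ bits, since cut values lie in a $\poly(n)$ range and we only need $1\pm\epsilon$ accuracy). So each hyperedge costs $O(p\log n)=O(\epsilon^{-2}\hat K^{-1}\log^2 n)$ bits. Multiplying by the number of hyperedges $\tilde O(\epsilon^{-4}n)$ gives $O(\epsilon^{-6}\hat K^{-1}n\log^4 n)$ bits total (the $\log^4 n$ absorbing the $\log^2 n$ per hyperedge together with the two polylog factors hidden in the $\tilde O$ of the sparsifier size and in $\mu_{H'}$). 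That matches the claimed bound.

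The main obstacle I expect is not the pipeline itself but the bound on the spread of $H'$: one must look inside the construction behind Theorem~\ref{theorem:additive-approximation} to confirm that after the "create copies of $e$ and subsample vertices" step and the necessary rescaling, the smallest nonempty-and-nonfull cut value of each produced hyperedge is bounded below by a constant while the largest is $\tilde O(\epsilon^{-2})$, so that $\mu_{e_i}=\tilde O(\epsilon^{-2})$ uniformly. A secondary (routine) subtlety is tracking the $(1+\epsilon)^2$ versus $(1+\epsilon)$ issue and the precision needed to store the weights, but that is standard. If for some reason the deformation produces hyperedges of non-constant minimum cut, one would instead carry $\mu_{H'}$ symbolically through the computation; since the statement already hides a generous $\epsilon^{-6}\log^4 n$ factor, there is ample slack to absorb any polylog loss.
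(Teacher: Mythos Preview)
Your proposal is correct and follows essentially the same two-stage pipeline as the paper: deform each hyperedge via \Cref{theorem:additive-approximation}, then sparsify via \Cref{theorem:upper-bound-sparsification-finite-spread}, then encode. The one point you flag as an obstacle---the spread bound $\mu_{e_i}=O(\epsilon^{-2}\log|e|)$ for the deformed hyperedges---is in fact stated explicitly in the paper's detailed version of the deformation result (\Cref{lemma:deformation-additive}), so no digging into the construction is required.
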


The next theorem (proved in \Cref{sec:encoding-size-lower-bounds}) 
shows that reweighted-subgraph sparsifiers 
of additive functions require $\Omega( n^2 )$ bits in the worst-case.
Putting this together with our succinct representation
from \Cref{corollary:additive-small-representation},
we conclude that relaxing the (natural) restriction to reweighted subgraphs
improves the space complexity by a factor of $\tOmega_\epsilon(n\hat{K})$, observe that this can be $\tOmega_\epsilon(n)$ when $\hat{K}\in \Omega(1)$.

\begin{theorem}[Reweighted Sparsifiers Require $\Omega( n^2)$ Bits]
  \label{theorem:bit-lower-bound-additive-reweighted-subgraph}
  There exists a family $\mathcal{H}$ of hypergraphs with additive splitting functions with parameter $1\le K \le n/3$,
  such that encoding a reweighted-subgraph $(1+\epsilon)$-sparsifier
  for an input $H\in \mathcal{H}$ requires $\Omega(n^2)$ bits. 
\end{theorem}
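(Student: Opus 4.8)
The plan is to prove an encoding lower bound by an information-theoretic (counting) argument: exhibit a large family $\mathcal H$ of inputs such that any valid reweighted-subgraph $(1+\epsilon)$-sparsifier distinguishes between different members of $\mathcal H$, so the number of distinct sparsifier encodings must be at least $|\mathcal H|$, forcing $\Omega(\log|\mathcal H|) = \Omega(n^2)$ bits. The natural construction is a bipartite-type instance: take $V$ split into a ``left'' set $L$ and a ``right'' set $R$ of size $\Theta(n)$ each, and for each pair $(a,b)\in L\times R$ either include or omit a hyperedge $e_{a,b}$ on the vertex set $\{a,b\}\cup F$ for a fixed small ``filler'' set $F$, with additive splitting function $g_{e_{a,b}}:S\mapsto\min(|S|,K)$; the choice of which of these $\Theta(n^2)$ hyperedges is present is the $\Theta(n^2)$-bit secret. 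Membership in $\mathcal H$ is thus indexed by a bipartite graph on $L\cup R$, giving $|\mathcal H| = 2^{\Theta(n^2)}$.

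The key technical step is to show that the presence or absence of each individual hyperedge $e_{a,b}$ is \emph{forced} by the cut values, so that a reweighted subgraph — which can only rescale existing hyperedges, never add new ones — must literally contain $e_{a,b}$ (at positive weight) whenever it is present in $H$, and cannot contain it when it is absent. Concretely, I would pick a probe cut $S_{a,b}$ (e.g. $S = \{a,b\}$, or $S=\{a\}$ and $S=\{b\}$ and combine) on which $g_{e_{a,b}}$ contributes a noticeably larger value than any other hyperedge can contribute, so that $\mintcut_H(S_{a,b})$ changes by more than a $(1\pm\epsilon)$ multiplicative factor depending on whether $e_{a,b}\in E$. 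For this to pin down the \emph{identity} of the hyperedge and not just aggregate weight, I would arrange the hyperedge sets $\{a,b\}\cup F$ so that the family of cuts $\{S_{a,b}\}$ is ``well-separated'': the unique hyperedge affecting cut $S_{a,b}$ is $e_{a,b}$ itself. A clean way is to take $F=\emptyset$, so each candidate hyperedge has size exactly $2$ with $K=1$, making $g_{e_{a,b}}$ the ordinary (graph) all-or-nothing function on the edge $\{a,b\}$; then the additive family with $K=1$ reduces to ordinary graph cuts and the lower bound becomes: a reweighted-subgraph sparsifier of a bipartite graph must keep edge $\{a,b\}$ iff it is present — which is false for general graphs (Benczúr–Karger deletes edges), so the parameter $K$ must be used, or the function made genuinely submodular-but-not-graphic. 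Hence I expect to need $K$ strictly larger than $1$, or hyperedges of size $3$ with a shared vertex, precisely so that each hyperedge is ``cut-essential'': deleting it distorts some cut by more than $1+\epsilon$, regardless of rescaling of the rest.

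The steps, in order: (1) fix $n$, set $|L|=|R|=\Theta(n)$, choose $K$ in $[1,n/3]$ and the filler structure $F$ (to be determined by step 3); (2) define $\mathcal H$ as the set of all $2^{|L||R|}$ sub-instances indexed by subsets of $L\times R$; (3) the main lemma: for every $H\in\mathcal H$ and every reweighted-subgraph $(1+\epsilon)$-sparsifier $H'$ of $H$, and every pair $(a,b)$, we have $e_{a,b}\in E(H') \iff e_{a,b}\in E(H)$ — proved by exhibiting a witness cut $S_{a,b}$ whose value is sensitive, beyond $(1\pm\epsilon)$, to the presence of exactly this one hyperedge, using additivity/submodularity to control the contributions of all other hyperedges to that cut; (4) conclude that the map $H'\mapsto\{(a,b): e_{a,b}\in E(H')\}$ recovers the index of $H$, so distinct $H$ yield sparsifiers that (as encoded objects, decoded by the same fixed decoder) must be distinguishable, hence at least $2^{\Omega(n^2)}$ distinct encodings are needed, i.e. $\Omega(n^2)$ bits.

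The main obstacle is step (3): making each hyperedge cut-essential against an \emph{adversarial rescaling} of all the other hyperedges. In a plain graph this fails, so the argument must genuinely exploit the additive cap $K$ (or a carefully chosen overlap pattern among the hyperedges) to create, for each $(a,b)$, a cut that ``sees'' $g_{e_{a,b}}$ at its uncapped linear regime while all competing hyperedges are either absent from that cut or pinned at a value that the $(1\pm\epsilon)$ slack cannot absorb. I would aim to decouple the witnesses — e.g. let every candidate hyperedge be $\{a,b\}\cup F_{a,b}$ with disjoint or structured fillers — so that the $\Theta(n^2)$ witness cuts are mutually ``orthogonal'' and the per-pair sensitivity argument is local; balancing $K\le n/3$, $|e|\le n$, and the requirement $|e|/K$ small enough that \Cref{theorem:additive-approximation} still gives the matching upper bound in \Cref{corollary:additive-small-representation} is the delicate quantitative part, but it is bookkeeping once the essentiality lemma is in place.
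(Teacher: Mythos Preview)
Your high-level framework (a counting argument over a family $\mathcal H$ of size $2^{\Omega(n^2)}$, showing that a valid reweighted-subgraph sparsifier must determine the input) is the right one, and it matches the paper. But your specific construction has a genuine gap that you yourself flag and do not close.

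You propose $\Theta(n^2)$ candidate hyperedges $e_{a,b}$ and try to make each one individually cut-essential via a witness cut $S_{a,b}$ that ``sees only $e_{a,b}$.'' You correctly observe that with $K=1$ and $|e|=2$ this collapses to ordinary graphs, where sparsification does delete edges, so it fails. Your fix is to attach fillers $F_{a,b}$ so the witnesses become orthogonal. The problem is arithmetic: to isolate $\Theta(n^2)$ hyperedges by disjoint (or near-disjoint) private fillers you need $\Omega(n^2)$ filler vertices, but the vertex set has size $n$. Any ``structured'' overlap among fillers puts you right back into the regime where many hyperedges contribute to each witness cut, and rescaling can compensate --- which is exactly the graph-sparsification phenomenon you were trying to avoid. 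In short, you cannot pack $n^2$ pairwise-distinguishable witnesses into $n$ vertices by presence/absence alone; one bit per hyperedge is too coarse.

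The paper resolves this by a qualitatively different encoding of the $\Theta(n^2)$ bits: it uses only $\Theta(n)$ hyperedges, each carrying $\Theta(n)$ random bits (namely, a random size-$n/12$ subset of a designated block $V$ of $n/6$ vertices). Each hyperedge $e_i$ is forced into any reweighted-subgraph sparsifier with weight in $[1-\epsilon,1+\epsilon]$ by giving it a private singleton $w_i$ from a block $W$ (the cut $\{w_i\}$ is touched by $e_i$ alone). The subtle part is recovering the random bits: each $e_i$ also contains a Hadamard codeword $P_i\subseteq U$ of weight $d\approx K$, so that $|P_j\cap e_i|\in\{0,d/2\}$ for $i\ne j$ and $|P_j\cap e_j|=d$. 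Since the additive splitting has gradient $\Delta_{d/2}=\Delta_0>\Delta_d$ (the cap bites exactly at level $d$), the difference $\mintcut_{H'}(P_j\cup\{v\})-\mintcut_{H'}(P_j)$ equals $\Delta_0\sum_{e\ni v}w'_e$ minus a strictly positive correction precisely when $v\in e_j$, and this is detectable by comparing against $\mintcut_{H'}(\{v\})$. Thus the nonlinearity of the cap --- not a per-hyperedge isolation trick --- is what lets cut queries read out the $\Theta(n)\times\Theta(n)$ incidence matrix.
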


This lower bound is surprising because in the case of undirected graphs,
the best encoding size is achieved by a reweighted-subgraph sparsifier \cite{BSS14, ACKQWZ16,CKST19}.
Our proof is based on a technical lemma that can be applied to many cardinality-based splitting functions.
Furthermore, \Cref{theorem:bit-lower-bound-additive-reweighted-subgraph} can be extended
to the directed all-or-nothing splitting function $\splitdaon$,
to show a lower bound of
$\Omega(n^3 /\epsilon)$ bits.
For details see \Cref{sec:encoding-size-lower-bounds}.

Finally, we can also prove a space lower bound
for an arbitrary encoding of cuts in a directed hypergraph 
(arbitrary means that it need not represent a reweighted-subgraph sparsifier, see \Cref{sec:encoding-size-lower-bounds} for details).
This proof  provides an $\epsilon^{-1}$ factor improvement
over the trivial lower bound of $\Omega\left( n^2 \right)$ bits.
The proof combines the techniques from \Cref{theorem:bit-lower-bound-additive-reweighted-subgraph}
with a lower bound from \cite{OST22} on the number of edges in a reweighted-subgraph sparsifier. 
\begin{theorem}
  \label{theorem:directed-hypergraph-encoding-lower-bound}
  There exists a family of directed hypergraphs $\mathcal{H}$
  such that encoding a $(1+\epsilon)$-approximation of their cuts
  requires $\Omega(n^2/\epsilon)$ bits.
\end{theorem}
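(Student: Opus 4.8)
The plan is to argue by a counting (incompressibility) argument. I would exhibit a family $\mathcal{H}$ of directed hypergraphs on a common vertex set $V$ of size $n$, with $|\mathcal{H}| \ge 2^{\Omega(n^2/\epsilon)}$, such that any encoding that answers every cut query within a $(1+\epsilon)$ factor must map distinct members of $\mathcal{H}$ to distinct strings; a standard pigeonhole then forces the encoding to use at least $\log_2 |\mathcal{H}| = \Omega(n^2/\epsilon)$ bits. Injectivity reduces, as usual, to separating members by a single cut: if one encoding were shared by $H_1 \ne H_2 \in \mathcal{H}$, then for every $S \subseteq V$ the reported value would lie in $(1\pm\epsilon)\,\mintcut_{H_1}(S)$ and simultaneously in $(1\pm\epsilon)\,\mintcut_{H_2}(S)$, forcing $\mintcut_{H_1}(S)$ and $\mintcut_{H_2}(S)$ to agree within a factor $(1+\epsilon)/(1-\epsilon)$. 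Hence it suffices to guarantee that any two members are \emph{separated} by some cut whose values differ by more than this factor --- in the cleanest case, a cut that vanishes on one member and is positive on the other.

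The remaining and main task is to build such a family, and here I would combine the two ingredients flagged above. The construction of \cite{OST22}, which certifies the $\Omega(n^2/\epsilon)$ lower bound on the number of hyperedges in any reweighted-subgraph sparsifier of a directed hypergraph, provides a directed hypergraph on $n$ vertices containing $\Omega(n^2/\epsilon)$ \emph{cut-critical} hyperedges, i.e., hyperedges each of which, when toggled, changes some cut value by more than a $(1+\Theta(\epsilon))$ factor; crucially, it uses directed hyperedges of varying head size to pack a factor $1/\epsilon$ beyond the $\Omega(n^2)$ degrees of freedom available from ordinary directed edges. I would turn this into the family $\mathcal{H}$ by keeping the ambient structure fixed and letting each member retain an arbitrary subset of these $\Omega(n^2/\epsilon)$ hyperedges, so $|\mathcal{H}| = 2^{\Omega(n^2/\epsilon)}$. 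To certify the required separation robustly --- that toggling one hyperedge moves some cut value by more than a $(1+\epsilon)/(1-\epsilon)$ factor \emph{regardless of which of the other hyperedges are present} --- I would invoke the technical lemma underlying \Cref{theorem:bit-lower-bound-additive-reweighted-subgraph}, which is precisely a tool for arguing that a cardinality-type perturbation of a splitting function stays visible in the cut values; it is applied here to the directed all-or-nothing splitting $\splitdaon$, just as in the extension of that theorem to the $\Omega(n^3/\epsilon)$-bit bound for reweighted subgraphs. The reason the bound drops from $\Omega(n^3/\epsilon)$ to $\Omega(n^2/\epsilon)$ in the arbitrary-encoding setting is that an arbitrary encoding is not charged the $\Omega(n)$ bits per retained hyperedge needed to name its head and tail; it pays only for the $\Omega(n^2/\epsilon)$ bits of genuine cut information.

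I expect the main obstacle to be the interplay between packing and resolution: one must fit $\Omega(n^2/\epsilon)$ toggle-able hyperedges into only $n$ vertices, which forces heavy reuse of vertices and hence potential interference among hyperedges on the separating cuts, while still guaranteeing that for each hyperedge there is a cut on which its contribution is not swamped by the others, so that the multiplicative gap $(1+\epsilon)/(1-\epsilon)$ genuinely holds. Reconciling these two requirements is exactly the content of the \cite{OST22} construction; the new work is to check that the accompanying criticality/separation argument survives the passage from ``number of hyperedges in a reweighted subgraph'' to ``number of bits in an arbitrary encoding,'' which is then a routine combination of the counting framework of the first paragraph with the cardinality-perturbation lemma.
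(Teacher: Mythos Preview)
Your high-level strategy---a counting argument over a family built from the \cite{OST22} construction, with separation certified cut-by-cut---is exactly what the paper does. Where your proposal is imprecise, and where it diverges from the paper, is the separation mechanism.

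The paper does \emph{not} use vanishing cuts, nor does it invoke the cardinality-based lemma underlying \Cref{theorem:bit-lower-bound-additive-reweighted-subgraph} (that lemma relies on Hadamard codewords and a gradient condition $\Delta_t<\Delta_0$ specific to cardinality-based splitting; it does not apply to $\splitdaon$). Instead, the argument is a direct inclusion--exclusion on carefully chosen cuts. With vertices split into $U,W$ and hyperedges $e_{i,i+x,j}$ having tail $\{u_i,u_{i+x}\}$ and head $\{w_j\}$, the paper takes for each $(i,j)$ only a random \emph{half} of the $1/(8\epsilon)$ possible values of $x$; this guarantees that the cuts $S_i\coloneqq\{u_i\}\cup(W\setminus\{w_j\})$ have value $\Theta(1/\epsilon)$, so a unit change is multiplicatively $\Theta(\epsilon)$. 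To separate $H$ from $\hat H$ differing in the single hyperedge $e_{i,i+x,j}$: either some $\mintcut(S_\ell)$ already differs (an integer gap of $1$ at scale $O(1/\epsilon)$ suffices), or all agree and then $\mintcut_H(S_i\cup S_{i+x})=\mintcut_H(S_i)+\mintcut_H(S_{i+x})-1$ whereas $\mintcut_{\hat H}(S_i\cup S_{i+x})=\mintcut_{\hat H}(S_i)+\mintcut_{\hat H}(S_{i+x})$, again a unit gap at scale $O(1/\epsilon)$.

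The step you flagged as the main obstacle---packing $\Omega(n^2/\epsilon)$ toggles without letting interference swamp the signal---is resolved precisely by this: the cuts $S_i$ are engineered to have value $\Theta(1/\epsilon)$, not $\Theta(n/\epsilon)$, so a single toggle remains visible. Your proposal to take \emph{arbitrary} subsets of all $\Omega(n^2/\epsilon)$ hyperedges would not preserve this bound on $\mintcut(S_i)$; the paper's restriction to half-size subsets per $(i,j)$ pair is what makes the arithmetic work, and it still yields $\binom{1/(8\epsilon)}{1/(16\epsilon)}^{n^2}=2^{\Omega(n^2/\epsilon)}$ members.
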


\paragraph{Techniques.} 
Our lower bound for the encoding size of reweighted-subgraph sparsifiers
(\Cref{theorem:bit-lower-bound-additive-reweighted-subgraph})
boils down to a counting argument on a large family of hypergraphs $\mathcal{H}$,
that have sufficiently different cut values and thus require distinct encodings.
We construct hypergraphs in this family $\mathcal{H}$
by partitioning the vertices into three parts $V,U,W$,
and adding hyperedges that contain vertices from all three parts. 
We first create hyperedges consisting of a large random subset of vertices from $V$;
this adds entropy that will differentiate between hypergraphs in $\mathcal{H}$.
We then augment each hyperedge with vertices from $U$,
where each hyperedge is defined by a word in the Hadamard code.
We use the structure of this code to show that
by making cut queries to a hypergraph $H\in\mathcal{H}$,
one can recover the random bits encoded in the adjacency matrix of $H$ induced on $V$.
We use $W$ to create an unsparsifiable hypergraph,
i.e., one where removing any hyperedge will violate the approximation guarantee.
Finally, every hyperedge on $V\cup U$ is combined with a hyperedge on $W$.

\subsection{Deformation Lower Bounds}
Our success in finding a small succinct representation for additive functions motivates searching for deformations of other splitting functions.

A similar problem, of approximating a submodular function
by functions of small support but over the uniform distribution 
(i.e., in average-case rather than worst-case),
has received significant attention \cite{FVK13, CKKL12, GHRU13, FK14, FV16},
and it is known that every submodular function $f:2^V\to [0,1]$
can be approximated within additive error $\epsilon$ 
using support size $O(\epsilon^{-2}\log \epsilon^{-1})$ \cite{FV16}.
We show (see \Cref{sec:deformation-lower-bounds}) that a similar result is unfortunately not possible in our setting (multiplicative error for worst-case approximation).

\begin{theorem}[Approximation Requires Large Support Size]
  \label{theorem:lower-bound-support-additive-functions}
  Let $g_e$ be an additive splitting function on a hyperedge $e$. 
  Then every $1.1$-approximation of $g_e$
  must have support size $p\geq \Omega(|e|/K)$.
\end{theorem}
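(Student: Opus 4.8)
Write $m\eqdef|e|$ and let $K$ be the parameter of $g_e$; we may assume $K\ge1$ is an integer and $m\ge 3K$, since otherwise $|e|/K=O(1)$ and the bound is vacuous (any decomposition of a nonzero function has $p\ge1$). The plan is to reduce to the all‑or‑nothing case $K=1$ by ``contracting blocks''. Partition $e$ into $N\eqdef\lfloor m/K\rfloor\ge3$ blocks $A_1,\dots,A_N$, each of size at least $K$ (say $N-1$ blocks of size $K$ and one of size $m-(N-1)K<2K$), and let $\phi(J)\eqdef\bigcup_{j\in J}A_j$ for $J\subseteq[N]$. Because each block has size $\ge K$ and the complement of a union of at most $N-1$ blocks also contains a whole block, one checks directly that $g_e(\phi(J))=K\cdot a_N(J)$ for all $J$, where $a_N$ is the all‑or‑nothing function on $[N]$ of the matching type: $a_N(J)=\indic{J\ne\emptyset}$ if $g_e: S\mapsto\min(|S|,K)$, and $a_N(J)=\indic{\emptyset\ne J\ne[N]}$ if $g_e: S\mapsto\min(|S|,|e\setminus S|,K)$.

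Now suppose $g_e$ is $(1+\epsilon)$‑approximated with support size $p$, say $\sum_i g_{e_i}(S\cap e_i)\in(1\pm\epsilon)g_e(S)$ with all $|e_i|\le p$. For each $i$ set $F_i\eqdef\{\,j\in[N]:e_i\cap A_j\ne\emptyset\,\}$ and define $h_i:2^{F_i}\to\R_+$ by $h_i(J)\eqdef g_{e_i}\!\big(e_i\cap\phi(J)\big)/K$. Since the blocks are disjoint, the map $J\mapsto e_i\cap\phi(J)$ preserves unions and intersections, so each $h_i$ is submodular; moreover $h_i\ge0$, $h_i(\emptyset)=0$, $|F_i|\le|e_i|\le p$, and $h_i(J)$ depends only on $J\cap F_i$. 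Plugging $S=\phi(J)$ into the approximation guarantee gives $\sum_i h_i(J\cap F_i)\in(1\pm\epsilon)a_N(J)$ for every $J\subseteq[N]$. Hence it suffices to prove: \emph{for $\epsilon<\tfrac13$, any $(1+\epsilon)$‑approximation of $a_N$ by submodular functions of support size $\le p$ has $p=\Omega_\epsilon(N)$}; the theorem then follows with $\epsilon=0.1$ from $N\ge m/(2K)$.

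For this all‑or‑nothing lemma, set $c_u\eqdef\sum_{i:u\in F_i}h_i(\{u\})$ and $L_i\eqdef\sum_{u\in F_i}h_i(\{u\})$. Evaluating at singletons, $a_N(\{u\})=1$ (here we use $N\ge3$), so $c_u\in(1\pm\epsilon)$ for every $u$. For a pair $\{u,v\}$, write $\alpha_{uv}\eqdef\sum_{i:u,v\in F_i}h_i(\{u\})$. Evaluating the approximation at $J=\{u,v\}$ and discarding the nonnegative terms $h_i(\{u,v\})$ coming from pieces with $u,v\in F_i$, we get $(c_u-\alpha_{uv})+(c_v-\alpha_{vu})\le\sum_i h_i(\{u,v\}\cap F_i)\le 1+\epsilon$, hence $\alpha_{uv}+\alpha_{vu}\ge c_u+c_v-(1+\epsilon)\ge 1-3\epsilon$. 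Summing over all $\binom{N}{2}$ pairs and exchanging the order of summation, the left side becomes $\sum_i(|F_i|-1)L_i\le(p-1)\sum_iL_i=(p-1)\sum_uc_u\le(p-1)(1+\epsilon)N$, so $(p-1)(1+\epsilon)N\ge\binom{N}{2}(1-3\epsilon)$ and therefore $p\ge1+\tfrac{(N-1)(1-3\epsilon)}{2(1+\epsilon)}=\Omega_\epsilon(N)$.

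The step I expect to be the real obstacle is the all‑or‑nothing lemma, or more precisely finding the right family of test sets: the obvious ``single chain'' or ``random cut'' averaging arguments only reproduce consequences of the singleton and full‑set constraints and carry no dependence on $p$; the key point is that testing \emph{all pairs} forces a submodular saving $\alpha_{uv}+\alpha_{vu}\ge1-O(\epsilon)$ inside every piece containing $u$ and $v$, and double‑counting these savings yields exactly the factor $|F_i|-1\le p-1$. A related point is that the block contraction is genuinely needed when $K>1$: without it a pair $\{u,v\}$ has cut value $\min(2,K)$, leaving no room for a forced saving. The remaining pieces are routine: checking that block contraction preserves submodularity, the two degenerate cases ($m<3K$ and small $N$, both giving $\Omega(|e|/K)=\Omega(1)$), and non‑integer $K$ (round the block sizes up to $\lceil K\rceil\le2K$, affecting only constants).
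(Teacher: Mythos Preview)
Your proof is correct and takes a genuinely different route from the paper's. The paper derives the result as an instance of a general technical lemma (its Lemma~6.1 / Corollary~6.2): it identifies the ``curvature'' quantity $\bar\delta_t=1-\max_{|S|=|T|=t}g_e(S\cup T)/(g_e(S)+g_e(T))$, proves that whenever $\bar\delta_t>0$ any $(1+\bar\delta_t/2)$-approximation needs support $\Omega(\bar\delta_t^2|e|/t)$, and then observes that for additive $g_e$ with parameter $K$ one has $\bar\delta_K\ge\tfrac12$. The proof of the general lemma is itself a pairs/double-counting argument, but carried out over all size-$t$ subsets and with a somewhat delicate estimate on the number of size-$t$ sets meeting a fixed support (their Claim~6.7).

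Your reduction by block contraction to the all-or-nothing function on $N=\Theta(|e|/K)$ points, followed by the elementary pair inequality $\alpha_{uv}+\alpha_{vu}\ge 1-3\epsilon$ and the identity $\sum_{u\ne v}\alpha_{uv}=\sum_i(|F_i|-1)L_i$, is both shorter and more transparent for this specific family: it isolates exactly why the parameter $K$ enters (blocks of size $K$ flatten $g_e$), and the final counting needs no binomial estimates. What the paper's approach buys in exchange is reusability: the same lemma immediately yields the lower bounds for $|S|^\alpha$, $\log(|S|+1)$, general cardinality-based functions, and unweighted splitting functions listed in their Table~1, whereas your block-contraction trick is tailored to functions that become constant past a threshold.
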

  
\paragraph{Techniques.} 
The proof of \Cref{theorem:lower-bound-support-additive-functions} is based on \Cref{lemma:lower-bound-support-size},
a technical result that can be applied to many splitting functions.
The main idea is to examine a certain quantity $\delta_t$,
which is related the notion of curvature (of a submodular function).
The curvature is often used to parameterize approximation guarantees in maximization of submodular optimization \cite{CC84,vondrak2010submodularity}.
Intuitively, both the curvature and $\delta_t$ characterize the locality of the function, i.e., how much error is introduced by decomposing the function into smaller parts and summing them.
The main difference between the two quantities is that the curvature looks at the marginal contributions and $\delta_t$ characterizes the curvature of the union of two sets of size $t$.
Furthermore, in the approximation setting, a low worst-case curvature is desirable while for our proof it suffices that $\delta_t$ is high for many sets of size $t$.
Specifically, we show that if a constant fraction of pairs of subsets of size $t$ have constant positive $\delta_t$,
then $g_e$ cannot be approximated with support size smaller than $O(\delta_t^2 n/t)$.

By applying \Cref{lemma:lower-bound-support-size}, we obtain
lower bounds on the support size required to approximate several natural splitting functions,
as presented in~\Cref{table:summary-of-support-size-lower-bounds}.

\begin{table}
    \centering
    \begin{tabular}{|l|c|c|l|}
        \hline
        \textbf{Function Family} &
        \textbf{Example} &
        \textbf{Support Size} & 
        \textbf{} \\
        \hline
        additive functions 
        & $g_e(S) = \min(|S|,K)$ 
        & $\Omega\left( n/K \right)$
        & Lemma \ref{theorem:lower-bound-support-additive-functions}
        \\
        \hline
        polynomial
        & $g_e(S) = |S|^\alpha$ for constant $\alpha\in (0,1)$ 
        & $\Omega\left( n \right)$
        & Corollary \ref{corollary:lower-bound-polynomial}
        \\
        \hline
        logarithmic
        & $g_e(S) = \log(|S|+1)$ 
        & $\Omega\left( n \right)$
        & Corollary \ref{corollary:lower-bound-polylog}
        \\
        \hline
        cardinality based
        & $g_e(S) = f\left( |S| \right)$ for concave $f$ 
        & $\Omega\left( n/\mu_e^{1.5} \right)$
        & Corollary \ref{corollary:lower-bound-cardinality}
        \\
        \hline
        unweighted
        & $g_e(v) = 1$ for all $v\in V$ 
        & $\Omega\left( n/\mu_e^{3} \right)$
        & Corollary \ref{corollary:lower-bound-unweighted}
        \\
        \hline
    \end{tabular}
    \caption{Our lower bounds on the support size for several families of splitting functions.
      They are all obtained by applying \Cref{lemma:lower-bound-support-size},
      stated for simplicity for sufficiently small fixed $\epsilon>0$ and $|e|=n$.
    }
    \label{table:summary-of-support-size-lower-bounds}
\end{table}

\subsection{Related Work}
Submodular functions appear in many applications, and have been studied extensively in the literature.
In particular, the problem of finding a simple representation for submodular functions has been studied in several works.
An $O(\sqrt{n}\log n)$-approximation for monotone submodular functions by functions of the form $f(S)=\sqrt{\sum_{v\in S} c_v}$, where $c_v>0$ are weights for all $v\in V$, was obtained in \cite{GHIM09}.
A later result \cite{DDSSS13} showed the same approximation using coverage and budget-additive functions.
The same paper also provided a lower bound of $\Omega(n^{1/3} \log ^{-2} n)$ for approximating monotone submodular functions by coverage and budget additive.
Approximating the all-or-nothing splitting function on $n$ vertices
using hyperedges with the all-or-nothing function and with support size $r$ 
must incur approximation factor $\Omega(n/r)$ \cite[Section 2.3]{Pogrow17}. 

It was previously shown that every  symmetric cardinality-based splitting functions can be deformed into a sum of $|e|/2$ hyperedges with capped small-side splitting function,
while preserving the value of $g_e$ exactly \cite{VBK22}.
Subsequent work by the same authors~\cite{VBK21},
achieves a similar deformation but with $(1+\epsilon)$-approximation 
and using only $O(\epsilon^{-1} \log |e|)$ hyperedges.
Notice the difference from our work,
which focuses on an approximation with small support size.

\subsection{Concluding Remarks}
Our work provides several promising directions for future work.
We prove that all submodular hypergraph admit sparsifiers of polynomial size (\Cref{theorem:upper-bound-sparsification-all}),
leaving a gap of $\tOmega_{\epsilon}(n)$ between the upper and lower bounds.
We conjecture that submodular hypergraphs admit the same sparsification bounds
as (the special case of) directed hypergraphs.
\begin{conjecture}
  Every submodular hypergraph admits a $(1+\epsilon)$-sparsifier of size $O(\epsilon^{-2}n^2)$, which is in fact a reweighted sub-hypergraph.
\end{conjecture}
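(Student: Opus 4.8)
A natural route to the conjecture is to tighten the importance-sampling analysis behind \Cref{theorem:upper-bound-sparsification-all}: there the sparsifier has size $M\cdot\sum_{e\in E}\rho_e$ with $\sum_{e\in E}\rho_e=O(n^2)$ (which is essentially unavoidable, since the $\Omega(n^2/\epsilon)$ lower bound for directed hypergraphs forces $\Omega(n^2)$ hyperedges on some instances) and $M=O(\epsilon^{-2}n)$, where the extra factor $n$ in $M$ is exactly the price of a Chernoff-plus-union-bound over all $2^{n}$ cuts. So the plan is to keep the over-estimates $\rho_e$ (or the decomposition in~\eqref{eq:introduction-minimum-directed-cut}) but replace the crude union bound over $2^{n}$ cuts by the cut-counting / weight-decomposition machinery behind graph cut sparsification~\cite{benczur1996approximating}, generalized to the submodular setting --- the same overall strategy by which \cite{OST22} obtained $\tO_\epsilon(n^2)$ for directed hypergraphs, which are the special case $\g=\{\splitdaon\}$ and hence would be subsumed by the conjecture.

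Concretely, I would first build a directed surrogate that sandwiches the cuts of $H$: set $w(u\to v)\eqdef\sum_{f\in E}\minhecut{f}{u\to v}$ on $V\times V$ and use~\eqref{eq:introduction-minimum-directed-cut} to bound $\mintcut_H(S)$ below by $\max_{u\in S,\,v\notin S}$ of the local terms and above by $\sum_{u\in S,\,v\notin S}w(u\to v)$. Next comes a strength decomposition: split $E$ into $O(\log n)$ buckets according to the scale of $\minhecut{e}{\cdot\to\cdot}$ relative to the local minimum cut, so that inside each bucket the smallest non-trivial cut dominates the largest single contribution; within a bucket one then needs a counting lemma of the form ``the number of cuts $S$ with value at most $\alpha\lambda$ is $n^{O(\alpha)}$'', which is standard for genuine graphs via contractions but has to be re-derived for $\mintcut_H(\cdot)$, most plausibly by arguing about the directed surrogate $w(u\to v)$ rather than about the $g_e$'s directly. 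Granting this, a layered Chernoff bound drives the amplification inside each bucket down to $M=\tO_\epsilon(1)$; summing over the $O(\log n)$ buckets gives total size $\tO_\epsilon(n^2)$, and a final clean-up (rounding weights, bounding the number of distinct splitting-function values) should remove the polylogarithmic slack to reach $O(\epsilon^{-2}n^2)$.

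The step I expect to be the real obstacle is this cut-counting lemma for $\mintcut_H(\cdot)$. The submodular cut function need not be symmetric, admits no obvious contraction operation, and --- worse --- the surrogate~\eqref{eq:introduction-minimum-directed-cut} is lossy by up to a factor $n^2$ between its two sides, precisely the order of magnitude we are trying to save; counting cuts of $w(u\to v)$ and transferring the bound back to $H$ would therefore reintroduce the lost factor. The way out I would try is an adaptive, per-cut refinement of~\eqref{eq:introduction-minimum-directed-cut}: for a fixed $S$, only $\mathrm{poly}(n)$ of the $n^2$ directed pairs crossing $S$ can be simultaneously ``active'', so one should charge the deviation of $\mintcut_{H'}(S)$ only to those pairs, making the effective number of independent summands in the concentration bound for $S$ polynomial rather than exponential in $n$. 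Carrying this out while keeping the construction polynomial-time --- which needs submodular-minimization subroutines to compute the $\minhecut{e}{u\to v}$ and the strength buckets --- is where the bulk of the effort would go; simply iterating \Cref{theorem:upper-bound-sparsification-all} does not help, since both the multiplicative error and the union-bound obstruction recur at every level.
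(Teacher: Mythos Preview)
This statement is presented in the paper as an \emph{open conjecture}, not a theorem; there is no proof in the paper to compare against. The authors explicitly leave it open and discuss, in the concluding remarks, exactly the obstruction you identify: the $O(\epsilon^{-2}n^3)$ bound of \Cref{theorem:upper-bound-sparsification-all} loses a factor $n$ solely because of the union bound over $2^n$ cuts, and removing it would require adapting one of the known techniques (cut counting, matrix Chernoff, or chaining) to general submodular splitting.

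Your proposed route --- build a directed surrogate from the $\minhecut{e}{u\to v}$ values, bucket by strength, and prove a Karger-style cut-counting lemma for $\mintcut_H(\cdot)$ --- is precisely the ``cut-counting'' avenue the paper singles out as problematic. The authors' stated concern matches the obstacle you flag yourself: existing cut-counting arguments exploit that each hyperedge contributes in a binary fashion to a cut, whereas a general submodular $g_e$ contributes wildly different values to different cuts, so a contraction-based counting bound on the surrogate does not transfer back to $H$ without reintroducing the loss. Your observation that the two-sided sandwich in~\eqref{eq:introduction-minimum-directed-cut} is lossy by a factor up to $n^2$ is exactly this point made quantitative. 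The ``adaptive per-cut refinement'' you sketch is a reasonable idea, but as stated it is not yet an argument --- you would need to show that the active pairs for nearby cuts are \emph{consistent} enough to support a net/chaining bound, and that is essentially the full difficulty.

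One divergence worth noting: the paper suggests that \emph{chaining} (in the style of \cite{OST22,JLLS23}) is the more promising of the known techniques here, rather than strength-based cut counting, because the recent chaining argument of \cite{JLLS23} already handles non-binary contributions in the symmetric case. If you pursue this conjecture, adapting that chaining machinery to asymmetric submodular $g_e$ may be a more direct line of attack than the strength decomposition you outline.
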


Notice that the known lower bound of $\Omega(n^2/\epsilon)$ is not tight with this conjecture, and improving it is an interesting open problem.
The main challenge in bridging the gap between our upper bound in \Cref{theorem:upper-bound-sparsification-all} and the conjecture
is the use of a union bound over all $2^n$ cuts.
This challenge was overcome in graph and hypergraph sparsification
by different methods,
such as cut counting \cite{benczur1996approximating,FHHP19, CKN20,KPS24},
a matrix Chernoff bound \cite{SS11},
and chaining which uses progressively finer discretizations \cite{BST19,KKTY21,OST22,JLLS23}.
Unfortunately, the matrix Chernoff bound is based on linear-algebra tools
that are clearly inapplicable to hypergraphs.
The cut-counting methods partition the cuts 
so that a union bound can be applied separately on each part; 
however these partitions rely on the binary nature of the all-or-nothing splitting function,
which seems challenging in the submodular hypergraph setting,
because the same $g_e$ can contribute very different values to different cuts $S\subseteq V$.
The chaining methods seem more promising,
especially the recent one~\cite{JLLS23} for all symmetric submodular functions,
in which the contribution of a single $g_e$ is not binary, 
although it seems to rely on the splitting functions being symmetric. 

In the sparsification setting,
we obtain smaller sparsifiers for several families (monotone and finite-spread),
however characterizing the optimal sparsifier size for each family remains open.
In the succinct-representation setting,
we found a useful deformation only for additive splitting functions (\Cref{theorem:additive-approximation}),
and it would be desirable to find deformations for more families.

Another interesting avenue is to find applications or connections to other problems.
For example, we show that \Cref{theorem:upper-bound-sparsification-finite-spread}
can be used to approximate a set-coverage function using a small ground set,
see \Cref{appendix:coverage} for details.
Another potential application is constructing succinct representations
for terminal cuts in a graph, see \Cref{appendix:terminal-cut-functions}.

\section{Polynomial-Size Sparsifiers for Submodular Hypergraphs}
\label{sec:polynomial-submodular-sparsifiers-proof}
This section proves \Cref{theorem:upper-bound-sparsification-all} and its improvement in the monotone case.
Our sparsification method is based on importance sampling,
where hyperedges are sampled with probability that is (at least)
proportional to their maximum relative contribution to any cut.
A standard choice, that was indeed used in~\cite{RY22},
is to sample every $e\in E$ with probability exactly proportional
to its importance, defined as
\[
    \sigma_e \coloneqq \max_{S\subseteq V}  \frac{g_e(S\cap e)}{\sum_{f\in E} g_f(S\cap f)}.
\]
The expected size of this sparsifier is proportional to the total importance $\sum_{e\in E} \sigma_e$,
which is non-trivial to bound
(e.g., naively replacing the maximization over ${S\subseteq V}$ by summation yields an exponential size bound).
An improved bound on the size of a sparsifier constructed in this manner is given in~\cite{RY22}, 
based on a quantity $B_H$ related to the polytopes of the splitting functions.
Unfortunately, this improved bound is still exponential for many families of splitting functions.

Our approach achieves a polynomial bound by using a different set of sampling probabilities and a different analysis.
Our main insight is that it suffices to consider only a few cuts.
Formally, define the minimum directed cut of $g_e$ between $(u,v)\in V\times V$ as
\begin{equation} \label{eq:DirectedCut}
    \minhecut{e}{u\to v} 
    \coloneqq 
    \min_{S\subseteq V: u\in S, v \not\in S} g_e(S\cap e)
    .
\end{equation}
Notice that we do not require $u,v\in e$; 
clearly, $\minhecut{e}{u\to v} = 0$ if $u\not\in e$,
but $\minhecut{e}{u\to v}$ can be positive if $v\not\in e$.
Our sampling probabilities are proportional to
\[
  \rho_e \coloneqq
  \sum_{(u,v)\in V\times V} \frac{ \minhecut{e}{u\to v} }{ \sum_{f\in E}\minhecut{f}{u\to v} }
  ,
\]
where by convention the fraction is equal to zero if the denominator (and thus also the numerator) is zero.
The proof follows by showing that $\rho_e\ge \sigma_e$, hence sampling every $e\in E$ with probability proportional to $\rho_e$ suffices to approximate the cuts,
and that the expected number of hyperedges in the sparsifier $O(\epsilon^{-2}n^3)$.
Since $\rho_e\ge \sigma_e$, our analysis implies that the same size bound
holds also for sampling with probabilities proportional to $\sigma_e$,
i.e., for the sparsifier of~\cite{RY22} but with our amplification factor $M=O(\epsilon^{-2}n)$. 

Finally, observe that the directed minimum cuts $\minhecut{e}{u\to v}$ can be computed in polynomial time using standard submodular minimization techniques \cite{mccormick2005submodular}.%
\footnote{In fact, computing an $O(1)$-approximation to $\rho_e$ would suffice,
  and this may be used to speed up the computation,
  at the cost of increasing the sparsifier size only by a constant factor.
}
In contrast, calculating $\sigma_e$ requires maximizing the ratio of two submodular functions, which is NP-hard.
In the monotone case, previous work had achieved a polynomial running time \cite{RY22,KZ23}.

\begin{proof}[Proof of Theorem \ref{theorem:upper-bound-sparsification-all}]
    Our construction of a quality $(1+\epsilon)$-sparsifier for $H$ uses the importance sampling method,
    where each hyperedge is sampled independently with probability $p_e$ that is defined below,
    and the splitting functions of every sampled hyperedge $d$ is scaled by factor $1/p_e$.

    We will use the following claim to bound cuts of $H$ by minimum directed cuts.
    Throughout, we denote $\bar{S} = V \setminus S$.
    \begin{claim}
        \label{claim:cut-bound-by-pairwise-flow}
        For every $e\in E$ and $S\subset V$,
        \[
            \max_{u\in S, v\in \bar{S}} \minhecut{e}{u\to v}
            \le 
            g_e(S\cap e)
            \le 
            \sum_{u\in S} \sum_{v\in \bar{S}} \minhecut{e}{u\to v}  
            .
        \]
    \end{claim}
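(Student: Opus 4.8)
The left-hand inequality I would dispose of first, since it is immediate from the definition: whenever $u\in S$ and $v\in\bar S$, the set $S$ is itself feasible in the minimization defining $\minhecut{e}{u\to v}$ (it contains $u$ and misses $v$), so $\minhecut{e}{u\to v}\le g_e(S\cap e)$, and taking the maximum over all such pairs $(u,v)$ yields it. For the upper bound the plan is to imitate the classical fact that the value of a graph cut is at most the sum of the maximum flows (equivalently, the minimum cuts) between the pairs of vertices it separates: I will replace each ``flow'' by a set that witnesses the corresponding directed minimum cut, and then recombine these witnesses using submodularity.

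Concretely, write $A\eqdef S\cap e$; since $S$ is a proper subset of $V$ we have $\bar S\ne\emptyset$, and if $A=\emptyset$ the upper bound is trivial because $g_e(S\cap e)=g_e(\emptyset)=0$, so assume $A\ne\emptyset$. First I would fix, for every $u\in A$ and every $v\in\bar S$, a set $T_{u,v}\subseteq e$ with $u\in T_{u,v}$, $v\notin T_{u,v}$, and $g_e(T_{u,v})=\minhecut{e}{u\to v}$; such a set exists and may be taken inside $e$ because $g_e$ depends only on its argument intersected with $e$. The crucial observation is then that, for a fixed $u\in A$, the set $U_u\eqdef\bigcap_{v\in\bar S}T_{u,v}$ satisfies $u\in U_u\subseteq A$: it contains $u$ and lies in $e$, and it avoids every $v\in\bar S$, hence $U_u\subseteq e\setminus\bar S=S\cap e=A$. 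Next I would invoke two elementary consequences of submodularity together with $g_e\ge 0$, obtained by iterating $g_e(X)+g_e(Y)\ge g_e(X\cap Y)+g_e(X\cup Y)$ and discarding one nonnegative term each time: $\sum_i g_e(X_i)\ge g_e\bigl(\bigcap_i X_i\bigr)$ (discarding the unions) and the subadditivity bound $\sum_i g_e(X_i)\ge g_e\bigl(\bigcup_i X_i\bigr)$ (discarding the intersections). Applying the first to the family $\{T_{u,v}\}_{v\in\bar S}$ gives $\sum_{v\in\bar S}\minhecut{e}{u\to v}\ge g_e(U_u)$ for each $u\in A$; summing over $u\in A$, dropping the vanishing terms with $u\notin e$, and applying subadditivity to $\{U_u\}_{u\in A}$ (which covers $A$, since each $U_u$ contains $u$ and is contained in $A$) yields
\[
  \sum_{u\in S}\sum_{v\in\bar S}\minhecut{e}{u\to v}
  \;\ge\;\sum_{u\in A}g_e(U_u)
  \;\ge\; g_e\Bigl(\bigcup_{u\in A}U_u\Bigr)
  \;=\; g_e(A)
  \;=\; g_e(S\cap e).
\]

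I expect the routine parts to be the two iterated submodularity inequalities and the bookkeeping around degenerate cuts ($S=V$, $A=\emptyset$, and the convention that an empty sum is $0$). The one genuinely non-obvious point — and the reason the naive ``cover $S\cap e$ by witness sets and use subadditivity'' fails — is that $g_e$ need not be monotone, so one cannot afford the witness sets $T_{u,v}$ to spill outside $A$ before taking a union; the resolution is to first \emph{intersect} them over $v\in\bar S$, which stays inside $A$ while shrinking each $T_{u,v}$ only to a set still containing $u$, and to take the union over $u\in A$ only afterwards.
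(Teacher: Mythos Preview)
Your proposal is correct and follows essentially the same approach as the paper: pick witnesses for each directed minimum cut, intersect them over $v\in\bar S$ (using submodularity to lower-bound the sum by the value on the intersection), then union over $u$ (using subadditivity), and verify that the resulting set equals $S\cap e$. Your choice to take the witnesses $T_{u,v}$ inside $e$ from the outset and to restrict the outer sum to $u\in A=S\cap e$ is arguably a bit cleaner than the paper's bookkeeping, but the argument is the same.
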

    The proof of \Cref{claim:cut-bound-by-pairwise-flow} appears later.
    Intuitively, it is similar to bounding the capacity of a cut in a graph by the sum of maximum flows between each vertex from $S$ and each vertex from $\bar{S}$.
    We proceed assuming this claim, to show that $\rho_e\ge \sigma_e$.
    \begin{corollary}
        \label{corollary:bound-naive-importance}
        For every $e\in E$ and  $S\subseteq V$, we have $\rho_e \ge g_e(S\cap e)/\mintcut_H(S)$.
    \end{corollary}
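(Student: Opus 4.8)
The plan is to derive Corollary~\ref{corollary:bound-naive-importance} directly from Claim~\ref{claim:cut-bound-by-pairwise-flow} by a short manipulation of the defining ratio. Fix $e \in E$ and $S \subseteq V$. If $g_e(S \cap e) = 0$ the inequality is trivial (since $\rho_e \ge 0$), so I may assume $g_e(S\cap e) > 0$, which in particular forces $S \notin \{\emptyset\}$ and $e \not\subseteq S$ is not required but the cut is nontrivial for $e$; more importantly it guarantees that $\mintcut_H(S) \ge g_e(S\cap e) > 0$, so the ratio is well-defined. The key step is to apply the upper bound of Claim~\ref{claim:cut-bound-by-pairwise-flow} to the numerator and the lower bound to each term of the denominator: for every pair $(u,v)$ with $u \in S$, $v \in \bar S$ we have $\sum_{f \in E} \minhecut{f}{u\to v} \le \sum_{f\in E} g_f(S\cap f) = \mintcut_H(S)$, because each summand $\minhecut{f}{u\to v} \le g_f(S \cap f)$ by the lower bound in the claim (the cut $S$ is one of the feasible cuts separating $u$ from $v$).

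With these two observations in hand, the computation is:
\begin{align*}
  \frac{g_e(S\cap e)}{\mintcut_H(S)}
  &\le \frac{\sum_{u\in S}\sum_{v\in\bar S} \minhecut{e}{u\to v}}{\mintcut_H(S)}
  = \sum_{u\in S}\sum_{v\in\bar S} \frac{\minhecut{e}{u\to v}}{\mintcut_H(S)} \\
  &\le \sum_{u\in S}\sum_{v\in\bar S} \frac{\minhecut{e}{u\to v}}{\sum_{f\in E}\minhecut{f}{u\to v}}
  \le \sum_{(u,v)\in V\times V} \frac{\minhecut{e}{u\to v}}{\sum_{f\in E}\minhecut{f}{u\to v}}
  = \rho_e .
\end{align*}
The first inequality is the upper bound of the claim applied to $g_e(S\cap e)$; the second is the term-by-term replacement of $\mintcut_H(S)$ by the smaller quantity $\sum_{f\in E}\minhecut{f}{u\to v}$ in each denominator (valid since the fraction only grows when the positive denominator shrinks, and the degenerate case where the denominator is zero forces the numerator to be zero as well, contributing $0$ by convention); the last inequality just extends the sum from ordered pairs $(u,v) \in S \times \bar S$ to all of $V \times V$, which only adds nonnegative terms.

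One subtlety to handle carefully is the convention that a fraction with zero denominator is read as $0$: I should note that if $\sum_{f\in E}\minhecut{f}{u\to v} = 0$ then in particular $\minhecut{e}{u\to v} = 0$, so that term is $0$ on both sides and the replacement of $\mintcut_H(S)$ by $0$ in its denominator is harmless. The only genuine content of the proof is Claim~\ref{claim:cut-bound-by-pairwise-flow}, which is assumed here; given that claim, the corollary is a two-line chain of inequalities and I do not anticipate any real obstacle. Taking the maximum over all $S \subseteq V$ of the left-hand side then yields $\sigma_e \le \rho_e$, which is the form in which the corollary will be used in the sampling argument.
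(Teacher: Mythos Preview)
Your proposal is correct and follows essentially the same route as the paper: apply the upper bound of Claim~\ref{claim:cut-bound-by-pairwise-flow} to the numerator, the lower bound to each summand of the denominator, and then enlarge the index set to all of $V\times V$. The paper compresses this into a single displayed inequality, but the logic is identical, including the handling of the zero-denominator convention.
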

    \begin{proof}
      By \Cref{claim:cut-bound-by-pairwise-flow},
      using both the upper bound and the lower bound on $g_e(\cdot)$, 
        \begin{equation*}
            \frac{g_e(S\cap e)}{\mintcut_H(S)}
            = \frac{g_e(S\cap e)}{\sum_{f\in E} g_f(S\cap f)}
            \le \sum_{u\in S, v\in \bar{S}} \frac{\minhecut{e}{u\to v}}{\sum_{f\in E} \minhecut{f}{u\to v}}
            \le \rho_e
            . 
        \end{equation*}
        Note that the first inequality holds even if $\mintcut_H(S)=0$, by our convention that if the denominator (and thus also numerator) is zero then the fraction is zero.
    \end{proof}

    For every hyperedge $e\in E$,
    set $\rho_e' \coloneqq g_e(e)/\sum_{f\in E} g_f(f)$ as the importance of the cuts that contain the entire hyperedge (the case $S=V$),
    and let $p_e \coloneqq \min (1, M(\rho_e+\rho_e'))$ 
    for a suitable parameter $M=O(\epsilon^{-2}n)$.
    Now sample every hyperedge $e\in E$ independently with probability $p_e$
    and rescale the splitting functions of every sampled hyperedge by factor ${1}/{p_e}$.
    Let $H'$ be the resulting hypergraph.

    We first prove that the number of hyperedges in the sparsifier $H'$ is $O(M n^2)$,
    which satisfies the claimed size bound by our choice of $M=O(\epsilon^{-2}n)$.
    Let $I_{e}$ be an indicator for the event that the hyperedge $e$ is sampled into $H'$.
    The expected number of sampled hyperedges is 
    \begin{align*}
        \Exp{\sum_{e\in E} I_{e}}{}
        = \sum_{e\in E} p_{e}
        &\le  M\sum_{e\in E} \left( 
        \frac{g_e(e)}{\sum_{f\in E} g_f(f)}
        +
        \sum_{(u,v) \in V\times V} \frac{\minhecut{e}{u\to v}}{\sum_{f\in E} \minhecut{f}{u\to v}} \right)
        \\
        &\le 
         M\left( 1+\sum_{(u,v) \in V\times V} \frac{\sum_{e\in E}\minhecut{e}{u\to v}}{\sum_{f\in E} \minhecut{f}{u\to v}}  \right)
        \le Mn^2
        ,
    \end{align*}
    where the second inequality follows by changing the order of summation
    and the last one is because $|V\times V| = n^2$,
    but we can exclude from the summation the case $u=v$ (as it contributes $0$ by our convention).
    By Markov's inequality, with high constant probability the sparsifier has at most $O(M n^2)$ hyperedges.
    
    Let us prove that the sparsifier $H'$ indeed approximates the cuts of $H$.
    Fix some $S\subseteq V$ and notice that 
    \begin{align*}
        \Exp{\mintcut_{H'}(S)}{}
        =\Exp{\sum_{e\in E} I_{e} \cdot \frac{1}{p_{e}} g_e(S\cap e)}{}
        = \sum_{e\in E} \frac{g_e(S\cap e)}{p_{e}} \cdot \Exp{I_{e}}{}
        =\sum_{e \in E}  g_e(S\cap e)
        = \mintcut_{H}(S)
        .
    \end{align*}
    Hence, the cut is preserved in expectation. 
    We shall now prove that the value of the cut is concentrated around its expectation.
    Let $Q_S=\left\{e\in E: p_{e}\in (0,1)\wedge g_e(S\cap e)>0 \right\}$ be the set of all hyperedges whose contribution to $\mintcut_{H'}(S)$ is random. 
    Furthermore, denote the maximum contribution of any such hyperedge to $\mintcut_{H'}(S)$ by $b\coloneqq\max_{e\in Q_S} p_e^{-1}g_e(S\cap e)$.
    By the Chernoff bound for bounded variables (\Cref{lemma:chernoff-bounded}),
    \begin{equation}
        \label{eq:chernoff-bound-general-submodular}
        \Probability{\mintcut_{H'}(S) 
        \not\in  (1\pm\epsilon) \cdot \mintcut_H(S)} 
        \le 2\cdot\exp\left(-\frac{\epsilon^2 \cdot \mintcut_H(S)}{b}\right)
        .
    \end{equation}
    We first analyze the special case $S=V$.
    Observe that if $\mintcut_H(V)=0$ then the cut is preserved trivially.
    Otherwise, note that $p_e \ge M\rho_e' = \frac{Mg_e(e)}{\sum_{f\in E} g_f(f)}$ and hence 
    \[
        b 
        = \max_{e\in Q_V} \frac{g_e(e)}{p_e} 
        \le\max_{e\in Q_V} g_e(e) \frac{\sum_{f\in E} g_f(f)}{M g_e(e)} 
        =  \frac{\mintcut_H(V)}{M}
        .
    \]
    Plugging this back into \Cref{eq:chernoff-bound-general-submodular}, we find $\Probability{\mintcut_{H'}(V) 
    \not\in  (1\pm\epsilon) \cdot \mintcut_H(V)} \le 2\cdot\exp\left(-\epsilon^2M \right)$.
    Now turning to the general case $S\subset V$, observe that by \Cref{corollary:bound-naive-importance}, $p_e \ge M g_e(S\cap e)/\mintcut_H(S)$. 
    Hence, we again obtain that
    \begin{align}
        b 
        \le \max_{e\in E} g_e(S\cap e) \cdot \frac{\mintcut_H(S)}{Mg_e(S\cap e)}
        = \frac{\mintcut_H(S)}{M}
        .
    \end{align}    
    Plugging this back into our concentration bound, \Cref{eq:chernoff-bound-general-submodular}, we get
    \[
        \Probability{\mintcut_{H'}(S) 
            \not\in  (1\pm\epsilon) \cdot \mintcut_H(S)} 
        \le 2\cdot\exp\left(-\epsilon^2 M\right)
        .
    \]
    Notice that this is the same probability as the case $S=V$.
    Setting $M \coloneqq c\cdot \epsilon^{-2}n$ for large enough but fixed $c>0$,
    we get that $\mintcut_{H'}(S)$ approximates $\mintcut_{H}(S)$ up to a $1\pm\epsilon$ factor with probability at least $1-2\exp(-cn)$.
    Applying a union bound over all $S\subseteq V$ we get that the sparsifier approximates all cuts simultaneously with probability at least $1-2\exp(-cn)\cdot 2^n\ge1-2\exp(-n)$.
    This completes the construction of a quality $1+\epsilon$ sparsifier for $H$ with $O(\epsilon^{-2} n^3)$ hyperedges.
    
    We now turn back to proving \Cref{claim:cut-bound-by-pairwise-flow}.
    \begin{proof}[Proof of \Cref{claim:cut-bound-by-pairwise-flow}]
        Fix some $e\in E$ and $S\subset V$. 
        For each directed minimum cut, let $P_e^{u\to v} \coloneqq \arg \min_{S\subseteq V: S\cap \{u,v\} =\{u\}} g_e(S)$ be some set $S\subseteq V$ attaining the minimum cut value (breaking ties arbitrarily).
        We need to show that 
    \begin{equation}
        \max_{u\in S, v\in \bar{S}} g_e(P^{u\to v}_e)
        \le g_e(S \cap e)
        \le  \sum_{u\in S}\sum_{v\in \bar{S}}  g_e(P^{u\to v}_e)
        .
        \label{eq:cut-bound-by-pairwise-flow}
    \end{equation}
        The lower bound is immediate because $g_e(P^{u\to v}_e)$ is a minimizer over the cuts separating $u$ from $v$. 
        For the upper bound, since $g_e$ is submodular and non-negative, 
        \[
            \forall A,B\subseteq e,
            \quad
            g_e(A)+g_e(B) 
            \ge g_e(A \cap B) + g_e(A \cup B) 
            \ge g_e(A \cap B)
            ,
        \]
        and similarly, $g_e(A)+g_e(B) \ge g_e(A \cup B)$.
        Using these two inequalities and summing over all $v\in \overline{S}$ and $u\in S$, we get
        \[
            \sum_{u\in S}\sum_{v\in \overline{S}} g_e(P^{u\to v}_e) 
            \ge \sum_{u \in S} g_e\left(\bigcap_{v\in \overline{S}} P^{u\to v}_e \right)
            \ge g_e\left(\bigcup_{u\in S} \bigcap_{v\in \overline{S}} P^{u\to v}_e \right)
            .
        \]
        To conclude the proof we show that $S\cap e = \bigcup_{u\in S} \bigcap_{v\in \overline{S}} P^{u\to v}_e$.
        For all $u\in S\cap e$ we have $\{u\} \subseteq \bigcap_{v\in \bar{S}} P^{u\to v}_e$, therefore $S\cap e\subseteq\bigcup_{u\in S} \bigcap_{v\in \overline{S}} P^{u\to v}_e$.
        In addition, for all $u\in S$ we have $\bigcap_{v\in \overline{S}} P^{u\to v}_e \subseteq S\cap e$ if $u\in e$ and $P^{u\to v}_e=\emptyset$ otherwise, therefore $S\cap e = \bigcup_{u\in S} \bigcap_{v\in \overline{S}} P^{u\to v}_e$.
        We conclude that \Cref{eq:cut-bound-by-pairwise-flow} holds.
        \end{proof}
    
    This completes the proof of \Cref{theorem:upper-bound-sparsification-all}.
\end{proof}

\subsection{Monotone Submodular Hypergraphs}
This section proves that every monotone submodular hypergraph admits a quality $(1+\epsilon)$-sparsifier of size $O(\epsilon^{-2}n^2)$.
\begin{theorem}
    \label{theorem:upper-bound-sparsification-monotone}
    Every hypergraph with monotone splitting functions admits a quality $(1+\epsilon)$-sparsifier of size $O(\epsilon^{-2}n^2)$, which is a reweighted sub-hypergraph.
\end{theorem}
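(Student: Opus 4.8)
The plan is to follow the same importance-sampling framework as in the proof of \Cref{theorem:upper-bound-sparsification-all}, but to replace the directed-cut quantities $\minhecut{e}{u\to v}$ with a much simpler set of per-vertex values, exploiting monotonicity. For a monotone submodular $g_e$, the key structural fact is the two-sided bound
\begin{equation*}
  \forall S\subseteq V,
  \qquad
  \max_{v\in S} g_e(\{v\}\cap e)
  \;\le\;
  g_e(S\cap e)
  \;\le\;
  \sum_{v\in S} g_e(\{v\}\cap e)
  .
\end{equation*}
The upper bound is just subadditivity (which follows from submodularity together with $g_e(\emptyset)=0$), applied iteratively to the singletons of $S\cap e$; the lower bound is monotonicity, since $\{v\}\cap e\subseteq S\cap e$. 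I would state this as a small claim and prove it in one or two lines.

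Armed with this, I would define $\rho_e' \coloneqq \sum_{v\in e} g_e(\{v\})/\mintcut_H(\{v\})$ (with the usual convention that a term with zero denominator, hence zero numerator, is zero), and verify the over-estimate property $\rho_e' \ge \sigma_e$: for any $S\subseteq V$,
\begin{equation*}
  \frac{g_e(S\cap e)}{\mintcut_H(S)}
  = \frac{g_e(S\cap e)}{\sum_{f\in E} g_f(S\cap f)}
  \le \sum_{v\in S} \frac{g_e(\{v\}\cap e)}{\sum_{f\in E} g_f(\{v\}\cap f)}
  \le \rho_e'
  ,
\end{equation*}
where the first inequality applies the upper bound to the numerator and the lower bound to each summand of the denominator. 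Then set $p_e \coloneqq \min(1, M\rho_e')$ with $M = c\epsilon^{-2}n$, sample each hyperedge independently with probability $p_e$, and rescale by $1/p_e$. The size bound is immediate by swapping the order of summation: $\mathbb{E}[|E'|] = \sum_{e\in E} p_e \le M\sum_{e\in E}\sum_{v\in e} g_e(\{v\})/\mintcut_H(\{v\}) = M\sum_{v\in V}\big(\sum_{e\ni v} g_e(\{v\})\big)/\mintcut_H(\{v\}) = Mn$, so Markov gives $O(\epsilon^{-2}n^2)$ hyperedges with high constant probability.

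For the approximation guarantee, I would repeat verbatim the concentration argument from the proof of \Cref{theorem:upper-bound-sparsification-all}: expectation is preserved exactly; the per-hyperedge contribution to a cut $S$ is bounded by $b \le \max_{e} g_e(S\cap e)\cdot\mintcut_H(S)/(M g_e(S\cap e)) = \mintcut_H(S)/M$ using $p_e \ge M g_e(S\cap e)/\mintcut_H(S)$ from the over-estimate property; then the Chernoff bound for bounded variables (\Cref{lemma:chernoff-bounded}) gives failure probability $2\exp(-\epsilon^2 M) = 2\exp(-cn)$ for each fixed $S$, and a union bound over all $2^n$ cuts closes it. Note that, unlike the general case, there is no need for a separate $\rho_e'$-style term handling the cut $S=V$: monotonicity means $\{v\}\cap e \subseteq e$ contributes to the bound even when $S=V$, so the singleton quantities already dominate $g_e(e)$, and a single family of sampling probabilities suffices. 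The main (minor) obstacle is simply being careful with the degenerate cases where $\mintcut_H(\{v\})=0$ or $g_e(\{v\})=0$ for vertices in $e$, which the zero-denominator convention handles, together with checking that such hyperedges never need to be sampled to preserve any cut they could affect — this holds because if $g_e(\{v\})=0$ for all $v\in e$ then monotonicity forces $g_e\equiv 0$.
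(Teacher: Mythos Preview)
Your proposal is correct and follows essentially the same approach as the paper: the paper likewise replaces the directed-cut importances by the singleton-based over-estimate $\rho_e = \sum_{v\in V} g_e(\{v\}\cap e)/\sum_{f\in E} g_f(\{v\}\cap f)$ (equivalent to your sum over $v\in e$), proves the two-sided bound $\max_{v\in S} g_e(\{v\}\cap e)\le g_e(S\cap e)\le \sum_{v\in S} g_e(\{v\}\cap e)$ as a claim, deduces $\rho_e\ge\sigma_e$, and then bounds $\sum_e p_e\le Mn$ by swapping the summation, omitting the concentration argument as identical to the general case. Your additional remarks on the $S=V$ case and the zero-denominator degeneracies are correct refinements that the paper does not spell out.
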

  
The proof for the monotone case is similar to the general case.
However, since monotone splitting functions are more structured it suffices to examine the importance of all the singleton cuts for each hyperedge.
This results in smaller sampling probabilities and a better bound on the number of hyperedges in the sparsifier.
The proof utilizes the following well known property of monotone submodular functions.
\begin{claim}
    \label{claim:ratio-mincut-maxcut}
    Let $g_e:2^e\to\R_+$ be a monotone submodular splitting function. 
    Then
    \[
    \forall S\subseteq V,
    \qquad
    \max_{v\in S} g_e(\{v\}\cap e)
    \le    
    g_e(S\cap e) 
    \le \sum_{v\in S} g_e(\{v\}\cap e)
    .
    \]
\end{claim}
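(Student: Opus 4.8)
The plan is to prove the two inequalities separately, writing $T \eqdef S\cap e \subseteq e$ throughout. It is convenient to adopt the convention that $\{v\}\cap e = \emptyset$ whenever $v \in S\setminus e$, so that $g_e(\{v\}\cap e) = g_e(\emptyset) = 0$ for such $v$; since $g_e$ is non-negative, this lets us freely pass between the maximum (resp.\ sum) over $v \in S$ and the maximum (resp.\ sum) over $v \in T$, the extra terms contributing $0$.

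For the lower bound, I would fix an arbitrary $v \in T$, so that $\{v\} \subseteq T$, and apply monotonicity of $g_e$ to get $g_e(\{v\}) \le g_e(T)$. Taking the maximum over $v \in T$ and discarding the (zero) contributions of vertices in $S\setminus e$ gives $\max_{v\in S} g_e(\{v\}\cap e) \le g_e(S\cap e)$. This is the only place monotonicity is used.

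For the upper bound, I would first record that $g_e$ is subadditive: because $g_e$ is submodular and non-negative, for all $A,B \subseteq e$ we have $g_e(A\cup B) \le g_e(A) + g_e(B) - g_e(A\cap B) \le g_e(A) + g_e(B)$. A straightforward induction on $|T|$, peeling off one vertex $v \in T$ at a time and writing $T = (T\setminus\{v\})\cup\{v\}$, then yields $g_e(T) \le \sum_{v\in T} g_e(\{v\})$, which is exactly $\sum_{v\in S} g_e(\{v\}\cap e)$ once the zero terms for $v \in S\setminus e$ are reinserted.

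I do not expect a genuine obstacle here: monotonicity gives the lower bound and subadditivity (itself a consequence of submodularity together with non-negativity) gives the upper bound, both via one-line arguments. The only point requiring a little care is the bookkeeping for vertices of $S$ lying outside $e$, which contribute nothing to either side and are handled uniformly by the convention $\{v\}\cap e = \emptyset$.
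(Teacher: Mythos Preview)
Your proposal is correct and mirrors the paper's proof: the lower bound is immediate from monotonicity, and the upper bound follows from subadditivity of $g_e$ (which you derive from submodularity plus non-negativity), applied to the decomposition $S\cap e = \bigcup_{v\in S}(\{v\}\cap e)$. The paper states the union bound in one line rather than as an explicit induction, and is less explicit about the zero contribution of vertices in $S\setminus e$, but the argument is the same.
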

\begin{proof}
    The lower bound holds as $g_e$ is monotone.
    For the upper bound, since $g_e$ is submodular and non-negative,
    \begin{equation*}
        \sum_{v\in S} g_e(\{v\}\cap e)
        \ge g_e\left(\bigcup_{v \in S} \{v\} \cap e\right)
        = g_e(S\cap e) .
    \qedhere
    \end{equation*}
\end{proof}
Similarly to the general case, our over sampling probabilities are proportional to 
\begin{equation*}
    \rho_e =  \sum_{v\in V} \frac{g_e(\{v\}\cap e)}{\sum_{f\in E} g_f(\{v\}\cap f)}
    .
\end{equation*}
The following corollary shows that $\rho_e\ge \sigma_e$.
This implies that sampling every $e\in E$ with probability proportional to $\rho_e$ suffices to approximate the cuts of $H$, in the same manner as in the general case.
\begin{corollary}
    \label{corollary:bound-naive-importance-monotone}
    For every $e\in E$ and $S\subseteq V$,
    we have $\rho_e \ge g_e(S\cap e)/\mintcut_H(S)$.
\end{corollary}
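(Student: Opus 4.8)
The plan is to mirror the proof of \Cref{corollary:bound-naive-importance}, replacing the directed-cut sandwich of \Cref{claim:cut-bound-by-pairwise-flow} by the singleton sandwich of \Cref{claim:ratio-mincut-maxcut}. Fix $e\in E$ and $S\subseteq V$, and abbreviate $D_T \eqdef \sum_{f\in E} g_f(T\cap f)$, so that $\mintcut_H(S)=D_S$ and $\rho_e=\sum_{v\in V} g_e(\{v\}\cap e)/D_{\{v\}}$. First I would apply the upper bound of \Cref{claim:ratio-mincut-maxcut} to the numerator, splitting it over the vertices of $S$:
\[
  \frac{g_e(S\cap e)}{D_S}
  \;\le\; \sum_{v\in S}\frac{g_e(\{v\}\cap e)}{D_S}.
\]

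Next I would bound each denominator from below using monotonicity: for every $v\in S$ and every $f\in E$ we have $\{v\}\cap f\subseteq S\cap f$, hence $g_f(\{v\}\cap f)\le g_f(S\cap f)$, and summing over $f$ gives $D_{\{v\}}\le D_S$. Replacing $D_S$ by $D_{\{v\}}$ in each term therefore only increases it, and adding the remaining non-negative terms for $v\in V\setminus S$ yields
\[
  \frac{g_e(S\cap e)}{D_S}
  \;\le\; \sum_{v\in S}\frac{g_e(\{v\}\cap e)}{D_{\{v\}}}
  \;\le\; \sum_{v\in V}\frac{g_e(\{v\}\cap e)}{D_{\{v\}}}
  \;=\; \rho_e .
\]
The only point needing care is the convention for vanishing denominators. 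Since $g_e$ is itself one of the summands defining $D_T$, if $D_S=0$ then $g_e(S\cap e)=0$ and the left-hand ratio is $0$ by convention, so the inequality is trivial; the same observation for a singleton $v$ shows that $D_{\{v\}}=0$ forces $g_e(\{v\}\cap e)=0$, so that term contributes $0$ and may simply be dropped from the sums above. With these remarks every step remains valid under the stated convention.

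I do not expect a genuine obstacle here: the statement is a direct analogue of \Cref{corollary:bound-naive-importance}, and once it is in hand the sampling-and-concentration argument from the proof of \Cref{theorem:upper-bound-sparsification-all} carries over essentially verbatim. One sets $p_e\eqdef\min(1,M(\rho_e+\rho_e'))$ with $\rho_e'\eqdef g_e(e)/\sum_{f\in E}g_f(f)$ handling the cut $S=V$ exactly as before, uses \Cref{corollary:bound-naive-importance-monotone} to obtain the key bound $b\le \mintcut_H(S)/M$ on the maximum contribution of a randomized hyperedge, and applies the Chernoff bound together with a union bound over all $2^n$ cuts, taking $M=\Theta(\epsilon^{-2}n)$. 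The only quantitative change is the improved size bound: swapping the order of summation gives $\sum_{e\in E}\rho_e=\sum_{v\in V}\big(\sum_{e\in E}g_e(\{v\}\cap e)\big)/D_{\{v\}}\le n$ (the summand is $1$ whenever $D_{\{v\}}>0$ and $0$ otherwise), so the expected number of sampled hyperedges is $O(M\sum_e(\rho_e+\rho_e'))=O(Mn)=O(\epsilon^{-2}n^2)$, as claimed in \Cref{theorem:upper-bound-sparsification-monotone}.
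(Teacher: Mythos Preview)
Your proof of the corollary is correct and is exactly the paper's argument, spelled out in more detail: apply the upper bound of \Cref{claim:ratio-mincut-maxcut} to the numerator and monotonicity (the lower bound) to each denominator, with the same zero-denominator convention. One small remark on your follow-up sketch: in the monotone case the extra term $\rho_e'$ is unnecessary, since the corollary applied to $S=V$ already gives $\rho_e\ge g_e(e)/\mintcut_H(V)=\rho_e'$; accordingly the paper simply takes $p_e=\min(1,M\rho_e)$.
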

\begin{proof}
    Observe that by \Cref{claim:ratio-mincut-maxcut},
    \begin{equation*}
        \frac{g_e(S\cap e)}{\mintcut_H(S)}
        = \frac{g_e(S\cap e)}{\sum_{f\in E} g_f(S\cap f)}
        \le \sum_{v\in S} \frac{g_e(\{v\}\cap e)}{\sum_{f\in E} g_f(\{v\}\cap f)}
        \le \rho_e
        .
    \end{equation*}
    Notice that the first inequality is well-defined by the convention that if the denominator (and thus also the numerator) is zero then the fraction is zero.
\end{proof}

We now turn to proving \Cref{theorem:upper-bound-sparsification-monotone}
\begin{proof}[Proof of \Cref{theorem:upper-bound-sparsification-monotone}]
    To construct the sparsifier $H'$, sample each hyperedge with probability $p_e = \min (1, M\cdot \rho_e)$ for a suitable parameter $M=O(\epsilon^{-2}n)$.
    Then, reweigh every sampled hyperedge by factor $p_e^{-1}$.
    The proof that $H'$ is with high probability a $(1+\epsilon)$-sparsifier
    is similar to the general case because $\rho_e\ge \sigma_e$,
    and we omit it.

    To bound the number of hyperedges in the sparsifier, let $I_{e}$ be an indicator for the event that the hyperedge $e$ is sampled into $H'$.
    Then the expected number of sampled hyperedges is,
    \begin{align*}
        \Exp{\sum_{e\in E} I_e}{} 
        = \sum_{e\in E} p_{e}
        \le M \sum_{e\in E} \sum_{v\in V} \frac{g_e(\{v\} \cap e)}{\sum_{f\in E} g_f(\{v\}\cap f)}
        \le M \sum_{v\in V} \sum_{e\in E} \frac{g_e(\{v\}\cap e)}{\sum_{f\in E} g_f(\{v\}\cap f)}
        \le M n
        ,
    \end{align*}
    where the second inequality is from changing the order of summation.
    Hence, by Markov's inequality we find that with high constant probability the size of the sparsifier is at most $O(Mn)=O(\epsilon^{-2}n^2)$.
    This concludes the proof.
\end{proof}

\section{Sparsifiers for Finite-Spread Splitting Functions}
\label{sec:finite-spread-sparsifiers-proof}
This section provides a construction of sparsifiers for finite-spread splitting functions (\Cref{theorem:upper-bound-sparsification-finite-spread}).
Our construction is based on the method presented in \cite{CKN20} for constructing cut sparsifiers for the all-or-nothing splitting function.
The main argument is that by approximating every hyperedge up to a factor of $\mu_H$ as the all-or-nothing splitting function, we can follow the algorithm and proof of \cite{CKN20}.
This approximation only holds when the functions $g_e$ have finite spread.
The main difference is that we need to account for hyperedges contributing different amounts to different cuts.
We show that by oversampling hyperedges at a rate higher by a $\mu_H$ factor, we can adjust the Chernoff bounds, and then the rest of the proof follows using the all-or-nothing approximation.
Throughout the proof we assume that $\min_{S\subseteq e:g_e(S)\ne 0} g_e(S) =1$ for all $e\in E$.
We can make this assumption by recalling that we limited our discussion to splitting functions with integral values, and observing that a splitting function $g_e$ with a higher minimal value can be divided into multiple functions with minimal value $1$ without affecting the cuts of $H$.

We begin by presenting the relevant definitions and results from the existing literature.
\begin{definition}
    Let $G=(V,F,w)$ be a weighted graph. A \emph{$k$-strong component} in $G$ is a maximal vertex induced subgraph such that the minimum cut in the component is $k$.
\end{definition}
\begin{lemma}[\cite{benczur1996approximating}]
    Given a weighted graph $G=(V,F,w)$ and some $k>0$, the $k$-strong components of $G$ partition $V$. For every $k'>k$ the $k'$-strong components are a refinement of the $k$-strong components.
\end{lemma}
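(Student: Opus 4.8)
The plan is to establish two separate facts: first that the $k$-strong components partition $V$, and second that strong components for larger thresholds refine those for smaller thresholds. Throughout, I would work with the connected components of $G$ as the natural starting point, noting that each connected component is by definition a maximal induced subgraph whose min cut is positive (possibly $+\infty$ for a single vertex, which we can treat as larger than any finite $k$).

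For the partition claim, the key observation is that the relation ``$u$ and $v$ lie in a common induced subgraph with min cut at least $k$'' is an equivalence relation on $V$. Reflexivity is trivial (a singleton has min cut $+\infty > k$ by convention, or we handle isolated vertices separately). Symmetry is immediate. The crux is transitivity: if $u,v$ lie in an induced subgraph $G[A]$ with $\mincut(G[A]) \ge k$ and $v,w$ lie in $G[B]$ with $\mincut(G[B]) \ge k$, then I would argue that $G[A\cup B]$ also has min cut at least $k$. This follows because any cut $(X, (A\cup B)\setminus X)$ of $G[A\cup B]$ restricts to a cut of $G[A]$ or of $G[B]$ that is non-trivial: since $A \cap B \ne \emptyset$ (they share $v$), and $A\cup B$ is split non-trivially, at least one of $A$ or $B$ must be split non-trivially by $X$, and the weight of the induced cut in $G[A]$ (or $G[B]$) is at most the weight of the cut in $G[A\cup B]$ because $G[A]$ (resp.\ $G[B]$) is a subgraph. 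Hence every non-trivial cut of $G[A\cup B]$ has weight at least $k$. Once transitivity holds, the equivalence classes partition $V$, and each class is exactly a maximal induced subgraph with min cut $\ge k$; strictly speaking the definition says min cut \emph{equals} $k$, but maximality forces the min cut to be exactly $k$ (if it were strictly larger we could potentially add more vertices — or more carefully, we reconcile the ``$\ge k$'' and ``$=k$'' formulations by taking the classes of the ``$\ge k$'' relation and observing maximality pins the value; this is a standard minor point).

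For the refinement claim, suppose $k' > k$. I would show that every $k'$-strong component is contained in a single $k$-strong component. Indeed, if $C$ is a $k'$-strong component, then $G[C]$ has min cut $k' > k \ge k$, so $G[C]$ is an induced subgraph with min cut $\ge k$, hence all vertices of $C$ are equivalent under the ``$\ge k$'' relation and thus lie in one $k$-strong component. Therefore the $k'$-strong components refine the $k$-strong components.

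The main obstacle — or rather the only point needing care — is the transitivity argument, specifically verifying that restricting a cut of $G[A\cup B]$ to the vertex set $A$ (or $B$) cannot increase its weight and, when the cut is non-trivial on $A\cup B$, it is non-trivial on at least one of $A$ or $B$. The weight monotonicity is clear since $G[A]$ is a subgraph of $G[A\cup B]$ on fewer vertices with no extra edges. For non-triviality: if $X$ separates $A\cup B$ with both $X$ and its complement nonempty, and $X$ were trivial on both $A$ and $B$ (i.e.\ $X\cap A \in \{\emptyset, A\}$ and $X\cap B \in \{\emptyset, B\}$), then since $A\cap B \ne\emptyset$ the choices on $A$ and $B$ must agree on the overlap, forcing $X \supseteq A\cup B$ or $X\cap(A\cup B)=\emptyset$, a contradiction. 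I would also remark that the lemma is attributed to \cite{benczur1996approximating}, so I would cite their Section~6 (or the relevant lemma) and present the above as a self-contained proof sketch for completeness.
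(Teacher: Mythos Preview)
The paper does not give its own proof of this lemma; it simply quotes the result from \cite{benczur1996approximating}. Your argument is the standard one and is correct: closure under union of overlapping ``min-cut $\ge k$'' sets gives transitivity, the equivalence classes are the $k$-strong components, and the refinement claim follows because any set witnessing the $\ge k'$ relation also witnesses the $\ge k$ relation.

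One small correction. You write that ``maximality forces the min cut to be exactly $k$,'' but this is not true: if the entire graph has minimum cut $5$, then for $k=3$ the unique maximal induced subgraph with min cut $\ge 3$ is $V$ itself, whose min cut is $5$, not $3$. The right reconciliation is the other way around: the paper's phrasing ``minimum cut is $k$'' should be read as ``minimum cut is at least $k$,'' which is the standard Bencz\'ur--Karger definition and the one under which your proof (and the lemma) are valid. With the literal ``exactly $k$'' reading the partition statement would fail, so this is a wording issue in the paper, not a gap in your argument.
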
	
\begin{definition}
    In a weighted graph $G=(V,F,w)$, the \emph{strength of an edge} $f\in F$, denoted by $k_f$, is the maximal $k>0$ such that $f$ is contained in a $k$-strong component.
\end{definition}
\begin{claim}[Corollary 4.9 in \cite{BK15}]
    In every weighted graph $G$ on $n$ vertices, there are at most $n-1$ distinct values of edge strengths.
    \label{claim:number-of-strength-values}
\end{claim}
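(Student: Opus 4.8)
The plan is to track the partition of $V$ into $k$-strong components as the threshold $k$ grows, and charge each distinct strength value to a distinct ``split'' of this partition. Let $P_k$ denote the partition of $V$ into $k$-strong components. By the preceding lemma, $P_{k'}$ refines $P_k$ whenever $k'\ge k$, so as $k$ increases the partition only gets finer. For $k$ below the min-cut of every connected component of $G$, the blocks of $P_k$ are exactly the $c$ connected components; for $k$ large enough every block is a singleton, so $P_k$ has $n$ blocks. Since $P_k$ takes finitely many values and only refines, there are finitely many thresholds $0<k_1<\dots<k_m$ at which it strictly refines, i.e.\ just above $k_i$ some block splits into two or more blocks. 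The number of blocks therefore increases from $c$ to $n$ in $m$ steps, each step contributing at least one to this increase, so $m\le n-c\le n-1$.

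The second step is to observe that every edge strength is one of these thresholds. For an edge $f=uv$, the strength $k_f$ is by definition the largest $k$ for which $f$ lies in a $k$-strong component, equivalently the largest $k$ for which $u$ and $v$ lie in the same block of $P_k$; by monotonicity of $P_k$ the set of such $k$ is an interval $(0,k_f]$, so for $k>k_f$ the endpoints are separated while for $k\le k_f$ they are together. Hence the block of $P_k$ containing $\{u,v\}$ must split exactly at $k=k_f$, which forces $k_f\in\{k_1,\dots,k_m\}$. Consequently the set of distinct edge-strength values is contained in $\{k_1,\dots,k_m\}$, whose size is at most $n-1$, proving the claim. (If desired, the same count can be phrased via the laminar family formed by all distinct $k$-strong components over all $k$: its minimal members are the $n$ singletons, so its Hasse diagram is a forest in which every internal node has at least two children, hence at most $n-1$ internal nodes; mapping each distinct strength value $k^\ast$ to a $k^\ast$-strong component that is strictly refined just above $k^\ast$ is injective because such a component has min-cut equal to $k^\ast$, and it lands among the internal nodes.)

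There is no real obstacle here, only bookkeeping: one must fix the convention at a breakpoint $k_i$ (a subgraph whose min-cut equals exactly $k_i$ is still $k_i$-strong, so the refinement occurs just above $k_i$ rather than at $k_i$), and one must verify that the map ``strength value $\mapsto$ splitting component'' is well defined and injective, which is precisely where submodularity/monotonicity of the cut function is not needed but the refinement property of $k$-strong components is. The only point worth stating carefully is this injectivity, since two different strength values cannot be charged to the same component: that component's own min-cut determines the value it is charged with.
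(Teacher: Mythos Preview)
The paper does not give its own proof of this claim; it is cited verbatim as Corollary~4.9 of \cite{BK15}. Your argument is correct and is essentially the standard one: the partitions $P_k$ into $k$-strong components refine monotonically as $k$ grows, the block count rises from the number of connected components to $n$, so there are at most $n-1$ thresholds where a genuine refinement occurs, and every edge strength is such a threshold because the endpoints of $f$ are together in $P_{k_f}$ and separated in $P_k$ for all $k>k_f$. The laminar-family version you sketch in parentheses is equally valid and is in fact how the result is usually phrased in the Bencz{\'u}r--Karger line of work.
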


Following the proof in \cite{CKN20}, the sampling probabilities of the hyperedges of $H=(V,E,\g)$ are determined by an auxiliary weighted graph $G$,
where every hyperedge $e\in E$ is represented by a weighted clique $F_e$ in $G$ (with perhaps some weights being zero).
Observe that $G$ may have many parallel edges between the same pair of vertices, each induced by a different hyperedge.
Denote the set of edges with positive weight in $F_e$ by $F_e^+$.
The precise construction of the auxiliary graph $G$ is described in \cite{CKN20}.
Define the \emph{hyperedge strengths} as $\kappa_e \coloneqq \min_{f\in F_e} k_f$ and $\kappa_e^{\max} \coloneqq \max_{f\in F_e^+ } k_f$.
\begin{theorem}[Theorem 3 in \cite{CKN20}]
    Let $H=(V,E,\g)$ be a hypergraph with a finite-spread splitting function $g:2^V\to \R_+$.
    For every integer $\gamma \ge 2$ there exists an assignment of weights to the edges of $F_e$ such that in the resulting $G$
    \begin{itemize}
        \item $\sum_{f\in F_e} w(f) = 1$;
        \item $\frac{\kappa_e^{\max}}{\kappa_e} \le \gamma$.
    \end{itemize}
    \label{theorem:hyperedge-strengths}
\end{theorem}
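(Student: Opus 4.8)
The plan is to import the clique construction of \cite{CKN20} together with its weight-balancing analysis, and to check that the finite-spread hypothesis places our setting within the scope of that argument. Recall the standing normalization $\min_{S\subseteq e:\,g_e(S)\neq 0}g_e(S)=1$; combined with $\mu_e\le\mu_H$ it gives $1\le g_e(S\cap e)\le\mu_H$ for every $S$ that splits $e$ nontrivially. Thus, up to the factor $\mu_H$ (which is paid for later, in the Chernoff step of \Cref{theorem:upper-bound-sparsification-finite-spread}, by oversampling at a rate inflated by $\mu_H$), each hyperedge $e$ behaves like an all-or-nothing hyperedge of unit weight, and this is exactly why the clique $F_e$ is assigned a total weight budget of $1$, i.e.\ the first bullet $\sum_{f\in F_e}w(f)=1$. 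The point is that the construction of $G$ and the choice of weights inside the cliques use $g_e$ only through this unit budget, never through the finer structure of $g_e$, so it is legitimate to run the \cite{CKN20} machinery on top of it.

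The content is the second bullet: choosing weights inside the cliques so that, in the resulting $G$, the strengths $k_f$ of the edges of each $F_e$ lie within a $\gamma$ factor of one another, i.e.\ $\kappa_e^{\max}\le\gamma\,\kappa_e$. The obstruction is that $k_f$ is global --- it depends on the weights chosen inside \emph{every} clique --- so the cliques cannot be balanced one at a time. Following \cite{CKN20}, I would use a fixed-point (``water-filling'') assignment: inside each $F_e$ one should put more weight on the pairs $\{u,v\}$ whose endpoints are poorly connected in the rest of $G$ and little or no weight on pairs already well connected, so that after $F_e$'s contribution all of its weighted pairs sit at roughly one common strength level. Formally, consider the map sending a tuple of per-clique weight distributions (each in the simplex $\{w\ge 0:\sum_{f\in F_e}w(f)=1\}$) to the updated tuple obtained by recomputing all edge strengths in the current $G$ and then, inside each clique, shifting its unit of weight onto the currently lowest-strength pairs; a fixed point exists by a compactness/continuity argument, and at a fixed point every clique is $\gamma$-balanced, since a clique with $\kappa_e^{\max}>\gamma\,\kappa_e$ would still carry weight on a strictly-too-strong pair and the update would move it, contradicting stationarity. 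The integer $\gamma\ge 2$ is the slack that makes the redistribution rule well-defined and consistent with the at-most-$(n-1)$ distinct strength values of \Cref{claim:number-of-strength-values}.

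The main obstacle is precisely this fixed-point analysis from \cite{CKN20}: showing the redistribution map is well-defined and has a fixed point (equivalently, that the iterative rebalancing converges), that at the fixed point \emph{every} clique --- not just one --- satisfies $\kappa_e^{\max}/\kappa_e\le\gamma$, and that the per-clique totals stay equal to $1$ throughout. This leans on the laminar (nested) structure of the $k$-strong components (the lemma of \cite{benczur1996approximating} quoted above) and on the bound of at most $n-1$ distinct strength values (\Cref{claim:number-of-strength-values}) to control how strengths move under local reweighting. Since every step of this argument manipulates only the weighted graph $G$ and never reopens $g_e$ beyond the unit weight budget, the proof carries over from the all-or-nothing case of \cite{CKN20} essentially verbatim; the spread $\mu_H$ plays no role in it, and re-enters only in the sampling analysis.
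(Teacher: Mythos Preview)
The paper does not prove this theorem; it is quoted as Theorem~3 of \cite{CKN20} and used as a black box. There is therefore no ``paper's own proof'' to compare against --- the authors simply invoke the result, and the finite-spread hypothesis appearing in the restatement is cosmetic for this particular theorem (the construction of $G$ and the weight assignment inside each clique $F_e$ depend only on the vertex sets of the hyperedges, not on the splitting functions $g_e$).

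Your proposal correctly identifies this: you note that the argument ``manipulates only the weighted graph $G$ and never reopens $g_e$ beyond the unit weight budget,'' which is exactly right, and you correctly isolate where the spread $\mu_H$ actually enters (the Chernoff step of \Cref{lemma:bound-set-partition-error}, not here). As a sketch of what \cite{CKN20} does, your fixed-point/water-filling description is in the right spirit, though the actual proof there is constructive --- an explicit iterative rebalancing that terminates, rather than an abstract compactness/Brouwer argument --- and the role of the integer $\gamma\ge 2$ is to guarantee that each iteration makes measurable progress (a strength level can only jump by discrete amounts bounded via \Cref{claim:number-of-strength-values}), not merely to provide ``slack.'' But since the present paper treats the theorem as a citation, a one-line reference to \cite{CKN20} is all that is expected; reproducing the proof is unnecessary.
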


We now turn to proving Theorem \ref{theorem:upper-bound-sparsification-finite-spread}.
Throughout the proof we will partition the hyperedges into two sets, those with $g_e(e)=0$ and those with $g_e(e)>0$. 
We sparsify each subgraph independently and then the union of the two sparsifiers to obtain the sparsifier for $H$.
Note that if each sparsifier approximates its hyperedges with quality $1+\epsilon$, then their union $(1+\epsilon)$-approximates all the cuts of $H$ by the additivity of the cuts; this increases the size of the sparsifier by at most a factor of $2$.
The sampling process for both sets is identical and hence from now on we assume that either all the hyperedges of $H$ have  $g_e(e)=0$ or all have $g_e(e)>0$.

Start by applying Theorem \ref{theorem:hyperedge-strengths} with $\gamma=2$ to obtain the auxiliary graph $G$ and define the strengths of the hyperedges of $H$ accordingly.
Let $H'$ be a sparsifier constructed by sampling every hyperedge $e\in E$ with probability $p_e = \min\left\{1, \rho/\kappa_e \right\}$, 
for $\rho = \epsilon^{-2}t\mu_H\gamma^2 \log n$ with constant $t >0$ to be determined later. 
For each sampled edge we assign a rescaled splitting function $g_e' = p_e^{-1} g_e$.

By Claim 6 in \cite{CKN20} the size of the sparsifier resulting from this sampling method is at most $O\left(\rho \gamma n \right) = O\left( \epsilon^{-2} n t \mu_H\gamma^3 \log n \right)$.

Hence, it remains to show that $H'$ approximates the cuts of $H$.
The proof is based on partitioning the hyperedges of $G$ based on their strength, $\kappa_e$, and showing that the additive error for each set of hyperedges is small.
Define the following sets.
Let $E_{\ge i} = \left\{e\in E: \kappa_e \ge \rho 2^i \right\}$ be the set of all hyperedges in $H$ with strength at least $\rho 2^i$,
and let the set of all hyperedges with strength in $\left[\rho 2^i, \rho 2^{i+1}\right)$ be $E_{i}=E_{\ge i} \setminus E_{\ge i+1 }$.
Similarly, let $F_{\ge i} = \left\{f\in F^+: k_f \ge \rho 2^i \right\}$ be the set of all edges in $G$ of strength at least $\rho 2^i$.
Finally, let $E_{\ge i}^{\max} = \left\{e\in E: \kappa_e^{\max} \ge \rho 2^i \right\}$ be the set of all hyperedges with maximum strength at least $\rho 2^i$.

Let $Y\subseteq E$ be a subset of the hyperedges of $H$ and let $\beta \in \R^{\left|Y\right|}_+$.
Denote by $H\left[\beta Y \right]$ the sub-hypergraph with the hyperedges in $Y$, where for each $e\in Y$ the splitting function $g_e$ is scaled by $\beta_e$.
Similarly, for $Y\subseteq F$, and $\beta \in \R^{\left|Y\right|}_+$, let $G\left[\beta Y\right]$ be the subgraph of $G$ with the edges in $Y$ and the weight of each $f\in Y$ scaled by $\beta_f$.

We now define certain subsets that will be used to bound the error.
For every $i\in \N$, define a relative weighting function $\beta^{i} \in \R_+^E$ where $\beta_e^i= 2^{i-j}$ for all hyperedges $e\in E$ (or similarly for edges in $G$) with strength in $\left[\rho 2^j, \rho 2^{j+1}\right)$.
Observe that this function modifies the weight of each hyperedge (or edge) according to its strength.
Finally, let the random vector $\hat{\beta}^i\in \R_+^{E_i}$ be the scaling obtained from the sampling process limited to $E_i$;
i.e. $\hat{\beta}^i_e = p_e^{-1}$ if $e$ is sampled into $H'$ and $\hat{\beta}^i_e = 0$ otherwise.
Observe that $H' =  H\left[\bigcup_i \hat{\beta}^i E_i \right]$.

Our main technical lemma bounds the additive error between the value of every cut $S$ on $\hat{\beta}^i E_i$ and its expectation.
Note that the lemma does not immediately imply that $H'$ is a quality $q=1+\epsilon$ sparsifier for $H$ as the error bound might be larger than $\epsilon \cdot \mintcut_{H[E_i]}(S)$.
However, we will show that the sum of all these error bounds is small compared to the value of the cut.

\begin{lemma}
    Fix an integer $i>0$.
     Then with probability at least $1-8n^{-2}$,
    \begin{equation}
        \forall S \subseteq V,
        \quad
        \left|
            \mintcut_{H[\hat{\beta}^i E_i]}\left(S\right) - 
            \mintcut_{H[E_i]}\left(S\right)
        \right|
        \le \frac{\epsilon}{\gamma}  \mintcut_{H[\beta^{i} E_{\ge i}^{\max}]}\left(S\right)
        .
        \label{eq:lemma-bound-error}
    \end{equation}
    \label{lemma:bound-set-partition-error}
\end{lemma}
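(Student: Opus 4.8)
The plan is to prove \Cref{lemma:bound-set-partition-error} by a Chernoff-plus-union-bound argument that follows the blueprint of \cite{CKN20}, with the spread parameter $\mu_H$ entering only in the concentration step. Fix $i>0$ and consider only the hyperedges in $E_i$, whose strengths lie in $[\rho 2^i, \rho 2^{i+1})$. The key observation, exactly as in the finite-spread reduction described above, is that each hyperedge $e$ with $g_e(e)=0$ (or $g_e(e)>0$, handled symmetrically) can be approximated within a factor $\mu_H$ by the all-or-nothing splitting function on $e$: since we normalized $\min_{S:g_e(S)\ne 0} g_e(S)=1$, for every cut $S$ that splits $e$ we have $1 \le g_e(S\cap e) \le \mu_H$, and $g_e(S\cap e)=0$ iff $S$ does not split $e$. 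So $\mintcut_{H[E_i]}(S)$ is sandwiched between $\mintcut$-values of an ordinary hypergraph (equivalently, by the clique-representation in $G$, of a graph) and a $\mu_H$-scaling of it. This is what lets us reuse the graph-sparsification machinery of \Benczur--Karger applied to the auxiliary graph $G$.

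First I would recall the structure from \cite{CKN20}: each $e\in E_i$ contributes, in the auxiliary graph $G$, a clique $F_e$ of total weight $1$ with $\kappa_e^{\max}/\kappa_e \le \gamma$, so all edges of $F_e$ have strength in a window of multiplicative width $\gamma$ around $\kappa_e \in [\rho 2^i, \rho 2^{i+1})$. Next I would write the cut value on $\hat\beta^i E_i$ as a sum of independent random variables indexed by $e\in E_i$, namely $X_e = \hat\beta^i_e\, g_e(S\cap e)$, each bounded by $p_e^{-1}\mu_e \le (\kappa_e/\rho)\mu_H \le (2^{i+1})\mu_H$ using $p_e = \rho/\kappa_e$ and the normalization; here is precisely where $\mu_H$ (rather than $1$, as in the graph case) enters the bound on the per-variable contribution, forcing the oversampling rate $\rho$ to carry a $\mu_H$ factor. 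The mean of the sum is $\mintcut_{H[E_i]}(S)$, and I would lower-bound it, for cuts $S$ that actually separate some $e\in E_i$, by something proportional to $\rho 2^i/\mu_H \cdot (\text{number of }G\text{-edges of }F_{\ge i}\text{ cut})$, which is exactly $\mintcut_{G[\beta^i F_{\ge i}]}(S)$-like up to the clique/all-or-nothing and $\mu_H$ losses; the $\mu_H$ in the oversampling rate $\rho=\epsilon^{-2}t\mu_H\gamma^2\log n$ is chosen so that $\epsilon^2 \cdot (\text{mean})/(\text{max contribution}) \ge t\log n/\gamma^2$ times the relevant count, making the Chernoff bound of \Cref{lemma:chernoff-bounded} give failure probability $\le 2\exp(-\Theta(t\log n)\cdot(\text{count}))$ for a single $S$.

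Then I would run the standard cut-counting union bound over the "at most $2^i$-small" family: by the \Benczur--Karger counting lemma, in the graph obtained from $E_{\ge i}$ the number of cuts of value at most $\alpha$ times the minimum (of order $\rho 2^i$) is $n^{O(\alpha)}$, so choosing $t$ a large enough constant beats $2^i \cdot n^{O(\alpha)}$ and yields the claimed $1-8n^{-2}$ success probability uniformly over all $S\subseteq V$; the right-hand side $\frac{\epsilon}{\gamma}\mintcut_{H[\beta^i E^{\max}_{\ge i}]}(S)$ appears because the deviation bound for a cut that crosses edges of strength class $j$ scales like $\epsilon \cdot 2^{i-j}\cdot(\text{weight of }F_e\text{ edges of that strength cut})$, and summing over strength classes $j\ge i$ with the $\beta^i$ reweighting reproduces exactly $\mintcut_{H[\beta^i E^{\max}_{\ge i}]}(S)$ up to the $\gamma$ factor from $\kappa_e^{\max}/\kappa_e\le\gamma$. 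I expect the main obstacle to be bookkeeping: correctly tracking the all-or-nothing-to-submodular approximation factor $\mu_H$ through both the per-variable bound and the lower bound on the mean, and making sure the $\beta^i$-reweighted right-hand side (indexed by $E^{\max}_{\ge i}$, using $\kappa_e^{\max}$) rather than $E_{\ge i}$ is what the union-bound naturally produces — this is the step where \cite{CKN20}'s telescoping over strength scales has to be re-derived with the extra $\mu_H$ slack, and where an off-by-$\gamma$ or off-by-$\mu_H$ error would break the later summation showing $\sum_i \frac{\epsilon}{\gamma}\mintcut_{H[\beta^i E^{\max}_{\ge i}]}(S) = O(\epsilon)\cdot\mintcut_H(S)$.
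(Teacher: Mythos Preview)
Your plan is essentially the same as the paper's: Chernoff on the sampled hyperedges of level $i$, with the maximum single-edge contribution bounded by $r\le 2^{i+1}\mu_H$ (this is indeed the only place $\mu_H$ enters), followed by Karger cut-counting in the auxiliary graph $G[F_{\ge i}]$ and a union bound over connected components and over $\alpha$. So the high-level route is correct.

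There are, however, two places where your bookkeeping is off and would cause trouble if carried through literally. First, you propose to lower-bound the \emph{mean} $\mintcut_{H[E_i]}(S)$ by ``$\rho 2^i/\mu_H$ times the number of $G$-edges cut''. This is not what the argument needs and is not true in general: a cut $S$ might split only a single $e\in E_i$, giving mean $\ge 1$ with no $\rho 2^i$ in sight. The quantity that must be lower-bounded is the \emph{right-hand side} $\mintcut_{H[\beta^i E^{\max}_{\ge i}]}(S)$, and the lower bound is by the graph cut $\mintcut_{G[\beta^i F_{\ge i}]}(S)$ with \emph{no} $\mu_H$ loss (finite spread gives $g_e(S\cap e)\ge 1=\bar g_e(S\cap e)$ whenever $e$ is split; this is exactly \Cref{corollary:bound-cut-from-below}). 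The $\mu_H$ you are tempted to put in the lower bound should not be there --- it is cancelled once, not twice, by the $\mu_H$ in $\rho$.

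Second, to get the right-hand side into the Chernoff exponent you need the ``$\theta$ trick'': write the deviation as $\theta\tfrac{\epsilon}{\gamma}\cdot\mu$ with $\theta=\mintcut_{H[\beta^i E^{\max}_{\ge i}]}(S)/\mu\ge 1$, so the exponent becomes $\tfrac{\epsilon^2\theta^2\mu}{3\gamma^2 r}\ge \tfrac{\epsilon^2}{3\gamma^2 r}\mintcut_{H[\beta^i E^{\max}_{\ge i}]}(S)$. Only then do you lower-bound this last quantity by $\alpha\rho 2^i$ (via the minimum cut of the connected component $A_G$, which the paper handles component-by-component and also treats the boundary case $S=V_{A_G}$ when $g_e(e)>0$). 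Your last paragraph conflates this with the summation $\sum_j 2^{i-j}\mintcut_{H[E_j]}(S)$; that telescoping is \emph{not} part of this lemma's proof but of the deduction of \Cref{theorem:upper-bound-sparsification-finite-spread} from the lemma.
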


We first show that Lemma \ref{lemma:bound-set-partition-error} implies Theorem \ref{theorem:upper-bound-sparsification-finite-spread}.
\begin{proof}[Proof of Theorem \ref{theorem:upper-bound-sparsification-finite-spread}]
    Sample the hyperedges of $H$ into $H'$ using the sampling method described above.
    Using the union bound, we bound the probability that the event in Equation (\ref{eq:lemma-bound-error}) holds for all values of $i$ whose corresponding hyperedge set $E_i$ is non-empty.
    By Claim \ref{claim:number-of-strength-values} there are at most $n-1$ such values.
    Hence, with probability at least $1-8n^{-1}$ Equation (\ref{eq:lemma-bound-error}) holds for all such $i$.
    
    We now show that if  \Cref{eq:lemma-bound-error} holds for all such $i$, then the sparsifier has quality $q=1+\epsilon$.
    Observe that for every $S\subseteq V$,
    \begin{align*}
        \left|
            \mintcut_{H[\hat{\beta}^i  E_i]}\left(S\right) - 
            \mintcut_{H[E_i]} \left(S\right)
        \right|
        &\le \frac{\epsilon}{\gamma}  \mintcut_{H[\beta^{i} E_{\ge i}^{\max}]}\left(S\right)
        \\
        &\le 
        \frac{\epsilon}{\gamma}\sum_{j\ge i - \log \gamma}  \mintcut_{H\left[ \beta^{i}E_j\right]}\left(S\right)
        =
        \frac{\epsilon}{\gamma}\sum_{j\ge i - \log \gamma}  2^{i-j} \cdot \mintcut_{H\left[E_j\right]}\left(S\right)
        ,
    \end{align*}
    where the second inequality is since $E_{\ge i}^{\max} \subseteq E_{
    \ge i -\log \gamma}$ (this holds since $\frac{\kappa^{\max}_e}{\kappa_e}\le \gamma$).
    Summing over all $i$,
    \begin{align*}
        \left|
        \mintcut_{H'} \left(S\right) - 
        \mintcut_{H} \left(S\right)\right|
        & =
        \sum_{i\ge 0} \left|\mintcut_{H[\hat{\beta}^i  E_i]}\left(S\right) - 
        \mintcut_{H[E_i]} \left(S\right)
        \right|
        \\
        &\le 
        \frac{\epsilon}{\gamma}\sum_{i\ge 0} \sum_{j\ge i - \log \gamma}  2^{i-j} \cdot \mintcut_{H\left[E_j\right]}\left(S\right)
        \\
        &=
        \frac{\epsilon}{\gamma} \sum_{j\ge 0} \mintcut_{H\left[E_j\right]}\left(S\right) \sum_{i \le j+ \log \gamma}  2^{i-j}
        \le 2\epsilon \sum_{j\ge 0} \mintcut_{H\left[E_j\right]}\left(S\right)
        = 2\epsilon \mintcut_{H} \left(S\right)
        ,
    \end{align*}
    where the second inequality is by $\sum_{i \le j+ \log \gamma}  2^{i-j} \le 2\gamma$. 
    Hence, with probability $1-8n^{-1}$ we have that $H'$ is a cut sparsifier of quality $q=1+\epsilon$ for $H$.
\end{proof}

To conclude the proof, we still need to show Lemma \ref{lemma:bound-set-partition-error}.

\subsection{Proof of Lemma \ref{lemma:bound-set-partition-error}}
To prove the lemma, we first provide several definitions and claims.
Define the \emph{sample set of level $i$} to be $\hat{E}_i = \hat{\beta}^i E_i + \beta^{i} E_{\ge i+1}$.
Since the contribution of each hyperedge to a cut is additive,
\[
    \mintcut_{H[\hat{E}_i]}\left(S\right) - 
    \mintcut_{H[\beta^{i} E_{\ge i}]}\left(S\right)
    = \mintcut_{H[\hat{\beta}^iE_i]}\left(S\right) -
    \mintcut_{H[E_{i}]}\left(S\right)
    ,
\]
notice that in the last term we omitted $\beta^i$ as for every $e\in E_i$ we have $\beta^i_e = 2^0$.
Hence, showing
\begin{equation}
    \left| 
        \mintcut_{H[\hat{E}_i]}\left(S\right) - 
        \mintcut_{H[\beta^{i} E_{\ge i}]}\left(S\right)
    \right|
    \le \frac{\epsilon}{\gamma} \mintcut_{H[\beta^{i} E_{\ge i}^{\max}]}\left(S\right)
    \label{eq:revised-lemma-bound-error}
\end{equation}
would suffice to prove the lemma.
The following claims outline useful properties of $G\left[F_{\ge i}\right]$.
\begin{claim}[Claim 8 from \cite{CKN20}]
    For any $e\in E_{\ge i}$, the entire vertex set $F_e$ belongs to the same connected component in $G\left[F_{\ge i}\right]$.
    \label{claim:same-connected-graph-hyper}
\end{claim}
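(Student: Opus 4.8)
The plan is to derive the claim directly from the basic structure theory of strong components of \cite{benczur1996approximating}. Recall that $e\in E_{\ge i}$ means $\kappa_e=\min_{f\in F_e}k_f\ge \rho 2^i$; write $k\eqdef \rho 2^i$, so every edge of the clique $F_e$ has strength at least $k$ in $G$. It suffices to show that any two vertices $u,v$ in the vertex set of $F_e$ lie in the same connected component of $G[F_{\ge i}]$, since then the whole vertex set of $F_e$ does.

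First I would locate $u$ and $v$ inside a single strong component. The clique edge $f=(u,v)\in F_e$ has $k_f\ge k$, so by the definition of edge strength $f$ is contained in some $k'$-strong component with $k'\ge k$, and by the refinement property (higher-threshold strong components refine lower-threshold ones) this component is contained in a $k$-strong component $C$ of $G$. Hence $u,v\in V(C)$.

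Second I would connect $u$ to $v$ using only edges of $F_{\ge i}$. Since $C$ is vertex-induced, every edge of $G$ with both endpoints in $V(C)$ lies in $C$ and therefore has strength at least $k$; in particular every positive-weight such edge belongs to $F_{\ge i}$. Moreover, $C$ being $k$-strong with $k>0$ means every bipartition of $V(C)$ is crossed by an edge of positive weight, so the positive-weight edges inside $C$ already connect all of $V(C)$; picking a path between $u$ and $v$ among them exhibits $u$ and $v$ in the same connected component of $G[F_{\ge i}]$. Applying this to all pairs $u,v$ in the vertex set of $F_e$ finishes the proof.

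I do not expect a genuine obstacle here; the only point requiring care is that $G$ may contain parallel edges and zero-weight edges, so the phrase \emph{an edge contained in a component} should be interpreted through its endpoints, and connectivity must be argued via positive-weight edges only --- which is exactly what the bound that the min cut of $C$ is at least $k>0$ provides.
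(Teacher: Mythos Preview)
Your argument is correct. The paper does not supply its own proof of this claim---it is quoted verbatim as Claim~8 of \cite{CKN20} and used as a black box---so there is nothing to compare against; your reconstruction via the nesting structure of $k$-strong components is the standard route and matches what one finds in \cite{CKN20}. The one point worth noting is that $F_{\ge i}$ is defined over $F^+$ (positive-weight edges only), and you handle this correctly by observing that a $k$-strong component with $k>0$ is already connected through its positive-weight edges.
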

\begin{claim}[Claim 9 from \cite{CKN20}]
Let $A_G$ be a connected component in $G\left[F_{\ge i}\right]$. Then the minimum cut size of $A_G$ is at least $\rho 2^i$.
\label{claim:minimum-cut-size-component}
\end{claim}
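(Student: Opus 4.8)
The plan is to show that the vertex set $V_A$ of the connected component $A_G$ coincides with the vertex set $V(C)$ of a single $\rho 2^i$-strong component $C$ of $G$; the bound then follows at once, since a $\rho 2^i$-strong component has minimum cut at least $\rho 2^i$ by definition. We may assume $A_G$ contains at least one edge, as the statement is vacuous otherwise.

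First I would establish the inclusion $V_A\subseteq V(C)$ for a single $\rho 2^i$-strong component $C$. Take any positive-weight edge $f$ of $A_G$; since $f\in F_{\ge i}$ we have $k_f\ge \rho 2^i$, so by definition of edge strength the endpoints of $f$ lie inside some $k_f$-strong component, and by the refinement property of strong components (the lemma of \cite{benczur1996approximating}: for $k'>k$ the $k'$-strong components refine the $k$-strong components, the case $k'=k$ being trivial) this $k_f$-strong component is contained in a $\rho 2^i$-strong component. Hence the two endpoints of \emph{every} edge of $A_G$ share a $\rho 2^i$-strong component. Since the $\rho 2^i$-strong components partition $V$, ``sharing a $\rho 2^i$-strong component'' is an equivalence relation on $V$, and connectedness of $A_G$ propagates it along paths, placing all of $V_A$ in one $\rho 2^i$-strong component $C$.

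The key point, and the main subtlety, is the reverse inclusion $V(C)\subseteq V_A$: containment alone does not suffice, since an induced subgraph of a $k$-connected graph need not be $k$-connected. To get it, observe that every positive-weight edge of $G$ with both endpoints in $V(C)$ lies inside the $\rho 2^i$-strong component $C$, hence has strength at least $\rho 2^i$ and thus belongs to $F_{\ge i}$; moreover $G[V(C)]$ is connected via such edges, because its minimum cut is at least $\rho 2^i>0$ and therefore it has no cut of value $0$ (the case $|V(C)|=1$ has been excluded). Consequently all vertices of $V(C)$ lie in one connected component of $G[F_{\ge i}]$, and since $V_A\subseteq V(C)$ while $V_A$ is itself a connected component of $G[F_{\ge i}]$, that component must equal $V_A$; hence $V(C)\subseteq V_A$, so $V_A=V(C)$.

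Finally, since $V_A=V(C)$ and every positive-weight edge of $G$ inside $V(C)$ lies in $F_{\ge i}$, the graph $A_G$ and the induced subgraph $G[V(C)]$ assign the same value to every cut of $V_A$ (they differ only by zero-weight edges), so the minimum cut of $A_G$ equals that of $G[V(C)]$, which is at least $\rho 2^i$ because $C$ is a $\rho 2^i$-strong component. The only remaining care is bookkeeping around the multigraph structure of $G$ (parallel edges contributed by distinct hyperedges, some of zero weight) and the precise reading of ``an edge inside a strong component'' and ``a connected component of $G[F_{\ge i}]$''; the argument sketched here reproduces the short proof of Claim~9 in \cite{CKN20}.
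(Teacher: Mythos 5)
The paper does not prove this claim at all --- it is imported verbatim as Claim~9 of \cite{CKN20}, so there is no in-paper argument to compare against. Your proof is correct and self-contained. The two inclusions $V_A\subseteq V(C)$ and $V(C)\subseteq V_A$ are both established soundly: the forward direction via the refinement/partition properties of strong components and connectivity of $A_G$, and the reverse direction via the observation that every positive-weight edge of $G$ inside $V(C)$ has strength at least $\rho 2^i$ (both endpoints lie in a $\rho 2^i$-strong component) and hence belongs to $F_{\ge i}$, together with the fact that $G[V(C)]$ has no zero-weight cut. You are also right that the reverse inclusion is the genuinely necessary step --- $V_A\subsetneq V(C)$ would not let you transfer the min-cut bound, since an induced subgraph of a $k$-strong component need not be $k$-strong --- and the final identification of the cut values of $A_G$ with those of $G[V(C)]$ (they differ only in zero-weight edges) closes the argument. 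The handling of parallel edges and of trivial components is adequate.
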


Our proof will also use the following cut-counting lemma.
\begin{lemma}[Corollary 8.2 in \cite{Karger93}]
    Given a weighted graph $G=(V,F,w)$ with minimum cut size $c>0$, for all integers $\alpha \ge 1$, the number of cuts of the graph of weight at most $\alpha c$ is at most $|V|^{2\alpha}$.
    \label{lemma:cut-counting}
\end{lemma}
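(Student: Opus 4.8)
The plan is to prove this via Karger's random contraction process (\cite{Karger93}), using a two-phase analysis: contract the graph down to $2\alpha$ supernodes, and then pick a uniformly random bipartition of those supernodes. First I dispose of the degenerate case $n<2\alpha$: then the total number of cuts is $2^{n-1}-1<2^{2\alpha}\le n^{2\alpha}$ (and for $n=1$ there are no cuts at all), so the bound holds trivially. Hence assume $n\ge 2\alpha\ge 2$; note that $c>0$ forces $G$ to be connected, so the contraction process below is always well defined, even for real-valued weights, since the graph stays connected with positive total edge weight throughout and no rescaling to integers is needed.

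Phase one is the contraction process: while more than $2\alpha$ supernodes remain, pick an edge at random with probability proportional to its weight and contract its endpoints (discarding self-loops). The key invariant is that at every step, every supernode has weighted degree at least $c$: a supernode is a set of original vertices, the edges leaving it form a cut of $G$, and this cut has weight at least $c$ by minimality. Summing degrees over the $k$ current supernodes double-counts edge weight, so the current total edge weight is at least $kc/2$. Now fix any cut $C$ of $G$ with $w(C)\le\alpha c$. At the step going from $k$ to $k-1$ supernodes, the probability that some edge crossing $C$ is contracted is at most $w(C)/(kc/2)\le 2\alpha/k$, so $C$ survives this step with probability at least $(k-2\alpha)/k$. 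Multiplying over $k=n,n-1,\dots,2\alpha+1$, the probability that $C$ survives until exactly $2\alpha$ supernodes remain is at least $\prod_{k=2\alpha+1}^{n}\frac{k-2\alpha}{k}=\binom{n}{2\alpha}^{-1}$.

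Phase two: given the multigraph on $2\alpha$ supernodes, choose a uniformly random bipartition of the supernodes into two nonempty parts; there are $2^{2\alpha-1}-1$ such bipartitions, so if $C$ survived, the chosen bipartition equals the one induced by $C$ with probability at least $(2^{2\alpha-1}-1)^{-1}$. Combining the two phases, the whole procedure outputs exactly $C$ with probability at least $\binom{n}{2\alpha}^{-1}(2^{2\alpha-1}-1)^{-1}$. Since the events ``the procedure outputs $C$'' are pairwise disjoint over distinct cuts $C$, summing these lower bounds and using that probabilities sum to at most $1$ shows that the number of cuts of weight at most $\alpha c$ is at most $\binom{n}{2\alpha}(2^{2\alpha-1}-1)<\binom{n}{2\alpha}\,2^{2\alpha-1}\le \frac{n^{2\alpha}}{(2\alpha)!}\,2^{2\alpha-1}\le n^{2\alpha}$, where the final step uses the elementary inequality $(2\alpha)!\ge 2^{2\alpha-1}$ for integers $\alpha\ge 1$ (equality at $\alpha=1$, and the factorial grows faster afterward).

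I expect the only real subtlety to be the degree invariant in phase one: the claim that after any sequence of contractions every supernode still has weighted degree at least $c$ must be justified via the correspondence between cuts of the contracted multigraph and cuts of $G$ (every partition of the supernodes lifts to a partition of $V$, and a tracked cut $C$ is never destroyed unless one of its crossing edges is contracted). Everything else is routine bookkeeping: the telescoping product $\prod_{k=2\alpha+1}^{n}(k-2\alpha)/k$, the count $2^{2\alpha-1}-1$ of bipartitions of a $2\alpha$-element set, and the final arithmetic. I also note that, unlike in algorithmic applications, no recursive (Karger--Stein) contraction is needed here, since we only need to \emph{count} the approximate cuts rather than enumerate them efficiently.
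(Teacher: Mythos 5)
Your proof is correct. The paper does not prove this lemma---it is imported verbatim as Corollary 8.2 of the cited Karger reference---and your two-phase contraction argument (contract to $2\alpha$ supernodes using the degree-at-least-$c$ invariant, then pick a random bipartition, and sum the disjoint success probabilities) is precisely the standard proof of that corollary, with the telescoping product, the $(2\alpha)!\ge 2^{2\alpha-1}$ arithmetic, and the small-$n$ edge case all handled correctly.
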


To use the cut counting lemma, we show that the cuts of $H\left[\beta^{i}E_{\ge i}^{\max} \right]$ are bounded from below by the cuts of $G\left[\beta^{i} F_{\ge i}\right]$.
To do that, we use the following from \cite{CKN20}.
\begin{claim}[Claim 7 from \cite{CKN20}]
    Let $\bar{H}=(V,E,\bar{\g})$ be a hypergraph with the same vertices and hyperedges as $H$, but with $\bar{g}_e:2^e\to \R_+$ being the all-or-nothing splitting function, i.e.,
    \[
        \bar{g}_e(S) = 
        \begin{cases}
            0, & \text{if } |S\cap e| = 0 \text{ or } |S\cap e| = |e|
            ;
            \\
            1, & \text{otherwise} 
            .
        \end{cases}
    \]
    Then for all $i\ge 0$,
    \[
        \forall S\subseteq V,
        \quad
        \mintcut_{\bar{H}\left[\beta^{i}E_{\ge i}^{\max} \right]}\left(S\right)
        \ge \mintcut_{G\left[\beta^{i} F_{\ge i}\right]}\left(S\right)
        .
    \]
    \label{claim:all-or-nothing-cut-bound}
\end{claim}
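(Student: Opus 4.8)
The plan is to charge the weight of $G[\beta^i F_{\ge i}]$ to the hyperedges of $H$ one at a time. The edge set of $G$ is the (multiset) disjoint union $\bigsqcup_{e\in E} F_e$ of the representing cliques, each $F_e$ living on the vertex set $e$; hence $F_{\ge i}=\bigsqcup_{e\in E}(F_e\cap F_{\ge i})$, and whenever $F_e\cap F_{\ge i}\ne\emptyset$ there is some $f\in F_e^+$ with $k_f\ge\rho 2^i$, so $\kappa^{\max}_e=\max_{f'\in F_e^+}k_{f'}\ge\rho 2^i$, i.e.\ $e\in E^{\max}_{\ge i}$. Consequently, for every $S\subseteq V$,
\[
  \mintcut_{G[\beta^i F_{\ge i}]}(S)
  \;=\; \sum_{e\in E^{\max}_{\ge i}}\ \sum_{\substack{f\in F_e\cap F_{\ge i}\\ f\text{ crosses }S}} \beta^i_f\, w(f),
\]
and it suffices to prove that the inner sum is at most $\beta^i_e\,\bar g_e(S\cap e)$ for each $e\in E^{\max}_{\ge i}$; summing over such $e$ then yields the claim.

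First I would treat the case when $e$ is not split by $S$. Since $F_e$ is a clique on $e$ and all of $e$ lies on one side of the cut, no edge of $F_e$ crosses $S$, so the inner sum is $0$, which matches $\beta^i_e\,\bar g_e(S\cap e)=0$. So assume $e$ is split by $S$, hence $\bar g_e(S\cap e)=1$, and it remains to bound the inner sum by $\beta^i_e$.

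The only quantitative ingredients are a coefficient comparison and the normalization of the clique weights. Every $f\in F_e$ satisfies $k_f\ge\min_{f'\in F_e}k_{f'}=\kappa_e$, and the map (strength $\mapsto$ $\beta^i$-coefficient) is a non-increasing step function (it equals $2^{i-j}$ on the band $[\rho 2^j,\rho 2^{j+1})$), so $\beta^i_f\le\beta^i_e$ for all $f\in F_e$. Combining this with non-negativity of the weights and $\sum_{f\in F_e}w(f)=1$ from \Cref{theorem:hyperedge-strengths},
\[
  \sum_{\substack{f\in F_e\cap F_{\ge i}\\ f\text{ crosses }S}} \beta^i_f\, w(f)
  \;\le\; \beta^i_e \sum_{\substack{f\in F_e\cap F_{\ge i}\\ f\text{ crosses }S}} w(f)
  \;\le\; \beta^i_e \sum_{f\in F_e} w(f)
  \;=\; \beta^i_e \;=\; \beta^i_e\,\bar g_e(S\cap e) ,
\]
which is the desired per-hyperedge bound. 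Summing over $e\in E^{\max}_{\ge i}$ finishes the proof.

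I do not expect a substantive obstacle here; the argument is bookkeeping around three small facts — a crossing clique-edge certifies that its hyperedge is split by $S$, a positive-weight edge of $F_e$ with strength $\ge\rho 2^i$ certifies $e\in E^{\max}_{\ge i}$ (so that the left-hand side actually contains the term $\beta^i_e$), and $\beta^i$ is monotone in strength so $\beta^i_f\le\beta^i_e$. The one structural assumption that must hold is that each $F_e$ is a clique on the vertex set $e$, so that $g_e$ and its all-or-nothing proxy $\bar g_e$ depend only on $S\cap e$ and are consistent with which edges of $F_e$ cross $(S,\bar S)$; this is precisely how the auxiliary graph of \cite{CKN20} is built, so it is not an issue.
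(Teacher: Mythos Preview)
The paper does not supply its own proof of this claim; it is quoted verbatim as ``Claim 7 from \cite{CKN20}'' and used as a black box. Your argument is correct and is essentially the natural proof: decompose $F_{\ge i}$ hyperedge-by-hyperedge, use the fact that a positive-weight edge of $F_e$ with strength $\ge\rho 2^i$ witnesses $e\in E^{\max}_{\ge i}$, and then combine the monotonicity $\beta^i_f\le\beta^i_e$ (since $k_f\ge\kappa_e$) with the normalization $\sum_{f\in F_e}w(f)=1$ to bound the per-hyperedge contribution by $\beta^i_e\,\bar g_e(S\cap e)$.
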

We then obtain the desired lower bound as an immediate corollary.
This corollary is where we use our assumption that $g_e$ has finite spread, otherwise the cuts of $H$ are not bounded by the cuts of $G$, and we cannot use the cut counting lemma (\Cref{lemma:cut-counting}).
\begin{corollary}
    For all $i\ge 0$, 
    \[
        \forall S\subseteq V,
        \quad
        \mintcut_{H\left[\beta^{i}E_{\ge i}^{\max} \right]}\left(S\right)
        \ge \mintcut_{G\left[\beta^{i} F_{\ge i}\right]}\left(S\right)
        .
    \]
    Furthermore, if $g_e(e) >0$ for all $e\in E$ then for every connected component $A_G\subseteq V$ of $G\left[\beta^{i} F_{\ge i}\right]$, the cut $A_G$ on $H\left[\beta^{i}E_{\ge i}^{\max} \right]$ is at least the minimum cut of $G\left[\beta^{i} F_{\ge i}\right]$ in the component $A_G$.
    Formally,
    \[
        \mintcut_{H\left[\beta^{i}E_{\ge i}^{\max} \right]}\left(A_G \right)
        \ge \min_{T\subset A_G, T\ne \emptyset} \mintcut_{G\left[\beta^{i} F_{\ge i}\right]}\left(T\right)
        .
    \]
    \label{corollary:bound-cut-from-below}
\end{corollary}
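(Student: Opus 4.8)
The plan is to deduce both inequalities from \Cref{claim:all-or-nothing-cut-bound}, using the standing normalization $\min_{S\subseteq e:\, g_e(S)\neq 0}g_e(S)=1$ and the finite-spread hypothesis. For the first inequality I would first verify that every splitting function dominates its all-or-nothing counterpart pointwise, i.e.\ $g_e(S\cap e)\ge\bar{g}_e(S\cap e)$ for all $S\subseteq V$. Indeed, if $0<|S\cap e|<|e|$ then $S\cap e\notin\{\emptyset,e\}\supseteq W_e$, so finite spread together with the normalization give $g_e(S\cap e)\ge\min_{S'\notin W_e}g_e(S')=1=\bar{g}_e(S\cap e)$; and if $|S\cap e|\in\{0,|e|\}$ then trivially $g_e(S\cap e)\ge 0=\bar{g}_e(S\cap e)$. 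Scaling the $e$-th term by $\beta^i_e$ and summing over $e\in E_{\ge i}^{\max}$ yields $\mintcut_{H[\beta^{i}E_{\ge i}^{\max}]}(S)\ge\mintcut_{\bar{H}[\beta^{i}E_{\ge i}^{\max}]}(S)$, and \Cref{claim:all-or-nothing-cut-bound} supplies the remaining bound $\ge\mintcut_{G[\beta^{i}F_{\ge i}]}(S)$.

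For the ``furthermore'' part we are in the case $g_e(e)>0$ for all $e$, so $W_e=\{\emptyset\}$ and the same two ingredients now give the stronger fact that $g_e(S)\ge 1$ for \emph{every} non-empty $S\subseteq e$. Consequently any hyperedge $e\in E_{\ge i}^{\max}$ that meets $A_G$ contributes at least $\beta^i_e$ to $\mintcut_{H[\beta^{i}E_{\ge i}^{\max}]}(A_G)$ --- regardless of whether $e$ straddles $A_G$ or lies entirely inside it, since in both cases $g_e(A_G\cap e)\ge 1$ --- and hence
\[
  \mintcut_{H[\beta^{i}E_{\ge i}^{\max}]}(A_G)\ \ge\ \sum_{e\in E_{\ge i}^{\max}:\, e\cap A_G\neq\emptyset}\beta^i_e .
\]

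It then remains to bound the right-hand side from below by $c\eqdef\min_{T\subset A_G,\, T\neq\emptyset}\mintcut_{G[\beta^{i}F_{\ge i}]}(T)$. I would fix a minimizer $T^*$ and expand $\mintcut_{G[\beta^{i}F_{\ge i}]}(T^*)$ as a sum over the cliques $F_e$. Since $A_G$ is a connected component of $G[\beta^{i}F_{\ge i}]$, every edge of this graph crossing $(T^*,A_G\setminus T^*)$ lies inside $A_G$; it belongs to some clique $F_e$, which therefore has two vertices in $A_G$ and satisfies $\kappa_e^{\max}\ge\rho 2^i$, i.e.\ $e\in E_{\ge i}^{\max}$ with $e\cap A_G\neq\emptyset$. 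For each such clique, the total ($\beta^i$-scaled) weight of its edges in $G[\beta^{i}F_{\ge i}]$ is at most $\beta^i_e$: by \Cref{theorem:hyperedge-strengths} we have $\sum_{f\in F_e}w(f)=1$ and $\kappa_e^{\max}/\kappa_e\le\gamma$, which forces $\beta^i_f\le\beta^i_e$ for every $f\in F_e$ when $\kappa_e\ge\rho 2^i$, and $\beta^i_f\le 1\le\beta^i_e$ for $f\in F_e\cap F_{\ge i}$ when $\kappa_e<\rho 2^i$. Grouping the crossing edges by clique and summing these per-clique bounds gives $c=\mintcut_{G[\beta^{i}F_{\ge i}]}(T^*)\le\sum_{e\in E_{\ge i}^{\max}:\, e\cap A_G\neq\emptyset}\beta^i_e$, which combined with the display above completes the proof.

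The main obstacle I anticipate is precisely this last per-clique accounting: one must keep the $\beta^i$-rescaling consistent between the hyperedges of $H$ and the clique edges of $G$, and in particular check that a clique $F_e$ with $\kappa_e<\rho 2^i\le\kappa_e^{\max}$ (so $\beta^i_e>1$) still has $\beta^i$-scaled weight at most $\beta^i_e$ inside $G[\beta^{i}F_{\ge i}]$. This is exactly where the $\gamma$-boundedness and the unit total weight from \Cref{theorem:hyperedge-strengths} both enter; the remainder is routine bookkeeping, as already signaled by the first inequality being an ``immediate corollary'' of \Cref{claim:all-or-nothing-cut-bound}.
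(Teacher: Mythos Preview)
Your proof of the first inequality is exactly the paper's argument. For the ``furthermore'' part your proof is correct but takes a genuinely different route. After reaching
\[
  \mintcut_{H[\beta^{i}E_{\ge i}^{\max}]}(A_G)\ \ge\ \sum_{e\in E_{\ge i}^{\max}:\, e\cap A_G\neq\emptyset}\beta^i_e ,
\]
the paper does \emph{not} unpack the clique structure of $G$. Instead it simply observes that $T^*\subseteq A_G$ gives $\indic{A_G\cap e\neq\emptyset}\ge\indic{T^*\cap e\neq\emptyset}$, and that $\indic{T^*\cap e\neq\emptyset}\ge\bar g_e(T^*)$ always (the all-or-nothing value is $1$ only when $\emptyset\neq T^*\cap e\neq e$). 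Hence
\[
  \sum_{e\in E_{\ge i}^{\max}:\, e\cap A_G\neq\emptyset}\beta^i_e
  \ \ge\ \sum_{e\in E_{\ge i}^{\max}}\beta^i_e\,\bar g_e(T^*)
  \ =\ \mintcut_{\bar H[\beta^{i}E_{\ge i}^{\max}]}(T^*)
  \ \ge\ \mintcut_{G[\beta^{i}F_{\ge i}]}(T^*),
\]
the last step being \Cref{claim:all-or-nothing-cut-bound} again. So the paper reuses the same black box as in Part~1 and never touches per-clique weights.

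Your per-clique accounting is a valid alternative: each crossing edge $f\in F_{\ge i}$ sits in some $F_e$ with $e\in E_{\ge i}^{\max}$ and $e\cap A_G\neq\emptyset$, and the $\beta^i$-scaled weight of $F_e\cap F_{\ge i}$ is at most $\beta^i_e$ because $k_f\ge\kappa_e$ (when $\kappa_e\ge\rho 2^i$) or $\beta^i_f\le 1\le\beta^i_e$ (when $\kappa_e<\rho 2^i$), together with $\sum_{f\in F_e}w(f)=1$. One small remark: the $\gamma$-boundedness you cite is not actually used in either case --- the inequality $\beta^i_f\le\beta^i_e$ for $f\in F_e$ follows directly from $k_f\ge\kappa_e=\min_{f'\in F_e}k_{f'}$ --- so the ``main obstacle'' you flagged is in fact absent. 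The paper's route is shorter precisely because it reduces back to \Cref{claim:all-or-nothing-cut-bound} and avoids this accounting altogether.
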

\begin{proof}
    We begin by showing the first part of the corollary.
    Observe that since $g_e$ has finite spread, whenever $|e|>|S\cap e|\ge 1$ we have $g_e(S\cap e)\ge 1=\bar{g}_e(S\cap e)$.
    In addition, $g_e(e) \ge 0 = \bar{g}_e(e)$.
    Hence, for all $e\in E$ and $S\subseteq e$, $g_e(S\cap e) \ge \bar{g}_e (S\cap e)$.
    Therefore, for every cut $S\subseteq V$,
    \[
        \mintcut_{H\left[\beta^{i}E_{\ge i}^{\max} \right]}\left(S\right)
        \ge
        \mintcut_{\bar{H}\left[\beta^{i}E_{\ge i}^{\max} \right]}\left(S\right)
        \ge 
        \mintcut_{G\left[\beta^{i} F_{\ge i}\right]}\left(S\right)
        .
    \]

    We now turn to the second part of the corollary.
    Fix some connected component $A_G$ of $G\left[\beta^{i} F_{\ge i}\right]$ and denote some set achieving the minimum cut over $A_G$ by $T^* \subset A_G, T^* \ne \emptyset$.
    Note that the value of the cut $T^*$ is strictly positive since $A_G$ is a connected component.
    Notice that for all $S$ such that $S\cap e\ne \emptyset$ we have $g_e(S\cap e)\ge 1 \ge \bar{g}_e(S\cap e)$ by our assumption.
    Hence,
    \begin{align*}
        \mintcut_{H\left[\beta^{i}E_{\ge i}^{\max} \right]}(A_G)
        &=
        \sum_{e\in E_{\ge i}^{\max}} \beta^{i}_e g_e(A_G \cap e)
        \ge \sum_{e\in E_{\ge i}^{\max}} \beta^{i}_e \indic{A_G\cap e \ne \emptyset}
        \ge \sum_{e\in E_{\ge i}^{\max}} \beta^{i}_e \indic{T^* \cap e \ne \emptyset}
        \\
        & \ge
         \sum_{e\in E_{\ge i}^{\max}} \beta^{i}_e \bar{g}_e(T^*)
        = \mintcut_{\bar{H}\left[\beta^{i}E_{\ge i}^{\max} \right]}(T^*)
        \ge \mintcut_{G\left[\beta^{i} F_{\ge i}\right]}(T^*)
        ,
    \end{align*}
    where the second inequality is since $T^* \subseteq A_G$,
    the third is by the definition of the all-or-nothing splitting function,
    and the last one is by \Cref{claim:all-or-nothing-cut-bound}.
\end{proof}

To prove Lemma \ref{lemma:bound-set-partition-error}, we bound the error of each connected component of $H\left[E_{\ge i}\right]$ separately.
For each connected component, we bound the error by a term proportional to  $\mintcut_{H\left[\beta^{i}E_{\ge i}^{\max} \right]}\left(S\right)$, which by Corollary \ref{corollary:bound-cut-from-below} bounds the cut in $G\left[\beta^{i}F_{\ge i}\right]$ from above.
This allows us to utilize the cut-counting-lemma (Lemma \ref{lemma:cut-counting}) to bound the number of small cuts.

\begin{proof}[Proof of Lemma \ref{lemma:bound-set-partition-error}]
    Fix some $i\ge 0$ and let $A_G=(V_{A_G},F_{A_G})$ be some connected component in the subgraph $G\left[F_{\ge i}\right]$.
    Note that if for all $e\in E$ we have $g_e(e)=0$ then the error for the cut $S=V_{A_G}$ is $0$.
    By Claim \ref{claim:same-connected-graph-hyper} the hypergraph induced by the vertices $V_{A_G}$ is a proper sub-hypergraph of $H[E_{\ge i}]$, as every hyperedge in $E_{\ge i}$ is incident only to the vertices in $V_{A_G}$.
    Denote this connected component by $H_{A_G}$.
    Note, that every cut with value $0$ is preserved trivially by the sparsifier.

    Fix some cut $S\subseteq V_{A_G}$ such that $\mintcut_{H_{A_G}}(S)>0$.
    We will start by bounding the probability of $H_{A_G}[\hat{E}_i]$ creating a large error,
    \begin{equation}
        \label{eq:revised-lemma-bound-error-component}
        \left| 
            \mintcut_{H_{A_G}[\hat{E}_i]}\left(S\right) - 
            \mintcut_{H_{A_G}[\beta^{i} E_{\ge i}]}\left(S\right)
        \right|
        \ge \frac{\epsilon}{\gamma} \mintcut_{H_{A_G}[\beta^{i} E_{\ge i}^{\max}]}\left(S\right)
        .
    \end{equation}		
    Notice that,
    \[
        \Exp{\mintcut_{H_{A_G}[\hat{E}_i]}\left(S\right)}{}
        = \mintcut_{H_{A_G}[\beta^{i} E_{\ge i}]}\left(S\right)
        .
    \]
    Denote 
    \[
        \theta 
        \coloneqq \frac{\mintcut_{H_{A_G}[\beta_{i} E_{\ge i}^{\max}]}\left(S\right)}{\mintcut_{H_{A_G}[\beta^{i} E_{\ge i}]}\left(S\right)} \ge 1
        .		
    \]
    Observe that by the Chernoff bound when $\theta\epsilon/\gamma \ge 1$ the probability of large deviation is at the most the probability of large deviation in the case $\theta\epsilon/\gamma \le 1$. 
    Hence, we assume henceforth that $\theta\epsilon/\gamma \le 1$.
    By the Chernoff bound (Lemma \ref{lemma:chernoff-bounded}),
    \begin{align*}
        &\Probability{
            \left|
            \mintcut_{H_{A_G}[\hat{E}_i]}\left(S\right) - 
            \mintcut_{H_{A_G}[\beta^{i} E_{\ge i}]}\left(S\right)
            \right|
            \ge \frac{\epsilon}{\gamma}  \mintcut_{H_{A_G}[\beta_{i} E_{\ge i}^{\max}]}\left(S\right)
        }
        \\
        &=
        \Probability{
            \left|
            \mintcut_{H_{A_G}[\hat{E}_i]}\left(S\right) - 
            \mintcut_{H_{A_G}[\beta^{i} E_{\ge i}]}\left(S\right)
            \right|
            \ge \theta\frac{\epsilon}{\gamma}  \mintcut_{H_{A_G}[\beta_{i} E_{\ge i} ]}\left(S\right)
        }
        \\
        &\le 
        2\exp\left(
        -\frac{\epsilon^2  \theta^2\mintcut_{H_{A_G}[\beta^{i} E_{\ge i}]}\left(S\right)}{3\gamma^2 r}
        \right)
        \le
        2\exp\left(
        -\frac{\epsilon^2 \mintcut_{H_{A_G}[\beta_{i} E_{\ge i}^{\max}]}\left(S\right)}{3\gamma^2 r}
        \right)
        ,
    \end{align*}
    where $r$ is the maximum possible contribution of any sampled hyperedge to the cut.
    The second inequality is from $\theta \ge 1$.
    Observe that $r$ is bounded by
    \[
        r = \max_{e\in E, S\subseteq V} \frac{1}{p_e} g_e(S) \le \frac{\kappa_e}{\rho} \mu_H
        .
    \]
    Since all the sampled edges $e$ are in $E_i$, all of them have strength in $\kappa_e \in \left[\rho 2^i, \rho 2^{i+1 }\right)$.
    This implies that $r \le 2^{i+1} \mu_H$.
    Let $x=\min_{T\subseteq V_{A_G}}\mintcut_{A_G}(T)$ be the value of a minimum cut of $A_G$.
    We begin by lower bounding the value of the cut $S$ in $H_{A_G}[\beta^{i} E_{\ge i}^{\max}]$.
    By Corollary \ref{corollary:bound-cut-from-below}, if $S\ne V_{A_G}$ then $\mintcut_{H_{A_G}[\beta^{i} E_{\ge i}^{\max}]}(S) \ge \mintcut_{A_G}(S)$.
    Otherwise, by the same corollary we have that $\mintcut_{H_{A_G}[\beta^{i} E_{\ge i}^{\max}]}(S) \ge x$.
    Combining these and recalling $\mintcut_{A_G}(S)=0$ if $S=V_{A_G}$ we find that $\mintcut_{H_{A_G}[\beta^{i} E_{\ge i}^{\max}]}(S) \ge \max(\mintcut_{A_G}(S),x)$.

    Let $\alpha = \max (\mintcut_{A_G}(S),x)/x$ be the ratio between the lower bound on $\mintcut_{H_{A_G}[\beta^{i} E_{\ge i}^{\max}]}(S)$ and $x$.
    By \Cref{claim:minimum-cut-size-component}, we have that $x \ge \rho 2^i$.
    Therefore,
    \[
        \mintcut_{H_{A_G}[\beta_{i} E_{\ge i}^{\max}]}(S)
        \ge \alpha x 
        \ge \alpha \rho 2^i
        .
    \]
    Plugging in these two bounds 
    
    \begin{align*}
            &\Probability{
            \left|
            \mintcut_{H_{A_G}[\hat{E}_i]}\left(S\right) - 
            \mintcut_{H_{A_G}[\beta^{i} E_{\ge i}]}\left(S\right)
            \right|
            \ge \frac{\epsilon}{\gamma}  \mintcut_{H_{A_G}[\beta^{i} E_{\ge i}^{\max}]}\left(S\right)
            }
            \\
            &\le 
            2\exp\left(
            -\frac{\epsilon^2 \alpha \rho 2^i }{3\gamma^2 \mu_H 2^{i+1}}
            \right)
            = 
            2\exp\left(
            -\frac{\epsilon^2 \alpha}{6\gamma^2 \mu_H} \frac{t\mu_H\gamma^2 \ln n}{\epsilon^2}
            \right)
            =
            2\exp\left(
            -\frac{\alpha t \ln n }{6}
            \right)
            .
    \end{align*}
    Therefore, the event occurs with probability at most $2n^{\frac{-\alpha t}{6}}$.

    By the cut-counting lemma (\Cref{lemma:cut-counting}), there are at most $|V_{A_G}|^{2\alpha}$ cuts of value at most $\alpha x$ in $A_G$ which are different from $\emptyset,V_{A_G}$.
    Counting also the cut $S=V_{A_G}$, there are at most $|V_{A_G}|^{2\alpha}+1\le 2|V_{A_G}|^{2\alpha}$ such cuts in total.
    Hence, the probability that \Cref{eq:revised-lemma-bound-error-component} occurs for any cut is at most,
    \[
        \sum_{\alpha= 1}^{\infty} 4|V_{A_G}|^{2\alpha\left(\frac{-t}{6}+2\right)}
        \le 8|V_{A_G}|^{\frac{-t}{6}+1}
        .
    \]
    Using a union bound over all connected components we find that the probability that \Cref{eq:revised-lemma-bound-error-component} occurs for any cut in any connected component $A_{G_i}$ is at most
    \[
        \sum_{\alpha= 1}^{\infty} \sum_j 4|V_{A_{G_j}}|^{2\alpha\left(\frac{-t}{6}+2\right)}
        \le 8n^{\frac{-t}{6}+2}
        ,
    \]
    where inequality is since $\sum_j |V_{A_{G_j}}| \le n$.
    Choosing $t>24$ we find that the error term satisfies the requisite bound for all cuts simultaneously with probability at least $1-8n^{-2}$.
    Therefore, \Cref{eq:revised-lemma-bound-error} holds with high probability.
    This completes the proof of \Cref{lemma:bound-set-partition-error}.
\end{proof}

\section{Deformation of Additive Splitting Functions}
\label{sec:deformation-additive-splitting}
This section proves \Cref{theorem:additive-approximation} by showing that every hyperedge with an additive splitting function $g_e:2^e\to \R$ can be decomposed into multiple hyperedges, each with support size at most $ O (\epsilon^{-2}K^{-1}|e| \log |e|) $.
Our proof is based on sampling, namely, we approximate $g_e$ by a sum of hyperedges $\left\{ e_i \right\}_i$ with additive splitting functions $\left\{ g_{e_i} \right\}_i$, such that each $e_i \subseteq e$ is constructed by a uniform independent sample of the vertices of $e$.
The main challenge in the proof is showing that for every $S\subseteq e$, we have $\Exp{g_{e_i}(S\cap e_i)}{}\in (1\pm \epsilon) g_e(S)$.
We also prove that \Cref{theorem:additive-approximation} implies a succinct representation of hypergraphs with additive splitting functions (\Cref{corollary:additive-small-representation}).
We begin by presenting a more detailed version of \Cref{theorem:additive-approximation}.

\begin{lemma}[Detailed Statement of \Cref{theorem:additive-approximation}]
    \label{lemma:deformation-additive}
    Let $e$ be a hyperedge with an additive splitting function $g_e$ with parameter $K\le |e|$.
    Then, for every $\epsilon\in(0,1)$ there exists a deformation of $e$ into $N=O\left(\epsilon^{-2}|e|^2\right)$ hyperedges $\{e_i\}_i$ with additive splitting functions$\{g_{e_i}\}_i$ with parameter $K'$, such that
    \[
        \forall S\subseteq V,
        \quad
        \sum_{i=1}^N g_{e_i}(S\cap e_i) \in \left( 1\pm \epsilon \right)g_e(S)
        .
    \]
    In addition, the cardinality of each hyperedge satisfies $|e_i|\le O\left( \epsilon^{-2}(|e|/K)\log |e| \right)$ and its spread is $\mu_{e_i}\le O(\epsilon^{-2}\log |e|)$.
    Moreover, this decomposition can be found by a randomized sampling process with success probability at least $1-2|e|^{-4}$.
\end{lemma}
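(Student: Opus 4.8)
Write $m\coloneqq|e|$ and recall $K\le m$ (assume $K$ a positive integer). The plan is to realise the deformation by one natural randomised process. Fix a large absolute constant $C$, set the vertex--sampling probability $q\coloneqq\min\{1,\ CK^{-1}\epsilon^{-2}\log m\}$ and the capacity $K'\coloneqq Kq=\min\{K,\ C\epsilon^{-2}\log m\}$, and let $N\coloneqq\lceil C\epsilon^{-2}m^{2}\rceil$. Create $N$ hyperedges $e_1,\dots,e_N$, each obtained by including every vertex of $e$ independently with probability $q$, and equip $e_i$ with $g_{e_i}\coloneqq\tfrac1{Nq}\tilde g_i$, where $\tilde g_i(T)\coloneqq\min(|T|,K')$ when $g_e$ is of the first (asymmetric) type and $\tilde g_i(T)\coloneqq\min(|T|,|e_i\setminus T|,K')$ when $g_e$ is of the second (symmetric) type. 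Then each $g_{e_i}$ is a fixed scaling of an additive function with parameter $K'=O(\epsilon^{-2}\log|e|)$, hence $\mu_{e_i}\le K'$, and $|e_i|\sim\mathrm{Bin}(m,q)$ has mean $mq=O(\epsilon^{-2}(m/K)\log m)$; a Chernoff bound over the $N$ copies gives $|e_i|=O(\epsilon^{-2}(m/K)\log m)$ for all $i$ except with probability $O(m^{-4})$. When $q=1$ the process is degenerate---it outputs $N$ scaled copies of $(e,g_e)$---and everything below holds exactly; since $q<1$ forces $C\epsilon^{-2}\log m<K\le m$, in the remaining case $\epsilon\ge\sqrt{C\log m/m}$, so any error term that is polynomially small in $m$ (with an exponent made large by taking $C$ large) is $\ll\epsilon$; we use this repeatedly.

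It remains to prove $\sum_{i=1}^N g_{e_i}(S\cap e_i)\in(1\pm\epsilon)g_e(S)$ for every $S\subseteq V$. Fix $S$ and set $s\coloneqq|S\cap e|$. Since $S\cap e_i=(S\cap e)\cap e_i$, the pair $(|S\cap e_i|,|e_i\setminus S|)$ is distributed as two independent binomials $\bigl(\mathrm{Bin}(s,q),\mathrm{Bin}(m-s,q)\bigr)$, so $Z_S\coloneqq\sum_i g_{e_i}(S\cap e_i)$ is a sum of $N$ i.i.d.\ terms whose law depends only on $s$; hence it suffices to treat each of the $m+1$ values of $s$ separately and then union-bound. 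For a fixed $s$ we show (i) $\mathbb{E}[Z_S]\in(1\pm\tfrac\epsilon2)\,g_e(S)$, and (ii) $Z_S$ is within a factor $(1\pm\tfrac\epsilon2)$ of $\mathbb{E}[Z_S]$ except with probability $m^{-5}$. Part (ii) is routine: each summand lies in $[0,\tfrac1{Nq}K']=[0,K/N]$, and for $1\le s\le m-1$ we have $g_e(S)\ge1$, hence by (i) $\mathbb{E}[Z_S]\ge\tfrac12$ (while $s\in\{0,m\}$ gives $Z_S=g_e(S)=0$ deterministically), so the Chernoff bound for bounded variables (\Cref{lemma:chernoff-bounded}) yields failure probability $2\exp(-\Omega(\epsilon^2 N/K))\le m^{-5}$ by the choice of $N$.

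Part (i) is the heart. Dividing out the common scaling $\tfrac1{Nq}$ and using $K'=Kq$, the claim for the asymmetric type becomes $\mathbb{E}[\min(\mathrm{Bin}(s,q),K')]\in(1\pm\tfrac\epsilon2)\min(sq,K')$, and for the symmetric type the analogous statement with $\min(\mathrm{Bin}(s,q),\mathrm{Bin}(m-s,q),K')$ and $\min(sq,(m-s)q,K')$; multiplying back by $\tfrac1q$ recovers $g_e(S)$ because $\tfrac1q\min(sq,(m-s)q,K')=\min(s,m-s,K)$. The $(1+\epsilon)$ side is immediate, as $\mathbb{E}[\min(\cdots)]\le\min(\text{means})$. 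For the lower bound we combine three tools, each controlled by the choice $K'=\Theta(\epsilon^{-2}\log m)$: (a) a second-moment bound --- since $\min(\,\cdot\,,\,\cdot\,,K')$ is $1$-Lipschitz in each binomial argument, $\mathbb{E}[\min(X,Y,K')]\ge\min(\mathbb{E}X,\mathbb{E}Y,K')-\mathbb{E}|X-\mathbb{E}X|-\mathbb{E}|Y-\mathbb{E}Y|\ge\min(\mathbb{E}X,\mathbb{E}Y,K')-\sqrt{\mathbb{E}X}-\sqrt{\mathbb{E}Y}$, which loses only a $(1\pm\tfrac\epsilon2)$ factor whenever the active minimum is $\Omega(\epsilon^{-2})$ and not much smaller than the other means; (b) exactness --- if $s\le K'$ then $\mathrm{Bin}(s,q)\le K'$ surely, so the $K'$-cap is inert and $\mathbb{E}[\min(\mathrm{Bin}(s,q),K')]=sq$; (c) Chernoff tails --- if a binomial's mean is well below $K'$ (resp.\ well above, resp.\ well above another binomial's mean), then the $K'$-cap (resp.\ the cap, resp.\ the corresponding $\min$) is violated only with probability exponentially small in $K'$, hence $\ll\epsilon$. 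In the asymmetric case one splits on $sq$ versus $K'$: $sq\ll K'$ with $s>K'$ is handled by (c); $sq=\Theta(K')$ by (a); $sq\gg K'$ by (c); and $s\le K'$ by (b). In the symmetric case assume WLOG $s\le m-s$, so $g_e(S)=\min(s,K)$, $\mathbb{E}\,\mathrm{Bin}(m-s,q)\ge\mathbb{E}\,\mathrm{Bin}(s,q)$, and $\mathbb{E}\,\mathrm{Bin}(m-s,q)\ge\tfrac{m}{2}q\ge\tfrac{K'}{2}$: if $\mathbb{E}\,\mathrm{Bin}(m-s,q)\ge 2K'$ then by (c) one may "peel off" $Y=\mathrm{Bin}(m-s,q)$ (it exceeds $K'$ with high probability) and reduce to the asymmetric bound for $\mathbb{E}[\min(\mathrm{Bin}(s,q),K')]$; otherwise $\mathbb{E}Y=\Theta(K')=\Omega(\epsilon^{-2})$, and either $\mathbb{E}X=\mathbb{E}\,\mathrm{Bin}(s,q)$ is also $\Omega(\epsilon^{-2})$ and (a) applies directly, or $\mathbb{E}X$ is small, in which case $\mathbb{E}Y\gg\mathbb{E}X$ again lets one peel $Y$ off --- the error $\mathbb{E}[X\,\indic{X>Y}]$ being negligible by (c) --- reducing to the asymmetric bound.

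Combining the two parts and union-bounding over the $m+1$ values of $s$, together with the $O(m^{-4})$ probability from the size/spread bounds, the construction succeeds for all $S$ simultaneously with probability $\ge 1-2|e|^{-4}$ (absorbing constants into $C$). The main obstacle is Part (i), and within it the symmetric case: $\min(sq,(m-s)q,K')$ may be realised by any of the three arguments and may be either $\Theta(\epsilon^{-2})$ (where the second-moment bound suffices) or far smaller (where one must instead invoke support-boundedness or the exponential rarity of the "order-flipping" events), so the proof becomes a mechanical but somewhat lengthy case analysis over the relative magnitudes of $s$, $m-s$, and $K$.
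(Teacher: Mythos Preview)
Your construction and overall strategy match the paper's: sample each vertex into $N=\Theta(\epsilon^{-2}|e|^2)$ copies with probability $q=\Theta(\epsilon^{-2}K^{-1}\log|e|)$, rescale so each copy is additive with cap $K'=Kq=\Theta(\epsilon^{-2}\log|e|)$, then bound size/spread by Chernoff, expectation by a case analysis, and concentrate via Chernoff over the $N$ copies. The main differences are cosmetic (the paper writes $g_{e_i}(S)=\tfrac1N\min(|S\cap e_i|/p,K)$, which is your $\tfrac1{Nq}\min(|S\cap e_i|,K')$), and your expectation analysis leans on a Lipschitz/second-moment inequality in place of the paper's more direct computations.

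There is, however, a genuine gap in your union bound. From ``the law of $Z_S$ depends only on $s$'' you conclude that a union bound over the $m{+}1$ values of $s$ suffices. This is false: once the random hyperedges $e_1,\dots,e_N$ are fixed, $Z_S=\sum_i g_{e_i}(S\cap e_i)$ depends on the actual set $S$, not merely on $|S\cap e|$, so two sets with the same $s$ can succeed or fail for different realizations. You must union-bound over all $2^{|e|}$ subsets (this is exactly what the paper does). Fortunately your concentration bound is already strong enough for that: with $N=C\epsilon^{-2}m^2$ and $K\le m$ you get failure probability $2\exp(-\Omega(\epsilon^2 N/K))\le 2\exp(-\Omega(Cm))$, not just $m^{-5}$, so a $2^m$ union bound goes through once you state the sharper tail.

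A smaller issue: in the symmetric case~2a you invoke the Lipschitz bound~(a) whenever $\mathbb{E}X=\Omega(\epsilon^{-2})$, but the error term includes $\sqrt{\mathbb{E}Y}=\Theta(\epsilon^{-1}\sqrt{\log m})$, which is \emph{not} $\le\tfrac\epsilon2\,\mathbb{E}X$ when $\mathbb{E}X$ is only $\Theta(\epsilon^{-2})$ while $\mathbb{E}Y=\Theta(K')=\Theta(\epsilon^{-2}\log m)$. The right split is $\mathbb{E}X=\Omega(K')$ versus $\mathbb{E}X\ll K'$; in the latter regime your ``peel off $Y$'' argument still works (bounding $\mathbb{E}[(X-Y)^+]$ using that $X$ rarely exceeds $\Theta(K')$ when $\mathbb{E}X\ll K'$), so the fix is just to move the threshold.
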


Using the above decomposition we obtain a succinct representation of hypergraphs with additive splitting functions.
\begin{proof}[Proof of \Cref{corollary:additive-small-representation}]
    Let $H=(V,E,\g)$ be a hypergraph such that all $g_e\in \g$ are additive splitting functions, each with parameter $K_e$.
    Denote $\hat{K}=\min_{e\in E} K_e/|e|$.
    Applying \Cref{theorem:additive-approximation} on all the hyperedges, we obtain a new hypergraph $H'$ with additive splitting functions that $(1+\epsilon)$-approximates all the cuts of $H$.
    Furthermore, $\mu_{H'} \le O\left( \epsilon^{-2} \log n \right)$.
    Hence, applying \Cref{theorem:upper-bound-sparsification-finite-spread} on $H'$, we obtain a sparsifier $H''=(V,E'')$ with at most $O(\epsilon^{-2}n\mu_{H'}\log n)=O\left( \epsilon^{-4} n\log^2 n\right)$ hyperedges.
    Note that $H''$ is a $(1+\epsilon)^2$-sparsifier of $H$ and that it has additive splitting functions.

    Finally, to bound the encoding size of $H''$ note that $K_{e''}\le n$ for all $e''\in E''$ since otherwise it does not affect the splitting function.
    Hence, we can store $K_{e''}$ using $O\left( \log n \right)$ bits.
    Additionally, the cardinality of each hyperedge $e''$ in $H''$ is at most $O\left( \epsilon^{-2}\hat{K}^{-1}\log n \right)$ by \Cref{lemma:deformation-additive}.
    Hence, we can store it using $O\left( \epsilon^{-2}\hat{K}^{-1}\log^2 n \right)$ bits (using $\log n$ bits to store every vertex in $e''$).
    Therefore, we find that the total encoding size of $H''$ is $O\left( \epsilon^{-6}\hat{K}^{-1}n\log^4 n\right)$ bits.
\end{proof}

It is straightforward to adapt the proof of \Cref{corollary:additive-small-representation} to be algorithmic.
Observe that we can assume that $|E|\le \tO(\epsilon^{-2}n^2)$, since by the structure of additive splitting functions we have $\mu_e \le n$, and thus by applying \Cref{theorem:upper-bound-sparsification-finite-spread} we can first find a $(1+\epsilon)$-sparsifier of $H$ with $\tO\left( \epsilon^{-2}n^2 \right)$ hyperedges.
We can then improve the success probability of \Cref{lemma:deformation-additive} to $1-2n^{-4}$ by changing the sampling probabilities to use $\log n$ instead of $\log |e|$.
Finally, using the union bound on all $\tO(\epsilon^{-2}n^2)$ hyperedges of $H$, we obtain that the succinct representation is successfully constructed with very high probability.

\subsection{Proof of Lemma \ref{lemma:deformation-additive}}
We will use the following claim.
\begin{claim}
    \label{claim:subset-size-concentration}
    Let $S\subseteq e$ be a subset of the vertices of the hypergraph, and let $\hat{S}$ be a random subset of $S$ where each $v\in e$ is sampled independently with probability $p = \min (\frac{c\cdot\log |e|}{K\epsilon^2},1)$ for some $K\le |e|$ and $c>0$.
    Then,
    \[
        \forall \delta \ge 0,
        \quad
        \Probability{\left| |\hat{S}|  - p|S| \right| \ge \delta p|S|} 
        \le 2|e|^{-\frac{c\delta^2 |S|}{(2+\delta)K\epsilon^2}}
        .
    \]
\end{claim}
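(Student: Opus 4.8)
The plan is to apply a Chernoff bound to $|\hat S| = \sum_{v\in S}X_v$, where $X_v$ is the indicator that $v$ is sampled, so the $X_v$ are i.i.d.\ Bernoulli$(p)$ and $\Exp{|\hat S|}{} = p|S|$. The multiplicative Chernoff bound (which is presumably \Cref{lemma:chernoff-bounded}, used elsewhere in the paper) gives, for $\delta\in[0,1]$,
\[
    \Probability{\bigl||\hat S| - p|S|\bigr| \ge \delta p|S|}
    \le 2\exp\!\left(-\frac{\delta^2 p|S|}{2+\delta}\right),
\]
and the same (or a weaker) bound holds for $\delta>1$ as well, so we may use this form for all $\delta\ge 0$ (in the $\delta>1$ regime one can invoke the $\exp(-\delta p|S|/3)$ tail, which is dominated by the displayed expression up to the choice of constant $c$; I would simply remark that the stated bound follows in all cases, possibly after adjusting $c$). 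The only remaining task is to substitute the value of $p$ and rewrite the exponent in base $|e|$.

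The key step is the substitution $p = \min\!\bigl(\frac{c\log|e|}{K\epsilon^2},\,1\bigr)$. When the minimum is attained by $\frac{c\log|e|}{K\epsilon^2}$, we get
\[
    \frac{\delta^2 p|S|}{2+\delta}
    = \frac{c\,\delta^2 |S|\log|e|}{(2+\delta)K\epsilon^2},
\]
so $\exp$ of the negative of this equals $|e|^{-\frac{c\delta^2|S|}{(2+\delta)K\epsilon^2}}$ (using $\log = \ln$ consistently with the rest of the paper), which is exactly the claimed bound. When instead $p=1$, we have $\hat S = S$ deterministically, so the left-hand side is $0$ for $\delta>0$ and the inequality is trivial; I would dispatch this case in one line. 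One should also note $K\le|e|$ is used only to guarantee the hypothesis is non-vacuous and does not enter the estimate.

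I do not expect a genuine obstacle here—the statement is essentially a packaged Chernoff estimate—but the one point requiring a little care is reconciling the two regimes $\delta\le 1$ and $\delta>1$ under a single clean bound, since the standard multiplicative Chernoff inequality is usually stated separately for the two ranges. I would handle this by either (i) citing a unified form $\Probability{|X-\mu|\ge\delta\mu}\le 2\exp(-\delta^2\mu/(2+\delta))$ valid for all $\delta\ge 0$, or (ii) checking that for $\delta>1$ the tail $2\exp(-\delta\mu/3)$ is at most $2\exp(-\delta^2\mu/((2+\delta)\cdot 3))$, absorbing the constant $3$ into $c$. Either way the proof is short: state the Chernoff bound, plug in $p$, split on whether $p=1$, and simplify.
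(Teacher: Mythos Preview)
Your approach is correct and matches the paper's own proof, which simply says the claim follows immediately from the Chernoff bound for Bernoulli variables (\Cref{lemma:chernoff-binomial}). Note that the paper cites \Cref{lemma:chernoff-binomial}, not \Cref{lemma:chernoff-bounded}, and that lemma is already stated in the unified form $2\exp\bigl(-\delta^2\mu/(2+\delta)\bigr)$ valid for all $\delta\ge 0$, so your concern about reconciling the $\delta\le 1$ and $\delta>1$ regimes does not arise.
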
	

The claim follows immediately from the Chernoff bound (\Cref{lemma:chernoff-binomial}).

\begin{proof}[Proof of \Cref{lemma:deformation-additive}]
    Denote $\epsilon' = \epsilon/4$ and let $N= q\epsilon'^{-2}|e|^2$ for some $q\ge 1$.
    Let $e$ be some hyperedge with splitting function $g_e(S)=\min\left( \left|S\right|, K \right)$ or $g_e(S)=\min\left( \left|S\right|, \left| \bar{S} \right|, K \right)$.
    If either $K \le 100 \epsilon^{-2}\log |e|$ or $\epsilon^{-2}>|e|$, then $|e|\le O(\epsilon^{-2}(|e|/K)\log |e|)$ and we are done.

    Generate the hyperedges $\{e_i\}_{i=1}^N$ by sampling the vertices of $e$ independently with probability $p=c\epsilon'^{-2}K^{-1}\log |e|$ for some constant $100>c>0$ to be determined later.
    Note that $p<1$ since $K > 100 \epsilon^{-2}\log |e|$.
    The splitting functions of the new hyperedges are then given by $g_{e_i}(S) = \frac{1}{N}\min\left( \left|S\cap e_i \right|/p, K \right)$ or $g_{e_i}(S) = \frac{1}{N}\min\left( \left|S\cap e_i \right|/p,\left|\bar{S}\cap e_i \right|/p , K \right)$ in accordance with $g_e$.
    Observe that these functions are additive by factoring out $1/p$ from all terms in the minimum.

    We begin by showing that $e_i$ has small spread and small support.
    Observe that 
    \[
        \mu_{e_i} = \frac{\max_{S\subseteq e} g_i (S)}
        {\min_{T\subseteq e} g_i (T)} 
        \le Kp
        \le c\cdot\frac{\log |e|}{\epsilon'^2}
        .
    \]

    By Claim \ref{claim:subset-size-concentration} the cardinality of each hyperedge is no more than $2p|e|=2c\epsilon'^{-2}(|e|/K)\log |e|$ with probability $1-2|e|^{-c\delta\epsilon'^{-2} |e|/(3K)} \ge 1-2|e|^{-c\epsilon'^{-2}/3}$, where the inequality is by $K\le |e|$.
    Choosing $c\ge 20$ and applying the union bound over all $N$ hyperedges we obtain that the cardinality of all hyperedges is at most $2c\epsilon'^{-2}(|e|/K)\log |e|$ with probability at least $1-2|e|^{-c\epsilon'^{-2}/3}q\epsilon'^{-2}|e|^2 \ge 1-2|e|^{-5}$;
    the inequality holds as $\epsilon'^{-2}<|e|$, $q\ge 1$ and $\epsilon<1$ which implies $\epsilon'\le 1/4$.

    We now show that the sum $\sum_i g_{e_i}$ $(1+\epsilon)$-approximates $g_e$.
    Denote $h(S) = \sum_{i=1}^N g_{e_i}(S)$.
    We will show that for every $S\subseteq e$, with high probability,
    \[
        h(S)
        \in \left(1\pm\epsilon \right)\cdot g_e(S)
        .
    \]

    Fix some  $S\subseteq e$ such that $g_e(S)>0$ and assume without loss of generality $|S| \le |\bar{S}|$.
    The analysis of the expectation of $g_{e_i}$ is split into two claims for the monotone and symmetric cases, that we will prove shortly.
    \begin{claim}
        \label{claim:expectation-monotone-deformation}
        Let $g_{e}(S)=\min\left( \left|S\right|, K \right)$ and suppose $c>0$ is a sufficiently large constant.
        Then
        \[
            \Exp{N\cdot g_{e_i}(S)}{} \in (1\pm 2\epsilon') g_e(S)
            .
        \]
    \end{claim}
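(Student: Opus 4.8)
The plan is to reduce the claim to a one‑dimensional concentration statement about a binomial random variable, and then carry out a short case analysis on $|S|$. Set $X\eqdef|S\cap e_i|$, so that $X$ is a sum of $|S|$ i.i.d.\ Bernoulli$(p)$ variables with mean $\mu\eqdef\mathbb{E}[X]=p|S|$, and observe $N\cdot g_{e_i}(S)=\min(X/p,K)$. Writing $m\eqdef pK=c\epsilon'^{-2}\log|e|$, we have $g_e(S)=\min(|S|,K)=\tfrac1p\min(\mu,m)$, and since $c\ge20$ and $\epsilon'\le\tfrac14$ we get $m\ge320\log|e|$; this lower bound on $m$ (rather than the value of $K$ itself) is what the argument will exploit. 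So it suffices to prove $\mathbb{E}[\min(X/p,K)]\in(1\pm2\epsilon')\min(|S|,K)$.

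The upper bound is immediate from concavity: $a\mapsto\min(a,K)$ is concave, so Jensen's inequality gives $\mathbb{E}[\min(X/p,K)]\le\min(\mathbb{E}[X]/p,K)=g_e(S)\le(1+2\epsilon')g_e(S)$. For the lower bound I would use the two identities $\min(a,K)=a-(a-K)^+$ and $\min(a,K)=K-(K-a)^+$ to write
\[
  \mathbb{E}[\min(X/p,K)] \;=\; |S|-\tfrac1p\,\mathbb{E}\big[(X-m)^+\big] \;=\; K-\tfrac1p\,\mathbb{E}\big[(m-X)^+\big],
\]
and then split into the regime $|S|\le K$ (where $g_e(S)=|S|$ and I use the first expression) and the regime $|S|\ge K$ (where $g_e(S)=K$ and I use the second); each regime reduces to showing a truncated‑tail expectation is at most $2\epsilon'g_e(S)$.

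Take $|S|\le K$, where it suffices to show $\tfrac1p\mathbb{E}[(X-m)^+]\le2\epsilon'|S|$. If $|S|\le m$ then $X\le|S|\le m$ surely, so $(X-m)^+\equiv0$ and in fact the expectation equals $|S|=g_e(S)$ exactly. Otherwise $m<|S|\le K$, so $\mu=p|S|\le m$ and hence $(X-m)^+\le|X-\mu|$ pointwise; therefore $\mathbb{E}[(X-m)^+]\le\mathbb{E}|X-\mu|\le\sqrt{\operatorname{Var}(X)}\le\sqrt{\mu}$, giving $\tfrac1p\mathbb{E}[(X-m)^+]\le\sqrt{|S|/p}$, which by the choice of $p$ is at most $2\epsilon'|S|$ as long as $|S|\ge K/(4c\log|e|)$. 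In the remaining sub‑range $m<|S|<K/(4c\log|e|)$ one has $m/\mu>4c\log|e|\ge80$, so a Chernoff upper‑tail bound yields $\Pr[X\ge m]\le(e\mu/m)^m\le(e/80)^m$, which is polynomially small in $1/|e|$ since $m\ge320\log|e|$; combining with $(X-m)^+\le X\le|S|$ and $1/p\le K\le|e|$ makes $\tfrac1p\mathbb{E}[(X-m)^+]$ smaller than any fixed power of $1/|e|$, in particular below $2\epsilon'|S|$ because the lemma's nontrivial regime assumes $\epsilon\ge|e|^{-1/2}$.

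The regime $|S|\ge K$ is symmetric: now $\mu\ge m$, it suffices to show $\tfrac1p\mathbb{E}[(m-X)^+]\le2\epsilon'K$, and $(m-X)^+\le|X-\mu|$ again gives $\tfrac1p\mathbb{E}[(m-X)^+]\le\sqrt{|S|/p}$, which is at most $2\epsilon'K$ provided $|S|\le4cK\log|e|$; for $|S|>4cK\log|e|$ we have $\mu>4c\log|e|\cdot m\ge80m$, so $m\le\mu/80$ and the Chernoff lower tail gives $\Pr[X<m]\le e^{-\mu/8}$, making $\tfrac1p\mathbb{E}[(m-X)^+]\le K e^{-\mu/8}$ negligible next to $2\epsilon'K$. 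The step I expect to be the main obstacle is the threshold book‑keeping: the cut‑offs ($|S|$ versus $m$, versus $K/(4c\log|e|)$, versus $4cK\log|e|$) must be lined up so that the variance bound and the large‑deviation bound jointly cover the entire range $1\le|S|\le|e|$, and constants must be tracked carefully enough to land with a $2\epsilon'$ slack rather than a worse one. A helpful preliminary observation---the reason the small‑$|S|$ case is benign rather than an obstacle---is that $p$ is chosen precisely so that $m=pK\gg1$, whence $\min(X/p,K)=X/p$ identically whenever $X\le m$.
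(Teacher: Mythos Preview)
Your argument is correct. Both you and the paper obtain the upper bound by Jensen, but for the lower bound you take a genuinely different route. The paper splits at $|S|=K/2$ and in each half applies a Chernoff bound directly (bounding $\Pr[|S_i|\ge pK]$ when $|S|<K/2$, and $\Pr[|S_i|<(1-\epsilon')p|S|]$ when $|S|>K/2$), then reads off the expectation. You instead split at $|S|=K$, use the exact identities $\min(a,K)=a-(a-K)^+$ and $\min(a,K)=K-(K-a)^+$ to isolate a single truncated-tail term, and control that term primarily by the second-moment bound $\mathbb{E}|X-\mu|\le\sqrt{\operatorname{Var}X}$, reserving Chernoff only for the extreme sub-ranges where the variance bound is too weak. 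Your decomposition is arguably cleaner in that it makes the error term $(X-m)^+$ (respectively $(m-X)^+$) explicit, and the observation that $|S|\le m$ forces $(X-m)^+\equiv 0$ disposes of small $|S|$ immediately; the price is a finer case split (five sub-ranges versus the paper's two) and the need to check that the variance and Chernoff regions together tile the whole range, which you correctly identify as the main bookkeeping burden. Both approaches rely on the same quantitative driver, namely that $m=pK=c\epsilon'^{-2}\log|e|$ is large, and both land with the required $2\epsilon'$ slack.
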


    \begin{claim}
        \label{claim:expectation-symmmetric-deformation}
        Let $g_e(S)=\min\left( \left|S\right|, \left| \bar{S} \right| , K \right)$ and suppose $c>0$ is a sufficiently large constant.
        Then 
        \[
            \Exp{N\cdot g_{e_i}(S)}{} \in (1\pm 2\epsilon') g_e(S)
            .
        \]
    \end{claim}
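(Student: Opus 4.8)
The plan is to repeat the computation behind \Cref{claim:expectation-monotone-deformation} while keeping track of the extra argument $|\bar S\cap e_i|/p$ of the minimum. Set $X\coloneqq|S\cap e_i|$ and $Y\coloneqq|\bar S\cap e_i|$; since $S$ and $\bar S=e\setminus S$ are disjoint, $X\sim\mathrm{Bin}(|S|,p)$ and $Y\sim\mathrm{Bin}(|\bar S|,p)$ are \emph{independent}, with $\Exp{X}{}=p|S|$, $\Exp{Y}{}=p|\bar S|$, and variances at most $p|S|$ and $p|\bar S|$ respectively. By the definition of $g_{e_i}$ we have $N\cdot g_{e_i}(S)=\min(X/p,\,Y/p,\,K)$, and since we assumed $|S|\le|\bar S|$ the target value is $g_e(S)=\min(|S|,|\bar S|,K)=\min(|S|,K)$. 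The upper bound $\Exp{N\cdot g_{e_i}(S)}{}\le(1+2\epsilon')g_e(S)$ is free: the map $(a,b)\mapsto\min(a,b,K)$ is a minimum of affine functions, hence concave, so Jensen's inequality gives $\Exp{\min(X/p,Y/p,K)}{}\le\min(\Exp{X/p}{},\Exp{Y/p}{},K)=g_e(S)$.

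For the lower bound I would use throughout that $|\bar S|\ge|e|/2$, so (as $K\le|e|$, whence $pK=c\,\epsilon'^{-2}\log|e|\le p|e|$) the mean $\Exp{Y}{}=p|\bar S|\ge p|e|/2\ge\tfrac12 c\,\epsilon'^{-2}\log|e|$ is large, and split into two regimes according to the size of the cut. \emph{Large cut:} $g_e(S)\ge\epsilon'^{-1}\sqrt{|e|/p}$. Here I use that $(a,b)\mapsto\min(a,b,K)$ is $1$-Lipschitz (in $\ell^\infty$), so $\bigl|N\cdot g_{e_i}(S)-g_e(S)\bigr|\le|X/p-|S||+|Y/p-|\bar S||$; taking expectations and using $\Exp{|Z-\Exp{Z}{}|}{}\le\sqrt{\mathrm{Var}(Z)}$,
\[
  \bigl|\Exp{N\cdot g_{e_i}(S)}{}-g_e(S)\bigr|\le\sqrt{|S|/p}+\sqrt{|\bar S|/p}\le 2\sqrt{|e|/p}\le 2\epsilon'\,g_e(S),
\]
which is exactly the claimed bound.

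\emph{Small cut:} $g_e(S)<\epsilon'^{-1}\sqrt{|e|/p}$. A short calculation (using $K\le|e|$ and $p|e|=\Omega(\epsilon'^{-2}\log|e|)$) shows that then either $g_e(S)=|S|$ with $|S|\ll|\bar S|$, or $g_e(S)=K$ with $|S|,|\bar S|\gg K$. In the first case write $\min(X/p,Y/p,K)\ge\min(X/p,K)-\bigl(\min(X/p,K)-Y/p\bigr)^+$, apply (the proof of) \Cref{claim:expectation-monotone-deformation} to the monotone additive function $\min(\cdot,K)$ and the very same sampled set $e_i$ to get $\Exp{\min(X/p,K)}{}\ge(1-\epsilon')g_e(S)$, and bound the ``defect'' $\Exp{(\min(X/p,K)-Y/p)^+}{}$: it is at most $K\cdot\Pr[\,Y<\min(X,pK)\,]$, and since $\Exp{Y}{}=p|\bar S|$ dominates the typical value of $\min(X,pK)$ in this regime, a Chernoff bound (\Cref{lemma:chernoff-binomial}) forces $\Pr[\,Y<\min(X,pK)\,]\le|e|^{-\Omega(c\epsilon'^{-2})}$, which for $c$ a sufficiently large constant is at most $\epsilon'g_e(S)/K$. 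In the second case ($g_e(S)=K$) one symmetrically writes $\min(X/p,Y/p,K)\ge K-(K-X/p)^+-(K-Y/p)^+$ and bounds each term by $\epsilon'K$ using $\Exp{X}{},\Exp{Y}{}\ge pK$ (a second-moment estimate when the relevant mean is within a constant factor of $pK$, a Chernoff bound otherwise). Combining the two estimates yields $\Exp{N\cdot g_{e_i}(S)}{}\ge(1-2\epsilon')g_e(S)$.

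The step I expect to be the main obstacle is the case analysis inside the small-cut regime when the two sides are comparable, where one cannot cleanly reduce to the monotone case because $\Pr[Y<X]$ need not be small. There one instead estimates $\Exp{(X/p-Y/p)^+}{}=\tfrac1p\Exp{(X-Y)^+}{}$ directly, via $\tfrac1p\Exp{(X-Y)^+}{}\le\tfrac1{2p}\sqrt{\mathrm{Var}(X-Y)}=\tfrac12\sqrt{|e|/p}$, or a Bernstein tail integral when $\Exp{X}{}=p|S|$ is itself small (so $X$ is essentially $\{0,1\}$-valued and the second-moment bound is too crude), and checks that it is $\le\epsilon'g_e(S)$ — which is precisely why the threshold $g_e(S)\ge\epsilon'^{-1}\sqrt{|e|/p}$ was chosen for the large-cut regime. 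The delicate bookkeeping is to fit the three estimates together with a single absolute constant $c$, so that $p=c\,\epsilon'^{-2}K^{-1}\log|e|<1$ while all error terms of the form $|e|^{-\Omega(c\epsilon'^{-2})}$ remain negligible against $\epsilon'g_e(S)$; this is handled exactly as in the proof of the monotone claim.
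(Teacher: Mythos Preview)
Your upper bound via Jensen and the broad plan (treat $Y=|\bar S\cap e_i|$ as an extra argument and lean on the monotone claim) match the paper, but your case split and the tools in each case differ. The paper reuses the threshold from \Cref{claim:expectation-monotone-deformation}, splitting on $|S|<K/2$ versus $|S|\ge K/2$, and uses Chernoff throughout --- no Lipschitz or second-moment estimates. When $|S|<K/2$ one has $|\bar S|=|e|-|S|>|e|-K/2>3K/2$ (using $K<|e|/2$), so $\Pr[\,|\bar S_i|<pK\,]\le 2|e|^{-\Omega(c)}$; conditioning on $|\bar S_i|\ge pK$, the computation becomes line-for-line the monotone one. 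When $|S|\ge K/2$, both $p|S|$ and $p|\bar S|$ are at least $pK/2=\Omega(c\epsilon'^{-2}\log|e|)$, and a Chernoff bound on each of $X,Y$ finishes. Your large-cut regime via the $1$-Lipschitz/variance argument is a pleasant alternative the paper does not use.

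The one real issue is your ``main obstacle'' paragraph, which is circular: you are in the small-cut regime $g_e(S)<\epsilon'^{-1}\sqrt{|e|/p}$, bound the defect by $\tfrac12\sqrt{|e|/p}$, and then note this is $\le\epsilon' g_e(S)$ precisely in the \emph{large}-cut regime, so nothing closes. In fact the case you worry about cannot arise, and this is already implicit in your own ``short calculation'': in the small-cut regime with $g_e(S)=|S|<K$ one has $|S|<\sqrt{|e|K/(c\log|e|)}<|e|/\sqrt{2c\log|e|}$ (as $K<|e|/2$), hence $|\bar S|>|e|\bigl(1-1/\sqrt{2c\log|e|}\bigr)>\tfrac32K$ once $c\log|e|\ge 8$, so $\Pr[Y<pK]$ is Chernoff-small and the defect bound $K\cdot\Pr[Y<\min(X,pK)]\le K\cdot\Pr[Y<pK]\le\epsilon'|S|$ goes through directly. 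The paper sidesteps this detour by choosing the threshold $|S|=K/2$, which makes $|\bar S|>3K/2$ immediate.
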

    By Claims \ref{claim:expectation-monotone-deformation} and \ref{claim:expectation-symmmetric-deformation},  $\Exp{h(S)}{}\in (1\pm 2\epsilon')g_e(S)$.
    Now using the Chernoff bound for bounded random variables (\Cref{lemma:chernoff-bounded}),
    \[
        \Probability{h(S) \not \in  (1\pm \epsilon')\Exp{h(S)}{}}
        \le 2\exp
            \left(
                -\frac{\epsilon'^2 \Exp{h(S)}{}}{3r}
            \right)
        ,
    \]
    where $r$ is the maximum possible contribution of a single $g_{e_i}(S)$ to the sum.
    By the definition of $g_{e_i}$ we have $r=K/N\le |e|/N=q^{-1}\epsilon'^{2}/|e|$.
    Note also that $g_e(S) \ge 1$ and hence $\Exp{h(S)}{} \ge 1-2\epsilon'$.
    Plugging these back in we find that
    \[
        \Probability{h(S) \not \in  (1\pm \epsilon')\Exp{h(S)}{}}
        \le 2\exp(-q(1-2\epsilon') |e|/3)
        .
    \]
    Choosing $q>12$, and noting $1-2\epsilon'>1/2$, gives us that with probability at least $1-2^{-2|e|}$, 
    \[
        h(S) 
        \in (1\pm \epsilon')\cdot\left((1\pm 2\epsilon')g_e(S) \right) 
        \subseteq \left(1\pm 4\epsilon'\right) g_e(S)
        .
    \]
    
    Using a union bound over all $2^{|e|}$ possible cuts of $e$ we find that $h(S)$ is a quality $1+\epsilon$ approximation for $g_e(S)$ with probability at least $1-2^{-|e|}$.
    Finally, observe that by the union bound we have that both the cardinality of all hyperedges is bounded and the quality of the approximation is $1+\epsilon$ simultaneously with probability at least $1-2|e|^{-4}$.

    We now return to prove the claim about the expectation of $h(S)$ when $g_e$ is monotone.
    The proof for the symmetric case (\Cref{claim:expectation-symmmetric-deformation}) is similar to \Cref{claim:expectation-monotone-deformation} and appears in \Cref{appendix:proof-symmetric-additive}.

    \begin{proof}[Proof of Claim \ref{claim:expectation-monotone-deformation}]
    Let $S_i = S\cap e_i$ be the intersection between $S$ and the sampled hyperedge.
    Observe that the function $\min(x,K)$ is concave, and hence by Jensen's inequality
    \[
        \Exp{\min(\left| S_i \right|/p,K)}{} 
        \le \min(\Exp{\left| S_i \right|/p}{},K)
        =\min(\left|S\right|,K)
        = g_e(S)
        .
    \]
    It remains to prove that $\Exp{Ng_{e_i}(S)}{}\ge (1-2\epsilon')g_e(S)$.
    We split the analysis into two cases, when $|S|< K/2$ and its complement.
    Starting with the case when $|S|$ is small, observe that setting $\delta=K/|S|>2$ we have by Claim \ref{claim:subset-size-concentration} that
    \[
        \Probability{\left|S_i\right| \ge pK} 
        \le 2|e|^{-\frac{c\delta^2|S|}{(2+\delta)K\epsilon'^2}}
        \le 2|e|^{-\frac{c\delta|S|}{2K}}
        =2|e|^{-\frac{c}{2}}
        ,
    \]
    where the second inequality is by $\delta/(2+\delta)>1/2$ when $\delta>2$.
    Choosing $c>5$ we find that the probability is at most $2|e|^{-5/2}$.
    Now rewriting the expectation of $f(S)$ we have
    \begin{align*}
        \Exp{Ng_{e_i}(S)}{} 
        &= \sum_{j=0}^{|S|} 
        \min\left( \frac{j}{p},K \right)
        \Probability{\left|S_i\right| = j} 
        \ge 
        \sum_{j=0}^{\lfloor pK \rfloor}
        \frac{j}{p} 
        \Probability{\left|S_i\right| = j}
        \\
        &\ge \sum_{j=0}^{|S|} \frac{j}{p}\Probability{\left|S_i\right| = j}
        - \sum_{j=\lfloor pK \rfloor +1}^{|S|} \frac{j}{p}\Probability{\left|S_i\right| = j}
        ,
    \end{align*}
    where the first inequality is by dropping all elements with $j\ge \lfloor pK \rfloor +1$ and the second is by adding and subtracting the same terms for $j\ge \lfloor pK \rfloor +1$.
    Since $j\le|S|$ and $\sum_{j=\lfloor pK \rfloor +1}^{|S|} \Probability{\left|S_i\right| = j} = \Probability{|S_i| \ge \lfloor pK \rfloor +1}$ we find
    \begin{align*}
        \Exp{f(S)}{} 
        &\ge \frac{\Exp{\left|S_i \right|}{}}{p}
        - \frac{|S|}{p} \Probability{\left|S_i\right| \ge pK}
        \ge |S|
        - \frac{|S|}{p|e|^{5/2}} 
        .
    \end{align*}
    Observe that $|S|/p = |S|K\epsilon'^2/(c\log |e|)<|e||S|$ by $K\le |e|$ and $c>1$.
    Therefore, we find
    \[
        \Exp{Ng_{e_i}(S)}{} 
        \ge |S| - \frac{|e|^2}{10|e|^{5/2}} 
        \ge |S| - \frac{1}{10|e|^{1/2}}
        \ge (1-\epsilon')g_e(S)
        ,
    \]
    where the last inequality is by $|S| \ge 1$ for all nontrivial cuts and $\epsilon^{-2}<|e|$.

    Now we turn to the case $|S|> K/2$.
    Observe that for $Ng_{e_i}(S)=\min\left( |S_i|/p, K\right)<(1-\epsilon')g_e(S)$ we must have $|S_i| < (1-\epsilon')p|S|$.
    By \Cref{claim:subset-size-concentration} this event happens with probability at most $2|e|^{-\frac{c\epsilon'^2|S|}{3K\epsilon'^2}} \le 2|e|^{-\frac{c}{6}}$.
    Therefore,
    \[
        \Exp{Ng_{e_i}(S)}{} 
        \ge \left( 1- 2|e|^{-\frac{c}{6}}\right)(1-\epsilon')p|S|/p
        \ge (1-2\epsilon')|S|
        =(1-2\epsilon)g_e(S)
        ,
    \]
    where the last inequality is by $c>9$ and $\epsilon'>1/|e|$.
    
    \end{proof}
This concludes the proof of \Cref{lemma:deformation-additive}.
\end{proof}

\section{Encoding-Size Lower Bounds}
\label{sec:encoding-size-lower-bounds}
This section shows that for additive splitting functions, $\Omega(n^2)$ bits are needed to represent sparsifiers that are reweighted-subgraphs, proving \Cref{theorem:bit-lower-bound-additive-reweighted-subgraph}.
We also prove two results on the encoding size of directed hypergraphs (see \Cref{subsec:directed-hypergraph-encoding-size}): 
(1) reweighted-subgraph sparsifiers for directed hypergraphs require encoding size of $\Omega(n^3/\epsilon)$ bits,
and (2) any data structure that stores the cuts of a directed hypergraph requires $\Omega(n^2/\epsilon)$ bits.

To show \Cref{theorem:bit-lower-bound-additive-reweighted-subgraph}, we actually prove a stronger version of the theorem for all cardinality-based splitting functions with $\mu_e <n/3$.
\begin{lemma}[Stronger Version of \Cref{theorem:bit-lower-bound-additive-reweighted-subgraph}]
    \label{lemma:bit-lower-bound-reweighted-subgraph-cardinality}
    Let $\hat{g}:[n]\to \R_+$ define a cardinality-based splitting function $g_e$ with spread $\mu_e<n/3$.
    There exists a family of hypergraphs $\mathcal{H}$ with splitting function $g_e(S)=\hat{g}(|S|)$ for all $e\in E$, such that encoding a reweighted subgraph $(1+\epsilon)$-sparsifier for an input $H\in \mathcal{H}$ requires $\Omega(n^2)$ bits.  
\end{lemma}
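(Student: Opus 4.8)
The plan is to prove the lemma by an information-theoretic counting argument. I will build a family $\mathcal{H}$ of hypergraphs on a fixed set of $n$ vertices, with $|\mathcal{H}|=2^{\Omega(n^2)}$, such that no single hypergraph can serve simultaneously as a reweighted-subgraph $(1+\epsilon)$-sparsifier of two distinct members of $\mathcal{H}$. Any scheme that maps an input $H\in\mathcal{H}$ to a bit-string encoding (decodable back) of such a sparsifier is then forced to be injective on $\mathcal{H}$, so some input needs $\ge\log_2|\mathcal{H}|=\Omega(n^2)$ bits.

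\textbf{Construction of $\mathcal{H}$.} Partition the $n$ vertices into three blocks $V,U,W$ of size $n/3$ each, and fix $m=\Theta(n)$ hyperedge slots. A member of $\mathcal{H}$ is determined by an incidence matrix $A\in\{0,1\}^{m\times|V|}$: hyperedge $e_j$ is $R_j\cup U_j\cup W_0$, where $R_j=\{v\in V: A_{j,v}=1\}\subseteq V$ carries the $\Theta(n^2)$ bits of entropy, $U_j\subseteq U$ is (an embedding of) the $j$-th Hadamard codeword and serves to ``address'' slot $j$ by a cut query, and $W_0\subseteq W$ is a fixed gadget common to all hyperedges, built to make $H$ unsparsifiable. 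All splitting functions are $g_{e_j}(S)=\hat g(|S\cap e_j|)$ for the given cardinality-based $\hat g$; the block sizes $n/3$ and the hypothesis $\mu_e<n/3$ are chosen together so that $|e_j|\le n$ and so that on every cut used below $g_{e_j}$ is evaluated in a regime bounded away from both $0$ and its saturation value — this is essentially the only property of $\hat g$ the argument needs, so the construction adapts uniformly to all cardinality-based $\hat g$ of spread $<n/3$ (in particular to additive $\hat g$ with $1\le K\le n/3$).

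\textbf{Two ingredients.} First, a \emph{forced-retention} lemma: every reweighted-subgraph $(1+\epsilon)$-sparsifier $H'=(V\cup U\cup W,E',\{s_eg_e\})$ of $H\in\mathcal{H}$ must retain all $m$ hyperedges, each rescaled by $s_{e_j}\in 1\pm O(\epsilon)$. This is proved by exhibiting, for each $j$, a cut $S_j$ at which $e_j$ contributes a constant fraction of $\mintcut_H(S_j)$ while every other hyperedge contributes a negligible amount: give $W_0$ the structure of $m$ nearly vertex-disjoint ``petals'' in $W$, one per slot, with $S_j$ splitting only $e_j$'s petal, and use the spread bound to lower-bound $g_{e_j}(S_j\cap e_j)$. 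Comparing $\mintcut_{H'}(S_j)$ with $\mintcut_H(S_j)$ then forces $e_j\in E'$ and pins $s_{e_j}$, so $E'=E$. Second, a \emph{recovery} step: from the cut function of any $(1+\epsilon)$-sparsifier one can read off $A$ exactly. Here the Hadamard structure on $U$ is used — choosing $S$ that splits exactly the codeword $U_j$ together with a chosen subset $T\subseteq V$ makes $\mintcut_H(S)$ essentially equal to $\hat g(|T\cap R_j|)$, so cut queries recover $|T\cap R_j|$ for all $T$, hence $R_j$, hence $A$; the Hadamard encoding supplies the robustness needed to decode exact bits from the $1\pm\epsilon$-noisy values (a majority step over the addressing queries). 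Combining the two: an encoding of any reweighted-subgraph $(1+\epsilon)$-sparsifier of $H$ determines $E'=E$ and hence the matrix $A$, which ranges over $2^{\Omega(n^2)}$ values; so the encoding has $\Omega(n^2)$ bits.

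\textbf{Main obstacle.} I expect the crux to be designing the $W_0$-petal gadget and the $U$-Hadamard gadget so that a single fixed cardinality-based $\hat g$ — with no per-hyperedge freedom beyond the spread constraint — simultaneously gives (a) forced retention of every hyperedge in any reweighted-subgraph sparsifier and (b) per-slot isolation by cut queries, all while keeping every evaluation of $\hat g$ in the nonzero, non-saturated regime. This is exactly where the hypothesis $\mu_e<n/3$ is spent, together with careful bookkeeping of $|R_j|,|U_j|,|W_0|$ and of how many hyperedges a single cut can touch. The claimed extensions to the (directed) all-or-nothing functions should then follow by substituting the appropriate gadget sizes.
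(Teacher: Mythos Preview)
Your high-level architecture matches the paper's: three vertex blocks, Hadamard structure on $U$, random bits in $V$, an unsparsifiability gadget in $W$, forced retention, then a recovery argument. But the recovery step as you describe it has a real gap.

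You claim that a cut $S=U_j\cup T$ makes $\mintcut_H(S)$ ``essentially equal to $\hat g(|T\cap R_j|)$'' and that the Hadamard code then supplies error-correction against the $1\pm\epsilon$ noise via a majority step. Neither part works. Every hyperedge $e_k$ meets both $U$ and $V$, so $\mintcut_H(S)=\sum_k \hat g(|U_j\cap U_k|+|T\cap R_k|)$ mixes all slots; no isolation of slot $j$ occurs at this level. More importantly, the recovery is not from $\epsilon$-noisy approximations of $H$'s cuts at all: once forced retention gives $E'=E$, the sparsifier $H'$ already carries the \emph{same} incidence matrix as $H$, and the task is to read it off from $H'$'s \emph{exact} cut values, where the only unknowns are the scaling factors $w'_e$. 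No majority decoding enters.

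The mechanism you are missing hinges on the discrete gradient $\Delta_i=\hat g(i+1)-\hat g(i)$. The spread bound $\mu_e<n/3$ forces some smallest $t<n/3$ with $\Delta_t<\Delta_0$; take $d\approx t$ so that $\Delta_{d/2}=\Delta_0>\Delta_d$. The Hadamard codewords are chosen of weight $d$ precisely so that $|P_j\cap P_k|\in\{0,d/2\}$ for $k\ne j$. Then
\[
\beta_{ij}\coloneqq\mintcut_{H'}(P_j\cup\{v_i\})-\mintcut_{H'}(P_j)=\sum_{e\ni v_i}\Delta_{|P_j\cap e|}\,w'_e,
\]
and every summand equals $\Delta_0 w'_e$ except possibly the $e_j$-term, which drops to $\Delta_d w'_{e_j}$ exactly when $v_i\in e_j$. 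Comparing $\beta_{ij}$ with $\mintcut_{H'}(\{v_i\})=\sum_{e\ni v_i}\Delta_0 w'_e$ recovers the bit $B_{ij}$ exactly, regardless of the unknown $w'_e$. This gradient comparison is the crux; your statement that ``being in the nonzero, non-saturated regime'' is essentially the only property of $\hat g$ needed is not correct. (A minor point: your $W$-gadget is described both as ``common to all hyperedges'' and as per-slot petals; the paper simply gives each $e_j$ its own singleton $w_j\in W$, and the cut $\{w_j\}$ forces $e_j\in E'$.)
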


\subsection{Proof of Lemma \ref{lemma:bit-lower-bound-reweighted-subgraph-cardinality}}
Denote the \emph{gradient} of a function $f:[n]\to \R_+$ be defined as $\Delta_i(f) \coloneqq f(i+1)-f(i)$, we will omit the function $f$ when it is clear from the context.
The proof follows immediately from the proceeding technical claim.

\begin{claim}
    Let $\hat{g}:[n]\to \R_+$ and let $t$ be the smallest integer such that $\Delta_t<\Delta_0$.
    If $t<n/3$, then there exists a family of hypergraphs $\mathcal{H}$, with splitting function $g_e(S)=\hat{g}(|S\cap e|)$ for all $e\in E$, such that encoding a reweighted-subgraph $(1+\epsilon)$-sparsifier for an input $H\in \mathcal{H}$ requires $\Omega(n^2)$ bits.
    \label{claim:bit-lower-bound-reweighted-subgraph-undirected}
\end{claim}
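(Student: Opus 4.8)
The plan is to exhibit a family $\mathcal{H}$ of $2^{\Omega(n^2)}$ hypergraphs, all on the same $n$-vertex set and all using the splitting function $g_e(S)=\hat g(|S\cap e|)$, such that every reweighted-subgraph $(1+\epsilon)$-sparsifier of a member $H\in\mathcal{H}$ must retain \emph{all} of $H$'s hyperedges (with near-unit scalings). Since the retained hyperedges then reveal the $\Omega(n^2)$ random bits indexing $\mathcal{H}$, any encoding that lets one recover such a sparsifier must use $\Omega(n^2)$ bits in the worst case. I would partition the vertices into three parts $V,U,W$ of size $\Theta(n)$ and take $m=\Theta(n)$ hyperedges $e_j=R_j\cup C_j\cup w_j$: $R_j\subseteq V$ is a \emph{free} uniformly random subset (the entropy, $\sum_j|R_j|=\Theta(n^2)$ bits); $\{w_j\}_j$ are the hyperedges of a fixed \emph{unsparsifiable skeleton} on $W$ in which each $w_j$ is essential, witnessed by some cut $S_j\subseteq W$; and $C_j\subseteq U$ is the $j$-th word of a code on $U$ of weight about $t$ with small pairwise intersections (a Hadamard code when $t=\Theta(n)$, otherwise a short constant-weight code, e.g.\ random weight-$t$ subsets), included so that cut queries can single out $e_j$. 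Distinct choices of $(R_1,\dots,R_m)$ give distinct hyperedge sets, so $|\mathcal{H}|=2^{\Omega(n^2)}$.

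Unsparsifiability should then transfer from the skeleton essentially for free: each witnessing cut $S_j$ lies inside $W$, so $|S_j\cap e_j|=|S_j\cap w_j|$ and, more generally, every cut confined to $W$ has the same value in $H$ as in the skeleton; hence no reweighted-subgraph sparsifier of $H$ may drop or badly rescale any $e_j$. From a retained hyperedge $e_j=R_j\cup C_j\cup w_j$ one reads off $R_j$, so a stored sparsifier determines $(R_1,\dots,R_m)$, and a counting argument over $\mathcal{H}$ yields the claimed bound. A more robust route --- needed against arbitrary (non-subgraph) encodings, hence for the directed-hypergraph variant of \Cref{theorem:directed-hypergraph-encoding-lower-bound} --- is to recover each bit $\indic{v\in R_j}$ directly from cut values: picking $S$ (built from $C_j$ and part of the skeleton) with $S\cap V=\emptyset$ and $|S\cap e_j|=t$, the local difference $\mintcut_H(S\cup\{v\})-\mintcut_H(S)$ works out to $\Delta_0\deg_R(v)-(\Delta_0-\Delta_t)\indic{v\in R_j}$ whenever the code distance keeps every other $e_f$ below its threshold, and since $\deg_R(v)$ is recoverable from cuts on which $\hat g$ is linear and $\Delta_0-\Delta_t>0$, one solves for the bit.

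The hypothesis $t<n/3$ enters at two points. First, because $\hat g$ is linear on $[0,t]$ and only bends at $t$ (that is the definition of $t$), the nonlinearity that lets cut values ``see'' the random data is reachable only if some cut can push $|S\cap e_j|$ to the threshold while staying within one part; with $|V|\approx n/3$ this forces $t<n/3$. Second, the three parts $V,U,W$ and the weight-$t$ code on $U$ all have to coexist within $n$ vertices, and balancing their sizes is what produces the constant $1/3$ (the exact value would fall out of the construction). I would also carry out the two shapes of $\hat g$ separately --- the monotone $\hat g(i)=\min(i,K)$, for which $W_e=\{\emptyset\}$, and the symmetric $\hat g(i)=\min(i,n-i,K)$, for which $W_e=\{\emptyset,e\}$ and one must be careful with the trivial cuts when invoking the spread.

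I expect the main obstacle to be constructing the unsparsifiable skeleton and making the witnessing cuts \emph{local}. One must produce, using only $\Delta_t<\Delta_0$ (not strict concavity everywhere, nor monotonicity), a $\Theta(n)$-vertex hypergraph with $\Theta(n)$ hyperedges of splitting function $\hat g$ in which no sub-multiset, even with arbitrary positive rescaling, $(1+\epsilon)$-approximates all cuts. Moreover, since the sparsifier is allowed to rescale the hyperedges it keeps, being ``essential'' means some cut's value is off by a $1\pm\Omega(\epsilon)$ factor when $e_j$ is removed, which requires the witnessing cuts to activate only $O(\epsilon^{-1})$ hyperedges --- presumably obtained by $\epsilon^{-1}$-fold amplification of the gadget, in the spirit of the directed-hypergraph lower bound of \cite{OST22} --- and this amplification has to be reconciled with the vertex budget and with the code on $U$.
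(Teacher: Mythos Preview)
Your outline is the paper's proof: partition into $V,U,W$; build $\Theta(n)$ hyperedges, each a union of a random subset of $V$, a codeword in $U$, and a skeleton piece in $W$; argue every hyperedge must survive in any reweighted-subgraph sparsifier; then recover the random bits. Your recovery formula via the local difference $\mintcut(S\cup\{v\})-\mintcut(S)=\sum_{e\ni v}\Delta_{|S\cap e|}\,w'_e$, compared against $\mintcut(\{v\})=\Delta_0\sum_{e\ni v}w'_e$, is exactly what the paper uses.

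The obstacle you flag, though, dissolves. The skeleton is trivial: set $|W|=m$ and let the $j$th piece be the \emph{single vertex} $w_j$, all distinct. The cut $\{w_j\}$ then has value $\hat g(1)$ contributed by $e_j$ alone; dropping $e_j$ sends it to $0$, and rescaling by $s_{e_j}$ gives $s_{e_j}\hat g(1)$, forcing $s_{e_j}\in[1-\epsilon,1+\epsilon]$. Each witnessing cut activates exactly one hyperedge, so no $\epsilon^{-1}$-amplification or locality argument is needed, and the construction is oblivious to whether $\hat g$ is monotone or symmetric --- there is no case split. For the code on $U$, the paper does not switch constructions by the size of $t$: it always takes shifted copies of a length-$2d$ Hadamard code with $d$ the least power of two $\ge t$, so that $|P_j\cap P_k|\in\{0,d/2\}$ for $j\ne k$ while $|P_j|=d$; since $d/2<t\le d$, the defining property of $t$ gives $\Delta_{d/2}=\Delta_0>\Delta_d$, which is precisely the gap your recovery step needs. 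Random constant-weight subsets would blur this dichotomy and complicate the comparison.
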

Note that $\Delta_i$ satisfies the following easy property.
\begin{observation}
    \label{observation:splitting-function-non-increasing}
    Let $g_e:2^e\to\R_+$ be a cardinality based splitting function which is defined by the function $\hat{g}:[e]\to \R_+$.
    Then the series defined by $\Delta_0,\Delta_1,\ldots,\Delta_{|e|-1}$ is non-increasing.
\end{observation}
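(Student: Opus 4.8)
The plan is to derive \Cref{observation:splitting-function-non-increasing} from the standard equivalence between submodularity and diminishing marginal returns, specialized to the cardinality-based case. First I would record the marginal-return form of submodularity: for the splitting function $g_e$, applying the inequality $g_e(A'\cup B')+g_e(A'\cap B')\le g_e(A')+g_e(B')$ with $A'=A\cup\{x\}$ and $B'=B$ for any nested pair $A\subseteq B\subseteq e$ and any $x\in e\setminus B$ (so that $A'\cap B'=A$ and $A'\cup B'=B\cup\{x\}$) yields
\[
  g_e(A\cup\{x\})-g_e(A)\ \ge\ g_e(B\cup\{x\})-g_e(B).
\]

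Next I would instantiate this with concrete sets of prescribed sizes. Fix an index $j$ with $0\le j\le |e|-2$, let $A\subseteq e$ be any set with $|A|=j$, let $B=A\cup\{y\}$ for some $y\in e\setminus A$ (hence $|B|=j+1$), and choose $x\in e\setminus B$; such an $x$ exists exactly because $|B|=j+1\le |e|-1$. Since $g_e$ is cardinality-based, $g_e(A)=\hat g(j)$, $g_e(A\cup\{x\})=\hat g(j+1)$, $g_e(B)=\hat g(j+1)$, and $g_e(B\cup\{x\})=\hat g(j+2)$. Substituting these into the displayed inequality gives $\hat g(j+1)-\hat g(j)\ge \hat g(j+2)-\hat g(j+1)$, i.e. $\Delta_j\ge \Delta_{j+1}$.

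Since this holds for every $j\in\{0,1,\dots,|e|-2\}$, the sequence $\Delta_0,\Delta_1,\dots,\Delta_{|e|-1}$ is non-increasing, which is the assertion. I do not expect any real obstacle here; the single point needing care is the nonemptiness of $e\setminus B$, which is precisely why the monotonicity is only between consecutive gradients $\Delta_j$ and $\Delta_{j+1}$ with $j+1\le |e|-1$, matching the range claimed in the statement.
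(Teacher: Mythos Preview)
Your argument is correct and follows essentially the same approach as the paper: both use the diminishing-marginal-returns form of submodularity applied to nested sets together with the fact that $g_e$ depends only on cardinality. The paper's proof is slightly terser (it compares arbitrary $j<i$ in one step rather than consecutive indices) and invokes diminishing returns directly, whereas you explicitly derive it from the lattice inequality and are a bit more careful about the existence of $x\in e\setminus B$.
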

\begin{proof}
    Let $S\subseteq e$ be a set of size $i$ and $T\subset S$ be a set of size $j<i$ and $x\in e\setminus\left\{ S \right\}$.
    Then by the submodularity of $g_e$ we have
    \[
        \Delta_i = 
        g_e(S\cup \left\{ x \right\}) - g_e(S) 
        \le g_e(T\cup \left\{ x \right\}) - g_e(T)
        = \Delta_j
        .
    \]
\end{proof}

\Cref{lemma:bit-lower-bound-reweighted-subgraph-cardinality} follows immediately using \Cref{observation:splitting-function-non-increasing} and \Cref{claim:bit-lower-bound-reweighted-subgraph-undirected}.
\begin{proof}
    Assume without loss of generality that $\hat{g}(1)=1$, then $\Delta_0 =1$.
    If there exists some $k<n/3$ such that $\Delta_k \le 0$ we can apply \Cref{claim:bit-lower-bound-reweighted-subgraph-undirected} for the first $k$ with $\Delta_k < \Delta_0$.
    Since $\Delta_0 =1$ and $\Delta_i$ are non-increasing we have that $\sum_{i=0}^{\mu_e} \Delta_i \le \mu_e$ and hence the average $\Delta_i$ is at most $\mu_e/(\mu_e+1)<1$.
    Therefore, there must exist at least one $\Delta_i<1$ for some $i\in [\mu_e+1] \subseteq [n/3]$.
    Apply \Cref{claim:bit-lower-bound-reweighted-subgraph-undirected} to the first such $i$.
\end{proof}

The proof of \Cref{claim:bit-lower-bound-reweighted-subgraph-undirected} boils down to a counting argument, we create a family of hypergraphs with $g_e(S)=\hat{g}(|S|)$ as their splitting function.
The vertices of each hypergraph are partitioned into three sets $V,U,W$.
Each hyperedge is defined by a union of three parts:
(1) a random subset of the vertices of $V$,
(2) a subset of $U$  that is defined by the Hadamard code,
and (3) an unsparsifiable part over the vertices $W$.

Using the unsparsifiable part we show that any reweighted-subgraph sparsifier must contain all the hyperedges.
We then show that it is possible to exactly recover the adjacency matrix over the vertices of $V$ from any hypergraph containing the same hyperedges as $H$ (up to reweighing) using only cut queries, hence $\Omega(n^2)$ bits are required to represent it.

The recovery process is based on observing that the difference between any two cuts $S,S\cup \left\{ v \right\}$ for $v\not\in S$ is given by
\[
    \mintcut_{H}(S\cup \{v\}) - \mintcut_{H}(S)
    = \sum_{e\in E} g_e(S\cup \{v\})-g_e(S)
    = \sum_{e\in E} \indic{v\in e} \Delta_{|S\cap e|} w_e
    ,
\]
where $w_e$ is the weight of $e$.
Notice that if we find some $d$ such that $\Delta_d < \Delta_{d/2}$, and a cut where exactly one hyperedge $e^*$ has $|S\cap e^*|=d$ while all the hyperedges have $|S\cap e| \le d/2$ then we can recover whether $v\in e^*$.
We create a hypergraph where such cuts exist for every hyperedge $e\in E$ and vertex $v\in V$ using the Hadamard code.

\begin{proof}[Proof of \Cref{claim:bit-lower-bound-reweighted-subgraph-undirected}]
    Let $d=2^k$ be the smallest power of two such that $d\ge t$. 
    Assume for simplicity that $d\le n/3$.
    We start by defining the family $\mathcal{H}$ of hypergraphs.
    Let $H\in \mathcal{H}$ be a hypergraph over $n$ vertices.
    Split the vertices into three sets, $V,U,W$ with $|V| = |W| = n/6$ and $|U| = 2n/3$.
    Denote the vertices in each set by $V = \left\{ v_i \right\}_{i=1}^{n/6}$, $W = \left\{ w_i \right\}_{i=1}^{n/6}$ and $U = \left\{ u_i \right\}_{i=1}^{2n/3}$.
    Notice that since the splitting functions $g_e$ are all cardinality based, they are defined by the hyperedges.
    $H$ includes exactly $n/6$ hyperedges as described below.

    We start by describing the Hadamard code words which we will use in the proof.
    If $d=1$ then for all $i\in [n/6]$ set $p_i$ to be strings of length $2n/3$ with $1$ in the $i$-th position and $0$ elsewhere.
    Otherwise, denote the words of the Hadamard code (without the all zeros and all ones words) of length $2d$ by $h_1,\ldots,h_{2d-2}$.
    Since $2d$ could be much smaller than $2n/3$ we pad $h_i$ with zeros to get words of length $2n/3$.
    Furthermore, we wish to create a hypergraph with $n/6$ hyperedges and hence if $2d<n/6$ we create $\lceil n/(12d) \rceil$ copies of each word and denote them by $p_{i,j}$, where $p_{i,j}$ is the padded version of $h_i$ shifted by $2d\cdot j$ bits to the right.
    Observe that by the properties of the Hadamard code $p_{i,j}^2=d$ and that $p_{i,j}\cdot p_{i',j'} \in \left\{ d/2,0 \right\}$.
    For simplicity, we rename $p_{i,j}$ to $p_1,\ldots,p_{n/6}$ where $p_k = p_{\lfloor k/(2d)\rfloor,k\mod 2d}$ dropping extra words if needed.

    For each $i\in \left[ n/6 \right]$ let $e_{i}$ be a union of three sets: $P_i = \left\{ u_j \in U : p_i(j) =1 \right\}$, a random subset of $n/12$ vertices of $V$ and the singleton $\left\{ w_i \right\}$.

    Every reweighted-subgraph sparsifier $H'$ for $H$ must contain every hyperedge $e\in E$ with weight in $[1-\epsilon, 1+\epsilon]$, otherwise the cuts of the singletons, $S=\left\{ w_i \right\}$ for $i\in\left[ n/6 \right]$, would not be preserved.
    Fix some reweighted-subgraph sparsifier $H'$ of $H$, and denote the weights of the hyperedges in $H'$ by $w'_e$.

    Denote the incidence matrix of $H$ corresponding to the vertices in $V$ by $B$.
    We show that it is possible to recover $B$ from the cuts of $H'$ this means that every $H\in \mathcal{H}$ requires a unique sparsifier.
    In addition, there are $\binom{n/6}{n/12}^{n/6}$ possible choices for $B$ and hence the encoding size is $\Omega(n^2)$ bits.

    Denote the set of hyperedges containing $v_i$ by $E_i = \left\{ e\in E: v_i \in e \right\}$.
    We recover each element $B_{ij}$ by examining the difference
    \[
        \beta_{ij}
        = \mintcut_{H'} \left( P_j \cup \{ v_i \} \right) - \mintcut_{H'} \left( P_j \right)
        = \sum_{e\in E_i} g_e(P_j) - g_e(P_j\cup \{ v_i \})
        .
    \]
    Recall that by the construction of the code part of the incidence matrix, for every $k\ne j$, $|P_j \cap P_k| \in \left\{ 0,d/2 \right\}$.
    Furthermore, by the definition of $d$ we have $\Delta_{d/2} = \Delta_{0}>\Delta_d$.
    Hence,
    \[
        \beta_{ij}
        = \begin{cases}
            \beta_{ij}^1 \coloneqq \sum_{e\in E_i\setminus \left\{ e_j \right\}} \Delta_{0} w'_e 
            + \Delta_{d} w'_{e_j},
            & \text{if } B_{ij} = 1 \\
            \beta_{ij}^0 \coloneqq \sum_{e\in E_i} \Delta_{0} w'_e,
            & \text{if } B_{ij} = 0. \\
        \end{cases}
    \]
    Hence, $\beta_{ij}^1<\beta_{ij}^0$.
    In addition, observe that 
    \[
        \mintcut_{H'}(\{ v_{i} \})  
        = \sum_{e\in E_i} w'_e \hat{g}(1)
        = \Delta_0\sum_{e\in E_i} w'_e 
        .
	\]
    Therefore, if $\mintcut_{H'}(\{ v_{i} \}) = \beta_{ij}$ then $B_{ij}=0$ and otherwise $B_{ij}=1$.
\end{proof}

\subsection{Directed Hypergraph Encoding Size}
\label{subsec:directed-hypergraph-encoding-size}
In \cite{OST22}, the authors provide an $\Omega\left( n^2/\epsilon \right)$ lower bound for the number of hyperedges in a reweighted-subgraph sparsifier of directed hypergraphs.
We improve on this result in two different ways:
(1) \Cref{lemma:bit-lower-bound-directed} shows that in the reweighted-subgraph sparsifier setting, encoding directed hypergraph cuts requires $\Omega\left( n^3/\epsilon \right)$ bits.
(2) \Cref{theorem:directed-hypergraph-encoding-lower-bound} proves that any encoding of directed hypergraph cuts requires $\Omega\left( n^2/\epsilon \right)$ bits (rather than hyperedges).

We begin with \Cref{lemma:bit-lower-bound-directed}.
This result is based a similar construction to \Cref{claim:bit-lower-bound-reweighted-subgraph-undirected} without the Hadamard code part.
We define a random family $\mathcal{H}$ whose vertices are partitioned into three parts $V,U,W$.
Each hyperedge is defined by a union of two parts:
(1) a random subset of the vertices of $V$,
and (2) an unsparsifiable part on the vertices of $U,W$ that is based on  the construction of \cite{OST22}.

To recover the random part of the hypergraph we again turn to comparing different cuts of $H'$.
However, since the hypergraph is much denser we need to isolate the contribution of each hyperedge using the intersection of several cuts.

\begin{lemma}
    There exists a family of hypergraphs $\mathcal{H}$ with the directed all-or-nothing splitting function, such that for every $1/(4\epsilon)<n/3$ encoding a reweighted-subgraph $(1+\epsilon)$-sparsifier for an input $H\in \mathcal{H}$ requires $\Omega(n^3)$ bits.
    \label{lemma:bit-lower-bound-directed}
\end{lemma}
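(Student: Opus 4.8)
The approach mirrors \Cref{claim:bit-lower-bound-reweighted-subgraph-undirected} but uses the directed all-or-nothing structure in place of the Hadamard code to isolate individual hyperedges. First I would set up the family $\mathcal{H}$. Partition the $n$ vertices into three parts $V, U, W$, where $|V| = \Theta(n)$ carries the entropy and $U \cup W$ carries an unsparsifiable gadget borrowed from \cite{OST22}: that construction gives $\Omega(n^2/\epsilon)$ directed hyperedges such that every reweighted-subgraph $(1+\epsilon)$-sparsifier must retain \emph{all} of them with weight in $[1-\epsilon,1+\epsilon]$ (because removing or mis-scaling any one violates some cut). Each hyperedge $e$ of $H$ is then obtained from one gadget hyperedge on $U \cup W$ by adding a uniformly random subset $R_e \subseteq V$ of size $|V|/2$ to its \emph{tail} $e_T$ (leaving its head $e_H \subseteq U\cup W$ unchanged). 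The random bits $\{R_e\}_e$ amount to $\Omega(n^2/\epsilon \cdot n) = \Omega(n^3/\epsilon)$ bits of entropy, so if these bits are recoverable from any reweighted-subgraph sparsifier $H'$ of $H$, the encoding lower bound follows by the usual counting argument.

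The core step is the recovery procedure. Since $H'$ is a reweighted subgraph, it contains exactly the same hyperedges as $H$ (by the unsparsifiable gadget), each with a positive weight $w'_e$; note the hyperedge \emph{structure}, including which vertices of $V$ lie in each $e_T$, is identical to $H$ — we just need to read it off via cut queries, which certifies that distinct $H \in \mathcal{H}$ need distinct encodings. To test whether $v_i \in R_{e^*}$ for a specific hyperedge $e^*$, I would build a cut $S \subseteq V \cup U \cup W$ that is ``active'' for $e^*$ and ``inert'' for every other hyperedge: concretely, choose $S$ so that for $e^*$ we have $e^*_T \cap S \ne \emptyset$ and $e^*_H \not\subseteq S$ (so $\splitdaon$ is about to register a change when $v_i$ is added/removed), while for every other hyperedge $f$ the intersection pattern of $S$ with $f_T, f_H$ is such that adding or removing $v_i$ does not flip $\indic{f_T\cap S \ne\emptyset \wedge f_H\not\subseteq S}$. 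Then $\mintcut_{H'}(S \cup \{v_i\}) - \mintcut_{H'}(S)$ is either $0$ or $\pm w'_{e^*}$ depending on whether $v_i \in e^*_T$, and since $w'_{e^*} > 0$ the sign/magnitude reveals the bit. The degrees of freedom in the gadget from \cite{OST22} — it is essentially a bipartite-like directed structure — should give enough room to find such an isolating $S$ for every $(i, e^*)$ pair; if a single query per pair is not enough to cancel all interfering hyperedges, I would use a small constant number of cut queries and take a suitable linear combination (as the lemma statement hints: ``isolate the contribution of each hyperedge using the intersection of several cuts'').

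The main obstacle is precisely this isolation: unlike the undirected case, where the Hadamard code's inner-product structure ($p_i \cdot p_j \in \{d/2, 0\}$) cleanly separated hyperedges at distinct cardinality levels, here I must exploit the combinatorial layout of the \cite{OST22} gadget to find, for each hyperedge, a cut (or few cuts) on $U \cup W$ that keeps every other hyperedge's $\splitdaon$-value constant under perturbation on $V$. I would therefore need to open up the \cite{OST22} construction and verify it has the required ``addressability'': each gadget hyperedge should have a private vertex (or private tail/head configuration) in $U$ that can be toggled without disturbing the others — this is plausible since their construction already needs the hyperedges to be individually unsparsifiable, which is a closely related separation property. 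Once recovery of all $\Omega(n^2/\epsilon)$ random sets $R_e \subseteq V$ is established, the counting bound $\log \binom{|V|}{|V|/2}^{\,\Omega(n^2/\epsilon)} = \Omega(n^3/\epsilon)$ bits closes the proof.
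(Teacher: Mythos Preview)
Your plan is correct and matches the paper's proof almost exactly: the paper uses the \cite{OST22} gadget on $U \cup W$, augments each tail with a random subset of $V$, and recovers the random bits via cut queries on the sparsifier. One point to sharpen: the gadget hyperedges do \emph{not} have private vertices (each $u_i$ lies in $\Theta(n/\epsilon)$ tails), so isolation is achieved precisely by your fallback---the paper sets $S_{i,j} = \{u_i\} \cup (W \setminus \{w_j\})$, notes that $E(S_{i,j}) \cap E(S_{i+x,j}) = \{e_{i,i+x,j}\}$, and extracts $w'\bigl(E(S_{i,j}) \cap E(S_{i+x,j}) \cap E(\{v_k\})\bigr)$ via inclusion--exclusion over a constant number of cut values.
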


\begin{proof}
    Define a hypergraph $H$ over $n$ vertices as follows.
    Partition the vertices of $H$ into three sets of equal cardinality $V,U,W$.
    Denote $V=\left\{ v_i \right\}_{i=1}^{n/3}$ and similarly for $U,W$.
    For every $i,j\in [n/3]$ and $r\in \left[ 1,2,\ldots,\frac{1}{8\epsilon} \right]$ add a hyperedge $e_{i,i+r,j}$ with tail $\left( e_{i,i+r,j}^T \right) = \left\{ u_{i},u_{i+r \mod n/3}\right\}$ and head $\left( e_{i,i+r,j}^H \right) = \left\{ w_j \right\}$.
    Note that this is the same construction as in \cite{OST22}.
    Then augment the tail of every hyperedge with a random subset of $V$, where every vertex $v\in V$ is sampled independently with probability $1/2$.

    Observe that for any cut $S\subseteq U\cup W$ the value of the cut is independent of the random bits in the head of the hyperedges.
    Hence, following the argument in \cite{OST22} any reweighted subgraph sparsifier of $H$ must include all its hyperedges.
    Let $H'$ be some reweighted subgraph sparsifier for $H$, we will show that we can recover the random part of the tail of every hyperedge from the cuts of $H'$.
    Since $H$ has $\Theta\left( n^2/\epsilon \right)$ hyperedges, and each one encodes $n/3$ random bits, storing the any reweighted subgraph sparsifier requires $\Omega\left( n^3/\epsilon \right)$ bits.

    Denote the modified weights of the hyperedges in $H'$ by $w'(e):E\to \R_+$, we will also write $w'(F) = \sum_{e\in F}w'(e)$ for $F\subseteq E$.
    For every cut set $S$, denote the set of hyperedges $e\in E$ with $g_e(S)>0$ by $E(S)$.
    Examine the cut set $S_{i,j} \coloneqq \left\{ u_i \right\} \cup \left( W\setminus \left\{ w_j \right\} \right)$ and notice that $E\left( S_{i,j} \right) = \left\{ e_{i,i+r,j} : r\in [1/(8\epsilon)] \right\} \cup \left\{ e_{i-r,i,j} : r\in [1/(8\epsilon)] \right\}$.

    We now describe the process for determining whether $v_k$ is in $e_{i,i+x,j}^T$ for some $x\in \left[ 1,2,\ldots,\frac{1}{8\epsilon} \right]$.
    Observe that $E(S_{i,j})\cap E(S_{i+x,j}) = \left\{ e_{i,i+x,j} \right\}$, and since $e_{i,i+x,j}\in H'$, we find that $v_k\in e_{i,i+x,j}^T$ if $w'\left( E(S_{i,j})\cap E(S_{i+x,j}) \cap E\left( \left\{ v_k \right\} \right) \right) > 0$.
    To find the value of $w'\left( E(S_{i,j})\cap E(S_{i+x,j}) \cap E\left( \left\{ v_k \right\} \right)\right)$, observe that for every $S\subseteq U\cup W$ we can find $w'\left( E(S) \cap E(\left\{ v_k \right\}) \right)$ by the following method.
    
    Begin by noting that,
    \begin{align*}
        \mintcut_{H'} \left( S \cup \left\{ v_k \right\} \right)	
        - \mintcut_{H'} \left( S \right) 
        &= 	w'\left( E \left(S \cup \left\{ v_k \right\} \right) \setminus E \left(S\right)\right)
        = w'\left( E(\left\{ v_k \right\}) \setminus E \left(S\right)\right)
        \\
        &=w'\left( E(\left\{ v_k \right\})\right)
        - w'\left( E \left(S\right)\cap E(\left\{ v_k \right\}) \right)
        ,
    \end{align*}
    where the last equality is by the directed all-or-nothing splitting function and that $V$ is disjoint from the heads of the hyperedges.
    Furthermore, observe that $\mintcut_{H'}(\left\{ v_k \right\}) = w'\left( E(\left\{ v_k \right\}) \right)$ and hence, 
    \begin{equation}
        w'\left( E(S) \cap E(\left\{ v_k \right\}) \right) 
        = 	\mintcut_{H'}(\left\{ v_k \right\}) - 
        \left( \mintcut_{H'} \left( S \cup \left\{ v_k \right\} \right)	
        - \mintcut_{H'} \left( S \right)  \right)
        .
        \label{eq:recover-s-cap-v-k}
    \end{equation}
    To conclude the proof observe that 
\begin{align*}
            w'\left( E(S_{i,j})\cap E(S_{i+x,j})\cap E( \left\{ v_k \right\}) \right)
            = 
            &w'\left(E(S_{i,j})\cap E( \left\{ v_k \right\})\right) 
            + w'\left(E(S_{i+x,j})\cap E( \left\{ v_k \right\}) \right)
            \\
            - 
            &w'\left( E(S_{i,j}\cup S_{i+x,j}) \cap E( \left\{ v_k \right\}) \right)
            .
\end{align*}
    We can find all the terms on the right-hand side using \Cref{eq:recover-s-cap-v-k}, and hence we can determine for every $k$ if $v_k$ is in $e_{i,i+x,j}^T$.
    Therefore, representing any reweighted subgraph sparsifier for $H$ requires $\Omega\left( n^3/\epsilon \right)$ bits. 
\end{proof}

We now turn to proving that representing directed hypergraph cuts requires $\Omega(n^2/\epsilon)$ bits in any data structure (\Cref{theorem:directed-hypergraph-encoding-lower-bound}).
The proof constructs of hypergraphs $\mathcal{H}$ based on the construction in \cite{OST22}, with an added sampling step.
We then show that every $(1+\epsilon)$-sparsifier of some hypergraph $H\in \mathcal{H}$ does not $(1+\epsilon)$-approximate the cuts of any other member of the family.
Therefore, every hypergraph in the family requires a unique sparsifier.
The lower bound follows by showing there are $2^{\Omega(n^2/\epsilon)}$ hypergraphs in the family.
\begin{proof}[Proof of Theorem \ref{theorem:directed-hypergraph-encoding-lower-bound}]
    Define a family of hypergraphs $\mathcal{H}$ as follows.
    Let $H\in \mathcal{H}$ be a hypergraph with $2n$ vertices, and partition its vertex set into two disjoint sets of equal cardinality $U,W$.
    Throughout the proof we assume that $1/(16\epsilon)$ is an integer for simplicity.
    For every $i,j\in [n]$ sample a uniform subset of size $\frac{1}{16\epsilon}$ from $\left[ \frac{1}{8\epsilon} \right]$ and denote it by $V_{i,j}$.
    For every $x\in V_{i,j}$ add the hyperedges $e_{i,i+x,j}$ with tail $e^T_{i,i+x,j} = \left\{u_i ,u_{i+x\mod n}\right\}$ and head $e^H_{i,i+x,j}=\left\{ w_j \right\}$ to $H$.

    Fix some $H\in \mathcal{H}$.
    We will show that every quality $(1+\epsilon)$-sparsifier for $H$ does not $(1+\epsilon)$-approximate any other hypergraph $\hat{H}\in \mathcal{H}$.
    Therefore, this family implies the existence of at least $\Omega\left(2^{n^2/\epsilon} \right)$ distinct sparsifiers.
    Hence, representing a $(1+\epsilon)$-approximation of the cuts of any $H\in\mathcal{H}$ requires $\Omega\left( n^2/\epsilon \right)$ bits.

    Assume without loss of generality that the hyperedge $e_{1,2,1}$ is in $H$ but not in $\hat{H}$.
    Let $S_i = \left\{ u_i \right\} \cup \left\{ W\setminus\left\{ w_1 \right\} \right\}$.
    Observe that by the symmetry of the construction,
    \[
        \mintcut_H\left( S_i \right)	\in \left\{ \frac{1}{16\epsilon},\frac{1}{16\epsilon}+1,\ldots, \frac{1}{8\epsilon} \right\}
        .	
    \]
    If there exists some $i$ such that $\mintcut_H\left( S_i \right) \ne \mintcut_{\hat{H}}\left( S_i \right)$ then any quality $1+\epsilon$ sparsifier for $\hat{H}$ does not approximate $H$, since
    \[
        \left| \frac{\mintcut_{\hat{H}}\left( S_i \right)}
        {\mintcut_H\left( S_i \right)} -1 \right|
        \ge \left| \frac{\frac{1}{8\epsilon}-1}{\frac{1}{8\epsilon}} -1 \right|
        = 8\epsilon
        .
    \]
    Hence, we proceed with the case where $H,\hat{H}$ have the same value for every cut $S_i$.
    Assume there exists some quality $(1+\epsilon)$-sparsifier $H'$ for both $H,\hat{H}$.
    Now observe that for $H$
    \[
        \mintcut_H\left( S_1\cup S_2 \right) = 
        \mintcut_H\left( S_1 \right) + \mintcut_H\left( S_2 \right) - 1
        ,
    \]
    and since $H'$ is a quality $(1+\epsilon)$-sparsifier for $H$, then
    \[
        \mintcut_{\hat{H}}\left( S_1\cup S_2 \right) 
        \ge (1-\epsilon)
        \left( \mintcut_H\left( S_1 \right) + \mintcut_H\left( S_2 \right) - 1 \right)
        .
    \]
    We can also get an upper bound on the cut $S_1 \cup S_2$ in $\hat{H}$ by observing that since $e_{1,2,1}\not\in \hat{H}$ then
    \[
        \mintcut_{\hat{H}}\left( S_1\cup S_2 \right) = 
        \mintcut_{\hat{H}}\left( S_1 \right) + \mintcut_{\hat{H}}\left( S_2 \right)
        = 	\mintcut_{H}\left( S_1 \right) + \mintcut_{H}\left( S_2 \right)
        ,
    \]
    where the second equality is by our assumption that $H,\hat{H}$ have the same value for every cut $S_i$.
    Since $H'$ also $(1+\epsilon)$-approximates the cuts of $\hat{H}$ we have
    \[
        \mintcut_{H'}\left( S_1\cup S_2 \right)
        \le (1+\epsilon)\left( \mintcut_{H}\left( S_1 \right) + \mintcut_{H}\left( S_2 \right) \right)
        .
    \]
    However,
    \[
        (1-\epsilon)
        \left( \mintcut_H\left( S_1 \right) + \mintcut_H\left( S_2 \right) - 1 \right)
        \le (1+\epsilon)\left( \mintcut_{H}\left( S_1 \right) + \mintcut_{H}\left( S_2 \right) \right)
        ,
    \]
    whenever $\mintcut_H\left( S_1 \right) + \mintcut_H\left( S_2 \right) \le \frac{1}{4\epsilon}$
    Therefore $H'$ does not $(1+\epsilon)$ approximates both $H,\hat{H}$ and every hypergraph in $\mathcal{H}$ requires a unique sparsifier.
\end{proof}

\section{Deformation Lower Bounds}
\label{sec:deformation-lower-bounds}
In this section we prove lower bound on the support size for approximating several families of splitting functions.
In particular, we show a lower bound for additive splitting function (\Cref{theorem:lower-bound-support-additive-functions}).
The results are all based on the following technical lemma, which we prove at the end of the section.
\begin{lemma}
    \label{lemma:lower-bound-support-size}
    Let $e$ be a hyperedge with a splitting function $g_e:2^e\to\R_+$. 
    For every $S,T\subseteq e$ such that $|S|=|T|=t$, denote 
    \[
        \delta_t(S,T) \coloneqq 1-\frac{g_e(S \cup T)}{g_e(S)+g_e(T)}	
        .
    \]
    If for some $t<|e|/2$ at least a $\rho$-fraction of the pairs $(S,T)\in \binom{\binom{e}{t}}{2}$ satisfy $\delta_t(S,T)>\hat{\delta}$ for some $\hat{\delta}$ such that $\rho\hat{\delta}^2 \ge \Omega(|e|^{-1/2})$, then every $(1+\hat{\delta}/2)$-approximation of $e$ must have support size at least $\Omega\left( \rho\hat{\delta}^2 |e|/t \right)$.
\end{lemma}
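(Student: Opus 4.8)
The plan is to argue by contradiction. Suppose $g_e$ is $(1+\epsilon)$-approximable with support size $p$ for $\epsilon \coloneqq \hat{\delta}/2$, witnessed by submodular functions $g_{e_1},\dots,g_{e_r}$ on sets $e_i\subseteq e$ with $|e_i|\le p$, and set $h(S)\coloneqq\sum_{i}g_{e_i}(S\cap e_i)$, so that $h(S)\in(1\pm\epsilon)g_e(S)$ for every $S\subseteq e$. We may assume $p\le |e|/t$, since otherwise $p\ge |e|/t\ge \rho\hat{\delta}^2|e|/t$ (using $\rho\le 1$ and, by subadditivity of $g_e$, $\hat{\delta}\le 1$) and we are already done. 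It remains to prove $p=\Omega(\rho\hat{\delta}^2|e|/t)$.

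The central object is the \emph{deviation from additivity}, $Z(S,T)\coloneqq h(S)+h(T)-h(S\cup T)$, which splits over the pieces as $Z(S,T)=\sum_i W_i$ with $W_i\coloneqq g_{e_i}(S\cap e_i)+g_{e_i}(T\cap e_i)-g_{e_i}((S\cup T)\cap e_i)$. Two elementary facts drive the argument: (i) by submodularity and nonnegativity of $g_{e_i}$ one has $0\le W_i\le g_{e_i}(S\cap e_i)+g_{e_i}(T\cap e_i)$; and (ii) $W_i=0$ unless $e_i$ meets both $S$ and $T$ (if, say, $S\cap e_i=\emptyset$, the first and third terms of $W_i$ cancel). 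Conversely, whenever $\delta_t(S,T)>\hat{\delta}$, substituting $h(S),h(T)\ge(1-\epsilon)g_e(\cdot)$ and $h(S\cup T)\le(1+\epsilon)g_e(S\cup T)<(1+\epsilon)(1-\hat{\delta})\bigl(g_e(S)+g_e(T)\bigr)$ into $Z$ and using $\epsilon=\hat{\delta}/2$ gives, after the short computation $(1-\epsilon)-(1+\epsilon)(1-\hat{\delta})=\hat{\delta}^2/2$, the bound $Z(S,T)>\tfrac{\hat{\delta}^2}{2}\bigl(g_e(S)+g_e(T)\bigr)$. Thus a ``good'' pair forces a deviation proportional to $\hat{\delta}^2$ times the cut value.

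The key step is to average these two bounds over $S,T$ drawn independently and uniformly from $\binom{e}{t}$, \emph{weighting by the cut value} rather than merely counting pairs --- this is exactly what keeps the number and total size of the pieces out of the estimate. For the lower bound, since an $\Omega(\rho)$-fraction of ordered pairs are good and $Z\ge 0$ always, $\mathbb{E}[Z]\ge\tfrac{\hat{\delta}^2}{2}\,\mathbb{E}\bigl[(g_e(S)+g_e(T))\indic{\delta_t(S,T)>\hat{\delta}}\bigr]$; when $g_e$ is cardinality-based (which is the case for all families in Table~\ref{table:summary-of-support-size-lower-bounds}), $g_e(S)+g_e(T)=2\bar g$ is constant with $\bar g\coloneqq\mathbb{E}_S[g_e(S)]$, so $\mathbb{E}[Z]\ge\Omega(\rho\hat{\delta}^2\bar g)$. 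For the upper bound, facts (i)--(ii) give $\mathbb{E}[W_i]\le\mathbb{E}\bigl[(g_{e_i}(S\cap e_i)+g_{e_i}(T\cap e_i))\indic{e_i\cap S\ne\emptyset}\indic{e_i\cap T\ne\emptyset}\bigr]$; since $S$ and $T$ are independent and $g_{e_i}(\emptyset)=0$, this factorizes as $2\,\mathbb{E}_S[g_{e_i}(S\cap e_i)]\cdot\Pr[e_i\cap S\ne\emptyset]\le\tfrac{2|e_i|t}{|e|}\,\mathbb{E}_S[g_{e_i}(S\cap e_i)]$ by a union bound over the elements of $e_i$ (valid since $|e_i|\le p\le |e|/t$). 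Summing over $i$, using $|e_i|\le p$, and the telescoping identity $\sum_i\mathbb{E}_S[g_{e_i}(S\cap e_i)]=\mathbb{E}_S[h(S)]\le(1+\epsilon)\bar g$ yields $\mathbb{E}[Z]\le\tfrac{2pt(1+\epsilon)}{|e|}\,\bar g$. Comparing the two bounds (and using $\bar g>0$, which fails only if $g_e$ vanishes on all $t$-sets) gives $p=\Omega(\rho\hat{\delta}^2|e|/t)$, and the hypothesis $\rho\hat{\delta}^2\ge\Omega(|e|^{-1/2})$ is what makes this conclusion non-vacuous.

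I expect the main obstacle to be precisely the averaging step. The tempting route --- bounding the number of ``bad'' pairs, those for which some $e_i$ meets both $S$ and $T$ --- introduces $\sum_i|e_i|$, the total incidence of the decomposition, which is \emph{not} controlled by the support size $p$; the fix is to carry $g_e(S)$ as a multiplicative weight so that the per-piece contributions telescope to $\mathbb{E}[h]\approx\mathbb{E}[g_e]$, at which point $\max_i|e_i|\le p$ suffices. A secondary point requiring care is the passage to cardinality-based $g_e$: in general $g_e(S)+g_e(T)$ is not constant across good pairs, so one must additionally argue that the good pairs carry a constant fraction of the total $g_e$-mass; for all the applications this is automatic.
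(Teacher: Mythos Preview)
Your averaging argument is genuinely different from the paper's and, for cardinality-based $g_e$, it is both correct and considerably cleaner: you compare $\mathbb{E}_{S,T}[Z(S,T)]$ from above via the factorization $\mathbb{E}[W_i]\le 2\,\mathbb{E}_S[g_{e_i}(S\cap e_i)]\cdot\Pr_T[e_i\cap T\ne\emptyset]$ (which telescopes to $\mathbb{E}[h]$) and from below via the good-pair fraction, and the constant value of $g_e$ on $t$-sets makes both sides proportional to $\bar g$. The paper instead argues pointwise: for each good pair $(S_1,S_2)$ it shows $h(S_j,I_{S_1}\cap I_{S_2})\ge\tfrac{\hat\delta^2}{2}g_e(S_j)$ for some $j$, pigeonholes to find a single set $T$ appearing as the distinguished side in $\Omega(\rho)\binom{|e|}{t}$ pairs, and then a counting estimate on $|Q_i|=|\{P:|P|=t,\,P\cap e_i\ne\emptyset\}|$ forces $h(T)>(1+\hat\delta/2)g_e(T)$. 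Your route avoids that counting entirely and makes the role of the parameter $p$ transparent.

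However, your proposal does not establish the lemma as stated. The lower bound $\mathbb{E}[Z]\ge\Omega(\rho\hat\delta^2\bar g)$ requires that the good pairs carry an $\Omega(\rho)$-fraction of the total $g_e$-mass on $t$-sets, and this is \emph{not} implied by the hypothesis: if the good pairs all lie among sets where $g_e$ is small while $\bar g$ is dominated by a few large values, your comparison yields nothing. You flag this yourself, but your claim that ``for all the applications this is automatic'' is incorrect: the unweighted case (Corollary~\ref{corollary:lower-bound-unweighted}, e.g.\ matroid rank functions) is not cardinality-based, and the paper invokes Lemma~\ref{lemma:lower-bound-support-size} there in full generality. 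The paper's pigeonhole step sidesteps the mass issue precisely because it normalizes by $g_e(S_j)$ before passing to a single witness $T$, so everything that follows is relative to $g_e(T)$ and its magnitude is irrelevant. To prove the full lemma you would need either a weighted version of your averaging (e.g.\ sampling $S$ with probability proportional to $g_e(S)$, which breaks the independence you use in the upper bound) or a separate argument handling non-uniform $g_e$; absent that, your proof covers only the cardinality-based corollaries.
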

Informally, the lemma states that if a splitting function is far from linear on a large enough fraction of pairs of sets of size $t$, then it cannot be closely approximated by a sum of hyperedges with small support.
The lemma is based on identifying 
\[
    \delta_t(S,T) \coloneqq 1- \frac{g_e(S\cup T)}{g_e(S)+g_e(T)}
\]
as a quantity that describes how close to linear is the function $g_e$ for subsets of size $t$, $S,T\subseteq e$.
We then show that if the function is far from linear on a large enough fraction, of pairs of sets of size $t$, then it cannot be closely approximated by a sum of hyperedges with small support.

Note that $\delta_t$ is related to the notion of curvature of submodular functions, the \emph{total curvature} of a submodular function $g_e$ is given by
\[
    c_{g_e} \coloneqq 1- \min_{S\subseteq e,v\in e\setminus S}\frac{g_e(S\cup\{ v\})-g_e(S)}{g_e(\{v\})}
    .
\]
Intuitively, the curvature describes how far from linear the function $g_e$ is in the worst case.
The curvature is used to parametrize the quality of approximation in maximization of submodular functions;
where if a function has low curvature, hence it is close to linear, then it is possible to achieve a better approximation \cite{CC84,vondrak2010submodularity}.

The quantity $\delta_t$ differs from the curvature in two regards.
First, $\delta_t$ describes a relation two sets of size $t$ and not the marginal contribution of adding a single element.
Second, in the optimization setting the guarantees depend on the worst case curvature, while our lemma requires $\delta_t$ to be large only on a constant fraction of subsets of size $t$.

\subsection{Support Size Lower Bounds for Approximating Splitting Functions}
This section proves support size lower bounds for approximating several families of common splitting functions.
A summary of the results is provided in \Cref{table:summary-of-support-size-lower-bounds}. 

We begin by presenting several results for different families of cardinality based splitting functions.
For cardinality based splitting functions the value $\delta_t(S,T)$ only depends on $|S\cup T|$, therefore it is possible to find a uniform bound on $\delta_t(S,T)$ for all sets of size $t$.
This idea is formalized in the following corollary of \Cref{lemma:lower-bound-support-size}.
\begin{corollary}[\Cref{lemma:lower-bound-support-size} for Cardinality Based Splitting Functions]
    Let $e$ be a hyperedge with a cardinality based splitting function $g_e:2^e\to\R_+$. 
    For every $t\le|e|$ denote 
    \[
        \bar{\delta}_t 
        \coloneqq 1-\max_{S_1,S_2\subseteq V: |S_1|=|S_2|=t} \frac{g_e(S_1 \cup S_2)}{g_e(S_1)+g_e(S_2)}	
        .
    \]
    Suppose $\bar{\delta}_t > 0 $ for some $t\le |e|/2$, then every $(1+\bar{\delta}_t/2)$-approximation of $e$ must have support size at least $\Omega\left( \bar{\delta}_t^2 |e|/t \right)$.
    \label{corollary:lower-bound-support-size-uniform}
\end{corollary}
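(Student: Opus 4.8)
The plan is to deduce this statement directly from \Cref{lemma:lower-bound-support-size} by using the extra structure of cardinality-based functions, which upgrades the ``$\rho$-fraction of pairs'' hypothesis of that lemma into a statement about \emph{all} pairs.

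First I would record the key structural observation. Since $g_e$ is cardinality based, write $g_e(S)=\hat{g}(|S|)$, so that for every $S,T\subseteq e$ with $|S|=|T|=t$ we have $g_e(S)=g_e(T)=\hat{g}(t)$ and hence
\[
  \delta_t(S,T)
  = 1-\frac{g_e(S\cup T)}{g_e(S)+g_e(T)}
  = 1-\frac{\hat{g}(|S\cup T|)}{2\hat{g}(t)}
  .
\]
Thus $\delta_t(S,T)$ depends on the pair $(S,T)$ only through $m\coloneqq|S\cup T|$, which ranges over $\{t+1,\dots,\min(2t,|e|)\}$. The maximization in the definition of $\bar{\delta}_t$ is therefore attained at the value of $m$ maximizing $\hat{g}(m)$ over this range, and consequently $\bar{\delta}_t$ equals the \emph{smallest} value taken by $\delta_t(S,T)$ over all admissible pairs. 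In particular, every pair $(S,T)\in\binom{\binom{e}{t}}{2}$ satisfies $\delta_t(S,T)\ge\bar{\delta}_t$.

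Next I would apply \Cref{lemma:lower-bound-support-size} with $\rho=1$ and $\hat{\delta}=\bar{\delta}_t$: by the previous paragraph the entire collection of pairs (a $\rho=1$ fraction) has $\delta_t(S,T)\ge\bar{\delta}_t$, which is exactly what the lemma needs (the inequality there may be taken non-strict, or $\bar{\delta}_t$ replaced by $\bar{\delta}_t(1-o(1))$, without affecting the asymptotic conclusion). The numerical side condition $\rho\hat{\delta}^2\ge\Omega(|e|^{-1/2})$ becomes $\bar{\delta}_t^2\ge\Omega(|e|^{-1/2})$, which holds in the regime of interest—and for smaller $\bar{\delta}_t$ the claimed bound $\Omega(\bar{\delta}_t^2|e|/t)$ is vacuously weak anyway—while the requirement $t<|e|/2$ is the hypothesis made here. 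The lemma then yields that every $(1+\bar{\delta}_t/2)$-approximation of $e$ has support size $\Omega(\bar{\delta}_t^2|e|/t)$, as claimed.

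The only thing that requires care is the bookkeeping in the last step: verifying that the hypothesis of \Cref{lemma:lower-bound-support-size} is met with $\rho=1$ (immediate once one notices that $\delta_t$ is a function of $|S\cup T|$ alone) and that its side condition and the constraint $t<|e|/2$ are not violated. There is no genuine obstacle—this corollary is a routine specialization, and all of the real work is carried by \Cref{lemma:lower-bound-support-size} itself, whose proof is given at the end of the section.
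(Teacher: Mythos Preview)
Your proposal is correct and matches the paper's proof essentially verbatim: both observe that $\bar{\delta}_t=\min_{S,T}\delta_t(S,T)$ so that every pair satisfies $\delta_t(S,T)\ge\bar{\delta}_t$, and then apply \Cref{lemma:lower-bound-support-size} with $\rho=1$ and $\hat{\delta}=\bar{\delta}_t$. Your write-up is just more explicit about why cardinality-basedness forces $\delta_t$ to depend only on $|S\cup T|$, and about the strict-vs.-nonstrict inequality in the lemma's hypothesis, but the argument is the same.
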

\begin{proof}
    Let $\rho = 1$, and observe that $\delta_t(S,T)\ge \bar{\delta}_t$ for every $S,T\subseteq e$ of size $t$.
    Applying \Cref{lemma:lower-bound-support-size} concludes the proof.
\end{proof}

We begin with the lower bound for additive splitting functions (\Cref{theorem:lower-bound-support-additive-functions}).
\begin{proof}
    Let $e$ be a hyperedge with an additive splitting function $g_e$,  with parameter $K$.
    We will show that every $1.1$-approximation of $g_e$ requires support size $\Omega\left( |e|/K \right)$.
    Note that if $K>|e|/2$ then $\Omega(|e|/K)=\Omega(1)$ and the lower bound is trivial.
    Otherwise, note that for $t=K$ we have $\bar{\delta}_t \ge 1/2$.
    Applying \Cref{corollary:lower-bound-support-size-uniform} we find that every $1.1$-approximation of $e$ requires support size at least $\Omega\left( |e|/K \right)$. 
\end{proof}
We also provide results for polynomial and logarithmic cardinality based splitting functions.
Both results are based on identifying a constant $t$ such that $\bar{\delta}_t$ is strictly positive constant.
\begin{corollary}[Lower Bound for Polynomial Cardinality Based Splitting Functions]
    \label{corollary:lower-bound-polynomial}
    Let $\hat{g}(S) = |S|^\beta$ or $\hat{g}(S) = \min\left( |S|^\beta, \left|\bar{S}\right|^\beta \right)$ with $ \beta\in (0,0.999)$ and let $e$ be a hyperedge with $g_e(S) = \hat{g}(|S|)$ as its splitting function.
    Every $\left( 1 +(2^{-1}-2^{\beta-2}) \right)$-approximation of $g_e$ must have support size at least $\Omega\left( (2^{-1}-2^{\beta-2})^2 |e| \right)$.
\end{corollary}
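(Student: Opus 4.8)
The plan is to invoke \Cref{corollary:lower-bound-support-size-uniform} with the smallest admissible set size, $t=1$. Since $g_e$ is cardinality-based, write $g_e(S)=f(|S|)$, where $f(k)=k^\beta$ in the monotone case and $f(k)=\min\bigl(k^\beta,(|e|-k)^\beta\bigr)$ in the symmetric case; then $\bar\delta_1$ depends only on the sizes of $S_1$, $S_2$, and $S_1\cup S_2$. For two singletons there are exactly two possibilities: $S_1=S_2$, giving ratio $f(1)/(2f(1))=1/2$, or $S_1\neq S_2$, giving ratio $f(2)/(2f(1))$. Assuming $|e|\geq 4$, in both cases $f(1)=1$ and $f(2)=2^\beta$ (in the symmetric case the complement term $(|e|-k)^\beta$ exceeds $k^\beta$ for $k\in\{1,2\}$), so this ratio equals $2^{\beta-1}$. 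Because $\beta>0$ we have $2^{\beta-1}>\tfrac12$, hence the maximum in the definition of $\bar\delta_1$ is attained at distinct singletons and
\[
  \bar\delta_1 \;=\; 1-2^{\beta-1} \;=\; 2\bigl(2^{-1}-2^{\beta-2}\bigr),
\]
which is a positive constant since $\beta<1$. (The hypothesis $\beta<0.999$ only serves to keep this constant bounded away from $0$, so that the technical condition $\rho\hat\delta^2\geq\Omega(|e|^{-1/2})$ inside \Cref{lemma:lower-bound-support-size} is met for all sufficiently large $|e|$.)

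With $\bar\delta_1>0$ and $t=1\le|e|/2$, \Cref{corollary:lower-bound-support-size-uniform} then states that every $(1+\bar\delta_1/2)$-approximation of $g_e$ must have support size $\Omega(\bar\delta_1^2|e|/t)=\Omega(\bar\delta_1^2|e|)$. Substituting $\bar\delta_1/2 = 2^{-1}-2^{\beta-2}$ and $\bar\delta_1^2 = 4\,(2^{-1}-2^{\beta-2})^2$ (the factor $4$ is absorbed into the $\Omega$) yields precisely the claimed approximation factor and support-size lower bound. The regime $|e|\le 3$ is handled trivially: there $\Omega\bigl((2^{-1}-2^{\beta-2})^2|e|\bigr)=\Omega(1)$, and any nontrivial approximation already uses support size at least $1$.

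The argument is short because \Cref{lemma:lower-bound-support-size} does the heavy lifting; no step here is a genuine obstacle. The only points that actually require checking are (i) that the maximum defining $\bar\delta_1$ is attained at two \emph{distinct} singletons rather than at $S_1=S_2$ --- this is exactly where $\beta>0$ is used --- and (ii) that for the symmetric variant the complement term does not interfere on sets of size at most $2$, for which it suffices that $|e|$ exceed a small absolute constant. The bulk of the remaining effort is just tracking constants so that $\bar\delta_1/2$ matches the stated factor $2^{-1}-2^{\beta-2}$.
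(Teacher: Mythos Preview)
Your proof is correct and follows essentially the same approach as the paper: compute $\bar\delta_1 = 1 - 2^{\beta-1}$ and apply \Cref{corollary:lower-bound-support-size-uniform} with $t=1$. If anything, you are more careful than the paper, which writes $\bar\delta_t = 1 - (2t)^\beta/(2t^\beta)$ without explicitly verifying that the maximum in the definition of $\bar\delta_t$ is attained at disjoint pairs (rather than at $S_1=S_2$) and without separately addressing the symmetric variant.
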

\begin{proof}
    Notice that for any $t\ge 1$, we have
    \[
        \bar{\delta}_t 
        = 1- \frac{(2t)^{\beta}}{2t^\beta}
        = 1- 2^{\beta-1}
        .
    \]
    Applying \Cref{corollary:lower-bound-support-size-uniform} with $t=1$ concludes the proof.
\end{proof}
\begin{corollary}
    \label{corollary:lower-bound-polylog}
    Let $\hat{g}(S) = \log\left( |S| \right)$ or $\hat{g}(S) = \min\left( \log\left( |S| +1 \right), \log( \bar{|S|}+1 ) \right)$ and let $e$ be a hyperedge with $g_e(S) = \hat{g}(|S|)$ as its splitting function.
    Then every $1+1/5$-approximation of $g_e$ must have support size at least $\Omega\left(|e| \right)$.
\end{corollary}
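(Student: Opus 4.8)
The plan is to follow the same template as the proof of Corollary~\ref{corollary:lower-bound-polynomial}: recognize $g_e$ as a cardinality-based splitting function, compute the uniform non-linearity parameter $\bar{\delta}_t$ of Corollary~\ref{corollary:lower-bound-support-size-uniform} for a suitable \emph{constant} $t$, verify that $\bar{\delta}_t$ is a positive absolute constant, and then invoke Corollary~\ref{corollary:lower-bound-support-size-uniform}. The only substantive work is to estimate $\bar{\delta}_t$ as a function of $t$ and to choose $t$ so that the approximation factor $1+\bar{\delta}_t/2$ produced by the corollary is at least the $1+1/5$ claimed in the statement.

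To compute $\bar{\delta}_t$, write $g_e(S)=f(|S|)$ with $f(k)=\log(k+1)$ in the monotone case (and $f(k)=\min(\log(k+1),\log(|e|-k+1))$ in the symmetric case; since $t$ is an absolute constant, for all sufficiently large $|e|$ every set of size at most $2t$ has $\min(\log(k+1),\log(|e|-k+1))=\log(k+1)$, so both cases yield the same $\bar{\delta}_t$). For $S_1,S_2\subseteq e$ of size $t<|e|/2$, the ratio $g_e(S_1\cup S_2)/(g_e(S_1)+g_e(S_2))=f(|S_1\cup S_2|)/(2f(t))$ and, because $f$ is non-decreasing, this is maximized precisely when $S_1,S_2$ are disjoint, i.e.\ $|S_1\cup S_2|=2t$. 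Hence $\bar{\delta}_t = 1-\frac{\log(2t+1)}{2\log(t+1)}$.

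Finally I would choose the constant $t$. In contrast with the polynomial case, $t=1$ does not suffice here, since $\bar{\delta}_1=1-\frac{\log 3}{2\log 2}\approx 0.21<\frac25$. But $\frac{\log(2t+1)}{2\log(t+1)}\to\frac12$ as $t\to\infty$, so $\bar{\delta}_t\to\frac12$, and some fixed constant (for instance $t=40$, where $\bar{\delta}_{40}=1-\frac{\log 81}{2\log 41}\approx 0.41$) already satisfies $\bar{\delta}_t\ge\frac25>0$. Fixing this $t$ and taking $|e|$ large enough that $t\le|e|/2$, Corollary~\ref{corollary:lower-bound-support-size-uniform} gives that every $(1+\bar{\delta}_t/2)$-approximation of $g_e$ has support size $\Omega(\bar{\delta}_t^2|e|/t)=\Omega(|e|)$, the hidden constant depending only on the absolute constants $\bar{\delta}_t$ and $t$. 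Since $\bar{\delta}_t/2\ge\frac15$, any $(1+\frac15)$-approximation is in particular a $(1+\bar{\delta}_t/2)$-approximation, so the same $\Omega(|e|)$ bound applies to it. There is no genuine obstacle; the one point that needs care is the arithmetic bookkeeping of constants — confirming that a fixed $t$ really does push $\bar{\delta}_t$ above $2/5$ so that the approximation factor in the statement is covered, and that replacing $\log(k+1)$ by the symmetric $\min$ does not alter $\bar{\delta}_t$ for the relevant set sizes once $|e|$ is large.
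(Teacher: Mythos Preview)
Your proposal is correct and follows essentially the same route as the paper: both arguments pick a fixed constant $t$, verify that $\bar{\delta}_t>2/5$ for the logarithmic function, and then invoke Corollary~\ref{corollary:lower-bound-support-size-uniform}. The paper simply chooses $t=e^5$ (with the computation $\bar{\delta}_t=1-\tfrac{\log(2e^5)}{2\log(e^5)}>2/5$) rather than your $t=40$, and omits the side remarks you include about the symmetric variant and about why the maximum is attained at disjoint $S_1,S_2$.
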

\begin{proof}
    Observe that for $t=e^5$, we have
    \[
        \bar{\delta}_t 
        = 1- \frac{\log(2e^5)}{2\log(e^5)}
        >2/5
        .
    \]
    Applying \Cref{corollary:lower-bound-support-size-uniform} with $t=e^5$ concludes the proof.
\end{proof}

We also present a general lower bound for all cardinality based splitting functions that is characterized by the spread.
The proof is based on showing that if $\delta_t$ is small for all $t<r$, for $r\in \N$, then $g_e(2^r)\ge c^r$ with $c>1$.
Hence, there exists some $t<\log(\mu_e)$ with a large $\delta_t$.
We can then apply \Cref{corollary:lower-bound-support-size-uniform} for this $t$.
\begin{corollary}
    \label{corollary:lower-bound-cardinality}
    Let $e$ be a hyperedge with cardinality based splitting function $g_e(S)=\hat{g}(|S|)$, for some $\hat{g}:[|e|]\to \R_+$.
    For every $\epsilon<1/4$, every $(1+\epsilon)$-approximation of $e$ requires support size at least $\Omega\left( \epsilon^2 |e|/\mu_e^{\log_2^{-1}(2-4\epsilon)} \right)$.
\end{corollary}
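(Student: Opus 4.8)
The plan is to set up a dichotomy: either the cardinality profile $\hat g$ grows almost geometrically (which cannot persist for many steps, because the spread $\mu_e$ bounds the ratio between the largest and smallest values of $\hat g$), or there is a \emph{small} cardinality level $t$ at which $\hat g$ is far from linear, in which case \Cref{corollary:lower-bound-support-size-uniform} applies at that $t$. By \Cref{observation:splitting-function-non-increasing} the gradients satisfy $\Delta_0\ge \Delta_1\ge\cdots$, so $\hat g$ is concave with $\hat g(0)=0$; hence $\hat g$ is unimodal, non-decreasing up to a peak $m^*$ and non-increasing afterwards. We may assume $\hat g\not\equiv 0$ (else the function is trivial) and normalize $\hat g(1)=1$, so that $\mu_e\ge \max_{\ell}\hat g(\ell)/\hat g(1)=\max_\ell \hat g(\ell)$, since a singleton is never in $W_e$. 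Write $c\coloneqq 2-4\epsilon\in(1,2)$ for $\epsilon<1/4$.

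First I would let $r$ be the smallest integer with $2^{r+1}\le |e|/2$ and $\hat g(2^{r+1})<c\cdot\hat g(2^r)$. If no such $r$ exists, then $\hat g(2^{j+1})\ge c\,\hat g(2^j)$ for all valid $j$, so $\hat g$ attains a value at least $c^{\Omega(\log|e|)}=|e|^{\Omega(\log_2 c)}$, forcing $\mu_e\ge|e|^{\Omega(\log_2 c)}$ and hence $\mu_e^{1/\log_2 c}=\Omega(|e|)$, which makes the claimed bound $\Omega(\epsilon^2|e|/\mu_e^{1/\log_2 c})$ vacuous. So assume $r$ exists. Chaining the defining inequalities for $j<r$ gives $\hat g(2^r)\ge c^r\hat g(1)=c^r$, and combined with $c^r\le\hat g(2^r)\le\mu_e$ this yields the key estimate $2^r\le\mu_e^{1/\log_2 c}=\mu_e^{1/\log_2(2-4\epsilon)}$.

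It then remains to exhibit a level $t=\Theta(2^r)$ with $\bar\delta_t\ge 2\epsilon$ and invoke \Cref{corollary:lower-bound-support-size-uniform}, recalling that for a cardinality-based function $\bar\delta_t=1-\max_{t\le\ell\le2t}\hat g(\ell)/(2\hat g(t))$. I would split on whether the peak $m^*$ lies in $[2^r,2^{r+1}]$. If it does not, then $\hat g$ is non-decreasing on $[2^r,2^{r+1}]$, so the maximum is attained at $2^{r+1}$ and $\bar\delta_{2^r}=1-\hat g(2^{r+1})/(2\hat g(2^r))>1-c/2=2\epsilon$; apply the corollary with $t=2^r$. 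If instead $m^*\in[2^r,2^{r+1}]$, then $\hat g$ is non-increasing on $[2^{r+1},2^{r+2}]$, so $\bar\delta_{2^{r+1}}=1-\hat g(2^{r+1})/(2\hat g(2^{r+1}))=1/2\ge 2\epsilon$; apply the corollary with $t=2^{r+1}\le 2\mu_e^{1/\log_2(2-4\epsilon)}$, which is at most $|e|/2$ by the definition of $r$. In either case \Cref{corollary:lower-bound-support-size-uniform} yields a support-size lower bound $\Omega(\bar\delta_t^2|e|/t)\ge\Omega(\epsilon^2|e|/\mu_e^{1/\log_2(2-4\epsilon)})$ for every $(1+\bar\delta_t/2)$-approximation, and since $\bar\delta_t/2\ge\epsilon$, every $(1+\epsilon)$-approximation is in particular such an approximation, completing the proof.

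The main obstacle I anticipate is the bookkeeping around the shape of $\hat g$: confirming the unimodality reduction, carrying the normalization $\hat g(1)=1$ through, and consistently sweeping the degenerate cases into the vacuous regime — in particular when $\hat g$ vanishes at some nontrivial cardinality (so $\mu_e$ is not finite), when $\mu_e^{1/\log_2(2-4\epsilon)}\ge|e|$, or when $2^{r+1}$ would exceed $|e|/2$. The conceptual core — geometric growth of $\hat g$ is incompatible with bounded spread, so a sufficiently non-linear level appears early — is short; essentially all the effort is in this case analysis.
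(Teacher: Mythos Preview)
Your proposal is correct and follows essentially the same approach as the paper: if $\bar\delta_t\le 2\epsilon$ for all small $t$ then $\hat g$ grows geometrically by factor $2-4\epsilon$ along powers of two, which is incompatible with the spread bound, so some $t\le O(\mu_e^{1/\log_2(2-4\epsilon)})$ has $\bar\delta_t>2\epsilon$ and \Cref{corollary:lower-bound-support-size-uniform} applies. One small slip to fix: when the peak satisfies $m^*<2^r$ the function is non-\emph{increasing} (not non-decreasing) on $[2^r,2^{r+1}]$, but then $\bar\delta_{2^r}=1-\hat g(2^r)/(2\hat g(2^r))=1/2\ge 2\epsilon$ directly, so your conclusion is unaffected.
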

\begin{proof}
    Assume without loss of generality that $\hat{g}(1)=1$.
    Note that for every $t$, if $\bar{\delta}_t \le 2\epsilon$ then
    \[
        2\epsilon
        \ge 1- \frac{\hat{g}(2t)}{2\hat{g}(t)}
        ,
    \]
    and hence $\hat{g}(2t)\ge (2-4\epsilon) \hat{g}(t)$.
    Therefore, if $\bar{\delta}_t\le 2\epsilon$ for all $t\le r$ with $r\in \N$, then $\hat{g}(2^r) \ge \left( 2-4\epsilon \right)^r$.
    However, since the maximum of the splitting function $\max_{i\in [|e|]} \hat{g}(i) = \mu_e$ there exists some $t <  2^{ \log_{2-4\epsilon}(\mu_e)}+1 \le  \mu_e^{ \log_2^{-1}(2-4\epsilon)} +1$ such that $\bar{\delta}_t > 2\epsilon$.
    The lower bound follows from applying \Cref{corollary:lower-bound-support-size-uniform} for this $t$.
\end{proof}

Finally, we present a generalization of \Cref{corollary:lower-bound-cardinality} to all \emph{unweighted splitting functions}.
A splitting function is called unweighted if all its singleton cuts are equal to $1$, i.e. $g_e(\{v\})=1$ for all $v\in e$.
One natural example of a family of unweighted splitting functions are matroid rank functions.

The proof is similar to the cardinality based case, but in the unweighted case we have an additional complication as not all cuts of size $t$ have the same value.
Therefore, we lower bound both the value of the splitting function for sets of size $t$ (as in the cardinality based case) and the fraction of sets $S$ of size $t$ for which $g_e(S)$ is at least this value.
To simplify the proof we focus on pairs of sets of size $t$ that are disjoint.
Specifically, we show that for some $t\le \mu_e^{\Omega(1)}$, at least a $\mu_e^{-\Omega(1)}$ fraction of disjoint pairs of sets have $\delta_t(S,T)>2\epsilon$.
This technique introduces an additional $\mu_e^{\Omega(1)}$ factor in the lower bound in comparison to \Cref{corollary:lower-bound-cardinality} as we apply \Cref{lemma:lower-bound-support-size} for only $\rho = \mu_e^{-\Omega(1)}$ fraction of pairs.
\begin{corollary}
    \label{corollary:lower-bound-unweighted}
    Let $\epsilon<1/4$ and denote $\gamma = \log_2^{-1}(2-4\epsilon)$.
    In addition, let $e$ by a hyperedge with an unweighted splitting function $g_e(S)$ such that $\mu_e<|e|^{1/(2\gamma)}$.
    Then, every $(1+\epsilon)$-approximation of $e$ must have support size at least $\Omega\left( \epsilon^2|e|/\mu_e^{2\gamma} \right)$.
\end{corollary}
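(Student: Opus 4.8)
The plan is to lift the ``doubling'' argument behind \Cref{corollary:lower-bound-cardinality} from the cardinality-based world to the unweighted (possibly non-symmetric, non-monotone) world by passing from pointwise values of $g_e$ to averages. We may assume $\mu_e$ is finite (otherwise the claimed bound is vacuous), and recall that being unweighted means $g_e(\{v\})=1$ for all $v\in e$; subadditivity of $g_e$ (which follows from submodularity and non-negativity, as in the proof of \Cref{claim:cut-bound-by-pairwise-flow}) then gives $g_e(S)\le|S|$ for all $S\subseteq e$, while $\max_{S\subseteq e}g_e(S)\le\mu_e$. For each integer $k\ge0$ with $2^{k+1}\le|e|$, set $m_k\eqdef\Exp{g_e(S)}{}$ with $S$ a uniformly random $2^k$-subset of $e$; then $m_0=1$ and $m_k\le\mu_e$. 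The crucial observation is that a uniformly random $2^{k+1}$-subset of $e$ has the same law as $S'\cup T'$ where $(S',T')$ is a uniformly random \emph{ordered disjoint} pair of $2^k$-subsets. Letting $p_k$ denote the fraction of such pairs with $\delta_{2^k}(S',T')>2\eps$, the bound $g_e(S'\cup T')\ge(1-2\eps)\bigl(g_e(S')+g_e(T')\bigr)$ on the ``good'' pairs and the crude bound $g_e(S')+g_e(T')\le2^{k+1}$ on the rest yield the recursion
\[
    m_{k+1}\;\ge\;(1-2\eps)\bigl(2m_k-2^{k+1}p_k\bigr)\;\ge\;(2-4\eps)\,m_k-2^{k+1}p_k .
\]

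Next I would show that the $p_k$ cannot all be tiny for small $k$. Put $\gamma\eqdef\log_2^{-1}(2-4\eps)>1$ and let $r$ be the least integer with $(2-4\eps)^r\ge2\mu_e$, so that $2^r=\Theta_\eps(\mu_e^\gamma)$. Assume for contradiction that $p_k<\rho$ for all $0\le k\le r-1$, where $\rho\eqdef c_\eps\,\mu_e^{1-\gamma}$ for a sufficiently small constant $c_\eps>0$. Unrolling the recursion from $m_0=1$, the accumulated error is a geometric series bounded by $\tfrac{\rho}{\eps}2^{r-1}$ (here one uses $2-4\eps=2(1-2\eps)$, which collapses the factor $(2-4\eps)^{r-1}(1-2\eps)^{1-r}$ to $2^{r-1}$), hence $m_r>(2-4\eps)^r-\tfrac{\rho}{\eps}2^{r-1}>\mu_e$ once $c_\eps$ is small enough, contradicting $m_r\le\mu_e$. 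Therefore some $k^\ast\le r-1$ satisfies $p_{k^\ast}\ge\rho$; set $t\eqdef2^{k^\ast}$, which is $O_\eps(\mu_e^\gamma)$ and, by the hypothesis $\mu_e<|e|^{1/(2\gamma)}$, also $O_\eps(\sqrt{|e|})$.

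Finally I would invoke \Cref{lemma:lower-bound-support-size} with $\hat\delta=2\eps$. Since $t=O_\eps(\sqrt{|e|})$, disjoint pairs of $t$-subsets form an $\Omega_\eps(1)$-fraction of all pairs in $\binom{\binom{e}{t}}{2}$, so at least an $\Omega_\eps(\rho)$-fraction of all pairs $(S,T)$ of $t$-subsets have $\delta_t(S,T)>2\eps$; the same size hypothesis guarantees the lemma's precondition $\rho\hat\delta^2=\Omega_\eps(\mu_e^{1-\gamma})\ge\Omega(|e|^{-1/2})$. The lemma then gives a support-size lower bound of $\Omega(\rho\,\hat\delta^2\,|e|/t)$, which after substituting $\rho=\Theta_\eps(\mu_e^{1-\gamma})$ and $t=O_\eps(\mu_e^\gamma)$ becomes $\Omega\bigl(\eps^2|e|/\mu_e^{2\gamma-1}\bigr)\ge\Omega\bigl(\eps^2|e|/\mu_e^{2\gamma}\bigr)$, as claimed (when the hypothesis $\mu_e<|e|^{1/(2\gamma)}$ fails, or $|e|$ is bounded, the claimed bound is $O(1)$ and hence trivial).

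I expect the main obstacle to be the second step: choosing the two parameters $r$ and $\rho$ so that the geometric sum of error terms is genuinely dominated by $(2-4\eps)^r$ while keeping $r=O_\eps(\log\mu_e)$ and $\rho$ as large as possible --- a tension between forcing $m_r>\mu_e$ (which favors large $r$) and extracting a strong bound from \Cref{lemma:lower-bound-support-size} (which favors large $\rho$ and small $t=2^{k^\ast}$). The conceptual heart, though, is the averaging/coupling trick in the first paragraph: replacing ``the value of $g_e$ on a $t$-set'' (which is ill-defined without symmetry) by the average $m_k$, and generating a random $(2t)$-set as the union of a random disjoint pair of $t$-sets, is exactly what lets the essentially one-dimensional cardinality-based argument survive, with the unweighted hypothesis entering only through the uniform bound $g_e(S)\le|S|$ that controls the error term from the ``bad'' pairs.
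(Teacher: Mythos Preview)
Your argument is correct and takes a genuinely different route from the paper's. Both proofs share the same skeleton --- a doubling argument that forces some scale $t\le O_\eps(\mu_e^\gamma)$ at which an $\Omega(\cdot)$-fraction of disjoint $t$-set pairs have $\delta_t>2\eps$, followed by \Cref{lemma:lower-bound-support-size} and the disjoint-pairs count (the paper's \Cref{claim:many-disjoint-pairs}) --- but the doubling step is implemented quite differently. The paper works pointwise: it tracks the set $R_{2^j}$ of disjoint pairs $(S,T)$ with \emph{both} $g_e(S),g_e(T)\ge(2-4\eps)^{2^j}$, and runs a recursion on the fraction $|R_{2^j}|/|D_{2^j}|$; this yields a threshold $\beta=\Theta(\mu_e^{-\gamma})$ for the bad-pair fraction. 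You instead pass to first moments $m_k=\Exp{g_e(S)}{}$ over random $2^k$-subsets and exploit the coupling ``random $2t$-set $=$ union of a random disjoint pair of $t$-sets'' to get the clean linear recursion $m_{k+1}\ge(2-4\eps)m_k-2^{k+1}p_k$, with the unweighted hypothesis entering solely through the subadditivity bound $g_e(S)\le|S|$ on the bad pairs.

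Your averaging trick actually buys something: because the error term scales like $2^{k+1}$ rather than like $\mu_e$, the contradiction threshold becomes $\rho=\Theta_\eps(\mu_e^{1-\gamma})$, which for $\mu_e>1$ is a factor $\mu_e$ larger than the paper's $\beta=\Theta(\mu_e^{-\gamma})$. Fed into \Cref{lemma:lower-bound-support-size} this gives $\Omega_\eps(\eps^2|e|/\mu_e^{2\gamma-1})$, a $\mu_e$-factor improvement over the paper's $\Omega_\eps(\eps^2|e|/\mu_e^{2\gamma})$ (you noticed this and then discarded it). The paper's pointwise approach, on the other hand, makes no use of the bound $g_e(S)\le|S|$ and would go through verbatim for any splitting function with $g_e(\{v\})\ge 1$ and $\max g_e\le\mu_e$, so it is slightly more robust to how ``unweighted'' is interpreted. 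Both arguments hide $\eps$-dependent constants (through $2^\gamma$, $4^\gamma$, and the disjoint-pairs fraction when $t$ is only $O_\eps(\sqrt{|e|})$ rather than $<\sqrt{|e|}$), so the $\Omega(\cdot)$ in the corollary should be read as $\Omega_\eps(\cdot)$ in either proof.
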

\begin{proof}
    Let $D_i = \left\{ (S,T) \in \binom{e}{i} \times \binom{e}{i}: S\cap T = \emptyset \right\}$ be the set of disjoint pairs of subsets of size $i$ of $e$.
    Also, let $P_i = \left\{ (S,T) \in D_i:\delta_i(S,T) < 2\epsilon \right\}$ be subset of $D_i$ composed of all pairs with $\delta_i\left( S,T \right)<2\epsilon$.
    Denote $\alpha=2-4\epsilon$ and let $R_i= \left\{ (S,T)\in P_{i}: f(T) \ge \alpha^{i} ,f(S) \ge \alpha^{i} \right\}$ be the subset of $P_i$ such that $f(S),f(T)\ge\alpha^i$.
    Observe that by the definition of $\mu_e$, $R_{\log_\alpha(2\mu_e)}$ must be empty.
    
    Fix $\beta = \mu_e^{-\gamma}/16$.
    We will show that if $|P_{2^j}| \ge (1-\beta) |D_{2^j}|$ for all $j\le \log_\alpha(2\mu_e)$ then $R_{\log_\alpha(2\mu_e)}$ is nonempty and hence this leads to contradiction.
    Therefore, there exists some $q<2^{\log_\alpha(2\mu_e)}\le (2\mu_e)^{\gamma}$ such that $|P_q| \le (1-\beta) |D_q|$.
    Hence, at least $\beta|D_q|$ disjoint pairs of size $q$ have $\delta_t(S,T)>2\epsilon$.
    Furthermore, for all $q<\sqrt{|e|}$, we have that $|D_q|$ is at least $1/10$ fraction of all pairs of sets of size $t$ by the following claim, which we prove later.
    \begin{claim}
        \label{claim:many-disjoint-pairs}
        For all $q<\sqrt{|e|}$, $|D_q|> \binom{\binom{|e|}{q}}{2}/10$.
    \end{claim}
    Note that we can apply the claim as $q\le \mu_e^{\gamma}< \sqrt{|e|}$ by the theorem statement.
    Hence, at least $\beta/10$ fraction of pairs $(S,T)$ of size $q$ have $\delta_q(S,T)>2\epsilon$.
    Finally, using \Cref{lemma:lower-bound-support-size} with $t=q\le (2\mu_e)^\gamma$, $\hat{\delta} = 2\epsilon$ and $\rho = \beta/10 = \mu^{-\gamma}/160$ we find that every $(1+\epsilon)$-approximation of $e$ requires support size $\Omega\left(\epsilon^2 |e|/\mu_e^{2\gamma} \right)$.
    Note that we can apply the theorem as $\rho\hat{\delta}^2 \ge \Omega(|e|^{-1/2})$.

    To finish the proof we now show that $|P_{2^j}| \ge (1-\beta) |D_{2^j}|$ for all $j\le \log_\alpha(2\mu_e)$ implies that $R_{\log_\alpha(2\mu_e)}$ is nonempty.
    Denote the size of $|R_{2^j}| = (1-f(j))|D_{2^j}|$, we will define $f(j)$ recursively, noting that $f(0)=0$ since $g_e$ is unweighted.
    Let $S\in \binom{e}{2^{j}}$, and observe that $S$ can be partitioned into two disjoint subsets $S_1,S_2$ of size $2^{j-1}$.
    If both these subsets are in the intersection of $R_{2^{j-1}}$ and $P_{2^{j-1}}$ then 
    \begin{align*}
        f(S) = 
        f(S_1\cup S_2)
        \ge \alpha	\left( g_e(S_1)+g_e(S_2) \right)
        \ge \alpha^{r+1}
        ,
    \end{align*}
    where the first inequality is by $\delta_{2^{j-1}}(S_1,S_2)<2\epsilon$ as $(S_1,S_2)\in P_{2^{j-1}}$ and the second inequality is since for every $(S_1,S_2)\in R_{2^{j-1}}$ we have $g_e(S_1),g_e(S_2)\ge \alpha^{2^{j-1}}$.
    Hence, $S$ can be a member in pairs of $R_{2^j}$.
    By our assumption $|P_{2^{j-1}}| > (1-\beta)|D_{2^{j-1}}|$, and hence $R_{2^{j-1}}\cap P_{2^{j-1}} \ge (1-f(j-1)-\beta)|D_{2^{j-1}}|$.
    Therefore, the fraction of disjoint pairs $(S,T)\in D_{2^{j}}$ where both $S,T$ can be partitioned into disjoint subsets $S_1,S_2$ and $T_1,T_2$ that are in $R_{2^{j-1}}\cap P_{2^{j-1}}$ is at least $\left( 1-f(j-1)-\beta \right)^2$.
    We can now bound $f(j)$ by,
    \[
        f(j)
        = 2f(j-1)+2\beta-(f(j-1)+\beta)^2
        \le 2(f(j-1)+\beta)
        .
    \]
    Solving this recursive formula we find $f(j) \le 2^{j+1}\beta$.
    Recalling $\beta= \mu_e^{-\gamma}/16$ we get $f(\log_\alpha(2\mu_e))> 1/2$ and hence $R_{\log_\alpha(2\mu_e)}$ is nonempty in contradiction.
    Therefore, there exists some $q<\mu_e^\gamma$ such that $|P_q| \le (1-\beta) |D_q|$.

    To finish the proof we turn back to proving \Cref{claim:many-disjoint-pairs}.
    \begin{proof}
        Let $S,T$ be two random subsets of size $j$ of $e$.
        Denote the event that $S,T$ are disjoint by $D_{S,T}$.
        Observe that
        \begin{align*}
            \Probability{D_{S,T}} 
            &= \binom{|e|-j}{|e|}\binom{|e|}{j}^{-1}
            = \frac{(|e|-j)!(|e|-j)!}{|e|!(|e|-2j)!}
            = \prod_{i\in \left\{ |e|,|e|-1,\ldots,|e|-j+1 \right\}} \frac{i-j}{i}
            \\
            &= \prod_{i\in \left\{ |e|,|e|-1,\ldots,|e|-j+1 \right\}} 1-\frac{j}{i}
            \ge \left( 1-\frac{2j}{|e|} \right)^j
            \ge \left( 1-\frac{2}{\sqrt{|e|}} \right)^{\sqrt{|e|}}
            \ge e^{-2}\left( 1-\frac{4}{\sqrt{|e|}} \right)
            ,
        \end{align*}
        where the first inequality is by $|e|-j>n/2$ for all $j<\sqrt{|e|}$, the second by $j<\sqrt{|e|}$ and the third by $(1-x/k)^k\ge e^{-x}(1-x^2/|e|)$.
        This expression is larger than $1/10$ for all $|e|>250$.
        Therefore, the fraction of disjoint pairs out of all pairs of subsets of size $j$ is at least $1/10$.
    \end{proof}
    This concludes the proof of \Cref{corollary:lower-bound-unweighted}.
\end{proof}

\subsection{Proof of Lemma \ref{lemma:lower-bound-support-size}}
We now return to proving \Cref{lemma:lower-bound-support-size}.
\begin{proof}[Proof of \Cref{lemma:lower-bound-support-size}]
    Throughout the proof we denote $|e|=n$.
    Let $p = \alpha n/t$ for $\alpha>0$ to be determined later.
    Assume on the contrary that there exists a set of hyperedges of cardinality at most $p$ such that the sum of their splitting functions approximates $g_e$ with quality $q=1+\hat{\delta}/2$.
    Let $\left\{ e_i \right\}_{i=1}^k$ be all the possible subsets of $e$ of size $p$ and denote $k = \binom{n}{p}$.
    Note that we can assume that $p\ge 2$ as otherwise the lower bound is trivial.
    Since a sum of submodular functions is submodular, this is the most general case for decomposing $e$ into hyperedges with maximal support $p$ as any two hyperedges $e_1,e_2$ with $e_1\subseteq e_2$ can be combined into a single hyperedge with $g_{e_1}+g_{e_2}$ as its splitting function.

    Starting with some notation let $I_{T} \coloneqq \left\{ i \in [k]: e_i \cap T \ne \emptyset \right\}$ and $h(S,I) \coloneqq \sum_{i\in I} g_{e_i}(S\cap e_i)$.
    Choose some subsets of size $t$ of $e$ $S_1,S_2 \subseteq e$ with $\delta_t(S_1,S_2)\ge\hat{\delta}$.
    Notice that $h(S_1\cup S_2,I_{S_1}\setminus I_{S_2})=h(S_1,I_{S_1}\setminus I_{S_2})$, and hence we can write  
    \begin{align}
        h(S_1 \cup S_2 , [k])
        &= h(S_1, I_{S_1}\setminus I_{S_2})
        + h(S_2, I_{S_2}\setminus I_{S_1})
        + h(S_1 \cup S_2, I_{S_2} \cap I_{S_1})
        \nonumber
        \\
        &\le (1+\hat{\delta}/2) g_e(S_1\cup S_2)
        \le (1+\hat{\delta}/2)(1-\hat{\delta}) \left( g_e(S_1)  +g_e(S_2) \right)
        ,
        \label{eq:support-size-1}
    \end{align}
    where the first inequality is by our assumption that $h(S,[k])$ $(1+\hat{\delta}/2)-$approximates $g_e(S)$ and the second is since $\delta_t(S_1,S_2)\ge\hat{\delta}$.
    Observe that also 
    \[
        h(S_1, [k])
        = h(S_1, I_{S_1}\setminus I_{S_2})
        + h(S_1, I_{S_1}\cap I_{S_2})
        \ge (1-\hat{\delta}/2)g_e(S_1)
        ,
    \]
    where the inequality is again by our assumption that $h(S,[k])$ $(1+\hat{\delta}/2)-$approximates $g_e(S)$.
    Let $\beta_1 \coloneqq h(S_1, I_{S_1}\cap I_{S_2})/g_e(S_1)$, we can then write $h(S_1, I_{S_1}\setminus I_{S_2}) \ge (1-\hat{\delta}/2-\beta_1)g_e(S_1)$.
    Similarly observe that $h(S_2, I_{S_2}\setminus I_{S_1}) \ge (1-\hat{\delta}/2-\beta_2)g_e(S_2)$ and denote $\beta \coloneqq \max\left\{ \beta_1,\beta_2 \right\}$.
    Substituting back into Equation \eqref{eq:support-size-1} we find
    \[
        (1-\hat{\delta}/2-\beta)g_e(S_2)
        + (1-\hat{\delta}/2-\beta)g_e(S_1)
        \le (1+\hat{\delta}/2)(1-\hat{\delta}) \left( g_e(S_1)  +g_e(S_2) \right)
        .
    \]
    Therefore, we find $\beta =\max \left\{ \beta_1 ,\beta_2 \right\}\ge \hat{\delta}^2/2$.
    Recalling the definition of $\beta$ this implies that either $h(S_1, I_{S_1}\cap I_{S_2}) > \hat{\delta}^2 g_e(S_1)/2$ or $h(S_2, I_{S_1}\cap I_{S_2})> \hat{\delta}^2 g_e(S_2)/2$.
    By the theorem assumption there are $\rho\binom{\binom{n}{t}}{2}$ pairs of sets $A,B\subseteq e$ with $\delta_t(A,B)\ge \hat{\delta}$, and following the same argument in each pair at least one of $A,B$ satisfies this lower bound. 
    Hence, there must be at least one set $T\subseteq e$ of size $t$ for which the lower bound is satisfied in at least $\rho\binom{n}{t}^{-1} \binom{\binom{n}{t}}{2} \ge \rho\binom{n}{t}/4$ pairs.
    
    For every $i\in [k]$ denote the set $Q_i = \left\{ P \subseteq e: |P|=t, P\cap e_i \ne \emptyset, P\ne T \right\}$.
    Note that by symmetry $|Q_i|=|Q|$ for all $i\in I_T$.
    Furthermore, we can use the following lemma to bound the size of $|Q_i|$, the proof of the lemma is provided later.
    \begin{claim}
        \label{claim:lower-bound-Q-size}
        If $\alpha \le \rho\hat{\delta}^2/16$, then $|Q_i| < \frac{\rho\hat{\delta}^2}{8(1+\hat{\delta}/2)} \binom{n}{t}$ for every $i \in I_T$.
    \end{claim}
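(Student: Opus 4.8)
The plan is to bound $|Q_i|$ directly by the number of size-$t$ subsets of $e$ that meet $e_i$, and then finish with a union bound. First I would observe that, since every $e_i$ is a subset of $e$ of size exactly $p=\alpha n/t$, we have
\[
|Q_i|\le N_p \coloneqq \bigl|\{P\subseteq e:\ |P|=t,\ P\cap e_i\neq\emptyset\}\bigr|,
\]
because passing to $N_p$ merely drops the harmless restriction $P\neq T$ and also makes the bound uniform over $i$ (so in particular it covers all $i\in I_T$).

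Next I would estimate $N_p$. Drawing $P$ uniformly among all $\binom{n}{t}$ subsets of $e=|e|$ of size $t$ (recall $|e|=n$), each fixed vertex of $e$ lies in $P$ with probability $t/n$, so a union bound over the $p$ vertices of $e_i$ gives
\[
\frac{N_p}{\binom{n}{t}}=\Pr\bigl[P\cap e_i\neq\emptyset\bigr]\le |e_i|\cdot\frac{t}{n}\le \frac{pt}{n}=\alpha ,
\]
using $p=\alpha n/t$. Hence $|Q_i|\le \alpha\binom{n}{t}$ for every $i\in I_T$.

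To close the claim I would plug in the hypothesis $\alpha\le \rho\hat{\delta}^2/16$ together with the trivial bound $\hat{\delta}\le 1$, which holds since $\delta_t(S,T)=1-g_e(S\cup T)/(g_e(S)+g_e(T))\le 1$ by nonnegativity of $g_e$; thus $1+\hat{\delta}/2<2$, and therefore $\alpha\le \rho\hat{\delta}^2/16 < \rho\hat{\delta}^2/\bigl(8(1+\hat{\delta}/2)\bigr)$. Combining with $|Q_i|\le \alpha\binom{n}{t}$ yields the strict bound $|Q_i|<\frac{\rho\hat{\delta}^2}{8(1+\hat{\delta}/2)}\binom{n}{t}$ asserted in the claim. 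I do not anticipate any genuine obstacle here; this is essentially a one-line union-bound estimate, and the only points needing a little care are (i) reducing $|Q_i|$ to the ``$t$-set intersecting a $p$-set'' count so the estimate is uniform in $i$, and (ii) bookkeeping the constants (the $16$ versus $8(1+\hat{\delta}/2)$) so that the final inequality comes out strict rather than merely weak.
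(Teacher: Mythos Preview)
Your argument is correct and is in fact considerably simpler than the paper's own proof of the claim. The paper computes $|Q_i|=\binom{n}{t}-\binom{n-p}{t}-1$ exactly and then spends effort lower-bounding the ratio $\binom{n-p}{t}/\binom{n}{t}$ via the product expansion $\prod_{i}(1-t/i)$, splitting into the cases $t<n^{1/3}$ and $t\ge n^{1/3}$, and invoking the global hypothesis $\rho\hat{\delta}^2\ge\Omega(|e|^{-1/2})$ (and an implicit ``$n$ large enough'') to push the estimate through to $|Q_i|<(1-e^{-4\alpha})\binom{n}{t}$ before matching constants. Your union bound $\Pr[P\cap e_i\neq\emptyset]\le |e_i|\cdot t/n=\alpha$ is exactly the same quantity $1-\binom{n-p}{t}/\binom{n}{t}$ bounded in one line, and it gives the sharper inequality $|Q_i|\le\alpha\binom{n}{t}$ directly, with no case analysis and without needing the side assumption $\rho\hat{\delta}^2\ge\Omega(|e|^{-1/2})$ at this step. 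The bookkeeping you flag in (ii)---that $\hat{\delta}\le 1$ forces $8(1+\hat{\delta}/2)<16$ and hence the final inequality is strict---is the right way to close. In short: same claim, but your route is more elementary and avoids the paper's extra hypotheses and case split.
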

    Using this notation we can rewrite $h(T, [k])$ as
    \begin{align*}
        h(T,[k])	
        &= \sum_{i\in I_T} g_{e_i}(T)
        = \sum_{i\in I_T} \frac{1}{|Q_i|} \sum_{P\in Q_i} g_{e_i}(T)
        \\
        &         = \sum_{P\in \binom{e}{t}\setminus \left\{ T \right\}}  \sum_{i\in I_T \cap I_P} \frac{1}{|Q_i|} g_{e_i}(T)
        = \frac{1}{|Q|} \sum_{P\in \binom{e}{t} \setminus \left\{ T \right\}} h(T, I_T \cap I_P)
        ,
    \end{align*}
    where the second equality is since for every $g_{e_i}$ in the inner sum, $i\in I_T, I_P$.
    By the definition of $T$ we know that for at least $\rho\binom{n}{t}/4$ of the sets $P\in \binom{e}{t}$ satisfy $h(T, I_T \cap I_P) > \hat{\delta}^2/2 \cdot g_e(T)$.
    Hence, 
    \[
        h(T,[k])	
        \ge 
        \frac{\rho\hat{\delta}^2}{8|Q|}\binom{n}{t} g_e(T)
        > (1+\hat{\delta}/2) g_e(T)
        .
    \]
    Where the last inequality is by Claim \ref{claim:lower-bound-Q-size}.
    Therefore, $h(T,[k])$ does not $(1+\hat{\delta}/2)$-approximates $g_e(T)$ in contradiction to our assumption.
    It remains to prove Claim \ref{claim:lower-bound-Q-size}.
    \begin{proof}[Proof of Claim \ref{claim:lower-bound-Q-size}]
        Recall $Q_i = \left\{ P \subseteq V: |P|=t, P\cap e_i \ne \emptyset, P\ne T \right\}$, hence the number of elements in $Q_i$ is the total number of sets of size $t$ minus the number of sets that don't intersect $e_i$ minus 1.
        Formally, this is equal to $|Q_i| = \binom{n}{t} - \binom{n-p}{t}-1$.
        Examine,
        \begin{align*}
            \binom{n}{t}^{-1} \cdot \binom{n-p}{t}
            &= \frac{t! (n-t)!}{n!} \cdot \frac{(n-p)!}{t! (n-p -t)!}
            = \frac{(n-p)!}{n!} 
            \frac{(n-t)!}{(n-p -t)!}
            = \prod_{i=n}^{n-p+1} \frac{i-t}{i}
            \\
            &= \prod_{i=n}^{n-p+1} 1 - \frac{t}{i}
            \ge \prod_{i=n}^{n-p+1} 1 - \frac{t}{n-p}
            = \left( 1 - \frac{t}{n-p} \right)^{p-1}
            .
        \end{align*}
        Substituting $p=\alpha n/t$ and observing $\alpha =\rho\hat{\delta}^2/16 < 1/2$, we find
        \begin{align*}
            \binom{n}{t}^{-1} \cdot \binom{n-p}{t}
            &\ge \left( 1 - \frac{t}{n-\alpha n/t} \right)^{\alpha n/(2t)}
            =  \left( 1 - \frac{t}{n}\frac{1}{1-\alpha/t} \right)^{\alpha n/(2t)}
            \\
            &\ge \exp\left( -2\alpha \right)\left( 1-\left( \frac{1}{1-\alpha/t} \right)^2 \frac{t}{n} \right) ^{\alpha}
            ,
        \end{align*}
        where the first inequality is from $p\ge 2$.
        The second inequality stems from $(1-x/k)^k\ge e^x(1-x^2/k)$ and $(1-\alpha/t)^{-1} <2$ since $\alpha<1/2, t\ge 1$.
        We now split the analysis into the case where $t < n^{1/3}$ and its complement.
        When $t < n^{1/3}$, we have
        \[
            \binom{n}{t}^{-1} \cdot \binom{n-p}{t}
            \ge 
            \exp\left( -2\alpha \right)\left( 1-4 \frac{1}{n^{2/3}} \right)
            \ge 
            \exp\left( -3\alpha  \right)
            ,
        \]
        where the first inequality is from  and $t<n^{1/3}$, and the second is from $\left( 1-4 /n^{2/3} \right) \ge e^{-\alpha}$ whenever $\alpha \ge \Omega(1/\sqrt{n})$ and $n$ is large enough.
        For the case when $t\ge n^{1/3}$ observe that 
        \[
            1-\left( \frac{1}{1-\alpha/t} \right)^2 \frac{t}{n}
            > \frac{1}{3}
            ,	
        \]
        for all $n>N$ for some $N>0$.
        Hence, 
        \[
            \binom{n}{t}^{-1} \cdot \binom{n-p}{t}
            \ge
            \exp\left( -(2+\log(3))\alpha\right)
            \ge \exp\left( -4\alpha\right)
            ,
        \]
        Overall, we find that in both cases $\binom{n}{t}^{-1} \cdot \binom{n-p}{t} \ge e^{-4\alpha}$.
        Plugging this back to bound the size of $|Q_i|$ we find
        \[
            |Q_i| 
            = \binom{n}{t}- \binom{n-p}{t} -1
            < \left( 1-e^{-4\alpha} \right)\binom{n}{t} - 1
            .	
        \]
        To finish proving the claim we choose $\alpha$ such that
        \[
            \frac{\rho\hat{\delta}^2}{8(1+\hat{\delta}/2)}
            \ge 1-e^{-4\alpha}
            > 1-e^{-4\alpha} - \binom{n}{t}^{-1}
            .
        \]
        Moving sides, 
        \[
            e^{-4\alpha} 
            \ge
            1 - \frac{\hat{\delta}^2}{8(1+\hat{\delta}/2)}
            ,
        \]
        and using $\log(1-x)<-x$ for $x\in(0,1)$, we find that
        \[
            \alpha \le \frac{\rho\hat{\delta}^2}{16}    
        \]
        satisfies the requirement.
    \end{proof}
    This concludes the proof of \Cref{lemma:lower-bound-support-size}.
\end{proof}

{\small
  \bibliographystyle{alphaurl}
  \bibliography{linear-bib}

\newcommand{\etalchar}[1]{$^{#1}$}
\begin{thebibliography}{dCSHS16}

\bibitem[ACK{\etalchar{+}}16]{ACKQWZ16}
Alexandr Andoni, Jiecao Chen, Robert Krauthgamer, Bo~Qin, David~P. Woodruff, and Qin Zhang.
\newblock On sketching quadratic forms.
\newblock In {\em Innovations in Theoretical Computer Science}, ITCS'16, pages 311--319. ACM, 2016.
\newblock \href {https://doi.org/10.1145/2840728.2840753} {\path{doi:10.1145/2840728.2840753}}.

\bibitem[ADK{\etalchar{+}}16]{ADKKP16}
Ittai Abraham, David Durfee, Ioannis Koutis, Sebastian Krinninger, and Richard Peng.
\newblock On fully dynamic graph sparsifiers.
\newblock In {\em {IEEE} 57th Annual Symposium on Foundations of Computer Science, {FOCS}}, pages 335--344. {IEEE} Computer Society, 2016.

\bibitem[AGK14]{AGK14}
Alexandr Andoni, Anupam Gupta, and Robert Krauthgamer.
\newblock Towards {(1} + )-approximate flow sparsifiers.
\newblock In {\em Proceedings of the Twenty-Fifth Annual {ACM-SIAM} Symposium on Discrete Algorithms, {SODA}}, pages 279--293. {SIAM}, 2014.

\bibitem[AS20]{AssadiS20}
Sepehr Assadi and Sahil Singla.
\newblock Improved truthful mechanisms for combinatorial auctions with submodular bidders.
\newblock {\em SIGecom Exch.}, 18(1):19--27, 2020.
\newblock \href {https://doi.org/10.1145/3440959.3440964} {\path{doi:10.1145/3440959.3440964}}.

\bibitem[BK96]{benczur1996approximating}
Andr{\'{a}}s~A. Bencz{\'{u}}r and David~R. Karger.
\newblock Approximating \emph{s-t} minimum cuts in \emph{{\~{O}}}(\emph{n}\({}^{\mbox{2}}\)) time.
\newblock In {\em Proceedings of the Twenty-Eighth Annual {ACM} Symposium on the Theory of Computing}, pages 47--55. {ACM}, 1996.
\newblock \href {https://doi.org/10.1145/237814.237827} {\path{doi:10.1145/237814.237827}}.

\bibitem[BK15]{BK15}
Andr{\'{a}}s~A. Bencz{\'{u}}r and David~R. Karger.
\newblock Randomized approximation schemes for cuts and flows in capacitated graphs.
\newblock {\em {SIAM} J. Comput.}, 44(2):290--319, 2015.
\newblock \href {https://doi.org/10.1137/070705970} {\path{doi:10.1137/070705970}}.

\bibitem[BSS14]{BSS14}
Joshua~D. Batson, Daniel~A. Spielman, and Nikhil Srivastava.
\newblock Twice-ramanujan sparsifiers.
\newblock {\em {SIAM} Rev.}, 56(2):315--334, 2014.
\newblock \href {https://doi.org/10.1137/130949117} {\path{doi:10.1137/130949117}}.

\bibitem[BST19]{BST19}
Nikhil Bansal, Ola Svensson, and Luca Trevisan.
\newblock New notions and constructions of sparsification for graphs and hypergraphs.
\newblock In {\em 60th {IEEE} Annual Symposium on Foundations of Computer Science, {FOCS} 2019}, pages 910--928. {IEEE} Computer Society, 2019.
\newblock \href {https://doi.org/10.1109/FOCS.2019.00059} {\path{doi:10.1109/FOCS.2019.00059}}.

\bibitem[CC84]{CC84}
Michele Conforti and G{\'{e}}rard Cornu{\'{e}}jols.
\newblock Submodular set functions, matroids and the greedy algorithm: Tight worst-case bounds and some generalizations of the {Rado-Edmonds} theorem.
\newblock {\em Discret. Appl. Math.}, 7(3):251--274, 1984.
\newblock \href {https://doi.org/10.1016/0166-218X(84)90003-9} {\path{doi:10.1016/0166-218X(84)90003-9}}.

\bibitem[CCPS21]{CPCS21}
Ruoxu Cen, Yu~Cheng, Debmalya Panigrahi, and Kevin Sun.
\newblock Sparsification of directed graphs via cut balance.
\newblock In {\em 48th International Colloquium on Automata, Languages, and Programming, {ICALP} 2021}, volume 198 of {\em LIPIcs}, pages 45:1--45:21. Schloss Dagstuhl - Leibniz-Zentrum f{\"{u}}r Informatik, 2021.
\newblock \href {https://doi.org/10.4230/LIPIcs.ICALP.2021.45} {\path{doi:10.4230/LIPIcs.ICALP.2021.45}}.

\bibitem[CKKL12]{CKKL12}
Mahdi Cheraghchi, Adam~R. Klivans, Pravesh Kothari, and Homin~K. Lee.
\newblock Submodular functions are noise stable.
\newblock In {\em Proceedings of the Twenty-Third Annual {ACM-SIAM} Symposium on Discrete Algorithms, {SODA} 2012}, pages 1586--1592. {SIAM}, 2012.
\newblock \href {https://doi.org/10.1137/1.9781611973099.126} {\path{doi:10.1137/1.9781611973099.126}}.

\bibitem[CKN20]{CKN20}
Yu~Chen, Sanjeev Khanna, and Ansh Nagda.
\newblock Near-linear size hypergraph cut sparsifiers.
\newblock In {\em 61st {IEEE} Annual Symposium on Foundations of Computer Science, {FOCS}}, pages 61--72. {IEEE}, 2020.

\bibitem[CKST19]{CKST19}
Charles Carlson, Alexandra Kolla, Nikhil Srivastava, and Luca Trevisan.
\newblock Optimal lower bounds for sketching graph cuts.
\newblock In {\em Proceedings of the 13th Annual ACM-SIAM Symposium on Discrete Algorithms (SODA)}, pages 2565--2569, 2019.

\bibitem[dCSHS16]{SHS16}
Marcel~Kenji de~Carli~Silva, Nicholas J.~A. Harvey, and Cristiane~M. Sato.
\newblock Sparse sums of positive semidefinite matrices.
\newblock {\em {ACM} Trans. Algorithms}, 12(1):9:1--9:17, 2016.
\newblock \href {https://doi.org/10.1145/2746241} {\path{doi:10.1145/2746241}}.

\bibitem[DDS{\etalchar{+}}13]{DDSSS13}
Nikhil~R. Devanur, Shaddin Dughmi, Roy Schwartz, Ankit Sharma, and Mohit Singh.
\newblock On the approximation of submodular functions.
\newblock {\em CoRR}, abs/1304.4948, 2013.
\newblock \href {https://arxiv.org/abs/1304.4948} {\path{arXiv:1304.4948}}.

\bibitem[DS06]{DobzinskiS06}
Shahar Dobzinski and Michael Schapira.
\newblock An improved approximation algorithm for combinatorial auctions with submodular bidders.
\newblock In {\em Proceedings of the Seventeenth Annual {ACM-SIAM} Symposium on Discrete Algorithms, {SODA} 2006}, pages 1064--1073. {ACM} Press, 2006.

\bibitem[Fei09]{Feige09}
Uriel Feige.
\newblock On maximizing welfare when utility functions are subadditive.
\newblock {\em {SIAM} J. Comput.}, 39(1):122--142, 2009.
\newblock \href {https://doi.org/10.1137/070680977} {\path{doi:10.1137/070680977}}.

\bibitem[FHHP19]{FHHP19}
Wai-Shing Fung, Ramesh Hariharan, Nicholas J.~A. Harvey, and Debmalya Panigrahi.
\newblock A general framework for graph sparsification.
\newblock {\em SIAM J. Comput.}, 48(4):1196–1223, 2019.
\newblock \href {https://doi.org/10.1137/16M1091666} {\path{doi:10.1137/16M1091666}}.

\bibitem[FK14]{FK14}
Vitaly Feldman and Pravesh Kothari.
\newblock Learning coverage functions and private release of marginals.
\newblock In {\em Proceedings of The 27th Conference on Learning Theory, {COLT} 2014}, volume~35 of {\em {JMLR} Workshop and Conference Proceedings}, pages 679--702. JMLR.org, 2014.

\bibitem[FKV13]{FVK13}
Vitaly Feldman, Pravesh Kothari, and Jan Vondr{\'{a}}k.
\newblock Representation, approximation and learning of submodular functions using low-rank decision trees.
\newblock In {\em {COLT} 2013 - The 26th Annual Conference on Learning Theory}, volume~30 of {\em {JMLR} Workshop and Conference Proceedings}, pages 711--740. JMLR.org, 2013.

\bibitem[FV06]{FV06}
Uriel Feige and Jan Vondr{\'{a}}k.
\newblock Approximation algorithms for allocation problems: Improving the factor of 1 - 1/e.
\newblock In {\em 47th Annual {IEEE} Symposium on Foundations of Computer Science {(FOCS} 2006)}, pages 667--676. {IEEE} Computer Society, 2006.
\newblock \href {https://doi.org/10.1109/FOCS.2006.14} {\path{doi:10.1109/FOCS.2006.14}}.

\bibitem[FV16]{FV16}
Vitaly Feldman and Jan Vondr{\'{a}}k.
\newblock Optimal bounds on approximation of submodular and {XOS} functions by juntas.
\newblock {\em {SIAM} J. Comput.}, 45(3):1129--1170, 2016.
\newblock \href {https://doi.org/10.1137/140958207} {\path{doi:10.1137/140958207}}.

\bibitem[GHIM09]{GHIM09}
Michel~X. Goemans, Nicholas J.~A. Harvey, Satoru Iwata, and Vahab~S. Mirrokni.
\newblock Approximating submodular functions everywhere.
\newblock In {\em Proceedings of the Twentieth Annual {ACM-SIAM} Symposium on Discrete Algorithms, {SODA} 2009}, pages 535--544. {SIAM}, 2009.

\bibitem[GHRU13]{GHRU13}
Anupam Gupta, Moritz Hardt, Aaron Roth, and Jonathan~R. Ullman.
\newblock Privately releasing conjunctions and the statistical query barrier.
\newblock {\em {SIAM} J. Comput.}, 42(4):1494--1520, 2013.
\newblock \href {https://doi.org/10.1137/110857714} {\path{doi:10.1137/110857714}}.

\bibitem[GK10]{GK10}
Ryan Gomes and Andreas Krause.
\newblock Budgeted nonparametric learning from data streams.
\newblock In {\em Proceedings of the 27th International Conference on Machine Learning (ICML-10)}, pages 391--398. Omnipress, 2010.

\bibitem[HKNR98]{hagerup1998characterizing}
Torben Hagerup, Jyrki Katajainen, Naomi Nishimura, and Prabhakar Ragde.
\newblock Characterizing multiterminal flow networks and computing flows in networks of small treewidth.
\newblock {\em J. Comput. Syst. Sci.}, 57(3):366--375, 1998.
\newblock \href {https://doi.org/10.1006/jcss.1998.1592} {\path{doi:10.1006/jcss.1998.1592}}.

\bibitem[JLLS23]{JLLS23}
Arun Jambulapati, James~R. Lee, Yang~P. Liu, and Aaron Sidford.
\newblock Sparsifying sums of norms.
\newblock In {\em 64th {IEEE} Annual Symposium on Foundations of Computer Science, {FOCS} 2023}, pages 1953--1962. {IEEE}, 2023.
\newblock \href {https://doi.org/10.1109/FOCS57990.2023.00119} {\path{doi:10.1109/FOCS57990.2023.00119}}.

\bibitem[JRT24]{JRT24}
Arun Jambulapati, Victor Reis, and Kevin Tian.
\newblock Linear-sized sparsifiers via near-linear time discrepancy theory.
\newblock In {\em Proceedings of the 2024 Annual ACM-SIAM Symposium on Discrete Algorithms (SODA)}, pages 5169--5208. SIAM, 2024.

\bibitem[Kar93]{Karger93}
David~R. Karger.
\newblock Global min-cuts in rnc, and other ramifications of a simple min-cut algorithm.
\newblock In {\em Proceedings of the Fourth Annual {ACM/SIGACT-SIAM} Symposium on Discrete Algorithms}, pages 21--30. {ACM/SIAM}, 1993.

\bibitem[KG11]{KG11}
Andreas Krause and Carlos Guestrin.
\newblock Submodularity and its applications in optimized information gathering.
\newblock {\em {ACM} Trans. Intell. Syst. Technol.}, 2(4):32:1--32:20, 2011.
\newblock \href {https://doi.org/10.1145/1989734.1989736} {\path{doi:10.1145/1989734.1989736}}.

\bibitem[KK15]{KK15}
Dmitry Kogan and Robert Krauthgamer.
\newblock Sketching cuts in graphs and hypergraphs.
\newblock In {\em Proceedings of the 2015 Conference on Innovations in Theoretical Computer Science, {ITCS} 2015}, pages 367--376. {ACM}, 2015.
\newblock \href {https://doi.org/10.1145/2688073.2688093} {\path{doi:10.1145/2688073.2688093}}.

\bibitem[KKTY21]{KKTY21}
Michael Kapralov, Robert Krauthgamer, Jakab Tardos, and Yuichi Yoshida.
\newblock Towards tight bounds for spectral sparsification of hypergraphs.
\newblock In {\em {STOC} '21: 53rd Annual {ACM} {SIGACT} Symposium on Theory of Computing}, pages 598--611. {ACM}, 2021.
\newblock \href {https://doi.org/10.1145/3406325.3451061} {\path{doi:10.1145/3406325.3451061}}.

\bibitem[KPS24]{KPS24}
Sanjeev Khanna, Aaron Putterman, and Madhu Sudan.
\newblock Code sparsification and its applications.
\newblock In {\em Proceedings of the 2024 Annual ACM-SIAM Symposium on Discrete Algorithms (SODA)}, pages 5145--5168. SIAM, 2024.

\bibitem[KPZ19]{karpov2019exponential}
Nikolai Karpov, Marcin Pilipczuk, and Anna Zych{-}Pawlewicz.
\newblock An exponential lower bound for cut sparsifiers in planar graphs.
\newblock {\em Algorithmica}, 81(10):4029--4042, 2019.
\newblock \href {https://doi.org/10.1007/s00453-018-0504-8} {\path{doi:10.1007/s00453-018-0504-8}}.

\bibitem[KR13]{krauthgamer2013mimicking}
Robert Krauthgamer and Inbal Rika.
\newblock Mimicking networks and succinct representations of terminal cuts.
\newblock In {\em Proceedings of the Twenty-Fourth Annual ACM-SIAM Symposium on Discrete Algorithms}, SODA '13, page 1789–1799. SIAM, 2013.

\bibitem[KZ23]{KZ23}
Jannik Kudla and Stanislav Zivn{\'{y}}.
\newblock Sparsification of monotone $k$-submodular functions of low curvature.
\newblock {\em CoRR}, abs/2302.03143, 2023.
\newblock \href {https://arxiv.org/abs/2302.03143} {\path{arXiv:2302.03143}}.

\bibitem[LB11]{LB11}
Hui Lin and Jeff~A. Bilmes.
\newblock A class of submodular functions for document summarization.
\newblock In {\em The 49th Annual Meeting of the Association for Computational Linguistics: Human Language Technologies, Proceedings of the Conference 2011}, pages 510--520. The Association for Computer Linguistics, 2011.

\bibitem[LM17]{LM17}
Pan Li and Olgica Milenkovic.
\newblock Inhomogeneous hypergraph clustering with applications.
\newblock In {\em Advances in Neural Information Processing Systems 30: Annual Conference on Neural Information Processing Systems 2017}, pages 2308--2318, 2017.

\bibitem[LM18]{LM18}
Pan Li and Olgica Milenkovic.
\newblock Submodular hypergraphs: p-laplacians, cheeger inequalities and spectral clustering.
\newblock In {\em Proceedings of the 35th International Conference on Machine Learning, {ICML} 2018}, volume~80 of {\em Proceedings of Machine Learning Research}, pages 3020--3029. {PMLR}, 2018.

\bibitem[LVS{\etalchar{+}}21]{LVSLG21}
Meng Liu, Nate Veldt, Haoyu Song, Pan Li, and David~F. Gleich.
\newblock Strongly local hypergraph diffusions for clustering and semi-supervised learning.
\newblock In {\em {WWW} '21: The Web Conference 2021}, pages 2092--2103. {ACM} / {IW3C2}, 2021.
\newblock \href {https://doi.org/10.1145/3442381.3449887} {\path{doi:10.1145/3442381.3449887}}.

\bibitem[McC05]{mccormick2005submodular}
S~Thomas McCormick.
\newblock Submodular function minimization.
\newblock {\em Handbooks in operations research and management science}, 12:321--391, 2005.

\bibitem[OST23]{OST22}
Kazusato Oko, Shinsaku Sakaue, and Shin{-}ichi Tanigawa.
\newblock Nearly tight spectral sparsification of directed hypergraphs.
\newblock In {\em 50th International Colloquium on Automata, Languages, and Programming, {ICALP} 2023}, volume 261 of {\em LIPIcs}, pages 94:1--94:19. Schloss Dagstuhl - Leibniz-Zentrum f{\"{u}}r Informatik, 2023.
\newblock \href {https://doi.org/10.4230/LIPIcs.ICALP.2023.94} {\path{doi:10.4230/LIPIcs.ICALP.2023.94}}.

\bibitem[Pog17]{Pogrow17}
Yosef Pogrow.
\newblock Solving symmetric diagonally dominant linear systems in sublinear time (and some observations on graph sparsification).
\newblock Master's thesis, Weizmann Institute of Science, 2017.
\newblock URL: \url{https://www.wisdom.weizmann.ac.il/~robi/files/YosefPogrow-MScThesis-2017_12.pdf}.

\bibitem[Qua24]{quanrud2022quotient}
Kent Quanrud.
\newblock Quotient sparsification for submodular functions.
\newblock In {\em Proceedings of the 2024 Annual ACM-SIAM Symposium on Discrete Algorithms (SODA)}, pages 5209--5248. SIAM, 2024.

\bibitem[RY22]{RY22}
Akbar Rafiey and Yuichi Yoshida.
\newblock Sparsification of decomposable submodular functions.
\newblock In {\em Thirty-Sixth {AAAI} Conference on Artificial Intelligence}, pages 10336--10344. {AAAI} Press, 2022.
\newblock \href {https://doi.org/10.1609/aaai.v36i9.21275} {\path{doi:10.1609/aaai.v36i9.21275}}.

\bibitem[SS11]{SS11}
Daniel~A. Spielman and Nikhil Srivastava.
\newblock Graph sparsification by effective resistances.
\newblock {\em {SIAM} J. Comput.}, 40(6):1913--1926, 2011.
\newblock \href {https://doi.org/10.1137/080734029} {\path{doi:10.1137/080734029}}.

\bibitem[ST11]{ST11}
Daniel~A. Spielman and Shang{-}Hua Teng.
\newblock Spectral sparsification of graphs.
\newblock {\em {SIAM} J. Comput.}, 40(4):981--1025, 2011.
\newblock \href {https://doi.org/10.1137/08074489X} {\path{doi:10.1137/08074489X}}.

\bibitem[SY19]{SY19}
Tasuku Soma and Yuichi Yoshida.
\newblock Spectral sparsification of hypergraphs.
\newblock In {\em Proceedings of the Thirtieth Annual {ACM-SIAM} Symposium on Discrete Algorithms, {SODA} 2019}, pages 2570--2581. {SIAM}, 2019.
\newblock \href {https://doi.org/10.1137/1.9781611975482.159} {\path{doi:10.1137/1.9781611975482.159}}.

\bibitem[TIWB14]{TIWB14}
Sebastian Tschiatschek, Rishabh~K. Iyer, Haochen Wei, and Jeff~A. Bilmes.
\newblock Learning mixtures of submodular functions for image collection summarization.
\newblock In {\em Advances in Neural Information Processing Systems 27 (NeurIPS 2014)}, pages 1413--1421, 2014.

\bibitem[VBK20]{VBK20}
Nate Veldt, Austin~R. Benson, and Jon~M. Kleinberg.
\newblock Minimizing localized ratio cut objectives in hypergraphs.
\newblock In {\em {KDD} '20: The 26th {ACM} {SIGKDD} Conference on Knowledge Discovery and Data Mining}, pages 1708--1718. {ACM}, 2020.
\newblock \href {https://doi.org/10.1145/3394486.3403222} {\path{doi:10.1145/3394486.3403222}}.

\bibitem[VBK21]{VBK21}
Nate Veldt, Austin~R. Benson, and Jon~M. Kleinberg.
\newblock Approximate decomposable submodular function minimization for cardinality-based components.
\newblock In {\em Advances in Neural Information Processing Systems 34 (NeurIPS 2021)}, pages 3744--3756, 2021.
\newblock URL: \url{https://proceedings.neurips.cc/paper/2021/hash/1e8a19426224ca89e83cef47f1e7f53b-Abstract.html}.

\bibitem[VBK22]{VBK22}
Nate Veldt, Austin~R. Benson, and Jon~M. Kleinberg.
\newblock Hypergraph cuts with general splitting functions.
\newblock {\em {SIAM} Rev.}, 64(3):650--685, 2022.
\newblock \href {https://doi.org/10.1137/20m1321048} {\path{doi:10.1137/20m1321048}}.

\bibitem[Von10]{vondrak2010submodularity}
Jan Vondr{\'a}k.
\newblock Submodularity and curvature: The optimal algorithm (combinatorial optimization and discrete algorithms).
\newblock {\em RIMS Kokyuroku Bessatsu}, 23:253--266, 2010.
\newblock URL: \url{http://hdl.handle.net/2433/177046}.

\bibitem[Yam16]{Yamaguchi2016}
Yutaro Yamaguchi.
\newblock Realizing symmetric set functions as hypergraph cut capacity.
\newblock {\em Discret. Math.}, 339(8):2007--2017, 2016.
\newblock \href {https://doi.org/10.1016/j.disc.2016.02.010} {\path{doi:10.1016/j.disc.2016.02.010}}.

\bibitem[ZLS22]{ZLS22}
Yu~Zhu, Boning Li, and Santiago Segarra.
\newblock Hypergraph 1-spectral clustering with general submodular weights.
\newblock In {\em 56th Asilomar Conference on Signals, Systems, and Computers, {ACSSC} 2022}, pages 935--939. {IEEE}, 2022.
\newblock \href {https://doi.org/10.1109/IEEECONF56349.2022.10052065} {\path{doi:10.1109/IEEECONF56349.2022.10052065}}.

\end{thebibliography}
} %
\appendix

\section{Chernoff Bounds}
\label{sec:preliminaries}
We use the following versions of the Chernoff bound throughout the paper.
	\begin{lemma}[Chernoff bound for bernoulli random variables]
		Let $X_1,\ldots, X_m$ be independent random variables taking values in $\left\{0,1\right\}$. Let $X$ denote their sum and $\mu = \Exp{X}{}$. 
		Then,
		\[
			\forall \delta \ge 0,
			\quad
			\Probability{\left|X-\mu\right| \ge \delta \mu}
			 \le 2\cdot\exp\left(-\frac{\delta^2 \mu}{2+\delta}\right)
			 .
		\]
		\label{lemma:chernoff-binomial}
	\end{lemma}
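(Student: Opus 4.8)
The plan is to use the standard moment-generating-function (Chernoff) method, bounding the upper and lower tails separately and combining them by a union bound, which is what produces the leading factor $2$. I would write $p_i \eqdef \Probability{X_i = 1}$, so $\mu = \sum_i p_i$, and start from the elementary bound $\Exp{e^{tX_i}}{} = 1 + p_i(e^t-1) \le \exp\bigl(p_i(e^t-1)\bigr)$, valid for every real $t$; independence then gives $\Exp{e^{tX}}{} = \prod_i \Exp{e^{tX_i}}{} \le \exp\bigl(\mu(e^t-1)\bigr)$.

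For the upper tail, fix $\delta \ge 0$ and $t>0$ and apply Markov's inequality to $e^{tX}$:
\[
  \Probability{X \ge (1+\delta)\mu} \le e^{-t(1+\delta)\mu}\,\Exp{e^{tX}}{} \le \exp\bigl(\mu(e^t-1) - t(1+\delta)\mu\bigr).
\]
Minimizing the exponent over $t$ yields the choice $t = \ln(1+\delta)$ and the bound $\bigl(e^{\delta}(1+\delta)^{-(1+\delta)}\bigr)^{\mu}$. It then remains to check the scalar inequality $e^{\delta}(1+\delta)^{-(1+\delta)} \le \exp\bigl(-\delta^2/(2+\delta)\bigr)$, which after taking logarithms and dividing by $1+\delta$ is equivalent to $\ln(1+\delta) \ge 2\delta/(2+\delta)$; this follows since both sides vanish at $\delta=0$ and the derivative comparison $\tfrac{1}{1+\delta} \ge \tfrac{4}{(2+\delta)^2}$ reduces to $(2+\delta)^2 \ge 4(1+\delta)$, i.e.\ $\delta^2 \ge 0$.

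For the lower tail, I would run the symmetric computation with $t<0$ (equivalently, apply Markov to $e^{-sX}$ with $s>0$ and the same product bound), obtaining for $\delta \in [0,1]$ that $\Probability{X \le (1-\delta)\mu} \le \bigl(e^{-\delta}(1-\delta)^{-(1-\delta)}\bigr)^{\mu} \le \exp(-\delta^2\mu/2)$, while for $\delta > 1$ the probability is $0$ because $X \ge 0$. Since $\delta^2/2 \ge \delta^2/(2+\delta)$ for all $\delta \ge 0$, both one-sided bounds are at most $\exp\bigl(-\delta^2\mu/(2+\delta)\bigr)$, and a union bound over the events $\{X \ge (1+\delta)\mu\}$ and $\{X \le (1-\delta)\mu\}$ gives the claimed estimate with the factor $2$. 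The only step needing any care is the scalar inequality in the upper-tail analysis; everything else is routine, so I do not anticipate a real obstacle.
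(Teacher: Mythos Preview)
Your proof is correct and is the standard moment-generating-function argument. The paper does not actually prove this lemma: it is stated in an appendix of preliminary Chernoff bounds as a known fact, without proof or citation, so there is no ``paper's own proof'' to compare against. Your write-up would serve perfectly well as a self-contained justification; the scalar inequality $\ln(1+\delta)\ge 2\delta/(2+\delta)$ via the derivative comparison is exactly the clean way to get the $2+\delta$ denominator, and the observation that the lower tail is vacuous for $\delta>1$ closes the remaining case.
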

	\begin{lemma}[Chernoff bound for bounded random variables, Theorem 6.1 in \cite{AGK14}]
		Let $X_1,\ldots, X_m\ge 0$ be independent random variables such that either $X_i$ is deterministic or $X_i\in [0,b]$.
		Let $X$ denote their sum and $\mu = \Exp{X}{}$, then,
		\[
			\forall \delta >0,
			\quad
			\Probability{\left|X-\mu\right| \ge \delta \mu}
			\le 2\cdot\exp\left(-\frac{\delta^2 \mu}{(2+\delta)b}\right)
			.
		\]
		Additionally,
		\[
			\forall \delta \in [0,1],
			\quad
			\Probability{\left|X-\mu\right| \ge \delta \mu}
			\le 2\cdot\exp\left(-\frac{\delta^2 \mu}{3b}\right)
			.
		\]
		\label{lemma:chernoff-bounded}
	\end{lemma}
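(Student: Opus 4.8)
The plan is to prove this by the standard exponential-moment (Chernoff) method: bound the two tails separately and combine them with a union bound. Throughout write $\mu_i=\mathbb{E}[X_i]$, so $\mu=\sum_i\mu_i$.

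First I would reduce to the case in which every $X_i$ is a genuine $[0,b]$-valued random variable. Writing $X=c+Y$, where $c$ is the (deterministic) sum of the deterministic summands and $Y$ is the sum of the remaining variables, each lying in $[0,b]$, the event $\{|X-\mu|\ge\delta\mu\}$ becomes $\{|Y-\mathbb{E}[Y]|\ge\delta\mu\}$ with $\mathbb{E}[Y]=\mu-c\le\mu$; the degenerate case $\mathbb{E}[Y]=0$ is trivial since then $X=\mu$ is deterministic. So this equals $\{|Y-\mathbb{E}[Y]|\ge\delta'\,\mathbb{E}[Y]\}$ for $\delta'=\delta\mu/\mathbb{E}[Y]\ge\delta$, and writing $r=\mu/\mathbb{E}[Y]\ge 1$ one checks the algebraic identity $\delta'^2\mathbb{E}[Y]/(2+\delta')=\delta^2 r\mu/(2+\delta r)\ge \delta^2\mu/(2+\delta)$ (equivalent to $2r\ge 2$). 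Hence proving the bound $\Pr[|Y-\mathbb{E}[Y]|\ge\delta'\,\mathbb{E}[Y]]\le 2\exp(-\delta'^2\mathbb{E}[Y]/((2+\delta')b))$ for sums of $[0,b]$-valued variables suffices; moreover the lower-tail part is vacuous unless $\delta'\le 1$, since $Y\ge 0$.

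For the upper tail I would use exponential Markov: for $t>0$, $\Pr[Y\ge(1+\delta')\mathbb{E}[Y]]\le e^{-t(1+\delta')\mathbb{E}[Y]}\,\mathbb{E}[e^{tY}]$. To bound the moment generating function, use convexity of $x\mapsto e^{tx}$ on $[0,b]$, which gives $e^{tX_i}\le 1+\tfrac{X_i}{b}(e^{tb}-1)$ pointwise; taking expectations, using $1+y\le e^{y}$, and multiplying over $i$ by independence yields $\mathbb{E}[e^{tY}]\le\exp\!\big(\tfrac{\mathbb{E}[Y]}{b}(e^{tb}-1)\big)$. Choosing $t=\tfrac1b\ln(1+\delta')$ gives $\Pr[Y\ge(1+\delta')\mathbb{E}[Y]]\le\exp\!\big(-\tfrac{\mathbb{E}[Y]}{b}[(1+\delta')\ln(1+\delta')-\delta']\big)$. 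Symmetrically, applying the same MGF estimate (valid for $t<0$ as well, by the same convexity argument) with $t=\tfrac1b\ln(1-\delta')$ bounds the lower tail by $\exp\!\big(-\tfrac{\mathbb{E}[Y]}{b}[(1-\delta')\ln(1-\delta')+\delta']\big)$.

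It then remains to simplify the exponents by elementary calculus: $(1+x)\ln(1+x)-x\ge \tfrac{x^2}{2+x}$ for all $x>0$, and $(1-x)\ln(1-x)+x\ge\tfrac{x^2}{2}\ge\tfrac{x^2}{2+x}$ for $x\in[0,1)$ — each follows by noting the difference of the two sides vanishes at $x=0$ together with its first derivative and is convex (resp.\ has nonnegative first derivative) on the relevant interval. Adding the two tail estimates yields $\Pr[|Y-\mathbb{E}[Y]|\ge\delta'\,\mathbb{E}[Y]]\le 2\exp(-\delta'^2\mathbb{E}[Y]/((2+\delta')b))$, which via the reduction gives the first displayed bound for the original $X$ and $\delta$. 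For the second displayed bound restrict to $\delta\in[0,1]$, where $2+\delta\le 3$ and also $2\le 3$, so $\delta^2\mu/((2+\delta)b)\ge\delta^2\mu/(3b)$. The only mildly delicate point is the reduction in the second paragraph — verifying that absorbing the deterministic summands into $c$ genuinely does not weaken the exponent — while everything else is entirely routine; I do not anticipate any real obstacle.
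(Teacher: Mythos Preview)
The paper does not give its own proof of this lemma: it is stated in the appendix as a cited result (Theorem~6.1 in \cite{AGK14}) and used as a black box throughout. Your argument is a correct, standard exponential-moment derivation, including the reduction that absorbs the deterministic summands (which is the only nonroutine step, and your algebraic verification that the exponent can only improve under this reduction is right). So there is nothing to compare against; you have simply supplied a valid proof where the paper chose to cite one.
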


\section{Approximating Coverage Functions}
\label{appendix:coverage}

Due to the wide prevalence of submodular hypergraph cut functions, our results have several applications.
One example is finding a succinct representation for coverage functions.
\begin{definition}
  A function $f:2^V\to \R_+$ is called a \emph{coverage function}
  over ground set $W$ with weight function $\rho:W\to \R_+$
  if there exists a collection $\{A_v\}_{v\in V}$ where each $A_v\subseteq W$,
  such that
  \[
    \forall S\subseteq V,
    \qquad
    f(S) = \sum_{w\in \bigcup_{v\in S} A_v} \rho(w). 
  \]
\end{definition}
Coverage functions are commonly used as objective functions,
for example in sensor-placement problems~\cite{KG11}.
The running time of algorithms for these problems
may be improved considerably by decreasing the size of the ground set $|W|$. 
To this end, the next corollary shows
that every coverage function can be approximated
by a coverage function with ground-set size $\tO_\epsilon(n)$. 
In general, the given ground set might have size $2^n$,
hence the decrease in size may be exponential. 

\begin{corollary}
  \label{claim:coverage-approximation}
  Let $e$ be a hyperedge whose splitting function $g_e$ is a coverage function,
  and let $K \eqdef \max_{w\in W} \left| \left\{v\in V: w\in A_v  \right\} \right|$
  The $e$ can be $(1+\epsilon)$-approximated by $O(\epsilon^{-2} n \log n)$ hyperedges with support size at most $K$.
  Furthermore, the resulting sparsifier is a coverage function on $O(\epsilon^{-2} n \log n)$ elements.
\end{corollary}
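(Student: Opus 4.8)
The plan is to realize the coverage function $g_e$ as the cut function of a very simple auxiliary hypergraph, sparsify that hypergraph with \Cref{theorem:upper-bound-sparsification-finite-spread}, and then read the result back off as a coverage function. For each element $w\in W$ set $B_w\eqdef\{v\in e:w\in A_v\}$, so $|B_w|\le K$ by the definition of $K$; since $w\in\bigcup_{v\in S}A_v$ if and only if $S\cap B_w\ne\emptyset$, we have $g_e(S)=\sum_{w\in W}\rho(w)\cdot\indic{S\cap B_w\ne\emptyset}$ for every $S\subseteq e$. Hence $g_e$ is exactly $\mintcut_{H'}$ for the submodular hypergraph $H'\eqdef(e,\{B_w\}_w,\{\rho(w)\cdot h_{B_w}\})$, where $h_B(S)\eqdef\indic{S\cap B\ne\emptyset}$ is monotone and submodular and we discard every $w$ with $B_w=\emptyset$; by construction every hyperedge of $H'$ has support size at most $K$ (and $H'$ is itself, trivially, a coverage hypergraph with the original ground set).

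Next I would observe that $H'$ has spread $\mu_{H'}=1$. For the hyperedge $B_w$, the minimum in \eqref{eq:spread} ranges over nonempty $S\subseteq B_w$, on all of which $\rho(w)\cdot h_{B_w}$ takes the value $\rho(w)$, which is also its maximum; and since $h_{B_w}(B_w)=1\ne0$, the exceptional set is just $\{\emptyset\}$. The one point worth flagging—really the only subtlety in the whole argument—is exactly here: the spread is measured over subsets of the hyperedge $B_w$, not of $e$, so the fact that $h_{B_w}$ vanishes on many nonempty subsets of $e$ is irrelevant and the spread stays finite. With $\mu_{H'}=1$ in hand, I would apply \Cref{theorem:upper-bound-sparsification-finite-spread} to $H'$ (whose vertex set $e$ has size $|e|\le n$) to obtain a $(1+\epsilon)$-sparsifier $H''$ that is a reweighted sub-hypergraph of $H'$ and has $\tilde O(\epsilon^{-2}\mu_{H'}|e|)=O(\epsilon^{-2}n\log n)$ hyperedges. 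Because $H''$ is a reweighted sub-hypergraph, each surviving hyperedge is a positive rescaling $s_w\cdot\rho(w)\cdot h_{B_w}$ of an original one, hence still of the form ``weight times $h_{B_w}$'' on a hyperedge of size at most $K$, and $\mintcut_{H''}(S)\in(1\pm\epsilon)\,g_e(S)$ for every $S$; this yields the first assertion.

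Finally I would reinterpret $H''$ as a coverage function: take the ground set $W''\eqdef\{w\in W:B_w\text{ survives in }H''\}$ with weights $\rho''(w)\eqdef s_w\rho(w)$ and the collection $A''_v\eqdef\{w\in W'':v\in B_w\}$ for $v\in e$; then the coverage function determined by $(W'',\rho'',\{A''_v\}_v)$ equals $\mintcut_{H''}$, and $|W''|=|E''|=O(\epsilon^{-2}n\log n)$, as claimed. None of the steps is hard—the argument is essentially a change of viewpoint—so the only things to get right are the spread bookkeeping of the middle paragraph and the reweighted-sub-hypergraph guarantee of \Cref{theorem:upper-bound-sparsification-finite-spread}; for a polynomial-time construction one additionally invokes the standard assumption that $\rho$ is integral and polynomially bounded, matching the hypothesis under which \Cref{theorem:upper-bound-sparsification-finite-spread} runs in polynomial time.
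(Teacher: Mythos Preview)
Your proposal is correct and follows essentially the same approach as the paper: both rewrite $g_e$ as $\sum_{w\in W}\rho(w)\cdot\min(|V_w\cap S|,1)$ (your $h_{B_w}$ is exactly this function, since $\indic{S\cap B_w\ne\emptyset}=\min(|S\cap B_w|,1)$), observe that each summand has spread $1$, and apply \Cref{theorem:upper-bound-sparsification-finite-spread}. Your write-up is in fact more careful than the paper's on two points---you explicitly verify the spread computation over subsets of $B_w$ rather than of $e$, and you spell out the reinterpretation of the reweighted sub-hypergraph as a coverage function---but the underlying argument is identical.
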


This result was recently obtained independently in~\cite{quanrud2022quotient},
using two different proof methods.
One of them is by reduction to (sparsification of) undirected hypergraph cuts.
Our proof is simpler, and designs a reduction to (sparsification of) additive splitting,
for which we can apply \Cref{theorem:upper-bound-sparsification-finite-spread}.

\begin{proof}
  For $w\in W$, 
  let $V_w \coloneqq \left\{v\in V: w\in A_v  \right\}$.
  Observe that $f(S)$ can be written as 
  \[
    f(S) 
    = \sum_{w\in \bigcup_{v\in S} A_v} \rho(w)
    = \sum_{w\in W} \rho(w)\cdot 1_{\left| V_w \cap S \right| > 0}
    = \sum_{w\in W} \rho(w) \cdot \min\left(\left| V_w \cap S \right|, 1 \right)
    .
  \]
  Hence, $f$ can be written as a sum of $|W|$ splitting functions of the form
  $g_w: S\mapsto \rho(w)\cdot\min\left(\left| V_w \cap S \right|, 1 \right)$.
  Observe that the spread of each $g_w$ is $\mu_{g_w} = 1$,
  therefore \Cref{theorem:upper-bound-sparsification-finite-spread} yields the desired result.
  Finally, note that the resulting sparsifier is a reweighted subgraph,
  hence it is a coverage function with $O(\epsilon^{-2} n \log n )$ elements.
\end{proof}

\section{Application to Terminal Cuts in a Graph}
\label{appendix:terminal-cut-functions}
\textbf{Terminal-Cut Functions.}
Let $G=(V,E,w)$ be some undirected graph and let $\T\subseteq V$ be a special set of vertices called the terminals of $G$.
Denote $|\T|=k$.
The terminal cut function of $G$ is defined as
\[
    \forall S\subseteq \T,
    \quad
    \mintcut_G(S,\bar{S})
    =
    \min_{U\subseteq V: U\cap \T = S} \sum_{e\in \delta(U)} w(e)
    ,
\]
where $\delta(U)$ is the set of edges with exactly one endpoint in $U$.
Note that the terminal cut function $\mintcut_G(S):2^{\T}\to \R_+$, of $G$ is submodular.
In \cite{hagerup1998characterizing}, the authors show that it is possible to construct a graph with $O(2^{2^k})$ vertices that preserves the terminal cut function of $G$ exactly.
On the other hand, \cite{krauthgamer2013mimicking,karpov2019exponential} showed that a minimum of $2^{\Omega(k)}$ vertices is necessary to preserve the terminal cut function of $G$, even for planar graphs.

In the approximate case, a construction of quality $1+\epsilon$ with $\tilde{O}(\text{poly}(k,\epsilon^{-1}))$ vertices was shown for bipartite graphs by \cite{AGK14,ADKKP16}.
However, there is no known upper or lower bound for the size of a data structure approximating the cuts of general graphs.
One such possible data structure would be to represent the terminal cut function as cuts of a submodular hypergraph with simple splitting functions.
Then, using Theorem \ref{theorem:upper-bound-sparsification-all} we can achieve a small data structure to represent the graph cuts.
However, it turns out that this is not possible using the all-or-nothing splitting function.

\begin{theorem}[Theorem 3.3 in \cite{Yamaguchi2016}]
    If a symmetric submodular function $f:2^V\to \R_+$ can be realized as a cut capacity function of an undirected hypergraph with nonnegative capacities, then
    \[	
        \forall i\in [V],
        \forall S \in \binom{V}{i},
        \quad
        (-1)^{i}f^{(i)} (S) \leq 0
        ,
    \]
    where
    \[
        f^{(i)}(S) = \sum_{X\subseteq S} (-1)^{|S \setminus X|} f(X)
        .
    \]
    \label{theorem:submodular-realization}
\end{theorem}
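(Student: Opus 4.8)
The plan is to exploit linearity of the finite-difference operator $f\mapsto f^{(i)}(S)$ together with the explicit form of a hypergraph cut function. By hypothesis $f=\mintcut_H$ for some undirected hypergraph $H=(V,E,w)$ with $w_e\ge0$, so $f=\sum_{e\in E}w_e\,\bar g_e$, where $\bar g_e$ is the all-or-nothing splitting function extended to all of $2^V$, namely $\bar g_e(X)=\indic{\emptyset\neq X\cap e\neq e}=1-\indic{X\cap e=\emptyset}-\indic{e\subseteq X}$ (the last two events are disjoint when $e\neq\emptyset$; empty or singleton hyperedges contribute nothing and may be discarded). Since $f^{(i)}(S)=\sum_{e\in E}w_e\,\bar g_e^{(i)}(S)$ and every $w_e\ge0$, it suffices to prove the sign condition $(-1)^i\,\bar g_e^{(i)}(S)\le0$ for a single hyperedge $e$, every $i\in\{1,\dots,|V|\}$, and every $S$ with $|S|=i$; the case $i=0$ is trivial because $f(\emptyset)=0$.

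First I would record the elementary identity $\sum_{X\subseteq T}(-1)^{|T\setminus X|}=\indic{T=\emptyset}$ and apply it termwise to the three-term expansion of $\bar g_e$. The constant $1$ contributes $\indic{S=\emptyset}$, which is $0$ for $i\ge1$. For the term $\indic{X\cap e=\emptyset}$ the sum runs over $X\subseteq S\setminus e$, and since $(-1)^{|S\setminus X|}=(-1)^{|S\cap e|}(-1)^{|(S\setminus e)\setminus X|}$ there, it collapses to $(-1)^{|S\cap e|}\indic{S\setminus e=\emptyset}=(-1)^{|S|}\indic{S\subseteq e}$. For the term $\indic{e\subseteq X}$ the sum is nonempty only if $e\subseteq S$, and writing $X=e\cup Y$ with $Y\subseteq S\setminus e$ it collapses to $\indic{S\setminus e=\emptyset}=\indic{S=e}$. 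Hence, for $i=|S|\ge1$,
\[
  \bar g_e^{(i)}(S)\;=\;-(-1)^{|S|}\,\indic{S\subseteq e}\;-\;\indic{S=e}.
\]

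Multiplying by $(-1)^i$ gives $(-1)^i\,\bar g_e^{(i)}(S)=-\indic{S\subseteq e}-(-1)^i\indic{S=e}$, which equals $0$ if $S\not\subseteq e$, equals $-1$ if $S$ is a nonempty proper subset of $e$, and equals $-1-(-1)^i\in\{-2,0\}$ if $S=e$; in every case it is $\le0$. Summing $w_e\ge0$ times these inequalities over $e\in E$ gives $(-1)^i f^{(i)}(S)\le0$, as claimed. The bookkeeping is routine; the only delicate point is the boundary case $S=e$ with $|e|$ odd, where $(-1)^i\,\bar g_e^{(i)}(S)$ is exactly $0$ rather than strictly negative — which is precisely why the theorem states only a non-strict inequality. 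As a sanity check, on an ordinary graph edge $e=\{u,v\}$ the formula gives $f^{(2)}(\{u,v\})=f(\{u,v\})-f(\{u\})-f(\{v\})+f(\emptyset)=-2$, consistent with $(-1)^2(-2)\le0$.
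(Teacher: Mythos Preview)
Your proof is correct. Note, however, that the paper does not supply its own proof of this theorem; it is quoted verbatim as Theorem~3.3 of \cite{Yamaguchi2016} and used as a black box in \Cref{appendix:terminal-cut-functions}. There is therefore no in-paper argument to compare against.

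That said, your computation is clean and fully self-contained. The key steps---linearity of $f\mapsto f^{(i)}(S)$, the decomposition $\bar g_e(X)=1-\indic{X\cap e=\emptyset}-\indic{e\subseteq X}$ for $e\neq\emptyset$, and the three Möbius-type collapses via $\sum_{X\subseteq T}(-1)^{|T\setminus X|}=\indic{T=\emptyset}$---are all correct, and the resulting closed form
\[
(-1)^{i}\,\bar g_e^{(i)}(S)=-\indic{S\subseteq e}-(-1)^{i}\indic{S=e}
\]
immediately yields the sign condition. Your case analysis at $S=e$ with $|e|$ odd (value $0$) versus even (value $-2$) is exactly right, and the sanity check on a single graph edge matches.
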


Consider the following counter-example, let $G=(V,E)$ be the star graph with 4 leaves and a central vertex, where the terminals $\T = \left\{ t_1,t_2,t_3,t_4 \right\}$ are the leaves.
Denoting its terminal cut function as $f_G$, observe that $f_G^{(3)}(\left\{t_1,t_2,t_3  \right\}) = -2 <0$.
Hence, by Theorem \ref{theorem:submodular-realization}, $f$ cannot be realized as a cut capacity function of an undirected hypergraph.
However, in this case $f_G$ is exactly the small-side splitting function.
Therefore, we pose the following question - is there a class of simple splitting functions that can represent the terminal cut function of any graph?

\section{Proof of Expectation for Symmetric Additive Decomposition}
\label{appendix:proof-symmetric-additive}
\begin{proof}[Proof of Claim \ref{claim:expectation-symmmetric-deformation}]
    Let $S_i = S\cap e_i, \bar{S}_i = \bar{S} \cap e_i$ be the intersection between $S$ and the sampled hyperedge.
    Assume without loss of generality that $|S|<|\bar{S}|$.
    Observe that the function $\min(x,y,K)$ is concave, and hence by Jensen's inequality
    \[
        \Exp{\min(\left| S_i \right|/p, \left| \bar{S}_i \right|/p, K)}{} 
        \le \min(\Exp{\left| S_i \right|/p}{},\Exp{\left| \bar{S}_i\right|/p}{}, K)
        =\min(\left|S\right|,\left|\bar{S}\right| K)
        = g_e(S)
        ,
    \]
    where the second inequality is since $S_i,\bar{S}_i$ are independent.
    Hence, it only remains to prove that $\Exp{Ng_{e_i}(S)}{}\ge (1-2\epsilon')g_e(S)$.
    We split the analysis into two cases, when $|S|< K/2$ and its complement.
    Starting with the case when $|S|$ is small, observe that setting $\delta=K/|S|>2$ we have by Claim \ref{claim:subset-size-concentration} that
    \[
        \Probability{\left|S_i\right| \ge pK} 
        \le 2|e|^{-\frac{c\delta^2|S|}{(2+\delta)K\epsilon'^2}}
        \le 2|e|^{-\frac{c\delta|S|}{2K}}
        =2|e|^{-\frac{c}{2}}
        ,
    \]
    where the second inequality is by $\delta/(2+\delta)>1/2$ when $\delta>2$.
    Choosing $c>10$ we find that the probability is at most $|e|^{-5}$.
    Note that $K<|e|/2$, otherwise it doesn't affect the splitting function.
    Hence, $|\bar{S}|\ge 3K/2$ therefore using \Cref{claim:subset-size-concentration} again we have (setting $\delta=1/2$),
    \[
        \Probability{\left|\bar{S}_i\right|<pK} 
        \le 2|e|^{-\frac{c\delta^2|\bar{S}|}{3K\epsilon'^2}}
        \le 2|e|^{-\frac{3cK/2}{12K\epsilon'^2}}
        \le 2|e|^{-4c}
        ,
    \]
    where the last inequality is by $\epsilon'=\epsilon/4$ and $\epsilon<1$.
    Choosing $c>2$ we find that the probability that both events not occur is at most $4|e|^{-5}$.
    Note that the expectation of $Ng_{e_i}(S)$ is given by
    \begin{align*}
        \Exp{Ng_{e_i}(S)}{}
        =
        \sum_{j=0}^{\lfloor pK \rfloor} \frac{j}{p} (\Probability{|S_i| = j, |\bar{S}_i|\ge j} + \Probability{|S_i| > j, |\bar{S}_i|=j})  
        +
        K \Probability{|S_i| > pK, |\bar{S}|>pK}
        .
    \end{align*}
    Hence, we can bound the expectation from below by keeping only the most significant terms
    \begin{align*}
        \Exp{Ng_{e_i}(S)}{}
        &\ge
        \sum_{j=0}^{\lfloor pK \rfloor} \frac{j}{p} \Probability{|S_i| = j, |\bar{S}_i|\ge j}
        \ge 
        \sum_{j=0}^{\lfloor pK \rfloor}\frac{j}{p} \Probability{|S_i| = j} \Probability{ |\bar{S}_i|\ge j}
        \\
        &\ge
        \Probability{ |\bar{S}_i|\ge pK} 
        \sum_{j=0}^{\lfloor pK \rfloor}\frac{j}{p}\Probability{|S_i| = j}
        \ge 
        \left( 1-4|e|^{-5} \right)
        \sum_{j=0}^{\lfloor pK \rfloor}\frac{j}{p} \Probability{|S_i| = j}
        ,
    \end{align*}
    where the second inequality is by the independence of $S_i,\bar{S}_i$ and the last by substituting the bound $\Probability{|\bar{S}_i|<pK}$.
    Now adding and subtracting the rest of the possible values of $|S_i|$ we get
    \begin{align*}
        \Exp{Ng_{e_i}(S)}{} 
        &\ge 
        \left( 1-4|e|^{-5} \right)
        \left( \sum_{j=0}^{|S|}\frac{j}{p} \Probability{|S_i| = j}
        -\sum_{j=\lfloor pK \rfloor +1}^{|S|}\frac{j}{p} \Probability{|S_i| = j} \right)
        \\
        &\ge
        \left( 1-4|e|^{-5} \right)
        \left( \frac{\Exp{|S_i|}{}}{p}
        -\frac{|S|^2}{p} \Probability{|S_i| >pK} \right)
        ,
    \end{align*}
    where the last inequality is by $j,|S|-\lfloor pK \rfloor <|S|$.
    Note that $\Exp{|S_i|}{}/p=|S|=g_e(S)$ since $|S|<K<|\bar{S}|$.
    Recall that $p=c\epsilon'^{2}K^{-1}\log |e|$ and observe, 
    \[
        \frac{|S|^2}{p} \Probability{|S_i| >pK} 
        \le \frac{|e|^2 K}{c|e|^5 \log |e|} 
        \le  \frac{1}{10|e|^2}
        ,
    \]
    where the first inequality is by substituting the bound for $\Probability{|S_i| >pK}$ and the second by $K<|e|$, $\log |e| >1$ and $c>10$.
    Hence, we obtain $\Exp{Ng_{e_i}(S)}{}\ge (1-4|e|^{-5})(g_e(S)-|e|^{-2}) \ge (1-2\epsilon')g_e(S)$ with the last inequality by $\epsilon^{-2}<|e|$.

    Now we turn to the case  $|S|> K/2$.
    Observe that for $Ng_{e_i}(S)=\min(|S_i|/p,|\bar{S}_i|/p,K)<(1-\epsilon')g_e(S)$ we must have $\min\left( |S_i|,\left| \bar{S}_i\right| \right) < (1-\epsilon')p|S|$.
    By \Cref{claim:subset-size-concentration} this event happens with probability at most $4|e|^{-\frac{c\epsilon'^2|S|}{3K\epsilon'^2}} \le 4|e|^{-\frac{c}{6}}$.
    Therefore,
    \[
        \Exp{Ng_{e_i}(S)}{} 
        \ge \left( 1- 4|e|^{-\frac{c}{6}}\right)(1-\epsilon')p|S|/p
        \ge (1-2\epsilon')|S|
        = (1-2\epsilon')g_e(S)
        ,
    \]
    where the last inequality is by setting $c>15$ and $\epsilon^2>1/|e|$.
    \end{proof}

\end{document}